\documentclass{amsart}

\usepackage{amsfonts, amssymb, amsmath, amsthm}                               % AMS symbols
\usepackage{mathrsfs}                                                         % Math symbols
\usepackage{setspace}                                                         % Spacing
\usepackage{geometry}                                                         % Spacing
\usepackage{xcolor}                                                           % Text coloring
\usepackage{graphicx}                                                         % Graphics
\usepackage{slashed}                                                          % Slashed
\usepackage{comment}                                                          % Comments

% Theorem environments
\newtheorem{theorem}{Theorem} 	      	      	                              % Theorem environment
\newtheorem{corollary}[theorem]{Corollary}     	      	      	      	      % Corollary environment
\newtheorem{lemma}[theorem]{Lemma}     	       	      	      	      	      % Lemma environment
\newtheorem{proposition}[theorem]{Proposition} 	      	      	      	      % Proposition environment
\newtheorem{definition}[theorem]{Definition} 	      	      	                % Definition environment
     	      	      	      	    % Assumption environment
\newtheorem{question}[theorem]{Question}     	      	      	      	        % Question environment
\newtheorem{remark}[theorem]{Remark}                                          % Remark environment

% Numbering conventions
\numberwithin{equation}{section}                                              % Equation numbering
\numberwithin{theorem}{section}                                               % Theorem numbering

% Formatting styles
                                            % Overline
                                           % Underline
\newcommand{\mf}[1]{\mathfrak{#1}}                                            % Mathfrak
\newcommand{\mi}[1]{\mathscr{#1}}                                             % Mathscr
\newcommand{\mc}[1]{\mathcal{#1}}                                             % Mathcal
\newcommand{\ms}[1]{\mathsf{#1}}                                              % Math Sans serif
\newcommand{\mr}[1]{\mathrm{#1}}                                              % Math Roman
\newcommand{\mb}[1]{\mathbb{#1}}                                              % Math Blackboard

% Number system commands
\newcommand{\N}{\mathbb{N}}                                                   % Natural numbers
                                                   % Integers
\newcommand{\R}{\mathbb{R}}                                                   % Real numbers
                                                   % Complex numbers
                                                 % Unit sphere

% Resized delimiters
                                       % Resized parentheses
                                        % Resized brackets
                                      % Resized braces
                           % Resized angle brackets
                                         % Resized absolute value

% Operators and spaces
                                               % Signum operator
                                   % Space of smooth functions
                                             % Gradient
                                         % Slashed nabla
                                          % Slashed Laplacian
\newcommand{\trace}[1]{\mr{tr}_{#1}}                                          % Trace

% Vertical quantities
\newcommand{\gv}{\mathsf{g}}                                                  % Vertical metric
\newcommand{\hv}{\mathsf{h}}                                                  % Vertical Riemannian metric
\newcommand{\Rv}{\mathsf{R}}                                                  % Vertical curvature
                                                % Vertical Ricci curvature
                                                % Vertical scalar curvature
\newcommand{\Dv}{\mathsf{D}}                                                  % Vertical Levi-Civita connection

% Expansion quantities
\newcommand{\gm}{\mf{g}}                                                      % Boundary metric, zero-order (boundary)
\newcommand{\hm}{\mf{h}}                                                      % Boundary Riemannian metric
\newcommand{\gs}{\bar{\mf{g}}}                                                % Boundary null convexity coefficient
\newcommand{\Dm}{\mf{D}}                                                      % Boundary Levi-Civita connection
\newcommand{\Rm}{\mf{R}}                                                      % Boundary curvature, zero-order (boundary)
                                                    % Boundary Ricci curvature
                                                    % Boundary scalar curvature

% Mixed quantities
\newcommand{\nablam}{\bar{\nabla}}                                            % Mixed derivative
\newcommand{\Dvm}{\bar{\Dv}}                                                  % Mixed vertical derivative
\newcommand{\Boxm}{\bar{\Box}}                                                % Mixed wave operator
\newcommand{\hvm}{\bar{\hv}}                                                  % Vertical Riemannian bundle metric
\newcommand{\hvd}{\hv^\ast}                                                   % Vertical Riemannian metric dual

% Commenting commands

% Formatting commands
\allowdisplaybreaks
\setcounter{tocdepth}{1}

\def\comp{1}

\begin{document}

\title[Null Geodesics, Improved Unique Continuation for Waves in aAdS Spacetimes]{Null Geodesics and Improved Unique Continuation for Waves in Asymptotically Anti-de Sitter Spacetimes}
\author{Alex McGill, Arick Shao}

\address{School of Mathematical Sciences\\
Queen Mary University of London\\
London E1 4NS\\ United Kingdom}
\email{a.mcgill@qmul.ac.uk}

\address{School of Mathematical Sciences\\
Queen Mary University of London\\
London E1 4NS\\ United Kingdom}
\email{a.shao@qmul.ac.uk}

\begin{abstract}
We consider the question of whether solutions of Klein--Gordon equations on asymptotically Anti-de Sitter spacetimes can be uniquely continued from the conformal boundary.
Positive answers were first given in \cite{hol_shao:uc_ads, hol_shao:uc_ads_ns}, under suitable assumptions on the boundary geometry and with boundary data imposed over a sufficiently long timespan.
The key step was to establish Carleman estimates for Klein--Gordon operators near the conformal boundary.

In this article, we further improve upon the above-mentioned results.
First, we establish new Carleman estimates---and hence new unique continuation results---for Klein--Gordon equations on a larger class of spacetimes than in \cite{hol_shao:uc_ads, hol_shao:uc_ads_ns}, in particular with more general boundary geometries.
Second, we argue for the optimality, in many respects, of our assumptions by connecting them to trajectories of null geodesics near the conformal boundary; these geodesics play a crucial role in the construction of counterexamples to unique continuation.
Finally, we develop a new covariant formalism that will be useful---both presently and more generally beyond this article---for treating tensorial objects with asymptotic limits at the conformal boundary.
\end{abstract}

\maketitle

\section{Introduction} \label{sec.intro}

In this paper, we revisit the problem of unique continuation for Klein--Gordon equations from the conformal boundary of asymptotically Anti-de Sitter spacetimes.
Our main objectives are:
\begin{enumerate}
\item To prove new Carleman estimates that extend the existing estimates and unique continuation results of \cite{hol_shao:uc_ads, hol_shao:uc_ads_ns} to a wider class of spacetimes.

\item To argue that the assumptions we impose in (1) are in many ways optimal, by connecting them to the construction of counterexamples to unique continuation.

\item To develop and then apply a new formalism for covariantly treating tensorial objects and their asymptotic limits at the conformal boundary.
\end{enumerate}

In recent years, asymptotically anti-de Sitter spacetimes have featured prominently in theoretical physics due to their connection to the AdS/CFT correspondence \cite{malda:ads_cft}.
This article is part of a larger program to establish mathematically rigorous statements related to AdS/CFT.

\subsection{Background}

The main settings of our discussions are portions of \emph{asymptotically Anti-de Sitter} (which we henceforth abbreviate as \emph{aAdS}) spacetimes, near their conformal boundaries.

To be more precise, we consider, as our spacetime manifold,
\begin{equation}
\label{eq.intro_manifold} \mi{M}^{ n + 1 } := ( 0, \rho_0 ) \times \mi{I}^n \text{,} \qquad \rho_0 > 0 \text{,}
\end{equation}
and we consider the following Lorentzian metric on $\mi{M}$,
\begin{equation}
\label{eq.intro_metric} g := \rho^{-2} [ d \rho^2 + \gv ( \rho ) ] \text{,}
\end{equation}
where $\rho$ denotes the coordinate for the component $( 0, \rho_0 )$, and where $\gv$ is a $\rho$-parametrized family of metrics on $\mi{I}$.
In particular, we refer to $( \mi{M}, g )$ as our \emph{aAdS spacetime}, and we refer to the specific form \eqref{eq.intro_metric} of the metric as a \emph{Fefferman--Graham gauge}.

Furthermore, we assume $\gv$ in \eqref{eq.intro_metric} has the asymptotic behavior
\begin{equation}
\label{eq.intro_asymp} \gv ( \rho ) = \gm + \rho^2 \gs + \mc{O} ( \rho^3 ) \text{,}
\end{equation}
where $\gm$ is a Lorentzian metric on $\mi{I}$, and where $\gs$ is another symmetric $( 0, 2 )$-tensor field on $\mi{I}$.
By convention, we refer to $( \mi{I}, \gm )$ as the \emph{conformal boundary} of our aAdS spacetime.

If one identifies $\mi{I}$ with $\{ 0 \} \times \mi{I}$, then $( \mi{I}, \gm )$ can be regarded as the (timelike) boundary $\{ \rho = 0 \}$ of $( \mi{M}, \rho^2 g )$.
As a consequence of this, we will, throughout this section, view the conformal boundary as a boundary manifold that is formally attached to $\mi{M}$ at $\rho \searrow 0$.

\begin{remark}
Note that we do not impose any restrictions on the topology or geometry of the conformal boundary.
Moreover, we do not assume $( \mi{M}, g )$ is Einstein-vacuum.
\end{remark}

Our notion of aAdS includes Anti-de Sitter (AdS) spacetime itself, as well as the Schwarzschild-AdS and Kerr-AdS families.
In particular, via an appropriate change of coordinates, these metrics can be expressed in the form \eqref{eq.intro_asymp} near the conformal boundary.
\footnote{See \cite[Section 3.3]{shao:aads_fg} for more detailed computations in the AdS and Schwarzschild-AdS settings.}
Also included are variants of the above families with different boundary topologies (e.g.~planar or toric Schwarzschild-AdS).

In addition, we assume the boundary $( \mi{I}, \gm )$ is foliated by a \emph{global time function} $t$ that splits $\mi{I}$ into the form $( t_-, t_+ ) \times \mc{S}^{ n - 1 }$.
This $t$ naturally extends to all of $\mi{M}$ by the assumption that it is constant along the $\rho$-direction.
Examples include the usual Schwarzschild-AdS time coordinate.

Consider now the following family of Klein--Gordon equations on $( \mi{M}, g )$,
\begin{equation}
\label{eq.intro_kg} ( \Box_g + \sigma ) \phi = \mc{G} ( \phi, \nabla \phi ) \text{,} \qquad \sigma \in \R \text{,}
\end{equation}
where $\mc{G} ( \phi, \nabla \phi )$ represents some appropriate lower-order linear or nonlinear terms that depend on the unknown $\phi$, which can be scalar or tensorial.
Our main problem for \eqref{eq.intro_kg} is the following:

\begin{question}[Unique continuation from the conformal boundary] \label{q.intro_uc}
Suppose $\phi$ satisfies \eqref{eq.intro_kg}, and suppose $\phi \rightarrow 0$, in an appropriate sense, as $\rho \searrow 0$ along some portion $\mi{D} \subseteq \mi{I}$.
Then, must $\phi$ also vanish in some neighborhood in $\mi{M}$ of the conformal boundary?
\footnote{Again, we view the conformal boundary as the manifold $\{ \rho = 0 \}$.}
\end{question}

In terms of wave equations, the conformal boundary serves as a timelike hypersurface ``at infinity" on which one imposes boundary data for \eqref{eq.intro_kg}; see \cite{breit_freedm:stability_sgrav, vasy:wave_aads, warn:wave_aads}, as well as the discussions in the introduction of \cite{hol_shao:uc_ads}.
In this context, \emph{the statement ``$\phi \rightarrow 0$ in an appropriate sense" can be roughly interpreted as both the Dirichlet and Neumann branches of $\phi$ vanishing as $\rho \searrow 0$}.
\footnote{In other words, $\phi$ has ``vanishing Cauchy data" on the conformal boundary.}

Question \ref{q.intro_uc} is a variant of the \emph{unique continuation} problem for PDEs, for which there is an extensive list of classical literature; see, for instance, \cite{cald:unique_cauchy, carl:uc_strong, hor:lpdo4, hor:uc_interp, robb_zuil:uc_interp, tat:uc_hh}.
However, these classical results fail to apply to Question \ref{q.intro_uc}, due to the following reasons:
\begin{itemize}
\item The crucial condition needed for the usual unique continuation results is \emph{pseudoconvexity}.
In our present context, the conformal boundary (barely) fails to be pseudoconvex with respect to $\Box_g$.
Thus, a suitable analysis of Question \ref{q.intro_uc} must take into account the degenerate situation in which one uniquely continues from a ``zero pseudoconvex" hypersurface.
See the introduction of \cite{hol_shao:uc_ads} for more extensive discussions in this direction.

\item It is often useful to express \eqref{eq.intro_kg} in terms of the conformal metric $\rho^2 g$, for which $( \mi{I}, \gm )$ now acts as a \emph{finite} boundary.
However, in this setting, the equation that is equivalent to \eqref{eq.intro_kg} contains a \emph{critically singular potential} whenever $4 \sigma \neq n^2 - 1$:
\[
( \Box_{ \rho^2 g } + C_\sigma \rho^{-2} ) \psi = \tilde{\mc{G}} ( \psi, \nabla \psi ) \text{,} \qquad C_\sigma \neq 0 \text{.}
\]
This drastically alters the analysis near the conformal boundary.
In particular, the Dirichlet and Neumann branches of $\psi$ (and hence $\phi$) behave like distinct nonzero powers of $\rho$ at $( \mi{I}, \gm )$; see \cite{vasy:wave_aads, warn:wave_aads} for well-posedness results, and see \cite{enc_shao_verga:obs_swave} for further discussions.
\end{itemize}

The first positive results for Question \ref{q.intro_uc} were established in \cite{hol_shao:uc_ads} for aAdS spacetimes with static conformal boundaries.
This was then extended to non-static boundaries in \cite{hol_shao:uc_ads_ns}.
An informal summary of these results is provided in the subsequent theorem:

\begin{theorem}[\cite{hol_shao:uc_ads, hol_shao:uc_ads_ns}] \label{thm.intro_uc}
Assume the following:
\begin{itemize}
\item $t$ is \emph{unit geodesic} on the conformal boundary.
\footnote{More specifically, the $\gm$-gradient of $t$ is a $\gm$-unit and $\gm$-geodesic vector field.}

\item There is some $p > 0$ such that $\mc{G}$, from \eqref{eq.intro_kg}, satisfies
\footnote{Here, the size of the first derivative, $| \nabla \phi |$, is measured with respect to $\rho$ and to coordinates on $\mi{I}$.}
\begin{equation}
\label{eq.intro_uc_G} | \mc{G} ( \phi, \nabla \phi ) |^2 \lesssim \rho^{ 4 + p } | \nabla \phi |^2 + \rho^{ 3 p } | \phi |^2 \text{.}
\end{equation}
\end{itemize}
In addition, assume that $( \mi{M}, g )$ satisfies the following \emph{pseudoconvexity criterion}:
\begin{itemize}
\item There exists $\zeta \in C^\infty ( \mi{I} )$ such that the following lower bound holds on $\mi{I}$:
\footnote{The left-hand sides of \eqref{eq.intro_uc_psc_1} and \eqref{eq.intro_uc_psc_2} are measured using a positive-definite metric on $\mi{I}$; see \cite[Definition 3.2]{hol_shao:uc_ads_ns}.}
\begin{equation}
\label{eq.intro_uc_psc_1} - \gs - \zeta \cdot \gm \geq K > 0 \text{.}
\end{equation}

\item The ``non-stationarity" of $\gm$ is sufficiently small with respect to $K$:
\footnote{Here, $\xi_K > 0$ is a constant whose value depends only on $K$.}
\begin{equation}
\label{eq.intro_uc_psc_2} | \mi{L}_t \gm | \leq \xi_K \text{.}
\end{equation}
\end{itemize}
If $\phi$ is a (scalar or tensorial) solution of \eqref{eq.intro_kg} such that
\begin{itemize}
\item $\phi$ is compactly supported on each level set of $( \rho, t )$, and

\item $\rho^{ - \kappa_0 } \phi \rightarrow 0$ in $C^1$ as $\rho \searrow 0$ over a \emph{sufficiently large timespan} $\{ t_0 \leq t \leq t_1 \}$ of the conformal boundary $\mi{I}$, with $\kappa_0$ depending on $\sigma$, and with $t_1 - t_0$ depending on $K$ and $\xi_K$,
\end{itemize}
then $\phi \equiv 0$ on some neighborhood in $\mi{M}$ of the boundary region $\{ t_0 \leq t \leq t_1 \} \subseteq \mi{I}$.
\end{theorem}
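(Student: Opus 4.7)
The plan is to reduce Question \ref{q.intro_uc} to a suitable Carleman estimate adapted to the degenerate geometry at the conformal boundary. First I would pass to the conformally rescaled metric $\bar{g} := \rho^2 g$, which extends smoothly to $\{\rho = 0\}$ as a manifold with timelike boundary. Under this rescaling, equation \eqref{eq.intro_kg} transforms into an equation of the schematic form $(\Box_{\bar g} + C_\sigma \rho^{-2})\psi = \tilde{\mc{G}}(\psi, \nabla \psi)$ for a suitably rescaled unknown $\psi = \rho^{-\kappa_0}\phi$, where $\kappa_0$ is chosen so that the leading boundary behavior of $\psi$ matches one of the two indicial branches. The hypothesis that $\rho^{-\kappa_0}\phi \to 0$ in $C^1$ then translates to both the Dirichlet and Neumann data of $\psi$ vanishing on $\{t_0 \leq t \leq t_1\}$, which is exactly the setup required to discard boundary terms in integration by parts at $\rho \searrow 0$.

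Next, I would construct a Carleman weight $\eta = \eta(\rho, t)$ of the schematic form $\eta \sim \rho - \tfrac{1}{2}\zeta \rho^2 + \text{correction}(t)$, where the $\rho$-leading terms are engineered so that the level sets of $\eta$ are formally pseudoconvex with respect to $\Box_{\bar g}$ in the interior, with the curvature correction $-\zeta\rho^2$ chosen to match the counterterm in \eqref{eq.intro_uc_psc_1}. The $t$-dependent correction ensures that a bounded timeslab is enclosed between two level sets of $\eta$, and this is what forces the timespan condition $t_1 - t_0 \gg 1$: one needs enough time to ``close up'' a pseudoconvex region bounded below in $\rho$. With this weight in hand, the target estimate takes the form
\[
\lambda \int e^{-2\lambda f(\eta)} \bigl( \rho^a |\nasla \psi|^2 + \rho^b |\psi|^2 \bigr) \lesssim \int e^{-2\lambda f(\eta)} |(\Box_{\bar g} + C_\sigma \rho^{-2})\psi|^2,
\]
valid for all $\lambda$ large, with no boundary contributions at $\rho \searrow 0$ thanks to the vanishing hypothesis on $\psi$.

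The derivation of this estimate proceeds by the standard commutator/conjugation method: write $\mc{L} := e^{-\lambda f(\eta)}(\Box_{\bar g} + C_\sigma \rho^{-2}) e^{\lambda f(\eta)}$, split into symmetric and antisymmetric parts $\mc{L} = \mc{S} + \mc{A}$, and expand $\|\mc{L}u\|^2 \geq 2\langle \mc{S}u, \mc{A}u \rangle = \langle [\mc{S}, \mc{A}]u, u\rangle + \text{boundary terms}$. The key positivity comes from showing that the principal symbol of the commutator $[\mc{S}, \mc{A}]$ is bounded below modulo terms controlled by \eqref{eq.intro_uc_psc_1}; here the tensor $-\gs - \zeta \gm \geq K$ is exactly what delivers the required positivity at the leading $\rho \searrow 0$ order, since the Hessian of $\eta$ relative to $\bar g$ is governed to leading order by $\gs$. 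The small non-stationarity bound \eqref{eq.intro_uc_psc_2} is then used to ensure that the $t$-derivative error terms arising from $\mi{L}_t \gm$ in the conjugation are strictly dominated by $K$, so that the final commutator remains positive on the full Carleman region.

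Once the Carleman estimate is established, the conclusion follows by a by-now-standard absorption argument: plug $\psi$ into the estimate, use the nonlinear assumption \eqref{eq.intro_uc_G}, whose $\rho$-weights $\rho^{4+p}$ and $\rho^{3p}$ are precisely tuned to be absorbed by the left-hand side for sufficiently large $\lambda$, and then cut off $\psi$ smoothly in $\rho$ to handle interior error terms (which are exponentially suppressed by $e^{-2\lambda f(\eta)}$ relative to the boundary). Sending $\lambda \to \infty$ forces $\psi$ to vanish where $f(\eta) < \sup f(\eta)$, which gives a one-sided neighborhood of $\{t_0 \leq t \leq t_1\} \subseteq \mi{I}$. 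The central difficulty I expect to encounter is the construction of the weight $\eta$ satisfying simultaneously (i) pseudoconvexity against the singular operator $\Box_{\bar g} + C_\sigma \rho^{-2}$ as $\rho \searrow 0$, (ii) the correct dominance of $\gs$ by $-\zeta \gm$ throughout the Carleman region, and (iii) foliation by closed level sets across the prescribed timespan; balancing these three requirements is what pins down both \eqref{eq.intro_uc_psc_2} and the quantitative lower bound on $t_1 - t_0$.
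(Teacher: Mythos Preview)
Your high-level strategy---Carleman estimate via a pseudoconvex foliation near $\rho=0$, followed by the standard absorption argument---is correct and matches the architecture of the proof in \cite{hol_shao:uc_ads, hol_shao:uc_ads_ns} (and its extension in this paper, Theorem~\ref{thm.carleman} and Corollary~\ref{thm.carleman_uc}). However, several implementation choices differ substantially. First, the paper does \emph{not} pass to the conformal metric $\bar g=\rho^2 g$; it works throughout with the physical metric $g$ in the Fefferman--Graham gauge \eqref{eq.aads_metric}, and the singular potential never appears explicitly---instead the critical $\rho$-behavior is absorbed into the weight $f^{n-2-2\kappa}e^{-\lambda p^{-1}f^p}$. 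Second, and more importantly, the Carleman weight is not of your additive form $\rho-\tfrac12\zeta\rho^2+\text{correction}(t)$ but rather the \emph{multiplicative} function $f_{t_0}=\rho/\eta(t-t_0)$, where $\eta$ (Definition~\ref{def.carleman_eta}) solves a damped harmonic oscillator ODE with coefficients $(b,c)$ dictated by the pseudoconvexity constants; this specific $\eta$ is what makes the level sets of $f$ close up over a finite timespan $\mc{T}_+(b,c)$ and simultaneously forces the Hessian $\nabla^2 f$ to produce exactly the combination $-\eta''\,dt^2-\eta'\,\Dm^2 t+\eta\,\gs$ (see Lemma~\ref{thm.carleman_pseudoconvex_pi}), so that \eqref{eq.intro_uc_psc_1}--\eqref{eq.intro_uc_psc_2} deliver positivity directly. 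Third, the derivation is not via the symmetric/antisymmetric commutator $[\mc{S},\mc{A}]$ but via a multiplier identity: one pairs $\mc{L}\psi$ against $\bar S_\zeta\psi=\nablam_S\psi+v_\zeta\psi$ (see \eqref{eql.carleman_S}, \eqref{eql.carleman_L}) and integrates by parts, with the modified stress-energy tensor $\pi_\zeta$ of \eqref{eql.carleman_pi} carrying the pseudoconvexity. Your commutator route would likely also work, but the multiplier approach makes the role of the pseudoconvexity tensor $\pi_\zeta$ and the function $\zeta$ in \eqref{eq.intro_uc_psc_1} more transparent, and it interfaces cleanly with the tensorial formalism needed for non-scalar $\phi$.
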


Of particular interest in Theorem \ref{thm.intro_uc} is the condition that $\phi$ vanishes as $\rho \searrow 0$ \emph{along a sufficiently large timespan of the conformal boundary}, which was first observed in \cite{hol_shao:uc_ads}.
That such an assumption is necessary can be seen in pure AdS spacetime.
Indeed, one can find null geodesics $\Lambda$ such that:
\begin{itemize}
\item $\Lambda$ starts from the conformal boundary time $t = 0$.

\item $\Lambda$ remains arbitrarily close to the conformal boundary.

\item $\Lambda$ returns to the conformal boundary at $t = \pi$.
\end{itemize}
As a result, one can (at least in the conformal case $4 \sigma = n^2 - 1$) apply the techniques of \cite{alin_baou:non_unique}---involving geometric optics constructions along the above geodesics $\Lambda$---to construct nontrivial solutions $\phi$ of equations in the form \eqref{eq.intro_kg} such that $\phi \rightarrow 0$ along a timespan $\delta \leq t \leq \pi - \delta$.
Therefore, on AdS spacetime, one generally requires vanishing on a timespan of at least $\pi$ for unique continuation.

\begin{remark}
From finite speed of propagation, it is clear that one requires vanishing on a timespan $t_1 - t_0 \geq \pi$ for a \emph{global} unique continuation result to the full AdS interior.
However, Theorem \ref{thm.intro_uc} suggests, more surprisingly, that this condition remains necessary even for \emph{local} unique continuation results, on an arbitrarily small neighborhood of the conformal boundary.
\end{remark}

\begin{remark}
It is, however, important to note that all the known counterexamples to unique continuation arising from \cite{alin_baou:non_unique} satisfy equations with \emph{complex-valued} potentials.
Whether counterexamples exist for purely real-valued wave equations remains an open question.
\end{remark}

\begin{remark}
On the other hand, when $\mc{G}$ is linear, and when both $g$ and $\mc{G}$ are analytic, the ``large enough timespan" condition in Theorem \ref{thm.intro_uc} is not needed due to Holmgren's theorem; see \cite{hor:lpdo2}.
\end{remark}

\begin{remark}
Schwarzschild-AdS spacetimes satisfy the pseudoconvexity criterion \eqref{eq.intro_uc_psc_1}--\eqref{eq.intro_uc_psc_2}, while planar and toric Schwarzschild-AdS fail this criterion; see \cite[Appendix B]{hol_shao:uc_ads_ns} for details.
\end{remark}

\begin{figure}[ht]
\includegraphics{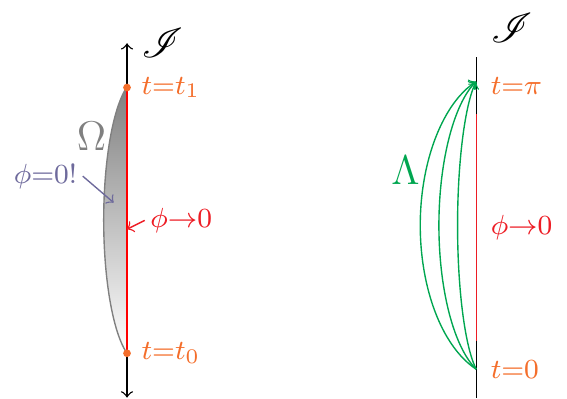}
\caption{The left graphic depicts the setting of Theorem \ref{thm.intro_uc}---if $\phi$ vanishes on a large enough timespan of $\mi{I}$ (in red), then $\phi$ also vanishes nearby in the interior ($\Omega$, in grey).
The right graphic depicts AdS spacetime---if $\phi \rightarrow 0$ on a small timespan (in red), then one can (in the conformal case $4 \sigma = n^2 - 1$) construct waves concentrated along null geodesics $\Lambda$ (in green) arbitrarily close to $\mi{I}$.}
\label{fig.intro_uc}
\end{figure}

The key ingredient for the proof of Theorem \ref{thm.intro_uc} is a novel geometric \emph{Carleman estimate} for the Klein--Gordon operator $\Box_g + \sigma$ near the conformal boundary.
\footnote{Since \cite{carl:uc_strong}, Carleman estimates have been a staple of unique continuation theory in the absence of analyticity.}
Informally speaking, this Carleman estimate is a weighted spacetime $L^2$-inequality of the form
\begin{equation}
\label{eq.intro_carleman} \int_\Omega e^{ - \lambda F } w_\Box | ( \Box_g + \sigma ) \phi |^2 \geq \mc{C} \lambda \int_{ \Omega } e^{ - \lambda F } w_1 | \nabla \phi |^2 + \mc{C} \lambda \int_{ \Omega } e^{ - \lambda F } w_0 | \phi |^2 \text{,}
\end{equation}
where $\Omega$ is an appropriate spacetime region (see Figure \ref{fig.intro_uc}) near the conformal boundary (in particular, compatible with the ``sufficiently large timespan" condition), where $\lambda > 0$ is a sufficiently large free parameter, and where $F$, $w_\Box$, $w_1$, $w_0$ are appropriate weights on $\Omega$.
The desired unique continuation result then follows from \eqref{eq.intro_carleman} via standard arguments.

For precise statements of the above results, the reader is referred to \cite[Definition 3.2]{hol_shao:uc_ads_ns} for the pseudoconvexity criterion, \cite[Theorem 3.7]{hol_shao:uc_ads_ns} and \cite[Theorem C.1]{hol_shao:uc_ads_ns} for the Carleman estimates, and \cite[Theorem 3.11]{hol_shao:uc_ads_ns} for the unique continuation result itself.

\subsection{Connections to Holography}

One perspective of the wave equations \eqref{eq.intro_kg}---and by extension Question \ref{q.intro_uc}---is as a model problem for the Einstein-vacuum equations.
In particular, a key motivation for the present article, as well as for \cite{hol_shao:uc_ads, hol_shao:uc_ads_ns}, is to build toward resolving the following unique continuation question, which is directly inspired by the AdS/CFT conjecture:

\begin{question} \label{q.intro_correspondence}
For a vacuum aAdS spacetime $( \mi{M}, g )$, does its ``boundary data" at the conformal boundary uniquely determine $g$?
\footnote{The precise description of ``boundary data" for the Einstein-vacuum equations is based on partial Fefferman--Graham expansions from the conformal boundary.
In physics terminology, this data corresponds to both the boundary metric and the boundary stress-energy tensor.
See \cite{shao:aads_fg} for further discussions on this point.}
In other words, is there a one-to-one correspondence between aAdS solutions of the Einstein-vacuum equations and an appropriate space of boundary data?
\end{question}

Elliptic analogues of Question \ref{q.intro_correspondence} have been proved---\cite{and_herz:uc_ricci, and_herz:uc_ricci_err, biq:uc_einstein} showed that the conformal boundary data for an asymptotically hyperbolic (Riemannian) Einstein metric uniquely determines the metric itself.
Moreover, \cite{chru_delay:uc_killing} extended these results to \emph{stationary} aAdS Einstein-vacuum spacetimes.

On the other hand, Question \ref{q.intro_correspondence} remains open in non-stationary settings.
The results of this article will be applied, in \cite{hol_shao:uc_ads_cft}, as a crucial step toward resolving Question \ref{q.intro_correspondence} in general.

Another interesting problem, closely related to Question \ref{q.intro_correspondence}, is that of \emph{symmetry extension}:

\begin{question} \label{q.intro_symmetry}
Assume $( \mi{M}, g )$ is a vacuum aAdS spacetime, and suppose its conformal boundary data has a symmetry.
\footnote{Again, ``boundary data" refers to both the boundary metric $\gm$ and the boundary stress-energy tensor $\mf{t}$.
Thus, by a symmetry of the boundary data, we mean a Killing vector field $\mf{X}$ on $( \mi{I}, \gm )$ such that $\mi{L}_{ \mf{X} } \mf{t} = 0$.}
Then, must this symmetry necessarily extend into $( \mi{M}, g )$?
\end{question}

A positive answer to Question \ref{q.intro_symmetry} was provided in \cite{chru_delay:uc_killing}, for the special case of \emph{stationary} spacetimes.
Again, the Carleman estimates of this article will play a prominent role, in the upcoming \cite{hol_shao:uc_ads_sym}, toward resolving Question \ref{q.intro_symmetry} in \emph{non-stationary} spacetimes.

\subsection{Novel Features}

As mentioned before, our main objective is to establish Carleman estimates near the conformal boundary that extend the estimates of \cite{hol_shao:uc_ads, hol_shao:uc_ads_ns}, and hence the results of Theorem \ref{thm.intro_uc}.
The main new features of our results are as follows:
\begin{enumerate}
\item Our estimates now apply for \emph{general time foliations} of $( \mi{I}, \gm )$.

\item We replace the assumptions \eqref{eq.intro_uc_psc_1} and \eqref{eq.intro_uc_psc_2} by a weaker \emph{null convexity criterion}.

\item We connect the null convexity criterion to \emph{geodesic trajectories} near the conformal boundary.

\item We only assume \emph{finite regularity} for our geometric quantities.

\item We develop a general formalism of \emph{vertical tensor fields} to treat the relevant tensorial quantities in our spacetime that have asymptotic limits at the conformal boundary.
\end{enumerate}
Below, we discuss each of these features in further detail:

\subsubsection{General Time Foliations}

First, recall that the results of \cite{hol_shao:uc_ads, hol_shao:uc_ads_ns} apply only when $( \mi{I}, \gm )$ is described in terms of a time function $t$ that is unit geodesic.
In particular, this is a special function for which one can find coordinates $( t, x^1, \dots, x^{ n - 1 } )$ on $\mi{I}$ satisfying the conditions
\begin{equation}
\label{eq.intro_geodesic_time} \gm_{ t t } \equiv -1 \text{,} \qquad \gm_{ t x^A } \equiv 0 \text{,} \qquad 1 \leq A < n \text{.}
\end{equation}

In contrast, \emph{the Carleman estimates in the present paper apply even when the conformal boundary is described in terms of a general time function $t$ that is not unit geodesic}.
As a result, the present results can be applied more flexibly to a larger class of settings.

\begin{remark}
For instance, the Boyer-Lindquist time coordinate of an Kerr-AdS spacetime defines a time $t$ on its conformal boundary for which \eqref{eq.intro_geodesic_time} fails to hold.
\end{remark}

\begin{remark}
Note that one can always find \emph{local} time functions $t$ on $( \mi{I}, \gm )$ for which \eqref{eq.intro_geodesic_time} is satisfied.
Thus, the assumption that $t$ is unit geodesic imposes an implicit restriction on the \emph{global} geometry of $( \mi{I}, \gm )$---namely, that it can be ruled by a family of timelike geodesics that do not contain any focal or cut locus points.
From this point of view, our main Carleman estimates extend the results of \cite{hol_shao:uc_ads, hol_shao:uc_ads_ns} by removing this extra technical requirement for $( \mi{I}, \gm )$.
\end{remark}

In \cite{hol_shao:uc_ads_ns}, the Lie derivative $\mi{L}_t \gm$ was potentially problematic and had to be treated carefully---in particular, one assumed in \eqref{eq.intro_uc_psc_2} that this was not too large.
Here, the corresponding quantity is $\Dm^2 t$, where $\Dm$ denotes the Levi-Civita connection with respect to the boundary metric $\gm$.
\footnote{Observe that $\Dm^2 t$ is equivalent to the Lie derivative of $\gm$ along the $\gm$-gradient of $t$.}

\subsubsection{The Null Convexity Criterion}

In \cite{hol_shao:uc_ads_ns}, the key assumption for unique continuation was the so-called \emph{pseudoconvexity criterion}, informally stated in \eqref{eq.intro_uc_psc_1} and \eqref{eq.intro_uc_psc_2}.
In particular, these bounds determined the timespan along which one must assume vanishing of the solution $\phi$ at the conformal boundary.
For our main estimates, we will further simplify and improve upon \eqref{eq.intro_uc_psc_1} and \eqref{eq.intro_uc_psc_2}.

First, rather than requiring that the quantity $- \gs - \zeta \gm$, for some function $\zeta$, is positive-definite in all directions, we instead assume that \emph{$- \gs$ itself is positive, but only along null directions}:
\begin{equation}
\label{eq.intro_uc_nc_1} - \gs ( Z, Z ) \geq C ( Z t )^2 \text{,} \qquad \gm ( Z, Z ) = 0 \text{.}
\end{equation}
Notice that \eqref{eq.intro_uc_psc_1} clearly implies \eqref{eq.intro_uc_nc_1}.

In fact, with some effort, one can also show that \eqref{eq.intro_uc_psc_1} and \eqref{eq.intro_uc_nc_1} are equivalent.
\footnote{This will be established within the proof of Proposition \ref{thm.psc_pointwise}.}
As a result, while \eqref{eq.intro_uc_nc_1} has a more natural interpretation than \eqref{eq.intro_uc_psc_1} and is simpler to check, it does not, in principle, enlarge the class of spacetimes on which our results apply.

On the other hand, we obtain a genuine improvement in the second portion of the pseudoconvexity criterion---we only require that the bound in \eqref{eq.intro_uc_psc_2} holds along null directions:
\begin{equation}
\label{eq.intro_uc_nc_2} | \Dm^2_{ Z Z } t | \leq B ( Z t )^2 \text{,} \qquad \gm ( Z, Z ) = 0 \text{.}
\end{equation}

We refer to \eqref{eq.intro_uc_nc_1} and \eqref{eq.intro_uc_nc_2} together as the \emph{null convexity criterion}; see Definition \ref{def.aads_nc} for the precise statements.
In Theorem \ref{thm.psc_nc}, we will show that the null convexity criterion implies a slightly weaker version of the pseudoconvexity criterion from \cite{hol_shao:uc_ads_ns}, which will then be used to establish our main Carleman estimates.
Moreover, in Theorem \ref{thm.geodesic}, we will connect the null convexity criterion to the absence of counterexamples to unique continuation.

\subsubsection{Geodesic Trajectories}

We had previously argued that the ``sufficiently large timespan" assumption in Theorem \ref{thm.intro_uc} was necessary for unique continuation in pure AdS spacetime.
One may then ask whether this observation also extends to general aAdS spacetimes.

In Section \ref{sec.geodesic}, we will provide an affirmative answer to this question.
More specifically, similar to the AdS setting, we construct a family of null geodesics $\Lambda$ such that:
\begin{itemize}
\item $\Lambda$ starts from the conformal boundary time at $t = \ell_-$.

\item $\Lambda$ remains arbitrarily close to the conformal boundary.

\item $\Lambda$ returns to the conformal boundary at $t = \ell_+$, with $\ell_+ - \ell_- \simeq 1$.
\end{itemize}
Thus, using the techniques of \cite{alin_baou:non_unique}, one can construct counterexamples to unique continuation (at least when $4 \sigma = n^2 - 1$) if $\phi$ vanishes along a timespan that is not sufficiently large.

In fact, we will prove a more specific result---\emph{we connect the timespan $\ell_+ - \ell_-$ of the geodesics $\Lambda$ to the null convexity criterion}.
In particular, we will establish the following:
\begin{itemize}
\item Assuming the inequalities \eqref{eq.intro_uc_nc_1} and \eqref{eq.intro_uc_nc_2}, we establish an upper bound $\mc{T}_+ ( B, C )$ on the timespan $\ell_+ - \ell_-$ of the near-boundary null geodesics $\Lambda$.
This gives a threshold timespan, beyond which the aforementioned counterexamples to unique continuation no longer exist.

\item Furthermore, we obtain a lower bound $\mc{T}_- ( B, C )$ on the timespan $\ell_+ - \ell_-$, implying that one cannot expect unique continuation results for timespans smaller than this bound.
\end{itemize}
The precise statements of these results are given in Theorem \ref{thm.geodesic}; moreover, the precise formulas for the threshold timespans $\mc{T}_\pm ( B, C )$ are given in Definition \ref{def.geodesic_timebound}.

The above-mentioned upper bound on $\ell_+ - \ell_-$ also coincides with the timespan that is needed for our main Carleman estimates.
As a result, our analysis will lead to the following statement of optimality:~\emph{the timespan beyond which the known counterexamples to unique continuation are no longer viable is the same as the timespan past which our unique continuation results hold}.

\subsubsection{Finite Regularity}

In \cite{hol_shao:uc_ads, hol_shao:uc_ads_ns}, the main results assumed the metric was smooth and satisfied asymptotic bounds at all orders near the conformal boundary.
In terms of current notations \eqref{eq.intro_metric} and \eqref{eq.intro_asymp}, the assumption was roughly that ``$\ms{E} := \mc{O} ( \rho^3 )$" in the right-hand side of \eqref{eq.intro_asymp} satisfied
\begin{equation}
\label{eq.intro_asymp_inf} | \partial_\rho^k \partial_{ x^{ a_1 } } \dots \partial_{ x^{ a_m } } \ms{E} | \lesssim \rho^{ 3 - k } \text{,} \qquad k \geq 0 \text{,}
\end{equation}
where $x^{ a_1 }, \dots, x^{ a_m }$ denote coordinate derivatives in directions tangent to $\mi{I}$.
\footnote{See \cite[Definition 2.4, Definition 2.5, Definition 2.8]{hol_shao:uc_ads_ns} for the precise asymptotic bounds.}

In contrast, \emph{the results here will require only finite regularity for the metric}.
More specifically, \emph{we will assume control of up to only three derivatives of $\gv$}; see Definition \ref{def.aads_strong} for precise statements.
This improvement in regularity does not rely on any new insights, as this could already be attained in the results of \cite{hol_shao:uc_ads, hol_shao:uc_ads_ns} by carefully tracking how many derivatives of the metric were used throughout.
In this paper, we choose to make explicit this more precise accounting.

A key motivation for stating our results in terms of finitely regular metrics is that the asymptotics at the conformal boundary for Einstein-vacuum metrics can be quite complicated.
In particular, when the boundary dimension $n$ is even, the expansion for $\gv$ from the conformal boundary generally become polyhomogeneous (i.e.~containing terms logarithmic in $\rho$) starting from the $n$-th order term.
This is a consequence of the Fefferman--Graham expansion \cite{fef_gra:conf_inv, fef_gra:amb_met} adapted to aAdS settings.
(See the companion paper \cite{shao:aads_fg} for rigorous results in finite regularity and for further discussions.)

\begin{remark}
In particular, whenever $n > 2$, our main results will not involve any derivatives of $\gv$ for which this polyhomogeneity begins to play a role.
\end{remark}

\subsubsection{Vertical Tensor Fields}

For future applications, it will be important that the Carleman estimates we prove also apply to tensorial waves.
In this regard, the results of \cite{hol_shao:uc_ads, hol_shao:uc_ads_ns} are stated for Klein--Gordon equations for which the unknown $\phi$ is a \emph{horizontal tensor field}---roughly, a tensor field on $\mi{M}$ which at each point is tangent to the corresponding level set of $( t, \rho )$.

Here, we instead state our results in terms of \emph{vertical tensor fields}---those tangent to the level sets of just $\rho$.
The main reason is that vertical fields represent the natural tensorial quantities for which one can make sense of asymptotic limits at the conformal boundary.
In particular, the Fefferman--Graham partial expansions \cite{shao:aads_fg} (which will play key roles in upcoming results \cite{hol_shao:uc_ads_cft, hol_shao:uc_ads_sym} in Einstein-vacuum settings) are defined in terms of vertical quantities.
Moreover, the Einstein equations can be far more easily derived in terms of vertical rather than horizontal fields.
\footnote{One obtains far fewer connection terms when decomposing spacetime tensorial quantities along one coordinate ($\rho$) rather than two coordinates ($\rho$ and $t$).
Also, because we assume the Fefferman--Graham gauge condition \eqref{eq.intro_metric}, tensorial decompositions in the $\rho$-coordinate tend to be relatively simple.}

Another aspect of our treatment of tensor fields is our adoption of a novel covariant formalism.
In particular, we view tensors as ``mixed", containing both ``vertical" and ``spacetime" components:
\begin{itemize}
\item The vertical components, which mainly describe the solution $\phi$, are treated using the conformally rescaled metrics $\gv$ that are finite on the conformal boundary $\mi{I}$.
This is convenient for more directly capturing the asymptotic behavior of $\phi$ near $\mi{I}$.

\item On the other hand, the spacetime components are treated using the physical metric $g$.
These components are primarily used to handle the wave operator $\Box_g$ and its structure.
\end{itemize}
In particular, using such a mixed formalism, we can make sense of a (spacetime) covariant wave operator acting on vertical tensor fields; see Definition \ref{def.aads_mixed_wave}.
\footnote{This is analogous to the wave operator acting on horizontal tensor fields in \cite{hol_shao:uc_ads, hol_shao:uc_ads_ns}.}
Furthermore, our system is naturally compatible with standard Leibniz rule and integration by parts formulas.
This observation will play a crucial role in streamlining the derivation of our main Carleman estimate.

\subsubsection{Order of Vanishing}

The choice to base our analysis around vertical tensor fields also comes with an inconvenience---the metrics $\gv$ are Lorentzian, and hence not positive-definite, on the level sets of $\rho$.
As a result of this, we construct a Riemannian metric $\hv$ (using $\gv$ and our time function $t$) in order to measure the sizes of vertical tensors; see Definition \ref{def.aads_riemann} for precise formulas.
However, $\hv$ is no longer compatible with the $\gv$-Levi-Civita connection, hence one obtains additional error terms (depending on derivatives of $t$) in our Carleman estimates when integrating by parts.

A consequence of this is that we must assume additional vanishing for $\phi$.
Whereas in \cite{hol_shao:uc_ads, hol_shao:uc_ads_ns}, the order of vanishing ($\kappa_0$ in Theorem \ref{thm.intro_uc}) depended only on $\sigma$, here we require
\begin{equation}
\label{eq.intro_vanish_higher} \rho^{ - \kappa } \phi \rightarrow 0 \text{,} \qquad \rho \searrow 0
\end{equation}
in $C^1$, \emph{for $\kappa$ large enough, depending on the rank of $\phi$ and on the derivatives of $t$}.
In this particular regard, the present results seem weaker than those found in \cite{hol_shao:uc_ads, hol_shao:uc_ads_ns}.

On the other hand, this deficiency is, in practice, only cosmetic; for our upcoming applications in correspondence and holography, one would not obtain stronger results using \cite{hol_shao:uc_ads, hol_shao:uc_ads_ns}.
The reason is that the equations for the relevant horizontal tensor fields contain lower-order coefficients that do not decay fast enough at $( \mi{I}, \gm )$ for the main Carleman estimates, \cite[Theorem 3.7]{hol_shao:uc_ads_ns}.
\footnote{In fact, the error terms arising from $\hv$ can be directly connected to these lower-order terms.}
To treat these lower-order terms, one must instead prove and apply a modified Carleman estimate---in particular, \cite[Theorem C.1]{hol_shao:uc_ads_ns}---which also requires higher-order vanishing of the form \eqref{eq.intro_vanish_higher}.

\begin{remark}
In fact, if $\phi$ is scalar, or if the boundary metric $\gm$ is stationary, then our results still hold while assuming the same vanishing as in \cite{hol_shao:uc_ads, hol_shao:uc_ads_ns}; see Remark \ref{rmk.intro_vanishing} below.
\end{remark}

\subsection{The Main Results}

We now provide informal statements of the main results of this article.
The first theorem is \emph{a Carleman estimate for Klein--Gordon operators, assuming the null convexity criterion \eqref{eq.intro_uc_nc_1} and \eqref{eq.intro_uc_nc_2}}; this directly leads to unique continuation results for \eqref{eq.intro_kg}.

\begin{theorem} \label{thm.intro_carleman}
Assume the following:
\begin{itemize}
\item There is some $p > 0$ such that the bound \eqref{eq.intro_uc_G} holds.

\item $( \mi{M}, g )$ satisfies the \emph{null convexity criterion} \eqref{eq.intro_uc_nc_1}--\eqref{eq.intro_uc_nc_2}.
\end{itemize}
Then, if $\phi$ is a (possibly tensorial) solution of \eqref{eq.intro_kg} such that:
\begin{itemize}
\item $\phi$ is compactly supported on each level set of $( \rho, t )$.

\item The limit \eqref{eq.intro_vanish_higher} holds over a \emph{large enough timespan} $t_0 \leq t \leq t_1$ of $\mi{I}$, where $t_1 - t_0$ depends on $B, C$ in \eqref{eq.intro_uc_nc_1}--\eqref{eq.intro_uc_nc_2}, and where $\kappa$ is sufficiently large depending on $t$ and $\phi$.
\footnote{Our Carleman estimates are also capable of handling the slowly decaying lower-order coefficients that were treated in \cite[Theorem C.1]{hol_shao:uc_ads_ns}; see Theorem \ref{thm.carleman}. In this case, $\kappa$ would also depend on these ``borderline" coefficients.}
\end{itemize}
Then, $\phi$ satisfies a Carleman estimate of the form \eqref{eq.intro_carleman}, where $\lambda > 0$ is sufficiently large, and where $\Omega$ is a spacetime region that is sufficiently close to the portion of $\mi{I}$ along which $\phi$ vanishes.

In particular, a corollary of the Carleman estimate is that $\phi \equiv 0$ on a portion of $\mi{M}$ that is sufficiently close to the region $\{ t_0 \leq t \leq t_1 \}$ of $\mi{I}$ along which $\phi$ vanishes.
\end{theorem}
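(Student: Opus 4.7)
The plan is to prove the Carleman estimate by following the general strategy of \cite{hol_shao:uc_ads, hol_shao:uc_ads_ns}, adapted to the weaker null convexity criterion and to the vertical-tensor framework developed earlier in the paper, and then to extract the unique continuation corollary by a standard cutoff argument.

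First, I would apply Theorem \ref{thm.psc_nc} to convert the null convexity inequalities \eqref{eq.intro_uc_nc_1}--\eqref{eq.intro_uc_nc_2} into a pseudoconvexity-type statement for $( \mi{I}, \gm )$. This reduction identifies the data needed to build the Carleman weight $F$, which I expect to take the schematic form $F = \log \rho + \eta(t)$, where $\eta$ is tailored to the constants $B, C$ so that the Hessian of $F$ is positive along null directions of $g$ in a region $\Omega$ near the boundary over the timespan $[t_0, t_1]$. The value of $t_1 - t_0$ needed for this positivity should match the geodesic bound $\mc{T}_+(B, C)$ coming from Theorem \ref{thm.geodesic}, which is the source of the optimality advertised in the introduction.

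Next, I would carry out the positive-commutator computation using the mixed covariant wave operator of Definition \ref{def.aads_mixed_wave}. Conjugating $\Box_g + \sigma$ by $e^{-\lambda F / 2}$, splitting the result into symmetric and antisymmetric pieces, and taking squared norms produces a bulk positive term governed by the Hessian of $F$ contracted against $\nabla \phi$, giving the first-order weight $w_1$ in \eqref{eq.intro_carleman}; the zeroth-order weight $w_0$ arises from the $\lambda^3$-scale contribution. Pseudoconvexity ensures the bulk term dominates the remaining cross terms. The nonlinearity $\mc{G}$ is absorbed via \eqref{eq.intro_uc_G}, since the $\rho^{4+p}$ and $\rho^{3p}$ factors defeat $\lambda w_1$ and $\lambda w_0$ once $\lambda$ is large. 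Boundary contributions at $\rho = \rho_0$ and at $\{t = t_0\}, \{t = t_1\}$ are killed by the compact-support assumption and by the higher-order vanishing \eqref{eq.intro_vanish_higher}.

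The main technical obstacle is the incompatibility of the Riemannian metric $\hv$ from Definition \ref{def.aads_riemann} with the $\gv$-Levi-Civita connection. When integrating by parts against norms of $\nabla \phi$ measured in $\hv$, one generates commutator errors involving derivatives of $t$ that do not decay in $\rho$ and whose tensorial weight grows with the rank of $\phi$. These errors are precisely what forces $\kappa$ in \eqref{eq.intro_vanish_higher} to be taken large depending on both $t$ and the rank of $\phi$; combined with a sufficiently large $\lambda$, the extra vanishing absorbs them into the positive bulk. Once \eqref{eq.intro_carleman} is in hand, the unique continuation corollary follows in the standard way: introduce a cutoff $\chi$ supported slightly inside the region where $F$ takes its larger values, apply the estimate to $\chi \phi$ (using that $[\Box_g + \sigma, \chi]$ is supported away from the target subregion), and let $\lambda \to \infty$ to force $\phi \equiv 0$ in a neighborhood of $\{t_0 \leq t \leq t_1\} \subseteq \mi{I}$.
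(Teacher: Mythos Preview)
Your overall architecture matches the paper's: reduce null convexity to a pseudoconvexity statement via Theorem \ref{thm.psc_nc}, build a Carleman weight whose level sets are pseudoconvex, run a positive-commutator/multiplier computation with the mixed wave operator, absorb the $\hv$-incompatibility errors by enlarging $\kappa$, and deduce unique continuation by the usual cutoff-and-$\lambda\to\infty$ argument. Two refinements are worth flagging, since your schematic weight would not quite close the estimate as written.

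First, the Carleman weight in the paper is not of the separable form $\log\rho + \eta(t)$; it is $F = \kappa\log f + \lambda p^{-1} f^p$ with $f = \rho/\eta(t-t_0)$, where $\eta$ solves the damped oscillator ODE $\eta'' \pm 2b\eta' + c^2\eta = 0$ on each half-line (Definition \ref{def.carleman_eta}, Proposition \ref{thm.carleman_eta}). The non-separable $f^p$ term is what carries the large parameter $\lambda$, while the $\kappa\log f$ term handles the mass $\sigma$ and the vanishing order; a single-scale weight $\lambda(\log\rho + \eta(t))$ does not produce the right hierarchy between the $\lambda f^{n-2+p}$ first-order gain and the $\lambda^2 f^{n-2+2p}$ zero-order gain (Lemma \ref{thm.carleman_error_A}). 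The specific ODE for $\eta$ is exactly what makes the modified Hessian $\pi_\zeta$ of $f$ positive on the tangential null directions (Lemma \ref{thm.carleman_pseudoconvex}), and is the mechanism linking the required timespan to $\mc{T}_+(b,c)$.

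Second, the paper does not use the symmetric/antisymmetric conjugation split you describe; it runs a vector-field multiplier argument with $S = f^{n-3}\nabla^\sharp f$ together with a zero-order correction $v_\zeta$ (see \eqref{eql.carleman_S}, \eqref{eql.carleman_L}), pairing $\mc{L}\psi$ against $\bar{S}_\zeta\psi$ and integrating by parts. This is morally equivalent to your commutator approach, but the multiplier form makes the $\hv$-error bookkeeping (Lemma \ref{thm.carleman_error_h}) and the borderline first-order terms $\rho^2\nablam_{\mf{X}^\rho\partial_\rho+\mf{X}}$ more transparent, and in particular isolates the constant $\mc{C}_o$ governing $\kappa$ (Lemma \ref{thm.carleman_pointwise_J}).
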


The second main result concerns counterexamples to unique continuation.
More specifically, it states that the null convexity criterion (the same as in Theorem \ref{thm.intro_carleman}) also governs the trajectories of null geodesics near the conformal boundary and hence determines whether geometric optics counterexamples to unique continuation can be constructed as in \cite{alin_baou:non_unique}:

\begin{theorem} \label{thm.intro_geodesic}
Assume $( \mi{M}, g )$ satisfies the \emph{null convexity criterion} \eqref{eq.intro_uc_nc_1}--\eqref{eq.intro_uc_nc_2}.
Then, there exists a family of future null geodesics $\Lambda$ of $( \mi{M}, g )$ satisfying the following:
\begin{itemize}
\item $\Lambda$ starts from the conformal boundary at some time $t = t_0$.

\item $\Lambda$ remains arbitrarily close the conformal boundary.

\item $\Lambda$ returns to the conformal boundary before some fixed time $t = t_1$, where the timespan $t_1 - t_0$ is precisely the value from the statement of Theorem \ref{thm.intro_carleman}.
\end{itemize}
Furthermore, if one also has an upper bound
\begin{equation}
\label{eq.intro_uc_rnc} - \gs ( Z, Z ) \leq C' ( Z t )^2 \text{,} \qquad \gm ( Z, Z ) = 0 \text{,}
\end{equation}
then the null geodesics $\Lambda$ described above must also return to the conformal boundary after time $t = t_1'$, where $t_1' - t_0 > 0$ depends only on the constants $B, C'$.

As a consequence of the above, we have that:
\begin{itemize}
\item The usual geometric optics counterexamples to unique continuation (at least for $4 \sigma = n^2 - 1$) are no longer valid when $\phi$ vanishes on a timespan of more than $t_1 - t_0$.

\item The usual geometric optics counterexamples to unique continuation (at least for $4 \sigma = n^2 - 1$) are necessarily valid when $\phi$ vanishes on a timespan of less than $t_1' - t_0$.
\end{itemize}
\end{theorem}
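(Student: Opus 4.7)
The plan is to reduce the study of near-boundary null geodesics to a second-order ODE in the boundary time $t$ for the $\rho$-coordinate of the geodesic, and then to apply Sturm-type comparison against an explicit model equation.

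\emph{Deriving the effective ODE.} Given a future null geodesic $\Lambda$ with tangent $\gamma' = \dot\rho\,\partial_\rho + y$ (here $y$ is vertical), the null condition reads $\dot\rho^2 = -\gv(y,y)$, and a direct computation of the Christoffel symbols of $g = \rho^{-2}[d\rho^2 + \gv]$ gives
\[
\ddot\rho = 2\rho^{-1}\dot\rho^2 + \tfrac{1}{2}\partial_\rho\gv(y,y).
\]
Reparametrizing by $t$ using $\dot\rho = \tau'\dot t$, $\ddot\rho = \tau''\dot t^2 + \tau'\ddot t$, $\dot t = y(t)$, and $\ddot t = \Dvm^2 t(\gamma',\gamma')$, and using that the cross term $2\rho^{-1}\dot\rho \cdot y(t)$ arising in $\ddot t$ from the mixed Christoffel symbols $\Gamma^A_{\rho B}$ exactly cancels the $2\rho^{-1}\dot\rho^2$ term in $\ddot\rho$, I obtain after substituting $\partial_\rho\gv = 2\rho\gs + \mc{O}(\rho^2)$ from \eqref{eq.intro_asymp} the clean effective equation
\[
\tau''(t) = \frac{\gs(y,y)}{(y(t))^2}\,\tau(t) - \frac{\Dm^2_{yy}t}{(y(t))^2}\,\tau'(t) + (\text{errors of order } \rho\text{ or higher}),
\]
where $\tau(t) = \rho\circ\Lambda(t)$ and the spacetime Hessian of $t$ reduces to the boundary Hessian $\Dm^2 t$ to leading order as $\rho\searrow 0$.

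\emph{Constructing the family and the upper bound.} Fix $t_0$ and consider null geodesics $\Lambda$ with $\rho(t_0)=0$ and initial slope $\tau'(t_0) \searrow 0$. Standard ODE theory gives existence and continuous dependence of such $\Lambda$, with the maximum excursion $\rho_* \searrow 0$ in this limit. Since $\gm(y,y) = -\dot\rho^2 + \mc{O}(\rho^2) = \mc{O}(\rho_*^2)(y(t))^2$ along such $\Lambda$, the vertical tangent $y$ remains $\gm$-null up to $\mc{O}(\rho_*^2)$ throughout, so by continuity the null convexity criterion \eqref{eq.intro_uc_nc_1}--\eqref{eq.intro_uc_nc_2} translates into
\[
\frac{\gs(y,y)}{(y(t))^2} \leq -C + \mc{O}(\rho_*), \qquad \left|\frac{\Dm^2_{yy}t}{(y(t))^2}\right| \leq B + \mc{O}(\rho_*).
\]
The effective ODE then becomes the differential inequality $\tau'' + C\tau \leq B|\tau'| + \mc{O}(\rho_*^2)$, and Sturm-type comparison against the linear model $u'' + B|u'| + Cu = 0$ bounds the first return time $\ell_+ - t_0$ from above. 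Passing to the limit $\rho_* \searrow 0$ identifies the bound with the precise value $\mc{T}_+(B,C)$ of Definition \ref{def.geodesic_timebound}.

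\emph{Lower bound and counterexamples.} Assuming in addition the reverse inequality \eqref{eq.intro_uc_rnc}, the same reduction yields the opposite bound $\tau'' + C'\tau \geq -B|\tau'| - \mc{O}(\rho_*^2)$, and a Sturm comparison in the other direction gives $\ell_+ - t_0 \geq \mc{T}_-(B,C')$. For the implications toward counterexamples, once the family of near-boundary null geodesics $\Lambda$ is in hand, one applies the geometric optics construction of \cite{alin_baou:non_unique}: concentrating Gaussian wave packets along $\Lambda$ produces nontrivial solutions of \eqref{eq.intro_kg} in the conformal case $4\sigma = n^2 - 1$ whose Dirichlet and Neumann data vanish on the complement of $[t_0, \ell_+] \subset \mi{I}$, contradicting unique continuation on any timespan strictly shorter than $\mc{T}_-(B,C')$.

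\emph{Main obstacle.} The key technical difficulty is justifying the asymptotic reduction to the clean effective ODE uniformly along the trajectory, including controlling how rapidly $y$ approaches $\gm$-null (and in which norm) as $\rho_* \searrow 0$, so that the $\mc{O}(\rho_*)$ corrections in the translated NCC bounds propagate as genuinely subleading error terms over the entire interval $[t_0, \ell_+]$. A secondary subtlety is matching the threshold produced by the Sturm comparison to the precise formulas for $\mc{T}_\pm(B,C)$ in Definition \ref{def.geodesic_timebound}; since the damping term $B|\tau'|$ is genuinely coupled to the restoring term $C\tau$ (and cannot be eliminated by a change of variables in the non-stationary setting), this requires solving the model oscillator with absolute-value damping explicitly rather than invoking a generic comparison principle.
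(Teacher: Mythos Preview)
Your proposal is correct and follows essentially the same strategy as the paper: derive an effective second-order ODE in $t$ for $\rho_\Lambda$, translate the null convexity criterion into coefficient bounds, and run a Sturm comparison coupled with a continuity argument controlling the nonlinear errors. Two minor differences are worth noting. First, the paper derives the ODE (Proposition~\ref{thm.geodesic_eq}) by passing to the conformal metric $\rho^2 g$, for which the $\rho$-geodesic equation is simply $\rho_L'' = \tfrac{1}{2}\mi{L}_\rho\gv(\lambda_L',\lambda_L')$ with no $\rho^{-1}$ singularities; this avoids the cancellation you invoke between the $2\rho^{-1}\dot\rho^2$ and the cross term in $\ddot t$, though your cancellation is genuine. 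Second, rather than comparing against a single model $u'' + B|u'| + Cu = 0$, the paper splits the analysis at the turning time $T_1$ where $\rho_\Lambda'$ changes sign (Lemmas~\ref{thm.geodesic_A1} and~\ref{thm.geodesic_A2}), comparing against explicit damped sinusoids on each piece; this is equivalent to your formulation but makes the matching to $\mc{T}_\pm(B,C)$ transparent. You have correctly identified the bootstrap (Lemma~\ref{thm.geodesic_boot} in the paper) as the main technical content.
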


The precise (and slightly stronger) versions of Theorems \ref{thm.intro_carleman} and \ref{thm.intro_geodesic} are given in Theorems \ref{thm.carleman} and \ref{thm.geodesic}, respectively.
A formal unique continuation result is stated in Corollary \ref{thm.carleman_uc}.

\begin{remark} \label{rmk.intro_vanishing}
A closer look at the proof of Theorem \ref{thm.intro_carleman} shows that the order of vanishing $\kappa$ in \eqref{eq.intro_vanish_higher} depends on the rank of $\phi$.
In particular, if $\phi$ is scalar, then the Riemannian metric $\ms{h}$ is not present in the Carleman estimates.
Consequently, in this case, Theorem \ref{thm.intro_carleman} can be shown to hold for the same optimal order of vanishing $\kappa = \kappa_0$ as in \cite{hol_shao:uc_ads, hol_shao:uc_ads_ns}.

Similarly, if the boundary metric is stationary with respect to $t$ (or, more precisely, if $\Dm^2 t \equiv 0$), then one can again show that Theorem \ref{thm.intro_carleman} holds for the optimal $\kappa = \kappa_0$.

Both of the above improvements are treated in the precise versions of our main results; see \eqref{eq.carleman_kappa}.
On the other hand, it is not known whether Theorem \ref{thm.intro_carleman} holds in general for $\kappa = \kappa_0$.
\end{remark}

\begin{remark}
As mentioned before, the known counterexamples to unique continuation are only directly applicable, in the present setting, to the conformal case $4 \sigma = n^2 - 1$.
Moreover, it is not yet known whether the timespan in Theorem \ref{thm.intro_carleman} can be further improved for other values of $\sigma$.
\end{remark}

\begin{remark}
In Theorem \ref{thm.intro_carleman}, one assumes $\phi$ vanishes on an entire time slab $\{ t_0 \leq t \leq t_1 \}$ of the conformal boundary.
It is not yet clear whether similar Carleman estimates or unique continuation results hold with vanishing on regions $\mi{D} \subseteq \mi{I}$ that are also spatially localized.

A more ambitious question, which may require microlocal methods, would be to characterize all such $\mi{D}$ directly by the behavior of null geodesics near the conformal boundary.
\end{remark}

We conclude with a brief discussion of the proofs of Theorems \ref{thm.intro_carleman} and \ref{thm.intro_geodesic}.

\subsubsection{Proof of Theorem \ref{thm.intro_carleman}}

Our argument is mostly based on the proofs of the Carleman estimates of \cite[Theorems 3.7 and C.1]{hol_shao:uc_ads_ns}.
However, our proof does contain various novel elements, due to our more general setting and our use of vertical tensor fields.
As a result, we focus on aspects that are exclusive to this article, and we refer the reader to \cite{hol_shao:uc_ads_ns} for further details.

A key part of the proof is to connect the null convexity criterion \eqref{eq.intro_uc_nc_1}, \eqref{eq.intro_uc_nc_2} to pseudoconvexity properties of $\Box_g$ near $\mi{I}$ that are crucial for unique continuation.
As previously mentioned, this is accomplished by relating the null convexity criterion to a variant of the pseudoconvexity criterion \eqref{eq.intro_uc_psc_1}, \eqref{eq.intro_uc_psc_2} of \cite{hol_shao:uc_ads_ns}.
(The is based on a projective geometric argument in the unique continuation literature for wave operators; see \cite{tat:notes_uc}.)
From this point, arguments similar to those in \cite{hol_shao:uc_ads_ns} allow us to construct a family of hypersurfaces that are pseudoconvex near the conformal boundary.

Another key aspect is the use of a Riemannian metric $\hv$ to measure the sizes of vertical tensor fields.
Since $\hv$ fails to be compatible with $g$-covariant derivatives, one encounters additional error terms containing derivatives of $\hv$.
Furthermore, some of these new terms become ``leading-order", in that they affect the order of vanishing required for $\phi$ in the Carleman estimate.

In fact, considerable care is needed in order to minimize the impact of these terms in the Carleman estimate.
While this requires a number of technical alterations to the arguments in \cite{hol_shao:uc_ads_ns}, the main idea, at a basic level, is to treat these dangerous terms at the same level as the first-order terms with ``borderline" decay in \cite[Theorem C.1]{hol_shao:uc_ads_ns} (which also affect the requisite order of vanishing).

Finally, we mention that our estimates for $\hv$, along with various tensor computations in the proof, make extensive use of our covariant formalism for ``mixed" and vertical tensor fields.

\subsubsection{Proof of Theorem \ref{thm.intro_geodesic}}

This argument revolves around the geodesic equations for $( \mi{M}, g )$---in particular, the equation corresponding to the $\rho$-component, which measures how close a geodesic is to the conformal boundary.
The key observation is that for small values of $\rho$ (that is, near $\mi{I}$), the leading terms of this equation describe a damped harmonic oscillator,
\footnote{This same damped harmonic oscillator also plays a crucial role in Theorem \ref{thm.intro_carleman}, as well as in the Carleman estimates of \cite{hol_shao:uc_ads_ns}.
In particular, \eqref{eq.intro_harmonic} is closely tied to the pseudoconvexity properties of $\Box_g$.}
\begin{equation}
\label{eq.intro_harmonic} \rho'' + A \rho' + B \rho \approx 0 \text{.}
\end{equation}
As a consequence of this, one expects that any ($g$-)null geodesic starting from and remaining sufficiently close to $\mi{I}$ will return to $\mi{I}$ after a finite timespan.

Another important observation is that the coefficients $A$ and $B$ in \eqref{eq.intro_harmonic} are directly connected to $\Dm^2 t$ and $\gs$, respectively.
One can thus apply the classical Sturm comparison theorem to \eqref{eq.intro_harmonic} in order to estimate the return time of null geodesics, from above and below, in terms of $\Dm^2 t$ and $\gs$.
Moreover, a closer analysis reveals that $A$ and $B$ depend only on the components of $\Dm^2 t$ and $\gs$ that are arbitrarily close to null.
As a result of this, the above-mentioned bounds on the geodesic return time can, in fact, be captured by the null convexity criterion.

What significantly complicates this analysis, however, is that the actual geodesic equations contain many nonlinear lower-order terms.
As a result of this, one must couple this Sturm comparison process with a carefully constructed continuity argument to ensure that the nonlinear terms remain negligible throughout the entire trajectory of the geodesic (and hence the Sturm comparison remains valid).
This coupling is the main novelty and technical difficulty of this proof.

\subsection{Organization of the Paper}

The remainder of the paper is organized as follows:
\begin{itemize}
\item In Section \ref{sec.aads}, we describe more precisely the aAdS spacetimes that we will study, as well as the quantities (e.g., vertical tensor fields) that we will analyze on these spacetimes.

\item In Section \ref{sec.psc}, we formally state (in Definition \ref{def.aads_nc}) the null convexity criterion.
We then connect (in Theorem \ref{thm.psc_nc}) the null convexity criterion to the pseudoconvexity criterion of \cite{hol_shao:uc_ads_ns}.
This will be used in the proof of our main Carleman estimate.

\item In Section \ref{sec.geodesic}, we discuss the trajectories of near-boundary null geodesics and its connection to the null convexity criterion.
In particular, we prove Theorem \ref{thm.intro_geodesic}.

\item In Section \ref{sec.carleman}, we describe the detailed setup for our main Carleman estimate, Theorem \ref{thm.intro_carleman}.
We then proceed to formally state (in Theorem \ref{thm.carleman}) and prove this estimate.
\end{itemize}
\if\comp1
Finally, Appendix \ref{sec.comp} contains proofs of some technical propositions from the main text.
\fi

\subsection{Acknowledgments}

The authors thank Gustav Holzegel for numerous discussions and advice.
A.S.~is supported by EPSRC grant EP/R011982/1.

\section{Asymptotically AdS Spacetimes} \label{sec.aads}

In this section, we define precisely the aAdS spacetimes that we will consider in this article.
\footnote{We note that a portion of this material was covered in \cite[Section 2.1]{shao:aads_fg}.
However, in order to keep the present article self-contained, we briefly review here the relevant parts of \cite{shao:aads_fg}.}

\subsection{Asymptotically AdS Manifolds} \label{sec.aads_mfld}

Our first task is to provide a precise description of the manifolds that we will study, as well as of the objects on them that we will analyze.

\begin{definition} \label{def.aads_manifold}
We define an \emph{aAdS region} to be a manifold with boundary of the form 
\begin{equation}
\label{eq.aads_manifold} \mi{M} := ( 0, \rho_0 ] \times \mi{I} \text{,} \qquad \rho_0 > 0 \text{,}
\end{equation}
where $\mi{I}$ is a smooth $n$-dimensional manifold, and where $n \in \N$.
\footnote{While we refer to $\mi{M}$ as the aAdS region, we always also implicitly assume the associated quantities $n$, $\mi{I}$, $\rho_0$.}

Furthermore, given an aAdS region $\mi{M}$ as in \eqref{eq.aads_manifold}:
\begin{itemize}
\item We let $\rho$ denote the function on $\mi{M}$ that projects onto its $( 0, \rho_0 ]$-component.

\item We let $\partial_\rho$ denote the lift to $\mi{M}$ of the canonical vector field $d_\rho$ on $( 0, \rho_0 ]$.
\end{itemize}
\end{definition}

\begin{definition} \label{def.aads_vertical}
We define the \emph{vertical bundle} $\ms{V}^k_l \mi{M}$ of rank $( k, l )$ over $\mi{M}$ to be the manifold of all tensors of rank $( k, l )$ on each level set of $\rho$ in $\mi{M}$:
\footnote{We use the standard notation $T^k_l$ to denote the usual tensor bundle of rank $( k, l )$ over a manifold.
As usual, $k$ refers to the contravariant rank, while $l$ refers to the covariant rank.}
\begin{equation}
\label{eq.aads_vertical} \ms{V}^k_l \mi{M} = \bigcup_{ \sigma \in ( 0, \rho_0 ] } T^k_l \{ \rho = \sigma \} \text{.}
\end{equation}
Moreover, we refer to sections of $\ms{V}^k_l \mi{M}$ as \emph{vertical tensor fields} of rank $( k, l )$.
\end{definition}

Observe that a vertical tensor field of rank $( k, l )$ on an aAdS region $\mi{M}$ can also be interpreted as a one-parameter family, indexed by $\rho \in ( 0, \rho_0 ]$, of rank $( k, l )$ tensor fields on $\mi{I}$.

\begin{definition} \label{def.aads_tensor}
Similar to \cite{shao:aads_fg}, we adopt the following conventions regarding tensor fields:
\begin{itemize}
\item We use italicized font (such as $g$) for tensor fields on $\mi{M}$.

\item We use serif font (such as $\gv$) for vertical tensor fields.

\item We use Fraktur font (such as $\mf{g}$) for tensor fields on $\mi{I}$.
\end{itemize}
Also, unless otherwise stated, we assume that a given tensor field is smooth.
\end{definition}

\begin{definition} \label{def.aads_tensor_id}
Throughout, we will adopt the following natural identifications of tensor fields:
\begin{itemize}
\item Given a tensor field $\mf{A}$ on $\mi{I}$, we will also use $\mf{A}$ to denote the vertical tensor field on $\mi{M}$ obtained by extending $\mf{A}$ as a $\rho$-independent family of tensor fields on $\mi{I}$.

\item In particular, a scalar function on $\mi{I}$ also defines a $\rho$-independent function on $\mi{M}$.

\item In addition, any vertical tensor field $\ms{A}$ can be uniquely identified with a tensor field on $\mi{M}$ (of the same rank) via the following rule:~the contraction of any component of $\ms{A}$ with $\partial_\rho$ or $d \rho$ (whichever is appropriate) is defined to vanish identically.
\end{itemize}
\end{definition}

\begin{definition} \label{def.aads_vertical_lie}
Let $\mi{M}$ be an aAdS region, and let $\ms{A}$ be a vertical tensor field.
\begin{itemize}
\item Given any $\sigma \in ( 0, \rho_0 ]$, we let $\ms{A} |_\sigma$ denote the tensor field on $\mi{I}$ obtained from restricting $\ms{A}$ to the level set $\{ \rho = \sigma \}$ (and then naturally identifying $\{ \rho = \sigma \}$ with $\mi{I}$).

\item We define the $\rho$-\emph{Lie derivative} of $\ms{A}$, denoted $\mi{L}_\rho \ms{A}$, to be the vertical tensor field satisfying
\begin{equation}
\label{eq.aads_vertical_lie} \mi{L}_\rho \ms{A} |_\sigma = \lim_{ \sigma' \rightarrow \sigma } ( \sigma' - \sigma )^{-1} ( \ms{A} |_{ \sigma' } - \ms{A} |_\sigma ) \text{,} \qquad \sigma \in ( 0, \rho_0 ] \text{.}
\end{equation}
\end{itemize}
\end{definition}

Next, we establish our conventions for coordinate systems on $\mi{I}$ and $\mi{M}$:

\begin{definition} \label{def.aads_index}
Let $\mi{M}$ be an aAdS region, and let $( U, \varphi )$ be a coordinate system on $\mi{I}$:
\begin{itemize}
\item Let $\varphi_\rho := ( \rho, \varphi )$ denote the corresponding lifted coordinates on $( 0, \rho_0 ] \times U$.

\item We use lower-case Latin indices $a, b, c, \dots$ to denote $\varphi$-coordinate components, and we use the symbols $x^a, x^b, x^c, \dots$ to denote $\varphi$-coordinate functions.

\item We use lower-case Greek indices $\alpha, \beta, \mu, \nu, \dots$ to denote $\varphi_\rho$-coordinate components.

\item Repeated indices will indicate summations over the appropriate components.
\end{itemize}
\end{definition}

\begin{definition} \label{def.aads_coord}
Let $\mi{M}$ be an aAdS region.
A coordinate system $( U, \varphi )$ on $\mi{I}$ is called \emph{compact} iff:
\begin{itemize}
\item $\bar{U}$ is a compact subset of $\mi{I}$.

\item $\varphi$ extends smoothly to (an open neighborhood of) $\bar{U}$.
\end{itemize}
\end{definition}

We now recall the notions from \cite{shao:aads_fg} of local size and convergence for vertical tensor fields:

\begin{definition} \label{def.aads_limit}
Let $\mi{M}$ be an aAdS region, and fix $M \geq 0$.
In addition, let $\ms{A}$ and $\mf{A}$ be a vertical tensor field and a tensor field on $\mi{I}$, respectively, both of the same rank $(k, l)$.
\begin{itemize}
\item Given a compact coordinate system $( U, \varphi )$ on $\mi{I}$, we define (with respect to $\varphi$-coordinates)
\begin{equation}
\label{eq.aads_norm} | \ms{A} |_{ M, \varphi } := \sum_{ m = 0 }^M \sum_{ \substack{ a_1, \dots, a_m \\ b_1, \dots, b_k \\ c_1, \dots, c_l } } | \partial^m_{ a_1 \dots a_m } \ms{A}^{ b_1 \dots b_k }_{ c_1 \dots c_l } | \text{.}
\end{equation}

\item $\ms{A}$ is \emph{locally bounded in $C^M$} iff for any compact coordinate system $( U, \varphi )$ on $\mi{I}$,
\begin{equation}
\label{eq.aads_bounded} \sup_{ ( 0, \rho_0 ] \times U } | \ms{A} |_{ M, \varphi } < \infty \text{.}
\end{equation}

\item We write $\ms{A} \rightarrow^M \mf{A}$ iff for any compact coordinate system $( U, \varphi )$ on $\mi{I}$,
\begin{equation}
\label{eq.aads_limit} \lim_{ \sigma \searrow 0 } \sup_{ \{ \sigma \} \times U } | \ms{A} - \mf{A} |_{ M, \varphi } = 0 \text{.}
\end{equation}
\end{itemize}
\end{definition}

\subsection{Asymptotically AdS Metrics} \label{sec.aads_fg}

We now recall the notion of ``FG-aAdS segments", as well as the Fefferman--Graham gauge condition, from \cite{shao:aads_fg}.
This represents the minimal conditions needed for a spacetime to reasonably be considered ``asymptotically anti-de Sitter".

\begin{definition} \label{def.aads_metric}
$( \mi{M}, g )$ is called an \emph{FG-aAdS segment} iff the following hold:
\begin{itemize}
\item $\mi{M}$ is an aAdS region, and $g$ is a Lorentzian metric on $\mi{M}$.

\item There exists a vertical tensor field $\gv$ of rank $( 0, 2 )$ such that
\footnote{In \eqref{eq.aads_metric}, we identified $\gv$ with a $\rho$-trivial spacetime tensor field via Definition \ref{def.aads_tensor_id}.}
\begin{equation}
\label{eq.aads_metric} g := \rho^{-2} ( d \rho^2 + \gv ) \text{.}
\end{equation}

\item There exists a Lorentzian metric $\gm$ on $\mi{I}$ such that
\begin{equation}
\label{eq.aads_metric_limit} \gv \rightarrow^0 \gm \text{.}
\end{equation}
\end{itemize}

Furthermore, given such an FG-aAdS segment $( \mi{M}, g )$:
\footnote{While we refer to $( \mi{M}, g )$ as the FG-aAdS segment, this also implicitly includes $\gv$ and $\gm$ from \eqref{eq.aads_metric} and \eqref{eq.aads_metric_limit}.}
\begin{itemize}
\item We refer to the form \eqref{eq.aads_metric} for $g$ as the \emph{Fefferman--Graham} (or \emph{FG}) \emph{gauge condition}.

\item $( \mi{I}, \gm )$ is called the \emph{conformal boundary} associated with $( \mi{M}, g, \rho )$.
\end{itemize}
\end{definition}

\begin{remark}
We note that the conformal boundary $( \mi{I}, \gm )$ is ``gauge-dependent" and depends on the choice of boundary-defining function $\rho$.
See  \cite{deharo_sken_solod:holog_adscft, imbim_schwim_theis_yanki:diffeo_holog} for further discussions of this point.
\end{remark}

The following definitions establish notations for some standard geometric objects:

\begin{definition} \label{def.aads_covar}
Given an FG-aAdS segment $( \mi{M}, g )$:
\begin{itemize}
\item Let $g^{-1}$, $\nabla$, and $R$ denote the metric dual, the Levi-Civita connection, and the Riemann curvature tensor (respectively) associated with the spacetime metric $g$.

\item Let $\gm^{-1}$, $\Dm$, and $\Rm$ denote the metric dual, the Levi-Civita connection, and the Riemann curvature tensor (respectively) associated with the boundary metric $\gm$.

\item Similar to \cite[Section 2.1]{shao:aads_fg}, we let $\gv^{-1}$, $\Dv$, and $\Rv$ denote the metric dual, the Levi-Civita connection, and the Riemann curvature (respectively) for the vertical metric $\gv$.
\footnote{In particular, $\gv^{-1}$ is the rank $( 2, 0 )$ vertical tensor field satisfying $\gv^{-1} |_\sigma := ( \gv |_\sigma )^{-1}$ for each $\sigma \in ( 0, \rho_0 ]$, while $\Dv$ acts like the Levi-Civita connection associated with $\gv |_\sigma$ on any level set $\{ \rho = \sigma \}$.}
\end{itemize}
In addition, we omit the superscript ``${}^{-1}$" when expressing a metric dual in index notion.
\end{definition}

\begin{definition} \label{def.aads_gradient}
Let $( \mi{M}, g )$ be an FG-aAdS segment.
We then define the following:
\begin{itemize}
\item Let $\nabla^\sharp$ denote the $g$-gradient operator (that is, the $g$-dual of $\nabla$).

\item Let $\Dm^\sharp$ and $\trace{\gm}$ denote the $\gm$-gradient and $\gm$-trace operators, respectively.

\item Let $\Dv^\sharp$ and $\trace{\gv}$ denote the $\gv$-gradient and $\gv$-trace operators, respectively.
\footnote{In particular, on each level set $\{ \rho = \sigma \}$, $\sigma \in ( 0, \rho_0 ]$, the operator $\Dv^\sharp$ behaves like the $( \gv |_\sigma )$-gradient operator.}
\end{itemize}
\end{definition}

For our upcoming results, we will require a stronger notion of ``FG-aAdS segments" which also involves boundary limits for derivatives of the metric.

\begin{definition} \label{def.aads_strong}
$( \mi{M}, g )$ is called a \emph{strongly FG-aAdS segment} iff the following hold:
\begin{itemize}
\item $( \mi{M}, g )$ is an FG-aAdS segment.

\item There exists a symmetric rank $( 0, 2 )$ tensor field $\gs$ on $\mi{I}$ such that the following hold:
\begin{equation}
\label{eq.aads_strong_limits} \gv \rightarrow^3 \gm \text{,} \qquad \mi{L}_\rho \gv \rightarrow^2 0 \text{,} \qquad \mi{L}_\rho^2 \gv \rightarrow^1 \gs \text{.}
\end{equation}

\item $\mi{L}_\rho^3 \gv$ is locally bounded in $C^0$.
\end{itemize}
\end{definition}

\begin{remark}
Definition \ref{def.aads_strong} can also be connected to the main result of \cite{shao:aads_fg}.
More specifically, \cite[Theorem 3.3]{shao:aads_fg} implies that if $( \mi{M}, g )$ is an FG-aAdS segment, and if
\begin{itemize}
\item $( \mi{M}, g )$ is Einstein-vacuum, and

\item $\gv$ and $\mi{L}_\rho \gv$ are locally bounded in $C^{ n + 4 }$ and $C^0$, respectively,
\end{itemize}
then $( \mi{M}, g )$ is also a strongly FG-aAdS segment.
\end{remark}

The estimates of \cite[Proposition 2.36, Proposition 2.37]{shao:aads_fg} immediately imply the following:

\begin{proposition} \label{thm.aads_geom_limit}
Let $( \mi{M}, g )$ be a strongly FG-aAdS segment.
\begin{itemize}
\item $\gv^{-1}$ and $\Rv$ are locally bounded in $C^3$ and $C^1$, respectively, and
\begin{equation}
\label{eq.aads_geom_limit} \gv^{-1} \rightarrow^3 \gm^{-1} \text{,} \qquad \Rv \rightarrow^1 \Rm \text{.}
\end{equation}

\item Let $\ms{A}$ be a vertical tensor field, and let $\mf{A}$ be a tensor field on $\mi{I}$ of the same rank, and assume that $\ms{A} \rightarrow^M \mf{A}$ for some $M > 0$.
Then, the following limits hold:
\begin{equation}
\label{eq.aads_geom_limit_deriv} \Dv^m \ms{A} \rightarrow^0 \Dm^m \mf{A} \text{,} \qquad 0 < m \leq \min ( M, 3 ) \text{.}
\end{equation}
\end{itemize}
\end{proposition}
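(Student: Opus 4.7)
The plan is to reduce both claims to elementary multilinear algebra and Leibniz-type identities, working in a fixed compact coordinate system $(U,\varphi)$ on $\mi{I}$ and its lift $\varphi_\rho$. Since the proposition is explicitly asserted in the excerpt to follow at once from \cite[Propositions 2.36, 2.37]{shao:aads_fg}, the proof amounts to unpacking those identities in light of Definition \ref{def.aads_strong} and Definition \ref{def.aads_limit}.

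For the first bullet, I would begin by noting that $\gm$ is Lorentzian on $\bar{U}$, so $\det \gm$ is bounded away from zero on $U$; combined with $\gv \rightarrow^0 \gm$, this guarantees that $\gv$ is invertible with uniformly bounded inverse for $\rho$ small. Differentiating the identity $\gv^{ab}\gv_{bc} = \delta^a_c$ in the coordinate directions on $\mi{I}$ expresses $\partial_{a_1}\cdots\partial_{a_k}\gv^{-1}$, for each $k \leq 3$, as a polynomial in $\gv^{-1}$ and in $\partial^{\leq k}\gv$. The hypothesis $\gv \rightarrow^3 \gm$ then yields, for each such $k$, both the local $C^3$-boundedness of $\gv^{-1}$ and the convergence $\gv^{-1} \rightarrow^3 \gm^{-1}$ in the sense of \eqref{eq.aads_limit}. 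Since the Riemann tensor $\Rv$ is, in $\varphi$-coordinates, a polynomial in $\gv^{-1}$ and in the first and second coordinate derivatives of $\gv$, the same mechanism bounds $\Rv$ locally in $C^1$ and produces $\Rv \rightarrow^1 \Rm$.

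For the second bullet, I would use that on each slice $\{\rho = \sigma\}$ the Christoffel symbols $\Gamma[\gv]$ in $\varphi$-coordinates are polynomial expressions in $\gv^{-1}$ and in first coordinate derivatives of $\gv$. An induction on $m$ shows that $\Dv^m \ms{A}$ is a polynomial combination of $\partial^{\leq m}\ms{A}$ together with $\gv$, $\gv^{-1}$ and their $\varphi$-coordinate derivatives up to order $m$; the analogous formula holds for $\Dm^m\mf{A}$ in terms of $\mf{A}$, $\gm$, $\gm^{-1}$. The hypothesis $\ms{A} \rightarrow^M \mf{A}$, combined with the first bullet and the fact that $|\cdot|_{0,\varphi}$ is continuous under pointwise multiplication and contraction, delivers term-by-term convergence in the sense of \eqref{eq.aads_limit}, which is exactly $\Dv^m\ms{A} \rightarrow^0 \Dm^m\mf{A}$, valid whenever $0 < m \leq \min(M,3)$.

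The main obstacle is not conceptual but accounting: one must verify that every algebraic manipulation is compatible with the coordinate norms $|\cdot|_{M,\varphi}$ of Definition \ref{def.aads_limit}, and carefully track how many coordinate derivatives of $\gv$ are consumed at each step, so that the regularity budget $3$ assumed in Definition \ref{def.aads_strong} is never exceeded. Crucially, only derivatives tangent to $\mi{I}$ appear (no $\partial_\rho$ derivatives), so the limits $\mi{L}_\rho \gv \rightarrow^2 0$ and $\mi{L}_\rho^2 \gv \rightarrow^1 \gs$ are not needed for the present statement; only $\gv \rightarrow^3 \gm$ is used. This bookkeeping is precisely what is carried out in \cite[Propositions 2.36, 2.37]{shao:aads_fg}, and the proposition as stated follows by direct invocation of those estimates.
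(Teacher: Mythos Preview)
Your proposal is correct and matches the paper's approach: the paper gives no self-contained proof but simply invokes \cite[Propositions 2.36, 2.37]{shao:aads_fg}, whose content is precisely the coordinate bookkeeping you describe---expressing $\gv^{-1}$, $\Rv$, and $\Dv^m\ms{A}$ as polynomial combinations of $\gv$, $\gv^{-1}$, and their $\varphi$-coordinate derivatives, then reading off the limits from $\gv \rightarrow^3 \gm$. Your observation that only tangential derivatives enter (so the $\mi{L}_\rho$-hypotheses of Definition~\ref{def.aads_strong} are not needed here) is also correct.
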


\subsection{Time Foliations} \label{sec.aads_nc}

For our main results, we will also need an appropriate global measure of time for our setting.
In particular, this can be viewed as a partial gauge choice, and it plays a similar role as in \cite{hol_shao:uc_ads, hol_shao:uc_ads_ns}, except we allow for a much more general class of time functions here.

\begin{definition} \label{def.aads_time}
Let $( \mi{M}, g )$ be an FG-aAdS segment.
A smooth function $t: \mi{I} \rightarrow \R$ is called a \emph{global time} for $( \mi{M}, g )$ iff the following conditions hold:
\begin{itemize}
\item The nonempty level sets of $t$ are Cauchy hypersurfaces of $( \mi{I}, \gm )$.

\item $\Dv^\sharp t$ is uniformly timelike---there exists $C_t > 1$ such that
\footnote{Here, $t$ is extended in a $\rho$-independent manner to $\mi{M}$ as in Definition \ref{def.aads_tensor_id}.}
\begin{equation}
\label{eq.aads_time} C_t^{-1} \leq - \gv ( \Dv^\sharp t, \Dv^\sharp t ) \leq C_t \text{.}
\end{equation}
\end{itemize}

Furthermore, whenever $t$ is such a global time for $( \mi{M}, g )$, we define the shorthands
\footnote{We note that standard results in Lorentzian geometry (see \cite{on:srg, wald:gr}) imply $\mi{I}$ is diffeomorphic to $( t_-, t_+ ) \times \mc{S}$ for some $( n - 1 )$-dimensional manifold $\mc{S}$, with $t$ being the projection onto the first component $( t_-, t_+ )$.}
\begin{equation}
\label{eq.aads_timespan} t_+ := \sup_{ \mi{I} } t \text{,} \qquad t_- := \inf_{ \mi{I} } t \text{.}
\end{equation}
\end{definition}

\begin{remark} \label{rmk.aads_smooth}
Though we implicitly assume in our development that geometric quantities are smooth, this condition can be considerably weakened.
In particular, all our results still hold when the vertical metric $\gv$ is only $C^3$, and when the global time $t$ is $C^4$.
\footnote{In all the upcoming proofs, one takes at most three derivatives of $\gv$ and four derivatives of $t$.}
\end{remark}

Compactness will be a crucial ingredient for many of our main results.
As a result, we will often restrict our attention to the following class of boundary domains:

\begin{definition} \label{def.aads_scompact}
Let $( \mi{M}, g )$ be an FG-aAdS segment, let $t$ be a global time, and let $\mi{D}$ be an open subset of $\mi{I}$.
We say that $\bar{\mi{D}}$ has \emph{compact cross-sections} iff the following sets are compact:
\begin{equation}
\label{eq.aads_scompact} \bar{\mi{D}} \cap \{ t = \tau \} \text{,} \qquad t_- < \tau < t_+ \text{.}
\end{equation}
\end{definition}

Next, we note that a global time naturally induces Riemannian metrics on our setting.
These provide ways to measure, in a coordinate-independent manner, the sizes of tensor fields.

\begin{definition} \label{def.aads_riemann}
Let $( \mi{M}, g )$ be an FG-aAdS segment, and let $t$ be a global time.
\begin{itemize}
\item We define the vertical Riemannian metric associated with $( \gv, t )$ by
\begin{equation}
\label{eq.aads_riemann_vertical} \hv := \gv - \frac{2}{ \gv ( \Dv^\sharp t, \Dv^\sharp t ) } \, dt^2 \text{.}
\end{equation}
\item We define the boundary Riemannian metric associated with $( \gm, t )$ by
\begin{equation}
\label{eq.aads_riemann} \hm := \gm - \frac{2}{ \gm ( \Dm^\sharp t, \Dm^\sharp t ) } \, dt^2 \text{.}
\end{equation}
\end{itemize}
\end{definition}

Finally, we define some additional tensorial notations that will be particularly useful in the statement and proof of our main Carleman estimates.

\begin{definition} \label{def.aads_tensor_norm}
Let $( \mi{M}, g )$ be an FG-aAdS segment, and let $t$ be a global time.
Furthermore, let $\hv$ denote the associated vertical Riemannian metric from \eqref{eq.aads_riemann_vertical}.
\begin{itemize}
\item Given vertical tensor fields $\ms{A}$ and $\ms{C}$ of dual ranks $( k, l )$ and $( l, k )$, respectively, we let $\langle \ms{A}, \ms{C} \rangle$ denote the \emph{full contraction} of $\ms{A}$ and $\ms{C}$, that is, the scalar quantity obtained by contracting all the corresponding components of $\ms{A}$ and $\ms{C}$.

\item For a vertical tensor field $\ms{A}$ of rank $( k, l )$, we let $\hvd \ms{A}$ denote the \emph{full $\hv$-dual} of $\ms{A}$---the rank $( l, k )$ vertical tensor field obtained by raising and lowering all indices of $\ms{A}$ using $\hv$.

\item We define the (positive-definite) bundle metric $\bar{\hv}$ on the vertical bundle $\ms{V}^k_l \mi{M}$ as follows:~for any vertical tensor fields $\ms{A}, \ms{B}$ of rank $( k, l )$, we define
\begin{equation}
\label{eq.aads_tensor_inner} \bar{\hv} ( \ms{A}, \ms{B} ) := \langle \hvd \ms{A}, \ms{B} \rangle = \langle \ms{A}, \hvd \ms{B} \rangle \text{.}
\end{equation}

\item Furthermore, for any vertical tensor field $\ms{A}$, we define the shorthand
\begin{equation}
\label{eq.aads_tensor_norm} | \ms{A} |_{ \hv }^2 := \bar{\hv} ( \ms{A}, \ms{A} ) \text{.}
\end{equation}
\end{itemize}
\end{definition}

\begin{proposition} \label{thm.aads_tensor_norm}
Let $( \mi{M}, g )$ be an FG-aAdS segment, and let $t$ be a global time.
Then:
\begin{itemize}
\item For any vertical tensor fields $\ms{A}$ and $\ms{B}$,
\begin{equation}
\label{eq.aads_tensor_norm_product} | \hvd \ms{A} |_\hv = | \ms{A} |_\hv \text{,} \qquad | \ms{A} \otimes \ms{B} |_\hv \leq | \ms{A} |_\hv | \ms{B} |_\hv \text{.}
\end{equation}

\item For any vertical tensor fields $\ms{A}$, $\ms{B}$, and $\ms{C}$ of ranks $( k, l )$, $( k, l )$, and $( l, k )$, respectively:
\begin{equation}
\label{eq.aads_tensor_norm_contract} | \langle \ms{A}, \ms{C} \rangle | \leq | \ms{A} |_\hv | \ms{C} |_\hv \text{,} \qquad | \hvm ( \ms{A}, \ms{B} ) | \leq | \ms{A} |_\hv | \ms{B} |_\hv \text{.}
\end{equation}
\end{itemize}
\end{proposition}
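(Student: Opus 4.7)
The plan is to reduce all four inequalities to elementary linear algebra facts about inner products, by passing pointwise to an $\hv$-orthonormal frame. Fix $p \in \mi{M}$, and choose a basis $\{ e_i \}_{i=1}^n$ of $T_p \{ \rho = \rho(p) \}$ that is $\hv$-orthonormal at $p$; such a basis exists because $\hv$ restricts to a positive-definite metric on each level set of $\rho$. (To verify this, note from \eqref{eq.aads_riemann_vertical} that $\hv$ coincides with $\gv$ on the $\gv$-orthogonal complement of $\Dv^\sharp t$, where $\gv$ is positive-definite, while on $\Dv^\sharp t$ itself a direct computation using \eqref{eq.aads_time} gives a positive value.) Let $\{ e^j \}$ denote the $\gv$-dual coframe; in this frame, the components of both $\hv$ and $\hv^{-1}$ are given by the Kronecker delta.

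First I verify $| \hvd \ms{A} |_\hv = | \ms{A} |_\hv$. Expanding $\bar{\hv} ( \ms{A}, \ms{A} ) = \langle \ms{A}, \hvd \ms{A} \rangle$ in the orthonormal frame shows that this quantity equals the sum of squares of all components $\ms{A}^{b_1 \dots b_k}_{c_1 \dots c_l}$ of $\ms{A}$, since raising and lowering indices with the Kronecker delta leaves components unchanged up to relabeling. In particular, this computation shows $\bar{\hv}$ is a genuine positive-definite inner product on each fiber of $\ms{V}^k_l \mi{M}$, and simultaneously establishes the involution identity $\hvd \circ \hvd = \mr{id}$. Applying the same computation to $\hvd \ms{A}$ (which has the same components up to relabeling) yields the equality.

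Next, the tensor product inequality \eqref{eq.aads_tensor_norm_product} follows from the same orthonormal-frame expansion: the components of $\ms{A} \otimes \ms{B}$ are precisely the products of the components of $\ms{A}$ and of $\ms{B}$, so summing squares yields $| \ms{A} \otimes \ms{B} |_\hv^2 = | \ms{A} |_\hv^2 \cdot | \ms{B} |_\hv^2$, in fact with equality. For the two estimates in \eqref{eq.aads_tensor_norm_contract}, the bound $| \bar{\hv} ( \ms{A}, \ms{B} ) | \leq | \ms{A} |_\hv | \ms{B} |_\hv$ is the Cauchy--Schwarz inequality applied to the positive-definite bundle inner product $\bar{\hv}$ established above. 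The remaining bound on $\langle \ms{A}, \ms{C} \rangle$ follows by writing $\langle \ms{A}, \ms{C} \rangle = \langle \ms{A}, \hvd ( \hvd \ms{C} ) \rangle = \bar{\hv} ( \ms{A}, \hvd \ms{C} )$, applying Cauchy--Schwarz, and invoking $| \hvd \ms{C} |_\hv = | \ms{C} |_\hv$ from the first step.

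There is no substantial obstacle: the argument is bookkeeping once the correct frame is chosen. The only mildly delicate step is confirming that $\hv$ is Riemannian on each level set of $\rho$, which is immediate from the definition \eqref{eq.aads_riemann_vertical} together with the uniform timelikeness \eqref{eq.aads_time} of $\Dv^\sharp t$.
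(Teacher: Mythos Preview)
Your proof is correct and follows exactly the approach the paper takes: the paper's proof simply says to expand everything in an $\hv$-orthonormal frame and coframe, which is precisely what you do, with the added (and welcome) detail of verifying that $\hv$ is positive-definite and spelling out the Cauchy--Schwarz step. One minor wording quibble: the phrase ``$\gv$-dual coframe'' is potentially confusing---you just want the standard dual basis $\{e^j\}$ satisfying $e^j(e_i) = \delta^j_i$, which is metric-independent.
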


\begin{proof}
These follow immediately from Definition \ref{def.aads_tensor_norm}, once one expands each of the left-hand sides of \eqref{eq.aads_tensor_norm_product} and \eqref{eq.aads_tensor_norm_contract} using an $\hv$-orthonormal frame and coframe.
\end{proof}

\subsection{The Mixed Tensor Calculus} \label{sec.aads_mixed}

In order for our upcoming Carleman estimates to apply to general vertical tensor fields, we must first make sense of a $g$-wave operator acting on a vertical tensor field.
Moreover, we wish to achieve this in a manner that is compatible with the standard covariant operations.
For this, we adopt an approach similar in nature to that of \cite{hol_shao:uc_ads, hol_shao:uc_ads_ns} (except that we now work with vertical, rather than horizontal, tensor fields).

The first step is to construct natural connections on the vertical bundles.
Perhaps the most explicit and concise method for doing this is through coordinates and index notation:

\begin{definition} \label{def.aads_vertical_connection}
Let $( \mi{M}, g )$ be an FG-aAdS segment, and let $( U, \varphi )$ denote a coordinate system on $\mi{I}$.
We make the following preliminary definitions with respect to $\varphi$ and $\varphi_\rho$-coordinates:
\begin{itemize}
\item For any indices $a, b, c$, we define the coefficients
\begin{equation}
\label{eq.aads_vertical_christoffel} \bar{\Gamma}^a_{ c b } := \frac{1}{2} \gv^{ a d } ( \partial_c \gv_{ d b } + \partial_b \gv_{ d c } - \partial_d \gv_{ b c } ) \text{,} \qquad \bar{\Gamma}^a_{ \rho b } := \frac{1}{2} \gv^{ a c } \mi{L}_\rho \gv_{ c b } \text{.}
\end{equation}

\item For any vertical tensor field $\ms{A}$ of rank $( k, l )$, we define
\begin{align}
\label{eq.aads_vertical_connection} \bar{\Dv}_c \ms{A}^{ a_1 \dots a_k }_{ b_1 \dots b_l } &= \partial_c ( \ms{A}^{ a_1 \dots a_k }_{ b_1 \dots b_l } ) + \sum_{ i = 1 }^k \bar{\Gamma}^{ a_i }_{ c d } \, \ms{A}^{ a_1 \hat{d}_i a_k }_{ b_1 \dots b_l } - \sum_{ j = 1 }^l \bar{\Gamma}^d_{ c b_j } \, \ms{A}^{ a_1 \dots a_k }_{ b_1 \hat{d}_j b_l } \text{,} \\
\notag \bar{\Dv}_\rho \ms{A}^{ a_1 \dots a_k }_{ b_1 \dots b_l } &= \mi{L}_\rho \ms{A}^{ a_1 \dots a_k }_{ b_1 \dots b_l } + \sum_{ i = 1 }^k \bar{\Gamma}^{ a_i }_{ \rho d } \, \ms{A}^{ a_1 \hat{d}_i a_k }_{ b_1 \dots b_l } - \sum_{ j = 1 }^l \bar{\Gamma}^d_{ \rho b_j } \, \ms{A}^{ a_1 \dots a_k }_{ b_1 \hat{d}_j b_l } \text{,}
\end{align}
where the symbols $a_1 \hat{d}_i a_k$ and $b_1 \hat{d}_j b_l$ denote the sequences $a_1 \dots a_k$ and $b_1 \dots b_l$ of indices, respectively, except with with $a_i$ and $b_j$ replaced by $d$.
\end{itemize}
\end{definition}

\begin{proposition} \label{thm.aads_vertical_connection}
Let $( \mi{M}, g )$ be an FG-aAdS segment.
Then, the relations \eqref{eq.aads_vertical_connection} define a (unique) family of connections $\bar{\Dv}$ on the vertical bundles $\ms{V}^k_l \mi{M}$, for all ranks $( k, l )$.
Furthermore:
\begin{itemize}
\item For any vertical vector field $\ms{Y}$ (i.e., having rank $( 1, 0 )$) and any vertical tensor field $\ms{A}$,
\begin{equation}
\label{eq.aads_vertical_connection_extend} \bar{\Dv}_{ \ms{Y} } \ms{A} = \Dv_{ \ms{Y} } \ms{A} \text{.}
\end{equation}

\item For any smooth $\ms{a}: \mi{M} \rightarrow \R$ and any vector field $X$ on $\mi{M}$,
\footnote{Note $\ms{a}$ is a vertical tensor field of rank $( 0, 0 )$.}
\begin{equation}
\label{eq.aads_vertical_connection_scalar} \bar{\Dv}_X \ms{a} = X \ms{a} \text{,}
\end{equation}

\item For any vector field $X$ on $\mi{M}$ and any vertical tensor fields $\ms{A}$ and $\ms{B}$,
\begin{equation}
\label{eq.aads_vertical_connection_leibniz} \bar{\Dv}_X ( \ms{A} \otimes \ms{B} ) = \bar{\Dv}_X \ms{A} \otimes \ms{B} + \ms{A} \otimes \bar{\Dv}_X \ms{B} \text{.}
\end{equation}

\item For any vector field $X$ on $\mi{M}$, vertical tensor field $\ms{A}$, and tensor contraction $\mc{C}$,
\footnote{$\mc{C}$ is an operation mapping vertical tensors of rank $( k + 1, l + 1 )$ to those of rank $( k, l )$ via index summation.}
\begin{equation}
\label{eq.aads_vertical_connection_contract} \bar{\Dv}_X ( \mc{C} \ms{A} ) = \mc{C} ( \bar{\Dv}_X \ms{A} ) \text{.}
\end{equation}

\item For any vector field $X$ on $\mi{M}$,
\begin{equation}
\label{eq.aads_vertical_connection_compat} \bar{\Dv}_X \gv = 0 \text{,} \qquad \bar{\Dv}_X \gv^{-1} = 0 \text{.}
\end{equation}
\end{itemize}
\end{proposition}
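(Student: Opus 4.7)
The plan is to verify that the coordinate formulas in \eqref{eq.aads_vertical_connection} patch together into a well-defined family of connections on the vertical bundles, and then to verify each of the listed properties by direct inspection of the index formulas. Uniqueness is built into the definition.

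First, I would observe that for each fixed $\sigma \in (0, \rho_0]$, the coefficients $\bar{\Gamma}^a_{cb}$ restricted to the level set $\{\rho = \sigma\}$ are precisely the Christoffel symbols of the metric $\gv|_\sigma$. Hence, on each level set, the operator $\bar{\Dv}_c$ acts as the standard Levi-Civita connection $\Dv$ of $\gv|_\sigma$. This gives coordinate invariance in directions tangent to the level sets, and it also establishes \eqref{eq.aads_vertical_connection_extend} for vertical vector fields $\ms{Y}$. For the $\rho$-direction, the key point is that the coefficients $\bar{\Gamma}^a_{\rho b} = \tfrac{1}{2}\gv^{ac}\mi{L}_\rho \gv_{cb}$ are the components of a genuine vertical $(1,1)$-tensor, since $\mi{L}_\rho\gv$ is a vertical $(0,2)$-tensor (by Definition \ref{def.aads_vertical_lie}) and $\gv^{-1}$ is vertical. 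Combined with the fact that $\mi{L}_\rho$ sends vertical tensors to vertical tensors of the same rank, this shows the formula for $\bar{\Dv}_\rho \ms{A}$ transforms correctly under changes of $\varphi$-coordinates.

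Once well-definedness is established, the algebraic properties follow from the structure of \eqref{eq.aads_vertical_connection}. Property \eqref{eq.aads_vertical_connection_scalar} is immediate: for rank $(0,0)$ fields the correction sums are empty, leaving only $\partial_c \ms{a}$ or $\mi{L}_\rho \ms{a}$, and $\mc{C}^\infty$-linearity in $X$ extends the identity to arbitrary vector fields. The Leibniz rule \eqref{eq.aads_vertical_connection_leibniz} and the commutation with contractions \eqref{eq.aads_vertical_connection_contract} follow from the additive structure of the Christoffel correction terms: each correction acts on exactly one index, so derivatives of products split as expected, and for a contracted index pair the Christoffel contributions cancel (as $\bar{\Gamma}^d_{X c}$ applied to an upper $c$ and $-\bar{\Gamma}^c_{X d}$ applied to the matching lower $c$ are negatives after contraction).

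For metric compatibility \eqref{eq.aads_vertical_connection_compat}, the tangential part $\bar{\Dv}_c \gv = 0$ is precisely the defining property of the Levi-Civita connection on each level set. For the $\rho$-direction, a direct computation using $\gv^{cd}\gv_{cb} = \delta^d_b$ gives
\[
\bar{\Gamma}^c_{\rho a}\gv_{cb} = \tfrac{1}{2}\mi{L}_\rho \gv_{ab}, \qquad \bar{\Gamma}^c_{\rho b}\gv_{ac} = \tfrac{1}{2}\mi{L}_\rho \gv_{ab},
\]
so that $\bar{\Dv}_\rho \gv_{ab} = \mi{L}_\rho\gv_{ab} - \tfrac{1}{2}\mi{L}_\rho\gv_{ab} - \tfrac{1}{2}\mi{L}_\rho\gv_{ab} = 0$ by symmetry of $\gv$. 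Compatibility with $\gv^{-1}$ then follows by applying $\bar{\Dv}_X$ to the identity $\gv_{ac}\gv^{cb} = \delta^b_a$ and invoking \eqref{eq.aads_vertical_connection_leibniz}, \eqref{eq.aads_vertical_connection_contract}, together with $\bar{\Dv}_X \gv = 0$.

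The only step requiring genuine care is the coordinate-invariance check for $\bar{\Dv}_\rho$, but as sketched above this reduces to verifying that $\mi{L}_\rho$ and the coefficients $\bar{\Gamma}^a_{\rho b}$ have the correct tensorial character; everything else is a routine index computation.
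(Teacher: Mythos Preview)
Your proposal is correct and follows essentially the same approach as the paper's proof: both verify that the coordinate formulas \eqref{eq.aads_vertical_connection} are tensorial in the vertical indices (with $\bar{\Dv}_c$ reducing to $\Dv_c$ and $\bar{\Gamma}^a_{\rho b}$ being a genuine vertical $(1,1)$-tensor), then reduce the properties \eqref{eq.aads_vertical_connection_scalar}--\eqref{eq.aads_vertical_connection_compat} to the case $X = \partial_\rho$ and check them by direct inspection. The only cosmetic difference is that the paper establishes $\bar{\Dv}_\rho \gv^{-1} = 0$ by a separate index computation, whereas you obtain it from $\bar{\Dv}_X \gv = 0$ by differentiating $\gv_{ac}\gv^{cb} = \delta^b_a$; both are equally valid.
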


\if\comp1

\begin{proof}
See Appendix \ref{sec.aads_vertical_connection}.
\end{proof}

\fi

In short, the connections $\bar{\Dv}$ in Proposition \ref{thm.aads_vertical_connection} extend the vertical Levi-Civita connections $\Dv$ to allow covariant derivatives of vertical fields in all directions along $\mi{M}$.
In particular, \eqref{eq.aads_vertical_connection_scalar}--\eqref{eq.aads_vertical_connection_contract} imply that these extended derivatives are tensor derivations, in the sense of \cite[Definition 2.11]{on:srg}.
Moreover, \eqref{eq.aads_vertical_connection_compat} ensures that these connections remain compatible with the vertical metric.

\begin{remark}
One difference between the present approach and those of \cite{hol_shao:uc_ads, hol_shao:uc_ads_ns} is that we define our vertical connections with respect to $\gv$.
In contrast, \cite{hol_shao:uc_ads, hol_shao:uc_ads_ns} defined horizontal connections with respect to the metric induced by $g$ (which, in the present setting, corresponds to $\rho^{-2} \gv$).

While our choice to define $\bar{\Dv}$ in terms of $\gv$ is less standard, the main reason for doing so is that we are interested in limits of vertical tensor fields as $\rho \searrow 0$, and $\gv$ itself has such a boundary limit.
\end{remark}

To properly construct the $g$-wave operator for vertical tensor fields, however, we will need a brief detour involving more complex tensorial quantities on $\mi{M}$.
\footnote{Again, the ideas are analogous to those presented in \cite{hol_shao:uc_ads, hol_shao:uc_ads_ns}.}

\begin{definition} \label{def.aads_mixed}
Let $( \mi{M}, g )$ be an FG-aAdS segment.
We then define the \emph{mixed bundle} of rank $( \kappa, \lambda; k, l )$ over $\mi{M}$ to be the tensor product bundle given by
\footnote{More explicitly, the fiber of $T^\kappa_\lambda \ms{V}^k_l \mi{M}$ at each $P \in \mi{M}$ is the tensor product of the fibers of $T^\mu_\lambda \mi{M}$ and $\ms{V}^k_l \mi{M}$ at $P$.}
\begin{equation}
\label{eq.aads_mixed} T^\kappa_\lambda \ms{V}^k_l \mi{M} := T^\kappa_\lambda \mi{M} \otimes \ms{V}^k_l \mi{M} \text{.}
\end{equation}
Moreover, we refer to sections of $T^\kappa_\lambda \ms{V}^k_l \mi{M}$ as \emph{mixed tensor fields} of rank $( \kappa, \lambda; k, l )$.
\end{definition}

Roughly, one can view mixed tensor fields as containing both spacetime and vertical components.

\begin{remark}
Note that any tensor field of rank $( \kappa, \lambda )$ on $\mi{M}$ can be viewed as a mixed tensor field, with rank $( \kappa, \lambda; 0, 0 )$.
Similarly, any vertical tensor field is also a mixed tensor field.
\end{remark}

\begin{definition} \label{def.aads_mixed_connection}
Let $( \mi{M}, g )$ be an FG-aAdS segment.
Then, for all ranks $( \kappa, \lambda; k, l )$, we define the bundle connection $\bar{\nabla}$ on the mixed bundle $T^\kappa_\lambda \ms{V}^k_l \mi{M}$ to be the tensor product connection of the spacetime Levi-Civita connection $\nabla$ on $T^\kappa_\lambda \mi{M}$ and the vertical connection $\bar{\Dv}$ on $\ms{V}^k_l \mi{M}$.

In other words, the $\bar{\nabla}$'s are defined to be the unique family of connections on the mixed bundles such that for any vector field $X$ on $\mi{M}$, tensor field $G$ on $\mi{M}$, and vertical tensor field $\ms{B}$,
\begin{equation}
\label{eq.aads_mixed_connection} \nablam_X ( G \otimes \ms{B} ) = \nabla_X G \otimes \ms{B} + G \otimes \bar{\Dv}_X \ms{B} \text{.}
\end{equation}
\end{definition}

\begin{proposition} \label{thm.aads_mixed_connection}
Let $( \mi{M}, g )$ be an FG-aAdS segment.
Then:
\begin{itemize}
\item For any vector field $X$ on $\mi{M}$ and mixed tensor fields $\mathbf{A}$ and $\mathbf{B}$,
\footnote{$\mathbf{A} \otimes \mathbf{B}$ can be defined componentwise, as usual, by multiplying the components of $\mathbf{A}$ and $\mathbf{B}$.}
\begin{equation}
\label{eq.aads_mixed_connection_leibniz} \nablam_X ( \mathbf{A} \otimes \mathbf{B} ) = \nablam_X \mathbf{A} \otimes \mathbf{B} + \mathbf{A} \otimes \nablam_X \mathbf{B} \text{.}
\end{equation}

\item For any vector field $X$ on $\mi{M}$, we have
\begin{equation}
\label{eq.aads_mixed_connection_compat} \nablam_X g = 0 \text{,} \qquad \nablam_X g^{-1} = 0 \text{,} \qquad \nablam_X \gv = 0 \text{,} \qquad \nablam_X \gv^{-1} = 0 \text{.}
\end{equation}
\end{itemize}
\end{proposition}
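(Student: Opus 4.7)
The plan is to exploit the decomposable structure of mixed tensor fields as tensor products of spacetime and vertical pieces, and then to reduce both assertions to the already established properties of the individual connections $\nabla$ and $\Dvm$ from Proposition \ref{thm.aads_vertical_connection}.

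First, I would check that the definition \eqref{eq.aads_mixed_connection} consistently extends to give a connection on the entire mixed bundle $T^\kappa_\lambda \ms{V}^k_l \mi{M}$. The point is that the prescription $(G, \ms{B}) \mapsto \nabla_X G \otimes \ms{B} + G \otimes \Dvm_X \ms{B}$, viewed as a bilinear map $T^\kappa_\lambda \mi{M} \times \ms{V}^k_l \mi{M} \to T^\kappa_\lambda \ms{V}^k_l \mi{M}$, is $C^\infty(\mi{M})$-balanced (i.e.\ the scalar can be moved across the tensor product without loss), because both $\nabla$ and $\Dvm$ act on scalars as $X$ by \eqref{eq.aads_vertical_connection_scalar} and the analogous Levi-Civita property. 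Hence the map descends to a well-defined linear operator on the tensor product bundle. Once this is in place, any mixed tensor field is, in local coordinates, a finite sum of decomposable sections $G \otimes \ms{B}$, and $\nablam$ is determined by its values on such sections.

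For the Leibniz rule \eqref{eq.aads_mixed_connection_leibniz}, I would verify it first on decomposable mixed fields $\mathbf{A} = G_1 \otimes \ms{B}_1$ and $\mathbf{B} = G_2 \otimes \ms{B}_2$. Reorganizing the tensor product gives $\mathbf{A} \otimes \mathbf{B} = (G_1 \otimes G_2) \otimes (\ms{B}_1 \otimes \ms{B}_2)$. Applying \eqref{eq.aads_mixed_connection}, then the ordinary Leibniz rule for $\nabla$ on $G_1 \otimes G_2$ and the vertical Leibniz rule \eqref{eq.aads_vertical_connection_leibniz} for $\Dvm$ on $\ms{B}_1 \otimes \ms{B}_2$, produces the desired expansion after regrouping. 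Multilinearity of both sides of \eqref{eq.aads_mixed_connection_leibniz} in $\mathbf{A}$ and $\mathbf{B}$ then promotes the identity from decomposable fields to arbitrary mixed tensor fields.

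For the compatibility assertions \eqref{eq.aads_mixed_connection_compat}, I would split on the nature of the tensor. The fields $g$ and $g^{-1}$ are purely spacetime, sitting in the mixed bundles of rank $(0, 2; 0, 0)$ and $(2, 0; 0, 0)$ respectively; viewing them as decomposable mixed tensors tensored with the trivial vertical scalar, \eqref{eq.aads_mixed_connection} reduces $\nablam_X$ to $\nabla_X$, and vanishing follows from the Levi-Civita property of $\nabla$. Symmetrically, $\gv$ and $\gv^{-1}$ are purely vertical (rank $(0,0;0,2)$ and $(0,0;2,0)$), so that $\nablam_X$ reduces to $\Dvm_X$, and vanishing is exactly \eqref{eq.aads_vertical_connection_compat}.

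I do not expect a serious obstacle: all the essential content already lives in Proposition \ref{thm.aads_vertical_connection} and the Levi-Civita property of $\nabla$. The only mildly delicate point is the initial well-definedness check for the tensor product connection, but this is standard once one observes that $\nabla$ and $\Dvm$ are tensor derivations in the sense of \cite{on:srg}, so that the $C^\infty(\mi{M})$-balancing condition is automatic.
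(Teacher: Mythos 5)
Your proposal is correct and follows essentially the same route as the paper: decompose mixed fields locally into sums of decomposable sections $G \otimes \ms{B}$, regroup $\mathbf{A} \otimes \mathbf{B}$ as a spacetime factor tensored with a vertical factor, apply the defining relation \eqref{eq.aads_mixed_connection} together with the Leibniz rules for $\nabla$ and $\bar{\Dv}$, and obtain \eqref{eq.aads_mixed_connection_compat} by viewing $g$, $g^{-1}$ as purely spacetime and $\gv$, $\gv^{-1}$ as purely vertical so that $\nablam_X$ reduces to $\nabla_X$ or $\bar{\Dv}_X$. The preliminary well-definedness check for the tensor product connection is a reasonable addition that the paper leaves implicit in Definition \ref{def.aads_mixed_connection}.
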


\if\comp1

\begin{proof}
See Appendix \ref{sec.aads_mixed_connection}.
\end{proof}

\fi

Roughly, the mixed connections $\nablam$ from Definition \ref{def.aads_mixed_connection} are characterized by the condition that they behave like $\nabla$ on spacelike components and like $\bar{\Dv}$ on vertical components.
Most importantly, the properties in Proposition \ref{thm.aads_mixed_connection} are analogous to the properties of covariant derivatives that enable the standard integration by parts formulas.
Thus, in practice, Proposition \ref{thm.aads_mixed_connection} ensures that the usual integration by parts formulas extend directly to mixed tensor fields.

We can now, in the context of mixed bundles, make sense of higher covariant derivatives:

\begin{definition} \label{def.aads_mixed_wave}
Let $( \mi{M}, g )$ be an FG-aAdS segment, and let $\mathbf{A}$ be a mixed tensor field of rank $( \kappa, \lambda; k, l )$.
We then define the following quantities from $\mathbf{A}$:
\begin{itemize}
\item The \emph{mixed covariant differential} of $\mathbf{A}$ is the mixed tensor field $\bar{\nabla} \mathbf{A}$, of rank $( \kappa, \lambda + 1; k, l )$, that maps each vector field $X$ on $\mi{M}$ (in the extra covariant slot) to $\bar{\nabla}_X \mathbf{A}$.

\item We can then make sense of higher-order covariant differentials of $\mathbf{A}$.
For instance, the \emph{mixed Hessian} $\bar{\nabla}^2 \mathbf{A}$ is defined to be the mixed covariant differential of $\bar{\nabla} \mathbf{A}$.

\item In particular, we can make sense of \emph{wave operator} applied to $\mathbf{A}$---we define $\bar{\Box} \mathbf{A}$ to be the $g$-trace of $\bar{\nabla}^2 \mathbf{A}$, with this trace being applied to the two derivative components.

\item Moreover, the \emph{mixed curvature} applied to $\mathbf{A}$ is defined to be the mixed tensor field $\bar{R} [ \mathbf{A} ]$, of rank $( \kappa, \lambda + 2; k, l )$, that maps vector fields $X, Y$ on $\mi{M}$ (in the extra two slots) to
\begin{equation}
\label{eq.aads_mixed_curv} \bar{R}_{XY} [ \mathbf{A} ] := \bar{\nabla}^2_{XY} \mathbf{A} - \bar{\nabla}^2_{YX} \mathbf{A} \text{.}
\end{equation}
\end{itemize}
\end{definition}

\begin{remark}
Although we systematically define the operators $\bar{\nabla}$, $\bar{\Box}$, $\bar{R}$ in Definition \ref{def.aads_mixed_wave} on all mixed tensor fields, we will, in practice, only apply them to vertical tensor fields.
However, even in this simplified setting, there are some subtleties regarding second covariant differentials of vertical vector fields---namely, the second derivative acts as a spacetime derivative $\nabla$ on the first derivative slot and as a vertical derivative $\bar{\Dv}$ on the vertical tensor field itself.
\end{remark}

\begin{remark}
Mixed tensor fields and wave operators, in their current form, were first defined in \cite{shao:ksp}.
A similar formulation of mixed wave operators was recently, and independently, used in \cite{keir:weak_null}.
\end{remark}

\begin{proposition} \label{thm.aads_vertical_curv}
Let $( \mi{M}, g )$ be an FG-aAdS segment, and let $( U, \varphi )$ be a coordinate system on $\mi{I}$.
Then, for any vertical tensor field $\ms{A}$ of rank $( k, l )$, we have the identities
\begin{align}
\label{eq.aads_vertical_curv} \bar{R}_{ a b } [ \ms{A} ]^{ c_1 \dots c_k }_{ d_1 \dots d_l } &= \sum_{ i = 1 }^k \ms{R}^{ c_i }{}_{ e a b } \ms{A}^{ c_1 \hat{e}_i c_k }_{ d_1 \dots d_l } - \sum_{ j = 1 }^l \ms{R}^e{}_{ d_j a b } \ms{A}^{ c_1 \dots c_k }_{ d_1 \hat{e}_j d_l } \text{,} \\
\notag \bar{R}_{ \rho a } [ \ms{A} ]^{ c_1 \dots c_k }_{ d_1 \dots d_l } &= \frac{1}{2} \sum_{ i = 1 }^k \gv^{ b c_i } ( \Dv_e \mi{L}_\rho \gv_{ a b } - \Dv_b \mi{L}_\rho \gv_{ a e } ) \ms{A}^{ c_1 \hat{e}_i c_k }_{ d_1 \dots d_l } \\
\notag &\qquad - \frac{1}{2} \sum_{ j = 1 }^l \gv^{ b e } ( \Dv_{ d_j } \mi{L}_\rho \gv_{ a b } - \Dv_b \mi{L}_\rho \gv_{ a d_j } ) \ms{A}^{ c_1 \dots c_k }_{ d_1 \hat{e}_j c_k } \text{.}
\end{align}
where we have indexed with respect to $\varphi$ and $\varphi_\rho$-coordinates, and where the symbols $c_1 \hat{e}_i c_k$ and $d_1 \hat{e}_j d_l$ are defined in the same manner as in Definition \ref{def.aads_vertical_connection}.
\end{proposition}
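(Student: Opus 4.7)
The plan is to reduce both identities in \eqref{eq.aads_vertical_curv} to direct coordinate computations of commutators $[\bar{\Dv}_\mu, \bar{\Dv}_\nu]$ on vertical tensor fields. Combining Definitions \ref{def.aads_mixed_connection} and \ref{def.aads_mixed_wave} with the torsion-freeness of $\nabla$ yields, for any vector fields $X, Y$ on $\mi{M}$ and any vertical tensor $\ms{A}$,
\[
\bar{R}_{XY}[\ms{A}] = [\bar{\Dv}_X, \bar{\Dv}_Y]\ms{A} - \bar{\Dv}_{[X,Y]}\ms{A}.
\]
Taking $X, Y$ to be coordinate fields among $\{\partial_\rho, \partial_a, \partial_b\}$, their Lie brackets vanish, and \eqref{eq.aads_vertical_curv} reduces to computing $[\bar{\Dv}_\mu, \bar{\Dv}_\nu]\ms{A}$ in $\varphi_\rho$-coordinates. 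Furthermore, Proposition \ref{thm.aads_vertical_connection} ensures that this commutator is a tensor derivation commuting with contractions, so by Leibniz together with \eqref{eq.aads_vertical_connection_compat} it is determined by its action on scalars (which vanishes) and on rank $(1,0)$ vertical vector fields. It therefore suffices to verify \eqref{eq.aads_vertical_curv} for a single vector $\ms{A}^c$; the general $(k,l)$ formula is exactly the Leibniz expansion already encoded in the statement.

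For the $(a,b)$ component, I invoke \eqref{eq.aads_vertical_connection_extend}: on vertical directions, $\bar{\Dv}$ coincides with the Levi-Civita connection $\Dv$ of $\gv|_\sigma$ on each level set $\{\rho = \sigma\}$. Hence $[\bar{\Dv}_a, \bar{\Dv}_b]\ms{A}^c$ is the standard Riemann curvature of $\gv|_\sigma$ acting on a vector, namely $\ms{R}^c{}_{eab}\ms{A}^e$, which gives the first identity after Leibniz extension.

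The substantive calculation is the $(\rho, a)$ component. Expanding $\bar{\Dv}_\rho \bar{\Dv}_a \ms{A}^c$ and $\bar{\Dv}_a \bar{\Dv}_\rho \ms{A}^c$ via \eqref{eq.aads_vertical_connection} and subtracting, every term involving $\partial_a \ms{A}^c$ or $\mi{L}_\rho \ms{A}^c$ cancels, leaving the Christoffel expression
\[
[\bar{\Dv}_\rho, \bar{\Dv}_a]\ms{A}^c = \bigl( \mi{L}_\rho \bar{\Gamma}^c_{ad} - \partial_a \bar{\Gamma}^c_{\rho d} + \bar{\Gamma}^c_{\rho e}\bar{\Gamma}^e_{ad} - \bar{\Gamma}^c_{ae}\bar{\Gamma}^e_{\rho d} \bigr) \ms{A}^d.
\]
The key covariantization is that, although $\bar{\Gamma}^c_{ad}$ is non-tensorial, the quantity $\ms{K}^c{}_d := \bar{\Gamma}^c_{\rho d} = \tfrac{1}{2}\gv^{ce}\mi{L}_\rho\gv_{ed}$ is a genuine vertical $(1,1)$-tensor, so
\[
\Dv_a \ms{K}^c{}_d = \partial_a \bar{\Gamma}^c_{\rho d} + \bar{\Gamma}^c_{ae}\bar{\Gamma}^e_{\rho d} - \bar{\Gamma}^e_{ad}\bar{\Gamma}^c_{\rho e},
\]
and the bracketed expression reduces to $\mi{L}_\rho \bar{\Gamma}^c_{ad} - \Dv_a \ms{K}^c{}_d$. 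The standard variational formula for a one-parameter family of Christoffels (obtained by $\mi{L}_\rho$-differentiating \eqref{eq.aads_vertical_christoffel} and regrouping into covariant derivatives of $\mi{L}_\rho \gv$) gives
\[
\mi{L}_\rho \bar{\Gamma}^c_{ad} = \tfrac{1}{2}\gv^{ce}\bigl( \Dv_a \mi{L}_\rho \gv_{ed} + \Dv_d \mi{L}_\rho \gv_{ea} - \Dv_e \mi{L}_\rho \gv_{ad} \bigr),
\]
and subtracting $\Dv_a \ms{K}^c{}_d = \tfrac{1}{2}\gv^{ce}\Dv_a \mi{L}_\rho \gv_{ed}$ yields the desired $\tfrac{1}{2}\gv^{ce}(\Dv_d \mi{L}_\rho \gv_{ea} - \Dv_e \mi{L}_\rho \gv_{ad})\ms{A}^d$, matching the second identity in \eqref{eq.aads_vertical_curv} after relabeling the dummy indices.

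The main obstacle is bookkeeping rather than conceptual: one must carefully segregate tensorial from non-tensorial coordinate quantities and verify the variational Christoffel formula, which depends on the compatibility properties \eqref{eq.aads_vertical_connection_compat}. Once those identifications are in place, the remainder is a direct, if lengthy, coordinate calculation.
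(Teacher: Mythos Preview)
Your proposal is correct and follows essentially the same approach as the paper: reduce to coordinate commutators $[\bar{\Dv}_\mu,\bar{\Dv}_\nu]$ and compute directly, with the $(a,b)$ case recovering the vertical Riemann tensor and the $(\rho,a)$ case handled by the variational formula for $\mi{L}_\rho\bar{\Gamma}^c_{ad}$. Your organization is slightly cleaner than the paper's---you reduce to rank $(1,0)$ via the tensor-derivation/Leibniz argument rather than repeating the computation for $1$-forms, and you package the key identity \eqref{eql.aads_vertical_curv_11} as $\mi{L}_\rho\bar{\Gamma}^c_{ad}-\Dv_a\ms{K}^c{}_d$ rather than calling it ``a direct, though tedious, calculation''---but the substance is identical.
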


\if\comp1

\begin{proof}
See Appendix \ref{sec.aads_vertical_curv}.
\end{proof}

\fi

\section{The Null Convexity Criterion} \label{sec.psc}

In this section, we give a precise statement of the null convexity criterion.
We then demonstrate that it implies a uniform positivity property that is analogous to the pseudoconvexity condition in \cite{hol_shao:uc_ads_ns}.
This is the crucial property that is required for our main Carleman estimates to hold.

\subsection{The Pseudoconvexity Theorem} \label{sec.psc_thm}

We begin by stating the null convexity criterion.
In contrast to the developments in \cite{hol_shao:uc_ads_ns}, we formulate this condition locally on subsets of $\mi{I}$.

\begin{definition} \label{def.aads_nc}
Let $( \mi{M}, g )$ be a strongly FG-aAdS segment, let $t$ be a global time, and let $\mi{D} \subseteq \mi{I}$ be open.
We say that $( \mi{M}, g, t )$ satisfies the \emph{null convexity criterion} on $\mi{D}$, with constants $0 \leq B < C$, iff the following inequalities hold for any $\gm$-null vector field $\mf{Z}$ on $\mi{D}$:
\begin{equation}
\label{eq.aads_nc} - \gs ( \mf{Z}, \mf{Z} ) \geq C^2 \cdot ( \mf{Z} t )^2 \text{,} \qquad | \Dm^2 t ( \mf{Z}, \mf{Z} ) | \leq 2 B \cdot ( \mf{Z} t )^2 \text{.}
\end{equation}
\end{definition}

The constants $B$ and $C$ in Definition \ref{def.aads_nc} are closely connected to our upcoming main results:
\begin{enumerate}
\item They determine the timespan required for our Carleman estimates to hold.

\item They also determine the time needed for a null geodesic starting from and remaining near the conformal boundary to return to the boundary.
\end{enumerate}
In particular, (1) is the crucial ingredient for establishing unique continuation results, while (2) is critical for constructing counterexamples to unique continuation results.

The following theorem, which is the main result of the present section, connects the null convexity criterion with the pseudoconvexity condition from \cite[Definition 3.2]{hol_shao:uc_ads_ns}:

\begin{theorem} \label{thm.psc_nc}
Let $( \mi{M}, g )$ be a strongly FG-aAdS segment, let $t$ be a global time, and let $\mi{D} \subseteq \mi{I}$ be open.
In addition, assume $( \mi{M}, g, t )$ satisfies the null convexity criterion on $\mi{D}$, with constants $0 \leq B < C$.
Then, given any $b, c \in \R$ such that $B < b < c < C$, the following hold:
\begin{itemize}
\item There exists $\zeta_0 \in C^\infty ( \mi{I} )$ such that for any vector field $\mf{X}$ on $\mi{D}$,
\begin{equation}
\label{eq.psc_nc_g} ( - \gs - c^2 \cdot dt^2 - \zeta_0 \cdot \gm ) ( \mf{X}, \mf{X} ) \gtrsim_{ C^2 - c^2, t } \mf{h} ( \mf{X}, \mf{X} ) \text{.}
\end{equation}

\item There exists $\zeta_\pm \in C^\infty ( \mi{I} )$ such that for any vector field $\mf{X}$ on $\mi{D}$,
\begin{equation}
\label{eq.psc_nc_t} ( \mp \Dm^2 t + 2 b \cdot dt^2 - \zeta_\pm \cdot \gm ) ( \mf{X}, \mf{X} ) \gtrsim_{ b - B, t } \mf{h} ( \mf{X}, \mf{X} ) \text{.}
\end{equation}
\end{itemize}
\end{theorem}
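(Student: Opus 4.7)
The plan is to establish both inequalities pointwise on $\mi{D}$ via a Finsler-type argument, then produce smooth global $\zeta_0, \zeta_\pm \in C^\infty(\mi{I})$ via mollification and extension. At each $p \in \mi{D}$, introduce the $\gm$-unit future-timelike vector $\mf{T}$ proportional to $\Dm^\sharp t$, set $\tau := \mf{T} t$, and choose a $\gm$-orthonormal basis $(\mf{T}, e_1, \dots, e_{n-1})$ of $T_p \mi{I}$; by \eqref{eq.aads_time} one has $\tau \gtrsim_t 1$, and this basis is simultaneously $\hm$-orthonormal. Every $\gm$-null vector at $p$ is a scalar multiple of $\mf{T} + \hat w$ for some unit spacelike $\hat w$, and satisfies $(\mf{T} + \hat w) t = \tau$.

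For the bound \eqref{eq.psc_nc_g}, applying \eqref{eq.aads_nc} to the two null vectors $\mf{T} \pm \hat w$ and adding yields the crucial two-sided estimate
\[
-\gs(\mf{T}, \mf{T}) - \gs(\hat w, \hat w) \ \geq \ C^2 \tau^2 + 2 | \gs(\mf{T}, \hat w) | \qquad \text{for every unit spacelike } \hat w.
\]
This is the classical setting of the S-procedure / Finsler's theorem: a symmetric bilinear form that is (quantitatively) positive on the nonzero $\gm$-null cone is positive-definite with respect to $\hm$ after adding a suitable multiple of $\gm$. Explicitly, writing $-\gs$ in block form $\left(\begin{smallmatrix} s_T & r^T \\ r & Q \end{smallmatrix}\right)$ relative to our basis, I would carry out a Schur-complement computation: set $\zeta_0(p) + k := \lambda_{\min}(Q) - \delta$ with $\delta$ a small fixed fraction of $(C^2 - c^2)\tau^2$, so that the $(n-1) \times (n-1)$ block is bounded below by $\delta I$; positivity of the full matrix then reduces to a scalar Schur inequality of the form $s_T - c^2 \tau^2 + \zeta_0 - k \geq r^T (Q - (\zeta_0+k)I)^{-1} r$, and this is supplied by the two-sided null estimate applied in the direction $\hat w = r / \|r\|$. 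The resulting continuous $\zeta_0(p)$ is then smoothed within the open interval of valid choices (the slack $c < C$ providing the required room) and extended from $\mi{D}$ to $\mi{I}$ by a standard cutoff.

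The second bound \eqref{eq.psc_nc_t} is structurally identical: the null convexity bound on $\Dm^2 t$ implies $(\mp \Dm^2 t + 2b \cdot dt^2)(\mf{Z}, \mf{Z}) \geq 2(b - B)(\mf{Z} t)^2$ for every $\gm$-null $\mf{Z}$, so each of these symmetric tensors is non-negative on the null cone with quantitative slack $b - B > 0$, and the same Finsler/Schur argument produces the pointwise continuous $\zeta_\pm(p)$, which are subsequently smoothed and extended as above.

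The principal obstacle is obtaining a uniform lower bound on the constant $k$ implicit in the $\gtrsim_{C^2 - c^2, t}$ (respectively $\gtrsim_{b - B, t}$) that depends only on $C^2 - c^2$ (resp.\ $b - B$) and on $t$, and not on the pointwise magnitudes of $\gs$ or $\Dm^2 t$. The specific choice of $\zeta_0(p)$ above is designed precisely to absorb all $Q$-dependence into the subtraction, while the two-sided form of \eqref{eq.aads_nc} (the $\pm$ aspect) is exactly what controls the off-diagonal cross-term $r^T \hat w$ by the diagonal. A secondary and more routine difficulty is the smoothing step at points where $\lambda_{\min}(Q)$ fails to be smooth (e.g., at eigenvalue crossings of $Q$), which I resolve by perturbing slightly into the strict interior of the valid interval before mollifying.
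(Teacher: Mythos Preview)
Your overall strategy---reduce to a pointwise Finsler/S-procedure statement, then globalize---is exactly what the paper does. The paper isolates the pointwise step as a separate proposition (positivity of a quadratic form on the $\gm$-null cone implies positive-definiteness after subtracting a multiple of $\gm$), proves it by a soft projective-geometric argument (the zero set of $\mf{p} - \lambda \gm$ is connected in projective space and must lie entirely in one of $\{\gm > 0\}$ or $\{\gm < 0\}$; then an intermediate-value argument for the function $\mf{p}/\gm$ produces a $\lambda$ for which this zero set is empty), and then globalizes by partition of unity, using that the inequality is affine in $\zeta_0$. Your two-sided null estimate is correct and useful, and your mollification step is a reasonable alternative to the paper's partition-of-unity globalization.

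However, your \emph{explicit} Schur-complement construction has a genuine gap: the choice $\zeta_0 + k = \lambda_{\min}(Q) - \delta$ with $\delta$ a fixed fraction of $(C^2 - c^2)\tau^2$ does not work. Take (with $\tau = 1$, $C = 1$) $s_T = M$, $Q = M\cdot I$, $r = (M-1, 0, \dots, 0)$ for large $M$; one checks directly that null convexity holds with uniform slack, but your choice forces $Q - (\zeta_0 + k)I = \delta I$, so $r^T (Q - (\zeta_0+k)I)^{-1} r = (M-1)^2/\delta \sim M^2$, while the Schur left-hand side $s_T - c^2 + \zeta_0 - k \sim 2M$, and the condition fails. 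The correct $\zeta_0$ here is $\mathcal{O}(1)$ (roughly the midpoint $c^2/2$), not $\sim \lambda_{\min}(Q) = M$. The point is that $\lambda_{\min}(Q)$ is the wrong anchor: the valid interval for $\zeta_0$ is governed by the full null-cone constraint, and pinning $\zeta_0$ near either endpoint kills the Schur complement when the off-diagonal $r$ is large. Your two-sided estimate $|r\cdot\hat w| \le \tfrac12(s_T + \hat w^T Q \hat w - C^2\tau^2)$ alone only gives $|r\cdot\hat w| \le \tfrac12(P+R)$ in your notation, which is \emph{weaker} than the needed $(r\cdot\hat w)^2 \le PR$ unless $P \approx R$---and with $\zeta_0$ free you could arrange that for a single $\hat w$, but not for all simultaneously.

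So either cite Finsler/S-procedure as the black box you named (in which case your argument is essentially the paper's), or replace the Schur computation with a different argument for the pointwise step.
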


\begin{remark}
In fact, the conclusions of Theorem \ref{thm.psc_nc} are a generalization of \cite[Definition 3.2]{hol_shao:uc_ads_ns}:
\begin{itemize}
\item The constants $B$ and $C$ in Theorem \ref{thm.psc_nc} correspond to $\mu$ and $\xi$ in \cite{hol_shao:uc_ads_ns} via the formulas
\[
B := \frac{ \xi }{4} \text{,} \qquad C^2 := \mu^2 + \frac{ \xi^2 }{ 16 } \text{.}
\]

\item In contrast to \cite{hol_shao:uc_ads_ns}, we do not assume that $t$ is a unit geodesic parameter (i.e., $\Dm^\sharp t$ is both unit and geodesic).
Under this additional assumption, $2 \, \Dm^2 t$ coincides with the Lie derivative of $\gm$ in the $t$-direction.
Thus, \eqref{eq.psc_nc_t} serves as the analogue of \cite[Equation (3.4)]{hol_shao:uc_ads_ns}.

\item Unlike in \cite{hol_shao:uc_ads_ns}, we need only measure the size of $\Dm^2 t$ along the $\gm$-null directions, rather than in all directions.
Thus, \eqref{eq.psc_nc_t} is a strictly weaker assumption than \cite[Equation (3.4)]{hol_shao:uc_ads_ns}.
\end{itemize}
\end{remark}

The proof of Theorem \ref{thm.psc_nc} is given in Section \ref{sec.psc_proof}.
The key ingredient of the proof is the following pointwise analogue, stated for bilinear forms on a vector space:

\begin{proposition} \label{thm.psc_pointwise}
Let $n > 1$, and let $\mf{V}$ be an $n$-dimensional real vector space.
In addition:
\begin{itemize}
\item Let $\mf{m}$ be a nondegenerate symmetric bilinear form on $\mf{V}$.

\item Let $\mf{p}$ be another symmetric bilinear form on $\mf{V}$.
\end{itemize}
Then, the following statements are equivalent:
\begin{enumerate}
\item $\mf{p} ( v, v ) > 0$ for any $v \in \mf{V} \setminus \{ 0 \}$ such that $\mf{m} ( v, v ) = 0$.

\item There exists $\lambda \in \R$ such that
\begin{equation}
\label{eq.psc_pointwise} \mf{p} ( v, v ) - \lambda \cdot \mf{m} ( v, v ) > 0 \text{,} \qquad v \in \mf{V} \setminus \{ 0 \} \text{.}
\end{equation}
\end{enumerate}
\end{proposition}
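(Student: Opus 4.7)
The direction (2) $\Rightarrow$ (1) is immediate: if $\mf{m}(v,v) = 0$ and $v \neq 0$, then $\mf{p}(v,v) = (\mf{p} - \lambda \mf{m})(v,v) > 0$. For the converse, the plan is to fix an auxiliary positive-definite inner product on $\mf{V}$ (with unit sphere $S$) and study the function
\[
\phi(\lambda) := \min_{v \in S}\bigl[\, \mf{p}(v,v) - \lambda \mf{m}(v,v)\,\bigr],
\]
which is concave in $\lambda$ as a minimum of affine functions. Statement (2) is then equivalent to $\phi(\lambda) > 0$ for some $\lambda \in \R$. If $\mf{m}$ is definite, then (1) is vacuous, and one can drive $\phi(\lambda) \to +\infty$ by sending $\lambda$ to the appropriate signed infinity, so (2) is trivial. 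Hence the interesting case is $\mf{m}$ indefinite.

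When $\mf{m}$ is indefinite, plugging in unit vectors on which $\mf{m}$ takes values of either sign forces $\phi(\lambda) \to -\infty$ as $\lambda \to \pm\infty$; combined with concavity, this produces a maximum at some $\lambda^* \in \R$. The key structural observation is that on $S$, the minimizers of the quadratic form $f_{\lambda^*}(v) := \mf{p}(v,v) - \lambda^* \mf{m}(v,v)$ are precisely the unit vectors in the eigenspace $E \subseteq \mf{V}$ corresponding to the smallest eigenvalue $\phi(\lambda^*)$ of the associated symmetric operator. By Danskin's theorem applied to the concave function $\phi$, the optimality $0 \in \partial \phi(\lambda^*)$ translates into the statement that $\mf{m}$ attains both nonnegative and nonpositive values on $E \cap S$.

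From here I extract a nonzero vector $w_0 \in E$ with $\mf{m}(w_0, w_0) = 0$: either some unit $v^* \in E$ already satisfies $\mf{m}(v^*, v^*) = 0$ (take $w_0 := v^*$), or $\mf{m}|_E$ is strictly indefinite, in which case an intermediate value argument along a line in $E$ joining unit vectors of opposite $\mf{m}$-sign produces such a $w_0$ (nonvanishing of the interpolant follows because two endpoint vectors with opposite strict $\mf{m}$-signs cannot be collinear). Since $w_0 \in E$ lies in the minimal eigenspace, one has $f_{\lambda^*}(w_0) = \phi(\lambda^*) \cdot |w_0|^2$, while $\mf{m}(w_0, w_0) = 0$ yields $f_{\lambda^*}(w_0) = \mf{p}(w_0, w_0) > 0$ by hypothesis (1). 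Together these give $\phi(\lambda^*) > 0$, establishing (2). The main technical step is this eigenspace extraction: translating the Lagrange/Danskin criticality at the maximum of $\phi$ into a concrete $\mf{m}$-null vector inside the minimal eigenspace of $f_{\lambda^*}$, which is where hypothesis (1) is activated and delivers the correct sign.
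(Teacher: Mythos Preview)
Your proof is correct and takes a genuinely different route from the paper's. The paper argues projectively: it passes to $\mb{P}(\mf{V})$, observes that the zero set $Z_\lambda = \{ \mf{p} - \lambda \mf{m} = 0 \}$ is connected and disjoint from the $\mf{m}$-null cone (by hypothesis (1)), hence lies entirely in one of the two components $\Pi G_\pm$; it then studies the quotient $\mf{q} = \mf{p}/\mf{m}$ on these components, uses compactness of the projective images to show the ranges are half-lines $[c_+, +\infty)$ and $(-\infty, c_-]$, and argues $c_- < c_+$ so that any $\lambda$ in the gap has $Z_\lambda = \emptyset$. Your approach is variational: you optimize the smallest eigenvalue $\phi(\lambda)$ of $\mf{p} - \lambda \mf{m}$ (relative to an auxiliary inner product), use concavity and coercivity to locate a maximizer $\lambda^\ast$, and then invoke Danskin to force the minimal eigenspace $E$ at $\lambda^\ast$ to carry an $\mf{m}$-null direction, on which hypothesis (1) gives the sign.

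What each buys: the paper's argument is self-contained and avoids any spectral or convex-analytic machinery, but is less constructive---the desired $\lambda$ is found by an existence argument on the gap in the range of $\mf{q}$. Your argument actually identifies the optimal $\lambda$ as $\lambda^\ast = \arg\max \phi$ and is arguably more direct once one accepts Danskin (or, equivalently, a short one-sided derivative computation for the minimum eigenvalue along the pencil). One small point worth making explicit: when $\dim E = 1$, the set $E \cap S$ is a pair of antipodes with a single $\mf{m}$-value, so the Danskin condition forces that value to be zero and your $w_0$ is obtained immediately without the interpolation step; your line-interpolation argument is only needed when $\dim E \geq 2$.
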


The proof of Proposition \ref{thm.psc_pointwise} is given in Section \ref{sec.psc_pointwise}.
While this is, in large part, an elaboration of a similar argument found in \cite[Lemma 4.3]{tat:notes_uc}, we give the details here for completeness.

Finally, the following reformulation of Proposition \ref{thm.psc_pointwise} will be applied toward Theorem \ref{thm.psc_nc}:

\begin{corollary} \label{thm.psc_pointwise_ex}
Let $n$, $\mf{V}$, $\mf{m}$, $\mf{p}$ be as in the statement of Proposition \ref{thm.psc_pointwise}.
In addition, let $c > 0$, and let $\mf{n}$ be a (positive) inner product on $\mf{V}$.
Then, the following statements are equivalent:
\begin{enumerate}
\item $\mf{p} ( v, v ) > c \cdot \mf{n} ( v, v )$ for any $v \in \mf{V} \setminus \{ 0 \}$ such that $\mf{m} ( v, v ) = 0$.

\item There exists $\lambda \in \R$ such that
\begin{equation}
\label{eq.psc_pointwise_ex} \mf{p} ( v, v ) - \lambda \cdot \mf{m} ( v, v ) > c \cdot \mf{n} ( v, v ) \text{,} \qquad v \in \mf{V} \setminus \{ 0 \} \text{.}
\end{equation}
\end{enumerate}
\end{corollary}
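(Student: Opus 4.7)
The plan is to obtain Corollary \ref{thm.psc_pointwise_ex} as an immediate consequence of Proposition \ref{thm.psc_pointwise}, applied to the shifted bilinear form $\mf{p}' := \mf{p} - c \cdot \mf{n}$. Since both $\mf{p}$ and $\mf{n}$ are symmetric bilinear forms on $\mf{V}$, so too is $\mf{p}'$, and the hypotheses of Proposition \ref{thm.psc_pointwise} (nondegeneracy of $\mf{m}$, together with $\mf{p}'$ being a symmetric bilinear form on $\mf{V}$) are verified; the positive constant $c$ and the inner product $\mf{n}$ are simply absorbed into the form to which the proposition is applied, while $\mf{m}$ itself is left untouched.

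Under this substitution, condition (1) of Proposition \ref{thm.psc_pointwise} for $\mf{p}'$ reads: for every nonzero $v \in \mf{V}$ with $\mf{m}(v,v) = 0$, one has $\mf{p}'(v,v) > 0$. Rewriting this via the definition of $\mf{p}'$, it is exactly $\mf{p}(v,v) > c \cdot \mf{n}(v,v)$, which matches condition (1) of the corollary. Likewise, condition (2) of the proposition applied to $\mf{p}'$ asserts the existence of $\lambda \in \R$ with $\mf{p}'(v,v) - \lambda \cdot \mf{m}(v,v) > 0$ for every nonzero $v \in \mf{V}$, which rearranges to the inequality $\mf{p}(v,v) - \lambda \cdot \mf{m}(v,v) > c \cdot \mf{n}(v,v)$ appearing in condition (2) of the corollary. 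Since Proposition \ref{thm.psc_pointwise} asserts the equivalence of its two conditions, the same equivalence passes through verbatim to the two conditions of Corollary \ref{thm.psc_pointwise_ex}.

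In short, the entire argument is a change of variable at the level of bilinear forms, so there is no genuine technical obstacle; the only observation to record is that $\mf{p} - c \cdot \mf{n}$ inherits symmetry and bilinearity from its two summands, while the nondegeneracy hypothesis on $\mf{m}$ is unaffected. One could equally phrase the proof as applying Proposition \ref{thm.psc_pointwise} directly to the pair $(\mf{m}, \mf{p} - c \cdot \mf{n})$ without naming the shifted form, but the intermediate notation $\mf{p}'$ makes the translation between the two sets of conditions transparent.
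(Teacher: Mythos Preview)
Your proof is correct and follows exactly the same approach as the paper: apply Proposition~\ref{thm.psc_pointwise} with $\mf{p}$ replaced by $\mf{p} - c\,\mf{n}$.
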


\begin{proof}
This follows by applying Proposition \ref{thm.psc_pointwise} with $\mf{p}$ replaced by $\mf{p} - c \mf{n}$.
\end{proof}

\subsection{Proof of Proposition \ref{thm.psc_pointwise}} \label{sec.psc_pointwise}

First, if $\mf{m}$ is positive-definite, then the result is trivial.
Therefore, we assume from now on that $\mf{m}$ is sign-indefinite.
Moreover, notice that (2) trivially implies (1) in general.
Thus, we need only show (1) implies \eqref{eq.psc_pointwise} for some $\lambda$, and hence (2).

Let $\mb{P} ( \mf{V} )$ denote the projective space over $\mf{V}$, and let $\Pi: \mf{V} \rightarrow \mb{P} ( \mf{V} )$ be the corresponding natural projection.
For convenience, we also treat the bilinear forms mentioned above as quadratic forms:
\begin{equation}
\label{eql.psc_pointwise_0} \mf{m} (v) := \mf{m} ( v, v ) \text{,} \qquad \mf{p} ( v ) := \mf{p} ( v, v ) \text{.}
\end{equation}
In addition, we define the following subsets of $\mf{V}$,
\begin{align}
\label{eql.psc_pointwise_G} G_+ &:= \{ v \in \mf{V} \mid \mf{m} ( v ) > 0 \} \text{,} \\
\notag G_- &:= \{ v \in \mf{V} \mid \mf{m} ( v ) < 0 \} \text{,} \\
\notag G_0 &:= \{ v \in \mf{V} \mid \mf{m} ( v ) = 0 \} \text{,}
\end{align}
and we define, for any $\lambda \in \R$, the set
\begin{equation}
\label{eql.psc_pointwise_Z} Z_\lambda := \{ v \in \mf{V} \setminus \{ 0 \} \mid \mf{p} ( v ) - \lambda \cdot \mf{m} ( v ) = 0 \} \text{.}
\end{equation}

\begin{lemma} \label{thm.psc_pointwise_Z}
For any $\lambda \in \R$, exactly one of the following is true:
\begin{itemize}
\item $Z_\lambda$ is entirely contained in $G_+$.

\item $Z_\lambda$ is entirely contained in $G_-$.

\item $Z_\lambda$ is empty.
\end{itemize}
\end{lemma}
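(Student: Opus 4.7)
My strategy is to rule out the only two configurations that violate the desired trichotomy: that $Z_\lambda$ meets the null set $G_0$, and that $Z_\lambda$ meets both the ``spacelike'' part $G_+$ and the ``timelike'' part $G_-$. The first is immediate---if $v \in Z_\lambda \cap G_0$, then $\mf{m}(v) = 0$ forces $\mf{p}(v) = \lambda \cdot \mf{m}(v) = 0$, contradicting hypothesis (1), which asserts $\mf{p}(v) > 0$ on $G_0 \setminus \{0\}$.

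For the second, I would assume for contradiction that $v_+ \in Z_\lambda \cap G_+$ and $v_- \in Z_\lambda \cap G_-$. Since $\mf{m}(v_+) > 0$ and $\mf{m}(v_-) < 0$, the vectors $v_\pm$ are linearly independent, so they span a two-dimensional subspace $W \subseteq \mf{V}$. The Gram matrix of $\mf{m}|_W$ in the basis $\{v_+, v_-\}$ has determinant $\mf{m}(v_+) \mf{m}(v_-) - \mf{m}(v_+, v_-)^2$, which is strictly negative; hence $\mf{m}|_W$ is nondegenerate of signature $(1,1)$, and its null cone in $W$ consists of exactly two distinct lines through the origin. I would parametrize their direction vectors as $w_i = a_i v_+ + b_i v_-$, $i = 1, 2$, and note that nondegeneracy forces all $a_i, b_i$ to be nonzero.

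The key computation, which is forced once one restricts to $W$, is to expand $\mf{p}(w_i)$ using bilinearity, substitute $\mf{p}(v_\pm) = \lambda \cdot \mf{m}(v_\pm)$ (since $v_\pm \in Z_\lambda$), and then use the nullity relation $a_i^2 \mf{m}(v_+) + b_i^2 \mf{m}(v_-) = -2 a_i b_i \mf{m}(v_+, v_-)$ to eliminate the diagonal $\mf{m}$-terms. This simplifies to $\mf{p}(w_i) = 2 a_i b_i [\mf{p}(v_+, v_-) - \lambda \cdot \mf{m}(v_+, v_-)]$. By hypothesis (1), each $\mf{p}(w_i)$ is strictly positive, so $a_1 b_1$ and $a_2 b_2$ must share the same sign. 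On the other hand, the ratios $a_i / b_i$ are the two roots of the quadratic $\mf{m}(v_+) \, r^2 + 2 \mf{m}(v_+, v_-) \, r + \mf{m}(v_-) = 0$, and Vieta's formula gives their product as $\mf{m}(v_-) / \mf{m}(v_+) < 0$; therefore $a_1 b_1$ and $a_2 b_2$ must have opposite signs. This contradicts the previous conclusion, completing the argument. The main obstacle I anticipate is simply recognizing the two-dimensional reduction---once $W$ is identified, the remainder is bilinear bookkeeping that I expect to mirror the linear-algebra reasoning in \cite[Lemma 4.3]{tat:notes_uc}.
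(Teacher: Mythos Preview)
Your argument is correct and complete. The first step (that $Z_\lambda \cap G_0 = \emptyset$) matches the paper exactly, but the second step is genuinely different. The paper proceeds topologically: it passes to projective space, observes that $\Pi G_+$ and $\Pi G_-$ are the two connected components of $\mb{P}(\mf{V}) \setminus \Pi G_0$, and then uses that the projective image of $Z_\lambda$ (being the zero locus of a quadratic form) is connected, hence cannot meet both components. Your route is purely linear-algebraic: you restrict to the two-dimensional span $W$ of the hypothetical $v_+ \in G_+$ and $v_- \in G_-$, exploit that $\mf{m}|_W$ has Lorentzian signature to produce two distinct null directions, and extract a sign contradiction via Vieta's formula after the substitution $\mf{p}(v_\pm) = \lambda\,\mf{m}(v_\pm)$. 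The paper's argument is shorter and more conceptual but leans on the connectedness of projective quadrics; your argument is more hands-on and entirely self-contained, requiring no topology at all. Both are valid and close in spirit to the reasoning in \cite[Lemma 4.3]{tat:notes_uc}.
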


\begin{proof}
First, note that the assumed condition (1) immediately implies that $Z_\lambda \cap G_0 = \emptyset$.
Moreover since $Z_\lambda$ and $G_\pm$ are scaling-invariant, it suffices to work in the projective setting and show that exactly one of the following is true: (a) $\Pi Z_\lambda \subseteq \Pi G_+$; (b) $\Pi Z_\lambda \subseteq \Pi G_-$; or (c) $\Pi Z_\lambda = \emptyset$.

Next, recall that $\mb{P} ( \mf{V} ) \setminus \Pi G_0$ is disconnected and has two connected components, namely, $\Pi G_\pm$.
Furthermore, $Z_\lambda$, being the zero set of a quadratic form, must be connected, hence $\Pi Z_\lambda$ is also connected.
Since $\Pi Z_\lambda$ and $\Pi G_0$ are disjoint, the desired result follows.
\end{proof}

\begin{lemma} \label{thm.psc_pointwise_empty}
There exists $\lambda \in \R$ such that $Z_\lambda = \emptyset$.
\end{lemma}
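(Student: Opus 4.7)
The plan is to identify two extremal values $\alpha_-<\alpha_+$ such that $Z_\lambda=\emptyset$ for every $\lambda\in(\alpha_-,\alpha_+)$. The first step is to reformulate the problem in terms of the quotient $F(v):=\mf{p}(v)/\mf{m}(v)$. Assumption (1) of Proposition \ref{thm.psc_pointwise} guarantees $\mf{p}(v_0)>0$ for every nonzero $v_0\in G_0$, so no element of $G_0$ can satisfy $\mf{p}(v)=\lambda\,\mf{m}(v)$; hence $Z_\lambda=\emptyset$ if and only if $\lambda\notin R_+\cup R_-$, where $R_\pm:=F(G_\pm)$. Moreover, by Lemma \ref{thm.psc_pointwise_Z} no $\lambda$ can lie simultaneously in both $R_+$ and $R_-$, since this would produce elements of $Z_\lambda$ in $G_+$ and in $G_-$ at the same time; thus $R_+$ and $R_-$ are disjoint.

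Next, I would pass to the projective space $\mb{P}(\mf{V})$, where $F$ descends to a continuous function on the open set $\Pi G_+\cup\Pi G_-$. Although $G_\pm$ may fail to be connected (e.g.\ when $\mf{m}$ has Lorentzian signature, $G_-$ has two components), the antipodal identification $v\sim -v$ collapses each into a single connected component of projective space, so both $\Pi G_+$ and $\Pi G_-$ are connected. Consequently $R_\pm$ are intervals in $\R$.

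The crucial step is to pin these intervals down using the behavior of $F$ at the null boundary. For any $v_0\in G_0\setminus\{0\}$, assumption (1) forces $\mf{p}(v_0)>0$ while $\mf{m}(v_0)=0$, so $F$ diverges to $+\infty$ along any sequence $[v_k]\to[v_0]$ inside $\Pi G_+$, and diverges to $-\infty$ along any such sequence inside $\Pi G_-$. Extending $F$ to the closure $\overline{\Pi G_+}=\Pi G_+\cup\Pi G_0$ by the value $+\infty$ on $\Pi G_0$ yields a continuous map from a compact subset of $\mb{P}(\mf{V})$ into $\R\cup\{+\infty\}$, which therefore attains a finite minimum $\alpha_+$ in $\Pi G_+$. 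Combined with connectedness and the blow-up at the boundary, this forces $R_+=[\alpha_+,+\infty)$. A symmetric argument gives $R_-=(-\infty,\alpha_-]$ for some finite $\alpha_-$. Disjointness of $R_+$ and $R_-$ then implies $\alpha_-<\alpha_+$, and any $\lambda\in(\alpha_-,\alpha_+)$ yields $Z_\lambda=\emptyset$.

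The main technical obstacle is the clean treatment at the null boundary: one must justify that the extension of $F$ to $\overline{\Pi G_\pm}$ is continuous with the values $\pm\infty$ on $\Pi G_0$ and uniform on the compact boundary, which uses the strict positivity $\mf{p}(v_0)>0$ from assumption (1) together with the sign of $\mf{m}$ nearby. Once this compactness-plus-blowup picture is in place, the disjointness supplied by Lemma \ref{thm.psc_pointwise_Z} directly produces the nonempty open interval of admissible $\lambda$.
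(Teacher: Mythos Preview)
Your proposal is correct and follows essentially the same approach as the paper: both arguments study the scale-invariant quotient $\mf{p}/\mf{m}$ on projective space, extend it continuously to the compact closures $\overline{\Pi G_\pm}$ with values $\pm\infty$ on $\Pi G_0$, and use connectedness plus compactness to identify the images as half-lines $[c_+,+\infty]$ and $[-\infty,c_-]$, with Lemma~\ref{thm.psc_pointwise_Z} forcing $c_-<c_+$. The only cosmetic difference is that you invoke Lemma~\ref{thm.psc_pointwise_Z} up front to obtain disjointness of $R_+$ and $R_-$, whereas the paper phrases the same step as a contradiction at the end.
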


\begin{proof}
First, observe that $\mf{q} := \mf{p} / \mf{m}$ is a well-defined and continuous function on $G_+ \cup G_-$.
Since $\mf{q}$ is dilation-invariant, it also induces a continuous function at the projective level,
\[
\mf{q}: \Pi G_+ \cup \Pi G_- \rightarrow \R \text{.}
\]
Furthermore, since $\mf{p}$ is positive near $G_0$, it follows that:
\begin{itemize}
\item $\mf{q} ( \bar{v} ) \rightarrow + \infty$ as $\bar{v} \rightarrow \Pi G_0$ along $\Pi G_+$.

\item $\mf{q} ( \bar{v} ) \rightarrow - \infty$ as $\bar{v} \rightarrow \Pi G_0$ along $\Pi G_-$.
\end{itemize}

Consider now the function
\begin{equation}
\label{eql.psc_pointwise_empty_1p} \mf{q}_+: \Pi G_+ \cup \Pi G_0 \rightarrow ( -\infty, +\infty ] \text{,} \qquad \mf{q}_+ ( \bar{v} ) = \begin{cases} + \infty & \bar{v} \in \Pi G_0 \text{,} \\ \mf{q} ( \bar{v} ) & \bar{v} \in \Pi G_+ \text{.} \end{cases}
\end{equation}
By the above, we know that $\mf{q}_+$ is continuous.
Moreover, since $\Pi G_+ \cup \Pi G_0$ is compact and connected, it follows that the image of $\mf{q}_+$ must be of the form $[ c_+, + \infty ]$ for some $c_+ \in \R$.

By an analogous argument, the function
\begin{equation}
\label{eql.psc_pointwise_empty_1n} \mf{q}_-: \Pi G_- \cup \Pi G_0 \rightarrow [ -\infty, +\infty ) \text{,} \qquad \mf{q}_- ( \bar{v} ) = \begin{cases} - \infty & \bar{v} \in \Pi G_0 \text{,} \\ \mf{q} ( \bar{v} ) & \bar{v} \in \Pi G_- \text{,} \end{cases}
\end{equation}
is also continuous, and its image must be an interval of the form $[ - \infty, c_- ]$ for some $c_- \in \R$.
In particular, it follows that the image of $\mf{q}$ must be $( - \infty, c_- ] \cup [ c_+, + \infty )$.

Suppose, for a contradiction, that $c_- \geq c_+$.
Fix now some $\alpha \in [ c_+, c_- ]$.
Then, by \eqref{eql.psc_pointwise_empty_1p}, \eqref{eql.psc_pointwise_empty_1n}, and the above discussions, we obtain that $\mf{q} ( \bar{v}_\pm ) = \alpha$ for some pair of elements $\bar{v}_\pm \in \Pi G_\pm$.
Lifting back up to $\mf{V}$, we deduce from \eqref{eql.psc_pointwise_Z} that there exist $v_\pm \in G_\pm$ satisfying
\[
\mf{p} ( v_\pm ) = \alpha \cdot \mf{m} ( v_\pm ) \text{,} \qquad v_\pm \in Z_\alpha \text{.}
\]
In particular, $Z_\alpha$ contains an element from both $G_+$ and $G_-$, which contradicts Lemma \ref{thm.psc_pointwise_Z}.

As a result, we conclude that $c_- < c_+$, and hence there exists some $\lambda \in \R$ that does not lie in the image of $\mf{q}$.
The definition of $\mf{q}$ then yields that $Z_\lambda$ must be empty.
\end{proof}

We can now complete the proof of Proposition \ref{thm.psc_pointwise}.
By Lemma \ref{thm.psc_pointwise_empty}, there is some $\lambda \in \R$ such that $Z_\lambda = \emptyset$.
As a result, we have, for any $v \in \mf{V} \setminus \{ 0 \}$, that
\[
\mf{p} ( v ) - \lambda \cdot \mf{m} ( v ) \neq 0 \text{.}
\]
Moreover, since the range of $\mf{p} - \lambda \cdot \mf{m}$ (viewed as a quadratic form) is connected by continuity, and since $\mf{p} - \lambda \cdot \mf{m}$ is positive on $G_0$, it follows that $\mf{p} - \lambda \cdot \mf{m}$ must be everywhere positive.
Therefore, \eqref{eq.psc_pointwise} holds for this particular $\lambda$, and the proof of Proposition \ref{thm.psc_pointwise} is complete.

\subsection{Proof of Theorem \ref{thm.psc_nc}} \label{sec.psc_proof}

First, for any $p \in \bar{\mi{D}}$, we apply Corollary \ref{thm.psc_pointwise_ex}, with
\begin{equation}
\label{eql.psc_nc_1} \mf{V} := T_p \mi{I} \text{,} \qquad \mf{m} := \gm |_p \text{,} \qquad \mf{n} := \mf{h} |_p \text{,} \qquad \mf{p} := ( - \gs - c^2 \cdot dt^2 ) |_p \text{.}
\end{equation}
For any $\mf{g}$-null $v \in T_p \mi{I}$, so that $\mf{m} ( v, v ) = 0$, we use \eqref{eq.aads_time}, \eqref{eq.aads_riemann}, \eqref{eq.aads_nc}, and \eqref{eql.psc_nc_1} to deduce
\begin{align*}
\mf{p} ( v, v ) &= - \gs |_p ( v, v ) - C^2 \cdot ( v t )^2 + ( C^2 - c^2 ) \cdot dt^2 |_p ( v, v ) \\
&\geq ( C^2 - c^2 ) \cdot dt^2 |_p ( v, v ) \\
&\simeq \mf{h} ( v, v ) \text{.}
\end{align*}
Thus, by Corollary \ref{thm.psc_pointwise_ex}, we can find some $\lambda_p \in \R$ such that for any $u \in T_p \mi{I}$,
\begin{equation}
\label{eql.psc_nc_2} ( - \gs - c^2 \cdot dt^2 - \lambda_p \cdot \gm ) ( u, u ) \gtrsim_{ C^2 - c^2, t } \mf{h} ( u, u ) \text{.}
\end{equation}
The bound \eqref{eq.psc_nc_g} now follows by combining \eqref{eql.psc_nc_2}, for all $p \in \bar{\mi{D}}$, with a partition of unity argument.

For \eqref{eq.psc_nc_t}, we rewrite the second part of \eqref{eq.aads_nc} as
\begin{equation}
\label{eql.psc_nc_3} \mp \mf{D}^2 t ( \mf{Z}, \mf{Z} ) + 2 B \cdot ( \mf{Z} t )^2 \geq 0 \text{,}
\end{equation}
where $\mf{Z}$ is any $\gm$-null vector field.
Then, \eqref{eq.psc_nc_t} follows by applying Corollary \ref{thm.psc_pointwise_ex} in the same manner as in the preceding paragraph, with $\mf{V}$, $\mf{m}$, $\mf{n}$ as in \eqref{eql.psc_nc_1}, and with
\begin{equation}
\label{eql.psc_nc_4} \mf{p} := ( \mp \Dm^2 t + 2 b \cdot dt^2 ) |_p \text{.}
\end{equation}
In particular, $\mf{p}$ is positive-definite on null vectors due to \eqref{eql.psc_nc_3}.

\section{Null Geodesics} \label{sec.geodesic}

The objective of this section is to connect the null convexity criterion of Definition \ref{def.aads_nc}, defined on the conformal boundary, to trajectories of null geodesics in the corresponding FG-aAdS segment.
In particular, we consider null geodesics that begin at and remain near the conformal boundary, and we control the amount of time needed for the geodesics to return to the boundary.

\subsection{Description of Null Geodesics} \label{sec.geodesic_desc}

The first step is make precise sense of null geodesics starting from the conformal boundary.
Here, we expand our definition slightly, as it will be more convenient to work instead with certain reparametrizations of geodesics.

\begin{definition} \label{def.geodesic_curve}
Let $( \mi{M}, g )$ be a strongly FG-aAdS segment, and let $t$ be a global time.
Then, given any curve $L: ( \ell_-, \ell_+ ) \rightarrow \mi{M}$, we define the following shorthands:
\begin{itemize}
\item We abbreviate the $\rho$- and $t$-components of $L$ as
\begin{equation}
\label{eq.geodesic_curve_comp} \rho_L := \rho \circ L \text{,} \qquad t_L := t \circ L \text{.}
\end{equation}

\item We let $\lambda_L$ denote the natural projection onto $\mi{I}$ of $L$.
\end{itemize}
\end{definition}

\begin{definition} \label{def.geodesic}
Let $( \mi{M}, g )$ be a strongly FG-aAdS segment, and let $t$ be a global time.
A curve
\begin{equation}
\label{eq.geodesic_curve} \Lambda: ( \ell_-, \ell_+ ) \rightarrow \mi{M} \text{,}
\end{equation}
is called a \emph{$t$-parametrized null curve of $( \mi{M}, g )$} iff the following hold:
\begin{itemize}
\item $\Lambda$ is a reparametrization of an inextendible null geodesic of $( \mi{M}^\circ, g )$.
\footnote{Here, $\mi{M}^\circ := ( 0, \rho_0 ) \times \mi{I}$ denotes the interior of $\mi{M}$.}

\item $\Lambda$ is parametrized by $t$:
\begin{equation}
\label{eq.geodesic_t} t_\Lambda ( \tau ) = \tau \text{,} \qquad \tau \in ( \ell_-, \ell_+ ) \text{.}
\end{equation}
\end{itemize}

Moreover, for such a $t$-parametrized null curve $\Lambda$, we say that $\Lambda$ lies \emph{over $\mi{D} \subseteq \mi{I}$} iff:
\begin{itemize}
\item The image of $\Lambda$ is contained in $( 0, \rho_0 ) \times \mi{D}$.

\item The following initial conditions hold:
\footnote{Here, we abuse notation slightly and write $\rho_\Lambda' ( \ell_- )$ rather than $\lim_{ \tau \searrow \ell_- } \rho_\Lambda' ( \tau )$ for brevity.}
\begin{equation}
\label{eq.geodesic_init} \lim_{ \tau \searrow \ell_- } \rho_\Lambda ( \tau ) = 0 \text{,} \qquad \rho_\Lambda' ( \ell_- ) := \lim_{ \tau \searrow \ell_- } \rho_\Lambda' ( \tau ) > 0 \text{.}
\end{equation}
\end{itemize}
\end{definition}

Next, we derive the equation governing the $\rho$-values of $t$-parametrized null curves.

\begin{proposition} \label{thm.geodesic_eq}
Let $( \mi{M}, g )$ be a strongly FG-aAdS segment, let $t$ be a global time, and let $\Lambda$ be a $t$-parametrized null curve.
Then, $\rho_\Lambda$ satisfies the following equation:
\begin{equation}
\label{eq.geodesic_eq} \rho_\Lambda'' + \left[ \Dv^2 t ( \lambda_\Lambda', \lambda_\Lambda' ) - \mi{L}_\rho \gv ( \Dv^\sharp t, \lambda_\Lambda' ) \cdot \rho_\Lambda' \right] \rho_\Lambda' - \frac{1}{2} \mi{L}_\rho \gv ( \lambda_\Lambda', \lambda_\Lambda' ) = 0 \text{.}
\end{equation}
\end{proposition}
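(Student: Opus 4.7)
The plan is to derive \eqref{eq.geodesic_eq} by combining the reparametrized-geodesic equation with the null condition, thereby reducing the problem to an explicit computation of two spacetime Hessians in the Fefferman--Graham gauge.

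First, since $\Lambda$ is a reparametrization of an affine null geodesic of $( \mi{M}^\circ, g )$, its tangent field satisfies $\nabla_{\Lambda'} \Lambda' = f \Lambda'$ for some scalar $f$ along $\Lambda$. Applying both sides to the functions $\rho$ and $t$ via the identity $X(Yu) - (\nabla_X Y) u = \nabla^2 u ( X, Y )$, and using the $t$-parametrization $\Lambda' t \equiv 1$ (which forces $\Lambda' ( \Lambda' t ) = 0$), I obtain $\rho_\Lambda'' = f \rho_\Lambda' + \nabla^2 \rho ( \Lambda', \Lambda' )$ and $f = - \nabla^2 t ( \Lambda', \Lambda' )$. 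Eliminating $f$ yields the master identity
\[
\rho_\Lambda'' + \rho_\Lambda' \cdot \nabla^2 t ( \Lambda', \Lambda' ) = \nabla^2 \rho ( \Lambda', \Lambda' ) \text{.}
\]

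Second, I compute the spacetime Christoffel symbols of $g = \rho^{-2} ( d \rho^2 + \gv )$ in lifted coordinates $( \rho, x^a )$. Relative to the vertical coefficients from Definition \ref{def.aads_vertical_connection}, the Christoffels are
\[
\Gamma^\rho_{ \rho \rho } = - \rho^{-1} \text{,} \quad \Gamma^\rho_{ a b } = \rho^{-1} \gv_{ a b } - \tfrac{1}{2} \mi{L}_\rho \gv_{ a b } \text{,} \quad \Gamma^a_{ \rho b } = - \rho^{-1} \delta^a_b + \bar{\Gamma}^a_{ \rho b } \text{,} \quad \Gamma^a_{ b c } = \bar{\Gamma}^a_{ b c } \text{,}
\]
with $\Gamma^\rho_{\rho a}$ and $\Gamma^a_{\rho\rho}$ vanishing. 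Since $\partial_\alpha \partial_\beta \rho \equiv 0$ and $t$ is $\rho$-independent, direct substitution produces
\[
\nabla^2 \rho ( \Lambda', \Lambda' ) = \rho^{-1} \bigl[ ( \rho_\Lambda' )^2 - \gv ( \lambda_\Lambda', \lambda_\Lambda' ) \bigr] + \tfrac{1}{2} \mi{L}_\rho \gv ( \lambda_\Lambda', \lambda_\Lambda' ) \text{,}
\]
\[
\nabla^2 t ( \Lambda', \Lambda' ) = \Dv^2 t ( \lambda_\Lambda', \lambda_\Lambda' ) + 2 \rho^{-1} \rho_\Lambda' - \rho_\Lambda' \cdot \mi{L}_\rho \gv ( \Dv^\sharp t, \lambda_\Lambda' ) \text{,}
\]
where the second identity uses both $\Lambda' t = 1$ and the rewriting $\bar{\Gamma}^b_{ \rho a } \partial_b t = \tfrac{1}{2} \mi{L}_\rho \gv ( \Dv^\sharp t, \partial_a )$.

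Finally, the null condition $g ( \Lambda', \Lambda' ) = 0$ reads $( \rho_\Lambda' )^2 + \gv ( \lambda_\Lambda', \lambda_\Lambda' ) = 0$ in the FG gauge, so $( \rho_\Lambda' )^2 - \gv ( \lambda_\Lambda', \lambda_\Lambda' ) = 2 ( \rho_\Lambda' )^2$. Substituting the two Hessian expressions into the master identity, the $2 \rho^{-1} ( \rho_\Lambda' )^2$ contribution from $\nabla^2 \rho$ is canceled exactly by the $2 \rho^{-1} ( \rho_\Lambda' )^2$ contribution from $\rho_\Lambda' \cdot \nabla^2 t ( \Lambda', \Lambda' )$, and the surviving terms rearrange to give \eqref{eq.geodesic_eq}. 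Conceptually the argument is routine; the only subtlety worth flagging is that the precise cancellation of all $\rho^{-1}$-singular contributions---essential for the equation to remain regular up to the conformal boundary and for the damped-oscillator structure exploited throughout Section \ref{sec.geodesic}---requires the simultaneous use of the FG gauge, the null condition, and the $t$-parametrization, with no one of the three being sufficient on its own.
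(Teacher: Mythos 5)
Your proof is correct, but it takes a genuinely different route from the paper's. The paper passes to the conformal metric $\rho^2 g$, invokes the conformal invariance of null geodesics to obtain an affine null geodesic $L$ of $\rho^2 g$, writes the $\rho$- and $t$-components of the affine geodesic equations (whose Christoffel symbols, as in \eqref{eql.geodesic_eq_1}, are manifestly regular at $\rho = 0$), and then uses the chain rule to trade the affine parameter for the $t$-parameter, eliminating $t_L''$ via the $t$-equation. You instead stay with the physical metric $g$, encode the reparametrization freedom in the pregeodesic equation $\nabla_{\Lambda'} \Lambda' = f \Lambda'$, determine $f = - \nabla^2 t ( \Lambda', \Lambda' )$ from the normalization $\Lambda' t \equiv 1$, and reduce everything to the two Hessians $\nabla^2 \rho$ and $\nabla^2 t$; your Christoffel symbols and both Hessian computations agree with Lemma \ref{thm.comp_prelim_christoffel} and check out, as does the cancellation of the $2 \rho^{-1} ( \rho_\Lambda' )^2$ terms via the null condition $( \rho_\Lambda' )^2 + \gv ( \lambda_\Lambda', \lambda_\Lambda' ) = 0$. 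The trade-off is instructive: the paper uses nullity once, structurally, to transfer the geodesic property to the conformal metric, so no singular terms ever appear; you use nullity algebraically at the end to cancel the $\rho^{-1}$-singular contributions that the physical metric inevitably produces. Your route avoids the auxiliary affine parameter and the chain-rule bookkeeping entirely, packaging the reparametrization into the single master identity $\rho_\Lambda'' + \rho_\Lambda' \cdot \nabla^2 t ( \Lambda', \Lambda' ) = \nabla^2 \rho ( \Lambda', \Lambda' )$, which is arguably the cleaner conceptual statement; the paper's route keeps every intermediate expression regular up to the boundary, which is more robust if one wants to track error terms uniformly as $\rho \searrow 0$.
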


\begin{proof}
We consider the conformal metric $\rho^2 g$, for which $\rho = 0$ can be treated as a finite boundary.
Fix any coordinate system $( U, \varphi )$ on $\mi{I}$; by \eqref{eq.aads_time}, we can assume $t$ is one of the coordinates in $\varphi$.
Then, the Christoffel symbols $\smash{ \Gamma^\gamma_{ \alpha \beta } }$ for $\rho^2 g$, with respect to $\varphi_\rho$-coordinates, satisfy
\begin{align}
\label{eql.geodesic_eq_1} \Gamma^\rho_{ \rho \rho } = 0 \text{,} \qquad \Gamma^\rho_{ \rho a } &= 0 \text{,} \qquad \Gamma^a_{ \rho \rho } = 0 \text{,} \\
\notag \Gamma^\rho_{ a b } = - \frac{1}{2} \mi{L}_\rho \gv_{ a b } \text{,} \qquad \Gamma^a_{ \rho b } &= \frac{1}{2} \gv^{ a c } \mi{L}_\rho \gv_{ c b } \text{,} \qquad \Gamma^t_{ a b } = - \Dv_{ a b } t \text{.}
\end{align}

Since $\Lambda$ is null with respect to $\rho^2 g$, then $\Lambda$ has a reparametrization $L$ that is a null geodesic with respect to $\rho^2 g$.
As a result, the geodesic equations, in the above $\varphi_\rho$-coordinates, yield
\begin{align}
\label{eql.geodesic_eq_2} 0 &= ( \rho \circ L )'' + \Gamma^\rho_{ a b } ( x^a \circ L )' ( x^b \circ L )' \text{,} \\
\notag 0 &= ( t \circ L )'' + 2 \Gamma^t_{ \rho a } ( \rho \circ L )' ( x^a \circ L )' + \Gamma^t_{ a b } ( x^a \circ L )' ( x^b \circ L )' \text{.}
\end{align}
Then, combining \eqref{eql.geodesic_eq_1} and \eqref{eql.geodesic_eq_2}, while recalling Definition \ref{def.geodesic_curve}, we obtain
\begin{equation}
\label{eql.geodesic_eq_10} \rho_L'' - \frac{1}{2} \mi{L}_\rho \gv ( \lambda_L', \lambda_L' ) = 0 \text{,} \qquad t_L'' + \mi{L}_\rho \gv ( \Dv^\sharp t, \lambda_L' ) \cdot \rho_L' - \Dv^2 t ( \lambda_L', \lambda_L' ) = 0 \text{.}
\end{equation}

Next, fix $\tau \in ( \ell_-, \ell_+ )$, and let $s$ be the value of the affine parameter for $L$ such that $L (s) = \Lambda ( \tau )$.
Then, applying the chain rule, we obtain the following:
\begin{equation}
\label{eql.geodesic_eq_11} \rho_L' ( s ) = t_L' ( s ) \cdot \rho_\Lambda' ( \tau ) \text{,} \qquad \lambda_L' ( s ) = t_L' ( s ) \cdot \lambda_\Lambda' ( \tau ) \text{.}
\end{equation}
Moreover, differentiating the first part of \eqref{eql.geodesic_eq_11} yields
\begin{align}
\label{eql.geodesic_eq_12} \rho_L'' ( s ) &= [ t_L' ( s ) ]^2 \rho_\Lambda'' ( \tau ) + t_L'' ( s ) \rho_\Lambda' ( \tau ) \\
\notag &= [ t_L' ( s ) ]^2 \rho_\Lambda'' ( \tau ) + \left[ \Dv^2 t ( \lambda_L' (s), \lambda_L' (s) ) - \mi{L}_\rho \gv ( \Dv^\sharp t, \lambda_L' (s) ) \cdot \rho_L' (s) \right] \rho_\Lambda' ( \tau ) \\
\notag &= [ t_L' ( s ) ]^2 \left\{ \rho_\Lambda'' ( \tau ) + \left[ \Dv^2 t ( \lambda_\Lambda' ( \tau ), \lambda_\Lambda' ( \tau ) ) - \mi{L}_\rho \gv ( \Dv^\sharp t, \lambda_\Lambda' ( \tau ) ) \cdot \rho_\Lambda' ( \tau ) \right] \rho_\Lambda' ( \tau ) \right\} \text{.}
\end{align}
where we also used the second part of \eqref{eql.geodesic_eq_10} and \eqref{eql.geodesic_eq_11}.
Similarly, by the first part of \eqref{eql.geodesic_eq_10} and \eqref{eql.geodesic_eq_11},
\begin{equation}
\label{eql.geodesic_eq_13} \rho_L'' ( s ) = \frac{1}{2} \mi{L}_\rho \gv ( \lambda_L' (s), \lambda_L' (s) ) = [ t_L' (s) ]^2 \cdot \frac{1}{2} \mi{L}_\rho \gv ( \lambda_\Lambda' ( \tau ), \lambda_\Lambda' ( \tau ) ) \text{.}
\end{equation}
Finally, combining \eqref{eql.geodesic_eq_12} with \eqref{eql.geodesic_eq_13} results in \eqref{eq.geodesic_eq}.
\footnote{Note $t_L' > 0$ everywhere, since if $t_\Lambda' (s)$ vanishes, then $L' ( s )$ is tangent to a level set of $t$ and cannot be null.}
\end{proof}

\subsection{The Geodesic Return Theorem}

The next task is to bound the timespans of $t$-parametrized null curves near the conformal boundary using the null convexity criterion of Definition \ref{def.aads_nc}.

First, the formulas for $\mc{T}_\pm ( B, C )$ in the following definition represent the upper and lower bounds on the above-mentioned timespans of $t$-parametrized null curves.

\begin{definition} \label{def.geodesic_timebound}
Given constants $0 \leq B < C$, we define
\begin{equation}
\label{eq.geodesic_nc_T} \mc{T}_\pm ( B, C ) := \frac{ 2 \cdot \mc{W}_\pm ( B, C ) }{ \sqrt{ C^2 - B^2 } } \text{,} \qquad \mc{W}_\pm ( B, C ) := \tan^{-1} \left( \mp \frac{ \sqrt{ C^2 - B^2 } }{ B } \right) \text{,}
\end{equation}
where the value of $\tan^{-1}$ is chosen such that
\begin{equation}
\label{eq.geodesic_nc_W} \mc{W}_+ ( B, C ) \in \left[ \frac{ \pi }{2}, \pi \right) \text{,} \qquad \mc{W}_- ( B, C ) \in \left[ 0, \frac{ \pi }{2} \right) \text{.}
\end{equation}
\end{definition}

We now state our main theorem concerning the trajectories of null curves near $\mi{I}$:

\begin{theorem} \label{thm.geodesic}
Let $( \mi{M}, g )$ be a strongly FG-aAdS segment, and let $t$ be a global time.
Moreover:
\begin{itemize}
\item Let $\mi{D} \subseteq \mi{I}$ be open, and suppose $\bar{\mi{D}}$ has compact cross-sections.

\item Assume the null convexity criterion holds on $\mi{D}$, with constants $0 \leq B < C_+$.

\item Assume there exists $C_- > C_+$ such that for any $\mathfrak{g}$-null vector field $\mf{Z}$ on $\mi{D}$,
\begin{equation}
\label{eq.geodesic_nc_reverse} - \gs ( \mf{Z}, \mf{Z} ) \leq C_-^2 \cdot ( \mf{Z} t )^2 \text{.}
\end{equation}
\end{itemize}
Then, given any $\eta > 0$, there exists $\varepsilon > 0$---with $\varepsilon$ depending only on $\mi{D}$, $\gv$, and $t$---such that
\begin{itemize}
\item if $\Lambda: ( \ell_-, \ell_+ ) \rightarrow \mi{M}$ is a $t$-parametrized null curve over $\mi{D}$, and

\item if the following conditions hold,
\begin{equation}
\label{eq.geodesic_ncT} t_- < \ell_- \text{,} \qquad \ell_- + \mc{T}_+ ( B, C_+ ) + \eta < t_+ \text{,} \qquad 0 < \rho'_\Lambda ( \ell_- ) < \varepsilon \text{,}
\end{equation}
\end{itemize}
then the following statements also hold:
\begin{itemize}
\item $\Lambda$ returns to $\mi{I}$:
\begin{equation}
\label{eq.geodesic_return} \lim_{ \tau \nearrow \ell_+ } \rho_\Lambda ( \tau ) = 0 \text{.}
\end{equation}

\item The time of return is bounded from above and below:
\begin{equation}
\label{eq.geodesic_timespan} \mc{T}_- ( B, C_- ) - \eta < \ell_+ - \ell_- < \mc{T}_+ ( B, C_+ ) + \eta \text{,}
\end{equation}

\item $\Lambda$ remains close to $\mi{I}$:
\footnote{See \eqref{eq.geodesic_init} for the definition of $\rho_\Lambda' ( \ell_- )$.}
\begin{equation}
\label{eq.geodesic_bound} 0 < \rho_\Lambda ( \tau ) \lesssim_{ B, C_+ } \rho'_\Lambda ( \ell_- ) \text{,} \qquad \tau \in ( \ell_-, \ell_+ ) \text{.}
\end{equation}
\end{itemize}
\end{theorem}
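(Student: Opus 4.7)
The plan is to extract the damped harmonic oscillator hidden in the $\rho$-component of the geodesic equation of Proposition \ref{thm.geodesic_eq}, apply a Riccati (Sturm-type) comparison to produce the quantitative return-time bounds $\mc{T}_\pm$, and wrap the comparison in a bootstrap that absorbs the nonlinear remainder throughout the trajectory.

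First I would expand the geodesic equation using the asymptotic data from Definition \ref{def.aads_strong}. The $g$-nullness of $\Lambda$ combined with the $t$-parametrization $\lambda_\Lambda' t = 1$ gives $\gv(\lambda_\Lambda', \lambda_\Lambda') = -(\rho_\Lambda')^2$, making $\lambda_\Lambda'$ a vector field on $\mi{I}$ that is asymptotically $\gm$-null as $(\rho_\Lambda, \rho_\Lambda') \to 0$. Taylor expanding $\mi{L}_\rho \gv = \rho \cdot \gs + \mc{O}(\rho^2)$ and $\Dv^2 t = \Dm^2 t + \mc{O}(\rho)$ recasts \eqref{eq.geodesic_eq} as
\[
\rho_\Lambda'' + A(\tau)\,\rho_\Lambda' + E(\tau)\,\rho_\Lambda = \mc{R}(\tau),
\]
where $A$ agrees at leading order with $\Dm^2 t(\lambda_\Lambda', \lambda_\Lambda')$, $E$ with a positive multiple of $-\gs(\lambda_\Lambda', \lambda_\Lambda')$, and $\mc{R}$ is quadratic in $(\rho_\Lambda, \rho_\Lambda')$ with smooth coefficients locally bounded on $\bar{\mi{D}}$. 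Applying Theorem \ref{thm.psc_nc} (and the analogous upper-bound version of Corollary \ref{thm.psc_pointwise_ex} paired with \eqref{eq.geodesic_nc_reverse}) to $\lambda_\Lambda'$, and absorbing the residual $\zeta \cdot \gm(\lambda_\Lambda', \lambda_\Lambda') = -\zeta(\rho_\Lambda')^2 + \mc{O}(\rho_\Lambda^2)$ into $\mc{R}$, delivers quantitative bounds $|A| \leq 2b$ and $c_+^2 \leq E \leq c_-^2$ up to errors of order $\rho_\Lambda + (\rho_\Lambda')^2$, for $b, c_\pm$ arbitrarily close to $B, C_\pm$.

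Next I would run the Riccati comparison. Setting $w := \rho_\Lambda'/\rho_\Lambda$, the equation becomes $w' = -w^2 - A w - E + \mc{R}/\rho_\Lambda$; the initial condition \eqref{eq.geodesic_init} forces $w \to +\infty$ at $\tau = \ell_-$, and a return to $\mi{I}$ corresponds to $w \to -\infty$. Temporarily ignoring $\mc{R}/\rho_\Lambda$, the worst case over $|A| \leq 2B$ gives the lower bound $-w' \geq (|w| - B)^2 + (C_+^2 - B^2)$, and integrating
\[
\int_0^{\infty}\! \frac{dw}{(w - B)^2 + C_+^2 - B^2} + \int_{-\infty}^{0}\! \frac{dw}{(w + B)^2 + C_+^2 - B^2} = \mc{T}_+(B, C_+),
\]
after matching $\mc{W}_+(B, C_+) = \pi/2 + \tan^{-1}(B/\sqrt{C_+^2 - B^2})$, produces the upper bound on the return time; the symmetric computation with $-w' \leq (|w| + B)^2 + (C_-^2 - B^2)$, using \eqref{eq.geodesic_nc_reverse}, produces the lower bound $\mc{T}_-(B, C_-)$ via $\mc{W}_-(B, C_-) = \pi/2 - \tan^{-1}(B/\sqrt{C_-^2 - B^2})$.

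The principal obstacle is justifying the omission of $\mc{R}/\rho_\Lambda$, since $\rho_\Lambda$ vanishes at the endpoints. I would handle this via a standard open/closed bootstrap: define
\[
\tau_\ast := \sup\bigl\{\tau \in (\ell_-, \ell_+) \,:\, \rho_\Lambda(\sigma) + |\rho_\Lambda'(\sigma)| \leq K \cdot \rho_\Lambda'(\ell_-) \text{ for all } \sigma \in (\ell_-, \tau]\bigr\},
\]
with $K = K(B, C_\pm)$ chosen so that the unperturbed linear prediction satisfies this bootstrap strictly. On $(\ell_-, \tau_\ast]$ the smoothness of the remainder gives $\mc{R} = \mc{O}(\rho_\Lambda \cdot \rho_\Lambda'(\ell_-))$, whence $|\mc{R}/\rho_\Lambda| = \mc{O}(\rho_\Lambda'(\ell_-))$, so taking $\rho_\Lambda'(\ell_-) < \varepsilon$ with $\varepsilon = \varepsilon(\mi{D}, \gv, t, \eta)$ sufficiently small forces the perturbed Riccati comparison to remain strict and hence improves the bootstrap; $\tau_\ast$ then extends past $\ell_- + \mc{T}_+(B, C_+) + \eta$, at which point $w$ must have reached $-\infty$, delivering \eqref{eq.geodesic_return}--\eqref{eq.geodesic_bound} simultaneously. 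The delicate sub-step is the behaviour near the endpoints $\ell_\pm$, where $|w|$ blows up and the naive bound $|\mc{R}/\rho_\Lambda| = \mc{O}(\rho_\Lambda'(\ell_-))$ fails because terms of type $(\rho_\Lambda')^2/\rho_\Lambda$ in $\mc{R}/\rho_\Lambda$ are a priori uncontrolled; the remedy is to work on two scales, comparing $\rho_\Lambda$ directly with the explicit sinusoidal solution of the linearised equation in small endpoint intervals of length $\delta = \delta(\eta)$, and using the Riccati variable only on the compact interior region where $|w|$ is uniformly bounded---with the $\eta$ slack in \eqref{eq.geodesic_timespan} accounting for the combined length of these endpoint intervals.
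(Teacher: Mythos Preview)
Your architecture---expand \eqref{eq.geodesic_eq} into a perturbed damped oscillator, run a Sturm-type comparison to produce $\mc{T}_\pm$, and wrap it in a bootstrap---matches the paper's, but two tactical choices differ. First, you route the coefficient bounds through Theorem~\ref{thm.psc_nc}, absorbing the residual $\zeta\cdot\gm(\lambda_\Lambda',\lambda_\Lambda')$ into a separate remainder $\mc{R}$; the paper instead keeps all error terms \emph{inside} the coefficients (Lemma~\ref{thm.geodesic_asymp}), shows directly that $\lambda_\Lambda'$ is $\varepsilon$-close to a $\gm$-null vector with $Zt=1$ whenever $\rho_\Lambda,|\rho_\Lambda'|$ are small (Lemma~\ref{thm.geodesic_tech}), and invokes the null convexity criterion by continuity on the compact set $\mi{D}_c$---so the bootstrap set $\mc{A}$ is defined by coefficient bounds $|\mc{U}|\leq 2b$, $\mc{V}\in[c_+^2,c_-^2]$ rather than by $\rho_\Lambda+|\rho_\Lambda'|$. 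Second, the paper does not use the Riccati variable $w=\rho_\Lambda'/\rho_\Lambda$ but compares $\rho_\Lambda$ directly against explicit sinusoidal solutions $f_\pm$ of the extremal linear problems, split into a rising phase $\rho_\Lambda'\geq 0$ (Lemma~\ref{thm.geodesic_A1}) and a falling phase $\rho_\Lambda'\leq 0$ (Lemma~\ref{thm.geodesic_A2}). This is precisely your ``direct sinusoidal comparison,'' used uniformly rather than only as an endpoint patch: because the errors sit inside the coefficients (which vanish as $\rho\to 0$) there is no $(\rho_\Lambda')^2/\rho_\Lambda$ singularity, and your two-scale complication never arises. The a priori estimate $\rho_\Lambda\leq f_+\lesssim\rho_\Lambda'(\ell_-)$ falls out of the comparison, and the bootstrap closes by integrating the equation to bound $|\rho_\Lambda'|$ (Lemma~\ref{thm.geodesic_boot}). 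Your Riccati integration is a valid alternative once the endpoint issue is handled and gives a slick derivation of the $\mc{T}_\pm$ formulas, but the paper's route is more economical.
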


\begin{remark}
In particular, if $\mi{I}$ has compact cross-sections (such as for any Kerr-AdS spacetime), then we can take $\mi{D} := \mi{I}$ and obtain a global version of Theorem \ref{thm.geodesic}.
\end{remark}

\begin{remark}
With a straightforward adaptation of the proof of Theorem \ref{thm.geodesic}, one can also show that if the assumption \eqref{eq.geodesic_nc_reverse} is omitted, then the conclusions of Theorem \ref{thm.geodesic} still hold, except that one only obtains the upper bound in \eqref{eq.geodesic_timespan} for $\ell_+ - \ell_-$.
\footnote{In particular, one would compare $\rho_\Lambda$ only to $f_+$ in both Lemmas \ref{thm.geodesic_A1} and \ref{thm.geodesic_A2} below.}
\end{remark}

\subsection{Proof of Theorem \ref{thm.geodesic}} \label{sec.geodesic_proof}

Throughout, we assume the hypotheses of Theorem \ref{thm.geodesic}, in particular the curve $\Lambda: ( \ell_-, \ell_+ ) \rightarrow \mi{M}$.
Also, by replacing $t$ with $t - \ell_-$, we can assume that $\ell_- = 0$.

The first step is to better describe the behavior of $\rho_\Lambda$ near the conformal boundary.

\begin{lemma} \label{thm.geodesic_asymp}
$\rho_\Lambda$ satisfies the identity
\begin{equation}
\label{eq.geodesic_asymp} 0 = \rho_\Lambda'' + [ \Dm^2 t ( \lambda_\Lambda', \lambda_\Lambda' ) + \ms{r}_1 ( \lambda_\Lambda', \lambda_\Lambda' ) + \ms{r}_2 ( \lambda_\Lambda' ) ] \cdot \rho_\Lambda' - [ \gs ( \lambda_\Lambda', \lambda_\Lambda' ) + \ms{r}_0 ( \lambda_\Lambda', \lambda_\Lambda' ) ] \cdot \rho_\Lambda \text{,}
\end{equation}
where the remainders $\ms{r}_0$, $\ms{r}_1$, $\ms{r}_2$ are vertical tensor fields satisfying
\begin{equation}
\label{eq.geodesic_remainder} \ms{r}_0 \rightarrow^0 0 \text{,} \qquad \ms{r}_1 \rightarrow^0 0 \text{,} \qquad \ms{r}_2 \rightarrow^0 0 \text{.}
\end{equation}
\end{lemma}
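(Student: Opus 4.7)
The plan is to start from the geodesic equation
\[ \rho_\Lambda'' + \left[ \Dv^2 t(\lambda_\Lambda', \lambda_\Lambda') - \mi{L}_\rho \gv(\Dv^\sharp t, \lambda_\Lambda') \cdot \rho_\Lambda' \right] \rho_\Lambda' - \tfrac{1}{2}\, \mi{L}_\rho \gv(\lambda_\Lambda', \lambda_\Lambda') = 0 \]
from Proposition \ref{thm.geodesic_eq}, and to rewrite each $\rho$-dependent coefficient as its boundary limit plus a $C^0$-vanishing remainder. This will yield the advertised damped-harmonic-oscillator form with error tensors $\ms{r}_0$, $\ms{r}_1$, $\ms{r}_2$ intrinsically defined on all of $\mi{M}$ (not just along $\Lambda$).

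For the second-order coefficient, I would define $\ms{r}_1 := \Dv^2 t - \Dm^2 t$ as a symmetric vertical $(0,2)$-tensor. Since $\gv \rightarrow^3 \gm$, Proposition \ref{thm.aads_geom_limit} gives $\Dv^2 t \rightarrow^0 \Dm^2 t$, so $\ms{r}_1 \rightarrow^0 0$. For the first-order coefficient, I would set $\ms{r}_2 := -\mi{L}_\rho \gv(\Dv^\sharp t, \cdot)$, a vertical $(0,1)$-tensor; the assumption $\mi{L}_\rho \gv \rightarrow^2 0$ from Definition \ref{def.aads_strong}, together with the bounded convergence $\Dv^\sharp t \rightarrow^0 \Dm^\sharp t$ from Proposition \ref{thm.aads_geom_limit}, yields $\ms{r}_2 \rightarrow^0 0$.

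The zeroth-order term is the only place where the full strength of Definition \ref{def.aads_strong} is used. Fiber by fiber in $\rho$, applying the fundamental theorem of calculus (using $\mi{L}_\rho \gv \rightarrow^2 0$ to fix the initial value) and subtracting off the boundary value of $\mi{L}_\rho^2 \gv$, I would write
\[ \mi{L}_\rho \gv\,|_\sigma = \int_0^\sigma \mi{L}_\rho^2 \gv\,|_s\, ds = \sigma \cdot \gs + \int_0^\sigma \bigl( \mi{L}_\rho^2 \gv\,|_s - \gs \bigr)\, ds, \]
evaluated as an identity of $(0,2)$-tensors on $\mi{I}$. Dividing by $\rho$ and folding any numerical factor into the paper's convention for $\gs$, I would define $\ms{r}_0$ as the remainder tensor built from this integral so that $-\tfrac{1}{2} \mi{L}_\rho \gv = -[\gs + \ms{r}_0]\cdot \rho$ identically on $\mi{M}$. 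Uniform convergence $\mi{L}_\rho^2 \gv \rightarrow^0 \gs$ on compact coordinate charts (again from Definition \ref{def.aads_strong}) then forces $\ms{r}_0 \rightarrow^0 0$.

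Substituting the three decompositions into the equation of Proposition \ref{thm.geodesic_eq} and evaluating along $\Lambda$ gives \eqref{eq.geodesic_asymp} and \eqref{eq.geodesic_remainder}. The main obstacle is purely bookkeeping: one has to be careful to define each $\ms{r}_i$ as a genuine global vertical tensor field (built solely out of $\gv$, $\mi{L}_\rho \gv$, $\mi{L}_\rho^2 \gv$, $\gs$, and $\Dv t$), rather than as a scalar defined only along $\Lambda$, so that the $\rightarrow^0 0$ convergences make sense in the framework of Definition \ref{def.aads_limit} and transfer automatically to every curve. Once this is set up, the computation itself is a direct substitution.
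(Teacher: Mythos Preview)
Your proposal is correct and follows essentially the same approach as the paper: both write $\Dv^2 t = \Dm^2 t + \ms{r}_1$ and use a Taylor-type expansion to write $\mi{L}_\rho \gv = 2(\gs + \ms{r}_0)\rho$, then substitute into the equation of Proposition~\ref{thm.geodesic_eq}. Your explicit integral construction of $\ms{r}_0$ and your definition of $\ms{r}_2$ spell out details the paper leaves to the reader under the phrase ``Taylor's theorem''.
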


\begin{proof}
Taylor's theorem, combined with the limits \eqref{eq.aads_strong_limits} and \eqref{eq.aads_geom_limit_deriv}, yields
\[
\mi{L}_\rho \gv = 2 ( \gs + \ms{r}_0 ) \rho \text{,} \qquad \Dv^2 t = \Dm^2 t + \ms{r}_1 \text{,}
\]
for some $\ms{r}_0$, $\ms{r}_1$ satisfying \eqref{eq.geodesic_remainder}.
The equation \eqref{eq.geodesic_asymp} now follows from \eqref{eq.geodesic_eq} and the above.
\end{proof}

The key idea behind proving Theorem \ref{thm.geodesic} is to apply the classical Sturm comparison to
\begin{equation}
\label{eql.geodesic_leading} 0 = \rho_\Lambda'' + \Dm^2 t ( \lambda_\Lambda', \lambda_\Lambda' ) \cdot \rho_\Lambda' - \gs ( \lambda_\Lambda', \lambda_\Lambda' ) \cdot \rho_\Lambda \text{,}
\end{equation}
which we view as a second-order ODE for $\rho_\Lambda$, and which represents the leading-order terms of \eqref{eq.geodesic_asymp}.
However, the comparison theorem cannot be applied directly, since \eqref{eq.geodesic_asymp} also contains nonlinear terms, hence one must ensure that the solution remains a perturbation of \eqref{eql.geodesic_leading}.
As a result, we take a more direct approach by combining the proof of the comparison theorem in our setting along with a bootstrap argument to handle the nonlinear terms.

For convenience, we abbreviate the coefficients of \eqref{eq.geodesic_asymp} as
\begin{equation}
\label{eql.geodesic_UV} \mc{U} := \Dm^2 t ( \lambda_\Lambda', \lambda_\Lambda' ) + \ms{r}_1 ( \lambda_\Lambda', \lambda_\Lambda' ) + \ms{r}_2 ( \lambda_\Lambda' ) \text{,} \qquad \mc{V} := - \gs ( \lambda_\Lambda', \lambda_\Lambda' ) - \ms{r}_0 ( \lambda_\Lambda', \lambda_\Lambda' ) \text{,}
\end{equation}
both of which we view as functions of $t$ along $\Lambda$.
In addition, we define the values
\begin{align}
\label{eql.geodesic_T} T_2 &:= \min ( \mc{T}_+ ( B, C_+ ) + \eta, \ell_+ ) \text{,} \\
\notag T_1 &:= \inf \left[ \{ \tau \in ( 0, T_2 ) \mid \rho_\Lambda' ( \tau ) = 0 \} \cup \{ T_2 \} \right] \text{.}
\end{align}
Roughly speaking, $T_1$ represents the first vanishing time for $\rho'_\Lambda$, while $T_2$ represents the terminal time for $\Lambda$.
More specifically, the possibilities for $\Lambda$ at $t = T_2$ are as follows:

\begin{lemma} \label{thm.geodesic_terminal}
One of the three possibilities must hold:
\begin{enumerate}
\item $\ell_+ = T_2$, and $\rho_\Lambda ( \tau ) \rightarrow 0$ as $\tau \nearrow \ell_+$.

\item $\ell_+ = T_2$, and $\rho_\Lambda ( \tau ) \rightarrow \rho_0$ as $\tau \nearrow \ell_+$.

\item $\ell_+ > T_2$, and $\rho_\Lambda ( T_2 ) > 0$.
\end{enumerate}
\end{lemma}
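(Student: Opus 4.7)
The plan is to case split on whether $T_2 < \ell_+$ or $T_2 = \ell_+$, which by the definition \eqref{eql.geodesic_T} of $T_2$ exhausts all possibilities. The first case will be immediate; the second will use the inextendibility of the underlying null geodesic of $(\mi{M}^\circ, g)$ together with the compact cross-sections hypothesis on $\bar{\mi{D}}$ to pin down the limiting $\rho$-value.

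If $T_2 < \ell_+$, then $T_2$ lies strictly inside the parameter interval $(0, \ell_+)$ of $\Lambda$, so $\Lambda(T_2) \in \mi{M}^\circ$ is well defined and $\rho_\Lambda(T_2) \in (0, \rho_0)$ by Definition \ref{def.geodesic}. In particular $\rho_\Lambda(T_2) > 0$, and case (3) holds.

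If $T_2 = \ell_+$, I would first show that every subsequential limit of $\rho_\Lambda(\tau)$ as $\tau \nearrow \ell_+$ lies in $\{0, \rho_0\}$. Suppose, for contradiction, that some sequence $\tau_n \nearrow \ell_+$ satisfies $\rho_\Lambda(\tau_n) \to r$ with $r \in (0, \rho_0)$. Since $\lambda_\Lambda(\tau_n) \in \bar{\mi{D}} \cap \{t = \tau_n\}$ and these cross-sections vary compactly over the bounded $t$-interval $[0, T_2]$, a compactness argument extracts a further subsequence along which $\Lambda(\tau_n)$ converges to a point $P \in (0, \rho_0) \times \bar{\mi{I}}$. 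To convert this into a contradiction, I would pass back from the $t$-parametrization of $\Lambda$ to the affine parametrization of the underlying null geodesic $L$: the null condition $g(L', L') = 0$ combined with the form \eqref{eq.aads_metric} and the uniform timelike condition \eqref{eq.aads_time} keeps $t_L'(s)$ bounded away from zero on any compact subset of $\mi{M}^\circ$, so the bounded $\tau$-range translates into a bounded range of the affine parameter. Standard ODE theory for the geodesic equation in $\mi{M}^\circ$ then allows $L$ to be extended past $P$, contradicting inextendibility.

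Having shown every subsequential limit of $\rho_\Lambda$ at $\ell_+$ lies in $\{0, \rho_0\}$, I would rule out a mixed case by the intermediate value theorem: if both $0$ and $\rho_0$ were subsequential limits, then for any $r \in (0, \rho_0)$ continuity would yield a sequence $\tau_n \nearrow \ell_+$ with $\rho_\Lambda(\tau_n) = r$, whose subsequential limit would contradict the previous step. Hence $\rho_\Lambda(\tau)$ tends to a unique limit in $\{0, \rho_0\}$, giving case (1) or case (2) respectively. I expect the main technical obstacle to be the passage between the $t$-parametrization and the affine parametrization, specifically verifying the uniform lower bound on $t_L'(s)$ on compact subsets of $\mi{M}^\circ$; this is where the Fefferman--Graham gauge and the global-time condition enter essentially, since $\gv$ is Lorentzian and the required positive bound on $t_L'$ has to be extracted from the null condition rather than from a pointwise inequality between positive-definite quantities.
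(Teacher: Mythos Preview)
Your case split and the intermediate-value step are correct, and you are supplying details the paper omits (its proof is the single sentence ``immediate from Definition~\ref{def.geodesic}''). The gap is in deducing a contradiction from a subsequential limit $\Lambda(\tau_n) \to P \in \mi{M}^\circ$. To extend the affine geodesic $L$ past $P$ via ODE theory you need both that the affine endpoint $s_+$ is finite and that $L'(s_n)$ converges. Your lower bound on $t_L'(s)$ is asserted only on compact subsets of $\mi{M}^\circ$, but you know $L$ lies in such a set only along the subsequence $s_n$---between those parameter values the curve may approach $\rho = 0$ or $\rho = \rho_0$, where no such bound is available, so you cannot conclude $s_+ < \infty$. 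Convergence of the velocities is not addressed at all, and without it a mere accumulation point of positions does not contradict inextendibility.

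The shortcut the paper presumably intends is standard causality theory: since the $\rho$-extended $t$ has $g$-timelike gradient on $\mi{M}^\circ$ by \eqref{eq.aads_metric} and \eqref{eq.aads_time}, the spacetime is stably (hence strongly) causal, so an inextendible causal curve must eventually leave every compact set \emph{permanently}. Applying this with $K = [\epsilon, \rho_0 - \epsilon] \times \big(\bar{\mi{D}} \cap \{0 \le t \le \ell_+\}\big)$ for each $\epsilon > 0$ shows that $\rho_\Lambda(\tau)$ is eventually confined to $(0,\epsilon) \cup (\rho_0 - \epsilon, \rho_0)$; your intermediate-value argument then rules out oscillation between the two components and delivers the dichotomy directly, without any passage to the affine parametrization.
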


\begin{proof}
This is an immediate consequence of Definition \ref{def.geodesic}.
\end{proof}

Next, let $B < b < c_+ < C_+ < C_- < c_-$, with $b - B$ and $| C_\pm - c_\pm |$ small enough so that
\begin{align}
\label{eql.geodesic_BC} 2 \, T_+ &:= \mc{T}_+ ( b, c_+ ) < \mc{T}_+ ( B, C_+ ) + \eta \text{,} \qquad 2 \, T_- := \mc{T}_- ( b, c_- ) > \mc{T}_- ( B, C_- ) - \eta \text{.}
\end{align}
(Note in particular that $T_- < T_+$.)
Moreover, we define the set
\begin{equation}
\label{eql.geodesic_A} \mc{A} := \{ \tau \in ( 0, T_2 ) \mid | \mc{U} ( \sigma ) | \leq 2 b \text{ and } c_-^2 \geq \mc{V} ( \sigma ) \geq c_+^2 \text{ for all } \sigma \in [ 0, \tau ] \} \text{,}
\end{equation}
representing the times for which the comparison principle is applicable, and we split $\mc{A}$ as
\begin{equation}
\label{eql.geodesic_AA} \mc{A}_1 := \mc{A} \cap ( 0, T_1 ] \text{,} \qquad \mc{A}_2 := \mc{A} \cap [ T_1, T_2 ) \text{.}
\end{equation}
We now obtain, via the comparison principle, a priori estimates for $\rho_\Lambda$ on $\mc{A}_1$ and $\mc{A}_2$:

\begin{lemma} \label{thm.geodesic_A1}
$\mc{A}_1 \subseteq ( 0, T_+ ]$, and the following estimate holds:
\begin{equation}
\label{eq.geodesic_A1} \rho_\Lambda ( \tau ) \lesssim_{ B, C_+ } \rho_\Lambda' ( 0 ) \text{,} \qquad \tau \in \mc{A}_1 \text{.}
\end{equation}
Furthermore, if $T_1 \in \mc{A}_1$, then
\begin{equation}
\label{eq.geodesic_A1_end} T_- \leq T_1 \leq T_+ \text{,} \qquad \rho_\Lambda' ( T_1 ) = 0 \text{.}
\end{equation}
\end{lemma}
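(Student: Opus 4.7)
My approach is a Sturm-type comparison via Wronskian identities, comparing $\rho_\Lambda$---as a solution of \eqref{eq.geodesic_asymp}, or equivalently $\rho_\Lambda'' + \mc{U} \rho_\Lambda' + \mc{V} \rho_\Lambda = 0$ after abbreviating as in \eqref{eql.geodesic_UV}---against two explicit constant-coefficient comparison solutions that realise the extreme cases of the bootstrap set $\mc{A}$. Let $f$ and $g$ solve the linear damped oscillators $f'' - 2b f' + c_+^2 f = 0$ and $g'' + 2b g' + c_-^2 g = 0$ respectively, each with matching initial data $f(0) = g(0) = 0$ and $f'(0) = g'(0) = \rho_\Lambda'(0) > 0$. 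Direct integration gives explicit sinusoids modulated by real exponentials; one checks from the arctangent formulas \eqref{eq.geodesic_nc_T} that the first critical points of $f$ and $g$ lie precisely at $T_+$ and $T_-$, with $f > 0$ on $(0, T_+]$ and $g > 0$ on $(0, T_-]$.

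I next introduce the Wronskians $W := f' \rho_\Lambda - f \rho_\Lambda'$ and $V := g' \rho_\Lambda - g \rho_\Lambda'$. Differentiating once and substituting both \eqref{eq.geodesic_asymp} and the defining ODEs of $f$ and $g$, a short cancellation produces
\[ W' - 2b W = (\mc{V} - c_+^2) f \rho_\Lambda + (\mc{U} + 2b) f \rho_\Lambda', \qquad V' + 2b V = (\mc{V} - c_-^2) g \rho_\Lambda + (\mc{U} - 2b) g \rho_\Lambda'. \]
On $\mc{A}_1 \subseteq (0, T_1]$, the definition of $T_1$ forces $\rho_\Lambda' \geq 0$, and hence $\rho_\Lambda \geq 0$ (as $\rho_\Lambda(0) = 0$); combined with the bootstrap inequalities $c_+^2 \leq \mc{V} \leq c_-^2$ and $|\mc{U}| \leq 2b$ from \eqref{eql.geodesic_A}, and with $f, g > 0$ on the comparison intervals identified above, both right-hand sides are sign-definite---nonnegative for the first, nonpositive for the second. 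Since $W(0) = V(0) = 0$, Gronwall then yields $W \geq 0$ and $V \leq 0$ throughout $\mc{A}_1$.

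The three conclusions of the lemma follow from these sign identities. For $\mc{A}_1 \subseteq (0, T_+]$: if some $\tau \in \mc{A}_1$ satisfied $\tau > T_+$, the bootstrap and the strict inequality $\rho_\Lambda' > 0$ would persist up to $T_+$; but evaluating at $T_+$ gives $W(T_+) = -f(T_+) \rho_\Lambda'(T_+)$, which combined with $W(T_+) \geq 0$ and $f(T_+) > 0$ forces $\rho_\Lambda'(T_+) \leq 0$, a contradiction. For \eqref{eq.geodesic_A1}: $W \geq 0$ together with $f > 0$ rewrites as $(\rho_\Lambda / f)' \leq 0$, and L'H\^opital gives $\lim_{t \searrow 0} \rho_\Lambda(t) / f(t) = \rho_\Lambda'(0) / f'(0) = 1$, so $\rho_\Lambda \leq f$ on $\mc{A}_1$; the explicit form of $f$ then yields $\max_{[0, T_+]} f \lesssim_{B, C_+} \rho_\Lambda'(0)$. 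Finally, if $T_1 \in \mc{A}_1$, then $T_1 < T_2$, so $\rho_\Lambda'(T_1) = 0$ is immediate from the definition of $T_1$, and $T_1 \geq T_-$ follows because $T_1 < T_-$ would yield $V(T_1) = g'(T_1) \rho_\Lambda(T_1) > 0$ (using $g'(T_1) > 0$ since $T_1$ would precede the first critical point of $g$, and $\rho_\Lambda(T_1) > 0$ by strict monotonicity of $\rho_\Lambda$ on $[0, T_1]$), contradicting $V \leq 0$.

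The main technical care lies in verifying the strict positivity of $f$, $g$, and $\rho_\Lambda$ on the relevant intervals (so that the Wronskian division and the Gronwall step are fully justified), together with identifying the first critical points of $f$ and $g$ with $T_+$ and $T_-$ from the arctangent formulas \eqref{eq.geodesic_nc_T}. Once these bookkeeping points are settled, everything else is a routine computation with explicit sinusoidal solutions.
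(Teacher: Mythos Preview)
Your proposal is correct and is essentially the same Sturm-comparison argument as the paper's proof. The paper first rewrites the differential inequalities in self-adjoint form $(e^{\mp 2bt}\rho_\Lambda')' + c_\pm^2 e^{\mp 2bt}\rho_\Lambda \lessgtr 0$ and applies the Lagrange identity via integration by parts against the same comparison solutions (denoted $f_\pm$ there, identical to your $f,g$), whereas you compute the Wronskian derivative directly and absorb the damping via the integrating factor $e^{\mp 2bt}$; the resulting inequality $e^{\mp 2bt}(f_\pm'\rho_\Lambda - f_\pm\rho_\Lambda') \gtrless 0$ and all downstream conclusions are identical.
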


\begin{proof}
Since $\rho_\Lambda' \geq 0$ on $[ 0, T_1 ]$, it follows from \eqref{eq.geodesic_asymp}, \eqref{eql.geodesic_UV}, \eqref{eql.geodesic_A}, and \eqref{eql.geodesic_AA} that on $\mc{A}_1$,
\begin{align}
\label{eql.geodesic_A1_1} \rho_\Lambda'' - 2 b \rho_\Lambda' + c_+^2 \rho_\Lambda \leq 0 \text{,} &\qquad ( e^{ - 2 b t } \rho_\Lambda' )' + c_+^2 e^{ - 2 b t } \rho_\Lambda \leq 0 \text{,} \\
\notag \rho_\Lambda'' + 2 b \rho_\Lambda' + c_-^2 \rho_\Lambda \geq 0 \text{,} &\qquad ( e^{ + 2 b t } \rho_\Lambda' )' + c_-^2 e^{ + 2 b t } \rho_\Lambda \geq 0 \text{.}
\end{align}
In addition, let $f_\pm$ denote the solutions of the initial value problems
\begin{equation}
\label{eql.geodesic_A1_2} ( e^{ \mp 2 b t } f_\pm' )' + c_\pm^2 e^{ \mp 2 b t } f_\pm = 0 \text{,} \qquad ( f_\pm (0), f_\pm' (0) ) = ( 0, \rho_\Lambda' (0) ) \text{.}
\end{equation}
Note $f_\pm$ and $\rho_\Lambda$ have the same initial data at $t = 0$, and $f_\pm$ has the explicit form
\begin{equation}
\label{eql.geodesic_A1_3} f_\pm ( \tau ) = \frac{ \rho_\Lambda' (0) }{ \sqrt{ c_\pm^2 - b^2 } } \cdot e^{ \pm b \tau } \cdot \sin \left( \sqrt{ c_\pm^2 - b^2 } \cdot \tau \right) \text{.}
\end{equation}
In particular, $f_\pm$ and $f_\pm'$ are positive on $( 0, T_\pm )$, and $f_\pm'$ vanishes at $t = T_\pm$.

Since $f_\pm \geq 0$ on $[ 0, T_\pm ]$, then \eqref{eql.geodesic_A1_1}, \eqref{eql.geodesic_A1_2}, and integrations by parts yield
\begin{align}
\label{eql.geodesic_A1_10} 0 &\geq \int_0^{ \tau_+ } [ ( e^{ - 2 b t } \rho_\Lambda' )' + c_+^2 e^{ - 2 b t } \rho_\Lambda ] f_+ - \int_0^{ \tau_+ } \rho_\Lambda [ ( e^{ - 2 b t } f_+' )' + c_+^2 e^{ - 2 b t } f_+ ] \\
\notag &= e^{ - 2 b \tau_+ } \rho_\Lambda' ( \tau_+ ) f_+ ( \tau_+ ) - e^{ - 2 b \tau_+ } \rho_\Lambda ( \tau_+ ) f_+' ( \tau_+ ) \text{,} \\
\notag 0 &\leq \int_0^{ \tau_- } [ ( e^{ 2 b t } \rho_\Lambda' )' + c_-^2 e^{ 2 b t } \rho_\Lambda ] f_- - \int_0^{ \tau_- } \rho_\Lambda [ ( e^{ 2 b t } f_-' )' + c_-^2 e^{ 2 b t } f_- ] \\
\notag &= e^{ + 2 b \tau_- } \rho_\Lambda' ( \tau_- ) f_- ( \tau_- ) - e^{ + 2 b \tau_- } \rho_\Lambda ( \tau_- ) f_-' ( \tau_- ) \text{,}
\end{align}
for all $\tau_\pm \in \mc{A}_1 \cap [ 0, T_\pm ]$.
Since $\rho_\Lambda, f_\pm > 0$ on $\mc{A}_1 \cap ( 0, T_\pm )$, the above can be rearranged as
\begin{align*}
( \log \rho_\Lambda )' ( \tau_+ ) \leq ( \log f_+ )' ( \tau_+ ) \text{,} &\qquad \tau_+ \in \mc{A}_1 \cap ( 0, T_+ ] \text{,} \\
( \log \rho_\Lambda )' ( \tau_- ) \geq ( \log f_- )' ( \tau_- ) \text{,} &\qquad \tau_- \in \mc{A}_1 \cap ( 0, T_- ] \text{.}
\end{align*}
Integrating the above from $t = 0$ and noting that $\log$ is monotone, we obtain
\begin{align}
\label{eql.geodesic_A1_11} \frac{ \rho_\Lambda ( \tau_+ ) }{ f_+ ( \tau_+ ) } \leq \lim_{ \tau \searrow 0 } \frac{ \rho_\Lambda ( \tau ) }{ f_+ ( \tau ) } = 1 \text{,} &\qquad \tau_+ \in \mc{A}_1 \cap ( 0, T_+ ] \text{,} \\
\notag \frac{ \rho_\Lambda ( \tau_- ) }{ f_- ( \tau_- ) } \geq \lim_{ \tau \searrow 0 } \frac{ \rho_\Lambda ( \tau ) }{ f_- ( \tau ) } = 1 \text{,} &\qquad \tau_- \in \mc{A}_1 \cap ( 0, T_- ] \text{.}
\end{align}
Combining \eqref{eql.geodesic_A1_10} and \eqref{eql.geodesic_A1_11} then yields
\begin{align}
\label{eql.geodesic_A1_12} \rho_\Lambda' ( \tau_+ ) \leq \frac{ \rho_\Lambda ( \tau_+ ) f_+' ( \tau_+ ) }{ f_+ ( \tau_+ ) } \leq f_+' ( \tau_+ ) \text{,} &\qquad \tau_+ \in \mc{A}_1 \cap ( 0, T_+ ] \text{,} \\
\notag \rho_\Lambda' ( \tau_- ) \geq \frac{ \rho_\Lambda ( \tau_- ) f_-' ( \tau_- ) }{ f_- ( \tau_- ) } \geq f_-' ( \tau_- ) \text{,} &\qquad \tau_- \in \mc{A}_1 \cap ( 0, T_- ] \text{.}
\end{align}

Suppose $\mc{A}_1 \supseteq ( 0, T_+ + \delta ]$ for some $\delta > 0$.
Then, $T_1 > T_+$, but \eqref{eql.geodesic_A1_12} implies
\[
\rho_\Lambda' ( T_+ ) \leq f_+' ( T_+ ) = 0 \text{,} \]
contradicting that $T_1$ is the first vanishing time for $\rho_\Lambda'$.
This proves $\mc{A}_1 \subseteq ( 0, T_+ ]$.
Moreover, for any $\tau \in \mc{A}_1$, we use \eqref{eql.geodesic_A1_3}, \eqref{eql.geodesic_A1_11}, and the fact that $\mc{A}_1 \subseteq ( 0, T_+ ]$ to obtain \eqref{eq.geodesic_A1}:
\[
\rho_\Lambda ( \tau ) \leq f_+ ( \tau ) \lesssim_{ B, C_+ } \rho_\Lambda' (0) \text{.}
\]

Finally, suppose $T_1 \in \mc{A}_1$.
Then, the above yields $T_1 \leq T_+$.
Moreover, since $T_2 \not\in \mc{A}$ by definition, \eqref{eql.geodesic_AA} implies $T_1 < T_2$, and hence $\rho_\Lambda' ( T_1 ) = 0$.
The inequality $T_- \leq T_1$ also holds, since the opposite statement $T_- > T_1$ implies $f_-' ( T_1 ) > 0$ and hence contradicts the second part of \eqref{eql.geodesic_A1_12}.
\end{proof}

\begin{lemma} \label{thm.geodesic_A2}
$\mc{A}_2 \subseteq [ T_1, T_1 + T_+ ]$, and the following estimate holds:
\begin{equation}
\label{eq.geodesic_A2} \rho_\Lambda ( \tau ) \lesssim_{ B, C_+ } \rho_\Lambda' (0) \text{,} \qquad \tau \in \mc{A}_2 \text{.}
\end{equation}
Furthermore, if $T_2$ is a limit point of $\mc{A}_2$, then the following hold:
\begin{equation}
\label{eq.geodesic_A2_end} T_2 = \ell_+ \text{,} \qquad T_1 + T_- \leq \ell_+ \leq T_1 + T_+ \text{,} \qquad \lim_{ \tau \nearrow \ell_+ } \rho_\Lambda ( \tau ) = 0 \text{.}
\end{equation}
\end{lemma}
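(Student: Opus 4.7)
The plan is to mirror the structure of the proof of Lemma~\ref{thm.geodesic_A1}, but now on the descending leg of $\rho_\Lambda$ past the turning point $T_1$. Introduce comparison functions $g_\pm$ defined by
\begin{equation*}
( e^{ \pm 2 b t } g_\pm' )' + c_\pm^2 e^{ \pm 2 b t } g_\pm = 0 \text{,} \qquad ( g_\pm ( T_1 ), g_\pm' ( T_1 ) ) = ( \rho_\Lambda ( T_1 ), 0 ) \text{.}
\end{equation*}
Via the substitutions $g_\pm = e^{ \mp b t } h_\pm$, the $h_\pm$ satisfy $h_\pm'' + ( c_\pm^2 - b^2 ) h_\pm = 0$, and explicit formulas then show that $g_+ > 0$ on $[ T_1, T_1 + T_+ )$ with $g_+ ( T_1 + T_+ ) = 0$, and analogously $g_- > 0$ on $[ T_1, T_1 + T_- )$ with $g_- ( T_1 + T_- ) = 0$. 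These play the roles of the $f_\pm$ from Lemma~\ref{thm.geodesic_A1}.

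The first technical step is to establish $\rho_\Lambda' \leq 0$ throughout $\mc{A}_2$. From \eqref{eq.geodesic_asymp}, \eqref{eql.geodesic_UV}, and $\rho_\Lambda' ( T_1 ) = 0$ (by \eqref{eq.geodesic_A1_end}), one computes $\rho_\Lambda'' ( T_1 ) = - \mc{V} ( T_1 ) \rho_\Lambda ( T_1 ) < 0$, since $\mc{V} \geq c_+^2 > 0$ on $\mc{A}$ and $\rho_\Lambda > 0$ on $\mi{M}^\circ$. A bootstrap argument then rules out $\rho_\Lambda'$ becoming positive again on $\mc{A}_2$: at any putative sign-change point $\tau^\ast > T_1$ in $\mc{A}_2$ with $\rho_\Lambda' ( \tau^\ast ) = 0$, the same identity forces $\rho_\Lambda'' ( \tau^\ast ) < 0$, contradicting that $\rho_\Lambda'$ transitions from $\leq 0$ to $> 0$ at $\tau^\ast$. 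Given this sign control, the bounds $| \mc{U} | \leq 2 b$ and $c_+^2 \leq \mc{V} \leq c_-^2$ on $\mc{A}_2$ immediately yield the one-sided inequalities $( e^{ 2 b t } \rho_\Lambda' )' + c_+^2 e^{ 2 b t } \rho_\Lambda \leq 0$ and $( e^{ - 2 b t } \rho_\Lambda' )' + c_-^2 e^{ - 2 b t } \rho_\Lambda \geq 0$.

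Integrating these against $g_\pm$ over $[ T_1, \tau ]$ by parts, in direct analogy with \eqref{eql.geodesic_A1_10}--\eqref{eql.geodesic_A1_12}, all boundary contributions at $T_1$ vanish thanks to $\rho_\Lambda' ( T_1 ) = g_\pm' ( T_1 ) = 0$. Comparing log-derivatives of $\rho_\Lambda$ and $g_\pm$ (both positive on the relevant intervals) then yields the sandwich $g_- ( \tau ) \leq \rho_\Lambda ( \tau ) \leq g_+ ( \tau )$ on $\mc{A}_2 \cap [ T_1, T_1 + T_- )$, together with the upper comparison $\rho_\Lambda \leq g_+$ on all of $\mc{A}_2 \cap [ T_1, T_1 + T_+ )$. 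The inclusion $\mc{A}_2 \subseteq [ T_1, T_1 + T_+ ]$ follows at once, since any $\tau \in \mc{A}_2$ with $\tau \geq T_1 + T_+$ would force, by continuity, $\rho_\Lambda ( T_1 + T_+ ) \leq g_+ ( T_1 + T_+ ) = 0$, contradicting $\rho_\Lambda > 0$ on $\mi{M}^\circ$. The pointwise bound \eqref{eq.geodesic_A2} then follows from the explicit form of $g_+$, combined with $\rho_\Lambda ( T_1 ) \lesssim_{ B, C_+ } \rho_\Lambda' ( 0 )$ furnished by Lemma~\ref{thm.geodesic_A1}.

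For the final statement, suppose $T_2$ is a limit point of $\mc{A}_2$. Of the three scenarios in Lemma~\ref{thm.geodesic_terminal}, case~(3) would force $T_2 = \mc{T}_+ ( B, C_+ ) + \eta$, yet $\sup \mc{A}_2 \leq T_1 + T_+ \leq 2 T_+ = \mc{T}_+ ( b, c_+ ) < \mc{T}_+ ( B, C_+ ) + \eta$ by Lemma~\ref{thm.geodesic_A1} and \eqref{eql.geodesic_BC}, contradicting $T_2 \in \overline{ \mc{A}_2 }$; case~(2) is excluded by fixing $\varepsilon$ small enough (depending on $\rho_0$, $b$, $c_+$), since \eqref{eq.geodesic_A2} ensures $\rho_\Lambda \lesssim \rho_\Lambda' ( 0 ) < \varepsilon$ on $\mc{A}_2$, so $\rho_\Lambda$ cannot limit to $\rho_0$. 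Therefore case~(1) holds, giving $T_2 = \ell_+$ and $\rho_\Lambda \to 0$. The upper bound $\ell_+ \leq T_1 + T_+$ is inherited from $\mc{A}_2 \subseteq [ T_1, T_1 + T_+ ]$; the lower bound $\ell_+ \geq T_1 + T_-$ follows by contradiction, as $\ell_+ < T_1 + T_-$ would give $0 < g_- ( \ell_+ ) \leq \lim_{ \tau \nearrow \ell_+ } \rho_\Lambda ( \tau ) = 0$. The main subtlety throughout is securing $\rho_\Lambda' \leq 0$ on $\mc{A}_2$, since without a definite sign on $\rho_\Lambda'$ the differential inequalities degrade and the Sturm-type integrations fail to produce the requisite one-sided comparisons; the strict positivity of $\mc{V}$ on $\mc{A}$ is precisely what makes this bootstrap go through.
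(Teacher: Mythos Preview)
Your proof is correct and follows essentially the same approach as the paper: the same comparison functions (the paper calls them $f_\pm$), the same Sturm-type integration by parts, and the same case analysis via Lemma~\ref{thm.geodesic_terminal}. The only minor difference is in establishing $\rho_\Lambda' \leq 0$ on $\mc{A}_2$: the paper rewrites the ODE as $\bigl( e^{\int_0^t \mc{U}} \rho_\Lambda' \bigr)' = - \mc{V} e^{\int_0^t \mc{U}} \rho_\Lambda < 0$ to get monotonicity of $e^{\int_0^t \mc{U}} \rho_\Lambda'$ directly, whereas you argue pointwise at putative sign-change points---both are valid and yield the same conclusion.
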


\begin{proof}
We can assume that $T_1 < T_2$ and that $\mc{A}_2 \cap ( T_1, T_2 )$ is nonempty (and hence $\mc{A}_1 = ( 0, T_1 ]$), since otherwise there is nothing to prove.
First, we claim that the following hold:
\begin{equation}
\label{eql.geodesic_A2_0} \rho_\Lambda' ( T_1 ) = 0 \text{,} \qquad \rho_\Lambda' |_{ \mc{A}_2 \cap ( T_1, T_2 ) } < 0 \text{.}
\end{equation}
To see this, we observe that \eqref{eq.geodesic_asymp} can be equivalently expressed as (see also \eqref{eql.geodesic_UV})
\[
\left( e^{ \int_0^t \mc{U} } \rho_\Lambda' \right)' + \left( \mc{V} e^{ \int_0^t \mc{U} } \right) \rho_\Lambda = 0 \text{.}
\]
Since $\rho_\Lambda > 0$ on $( 0, T_2 )$, and since $\mc{V} > 0$ on $\mc{A}$ by \eqref{eql.geodesic_A}, the above yields that $e^{ \int_0^t \mc{U} } \rho_\Lambda'$ is strictly decreasing on $\mc{A}$. 
Since $T_1 \in \mc{A}_1$ by assumption, the claim \eqref{eql.geodesic_A2_0} now follows from \eqref{eq.geodesic_A1_end}.

Next, from \eqref{eq.geodesic_asymp}, \eqref{eql.geodesic_A}, \eqref{eql.geodesic_AA}, and \eqref{eql.geodesic_A2_0}, we obtain that on $\mc{A}_2$,
\begin{align}
\label{eql.geodesic_A2_1} \rho_\Lambda'' + 2 b \rho_\Lambda' + c_+^2 \rho_\Lambda \leq 0 \text{,} &\qquad ( e^{ + 2 b t } \rho_\Lambda' )' + c_+^2 e^{ + 2 b t } \rho_\Lambda \leq 0 \text{,} \\
\notag \rho_\Lambda'' - 2 b \rho_\Lambda' + c_-^2 \rho_\Lambda \geq 0 \text{,} &\qquad ( e^{ - 2 b t } \rho_\Lambda' )' + c_-^2 e^{ - 2 b t } \rho_\Lambda \geq 0 \text{.}
\end{align}
Let $f_\pm$ now denote the solutions of the initial value problems
\begin{equation}
\label{eql.geodesic_A2_2} ( e^{ \pm 2 b t } f_\pm' )' + c_\pm^2 e^{ \pm 2 b t } f_\pm = 0 \text{,} \qquad ( f_\pm ( T_1 ), f_\pm' ( T_1 ) ) = ( \rho_\Lambda ( T_1 ), 0 ) \text{,}
\end{equation}
so that $f_\pm$ and $\rho_\Lambda$ have the same initial data at $t = T_1$.
Observe that $f_\pm$ has explicit form
\begin{equation}
\label{eql.geodesic_A2_3} f_\pm ( \tau ) = \frac{ \rho_\Lambda ( T_1 ) \cdot \exp \left( \mp b \left( \tau + \frac{ \pi }{ \sqrt{ c_\pm^2 - b^2 } } - T_\pm - T_1 \right) \right) \cdot \sin \left( \sqrt{ c_\pm^2 - b^2 } \cdot ( T_1 + T_\pm - \tau ) \right) }{ \exp \left( \mp b \left( \frac{ \pi }{ \sqrt{ c_\pm^2 - b^2 } } - T_\pm \right) \right) \cdot \sin \left( \sqrt{ c_\pm^2 - b^2 } \cdot T_\pm \right) } \text{,}
\end{equation}
that $f_\pm$ is positive on $[ T_1, T_1 + T_\pm )$, and that $f_\pm'$ is negative on $( T_1, T_1 + T_\pm ]$.

Recalling \eqref{eql.geodesic_A2_1} and \eqref{eql.geodesic_A2_2}, we obtain, from an integration by parts, that
\begin{align}
\label{eql.fg_geodesic_A2_10} 0 &\geq \int_{ T_1 }^{ \tau_+ } [ ( e^{ + 2 b t } \rho_\Lambda' )' + c_+^2 e^{ + 2 b t } \rho_\Lambda ] f_+ - \int_{ T_1 }^{ \tau_+ } \rho_\Lambda [ ( e^{ + 2 b t } f_+' )' + c_+^2 e^{ + 2 b t } f_+ ] \\
\notag &= e^{ + 2 b \tau_+ } \rho_\Lambda' ( \tau_+ ) f_+ ( \tau_+ ) - e^{ + 2 b \tau_+ } \rho_\Lambda ( \tau_+ ) f_+' ( \tau_+ ) \text{,} \\
\notag 0 &\leq \int_{ T_1 }^{ \tau_- } [ ( e^{ - 2 b t } \rho_\Lambda' )' + c_-^2 e^{ - 2 b t } \rho_\Lambda ] f_- - \int_{ T_1 }^{ \tau_- } \rho_\Lambda [ ( e^{ - 2 b t } f_-' )' + c_-^2 e^{ - 2 b t } f_- ] \\
\notag &= e^{ - 2 b \tau_- } \rho_\Lambda' ( \tau_- ) f_- ( \tau_- ) - e^{ - 2 b \tau_- } \rho_\Lambda ( \tau_- ) f_-' ( \tau_- ) \text{,}
\end{align}
for all $\tau_\pm \in \mc{A}_2 \cap [ T_1, T_1 + T_\pm ]$.
The above can then be rearranged as
\begin{align*}
( \log \rho_\Lambda )' ( \tau_+ ) \leq ( \log f_+ )' ( \tau_+ ) \text{,} &\qquad \tau_+ \in \mc{A}_2 \cap [ T_1, T_1 + T_+ ] \text{,} \\
( \log \rho_\Lambda )' ( \tau_- ) \geq ( \log f_- )' ( \tau_- ) \text{,} &\qquad \tau_- \in \mc{A}_2 \cap [ T_1, T_1 + T_- ] \text{.}
\end{align*}
As in the proof of Lemma \ref{thm.geodesic_A1}, integrating the above yields
\begin{align}
\label{eql.geodesic_A2_11} \frac{ \rho_\Lambda ( \tau_+ ) }{ f_+ ( \tau_+ ) } \leq \frac{ \rho_\Lambda ( T_1 ) }{ f_+ ( T_1 ) } = 1 \text{,} &\qquad \tau_+ \in \mc{A}_2 \cap ( T_1, T_1 + T_+ ] \text{,} \\
\notag \frac{ \rho_\Lambda ( \tau_- ) }{ f_- ( \tau_- ) } \geq \frac{ \rho_\Lambda ( T_1 ) }{ f_- ( T_1 ) } = 1 \text{,} &\qquad \tau_- \in \mc{A}_2 \cap ( T_1, T_1 + T_- ] \text{.}
\end{align}

Suppose $\mc{A}_2 \supseteq [ T_1, T_1 + T_+ + \delta ]$ for some $\delta > 0$.
Then, $T_2 > T_1 + T_+$, but \eqref{eql.geodesic_A2_11} implies
\[
0 \leq \lim_{ \tau \nearrow T_1 + T_+ } \rho_\Lambda ( \tau ) \leq f_+ ( T_1 + T_+ ) = 0 \text{,}
\]
contradicting that $\Lambda$ is well-defined at $T_1 + T_+$.
As a result, $\mc{A}_2 \subseteq [ T_1, T_1 + T_+ ]$.
In addition, applying \eqref{eq.geodesic_A1}, \eqref{eql.geodesic_A2_3}, \eqref{eql.geodesic_A2_11}, and that $\mc{A}_2 \subseteq [ T_1, T_1 + T_+ ]$, we obtain \eqref{eq.geodesic_A2}:
\[
\rho_\Lambda ( \tau ) \leq f_+ ( \tau ) \lesssim_{ B, C_+ } \rho_\Lambda ( T_1 ) \lesssim \rho_\Lambda' (0) \text{,} \qquad \tau \in \mc{A}_2 \text{.}
\]

Finally, suppose $T_2$ is a limit point of $\mc{A}_2$.
Since $\mc{A}_2 \subseteq [ T_1, T_1 + T_+ ]$, then \eqref{eq.geodesic_A1_end} yields
\[
T_2 \leq T_1 + T_+ \leq 2 T_+ < \mc{T}_+ ( B, C_+ ) + \eta \text{.}
\]
The above, along with the definition \eqref{eql.geodesic_T} of $T_2$, implies $T_2 = \ell_+$ and eliminates option (3) in Lemma \ref{thm.geodesic_terminal}.
Furthermore, if $\varepsilon$ is small enough, then \eqref{eq.geodesic_ncT} and \eqref{eq.geodesic_A2} rule out option (2) in Lemma \ref{thm.geodesic_terminal}.
As a result, we conclude that $\rho_\Lambda ( \tau ) \rightarrow 0$ as $\tau \nearrow \ell_+$.

To prove \eqref{eq.geodesic_A2_end}, it remains to show $\ell_+ \geq T_1 + T_-$.
For this, we note from \eqref{eql.geodesic_A2_11} that the opposite statement $T_1 + T_- > \ell_+$ implies $\rho ( \ell_+ ) \geq f_- ( \ell_+ ) > 0$, which is a contradiction.
\end{proof}

Thus far, the crucial result \eqref{eq.geodesic_A2_end} is conditional upon assuming that $T_2$ is a limit point of $\mc{A}_2$.
The remaining task is to show, via a continuity argument, that this property must hold.

First, note that since $\bar{\mi{D}}$ has compact cross-sections, then
\begin{equation}
\label{eql.geodesic_dc} \mi{D}_c := \bar{\mi{D}} \cap \{ 0 \leq t \leq \mc{T}_+ ( B, C_+ ) + \eta \}
\end{equation}
is compact.
Thus, $\mi{D}_c$ can be covered by a finite family of compact coordinate systems:
\begin{equation}
\label{eql.geodesic_coord} \Xi_c = \{ ( U_1, \varphi_1 ), \dots, ( U_N, \varphi_N ) \} \text{.}
\end{equation}

The key technical estimates for our continuity argument are captured in the subsequent lemma:

\begin{lemma} \label{thm.geodesic_tech}
Fix $\tau \in ( 0, \ell_+ )$, and suppose that
\begin{equation}
\label{eq.geodesic_tech_rho} | \rho_\Lambda ( \tau ) | \leq d \varepsilon \text{,} \qquad | \rho_\Lambda' ( \tau ) | \leq d \varepsilon \text{,} \qquad d > 0 \text{.}
\end{equation}
Then, the following statements hold:
\begin{itemize}
\item If $\lambda_\Lambda ( \tau ) \in U_i$, where $1 \leq i \leq N$, then the $\varphi_i$-coordinate components of $\lambda_\Lambda'$ satisfy
\begin{equation}
\label{eq.geodesic_tech_lambda} | [ \lambda_\Lambda' ( \tau ) ]^a | \lesssim_{ \mi{D}, \gv, t, d } 1 \text{.}
\end{equation}

\item $\lambda_\Lambda' ( \tau )$ is ``almost $\gm$-null":
\begin{equation}
\label{eq.geodesic_tech_null} | \gm ( \lambda_\Lambda' ( \tau ), \lambda_\Lambda' ( \tau ) ) | \lesssim_{ \mi{D}, \gv, t, d } \varepsilon^2 \text{.}
\end{equation}

\item Let $b_0$, $c_{ 0, + }$, $c_{ 0, - }$ denote the midpoints of the intervals $( B, b )$, $( c_+, C_+ )$, $( C_-, c_- )$, respectively.
Then, for sufficiently small $\varepsilon > 0$ (depending on $\mi{D}$, $\gv$, $t$, and $d$), we have that
\begin{equation}
\label{eq.geodesic_tech_boot} c_{ 0, - } > - \gs ( \lambda_\Lambda' ( \tau ), \lambda_\Lambda' ( \tau ) ) > c_{ 0, + } \text{,} \qquad | \Dm^2 t ( \lambda_\Lambda' ( \tau ), \lambda_\Lambda' ( \tau ) ) | < 2 b_0 \text{.}
\end{equation}
\end{itemize}
\end{lemma}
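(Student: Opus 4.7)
The plan is to extract all three conclusions from the single algebraic identity produced by the $g$-null and $t$-parametrized conditions on $\Lambda$, combined with the asymptotic agreement of $\gv$ with $\gm$ at the conformal boundary. First, since $\Lambda$ is a reparametrization of a $g$-null geodesic and $g = \rho^{-2}(d\rho^2 + \gv)$, the null condition $g(\Lambda',\Lambda') = 0$ reduces to the pointwise identity
\[
\gv(\lambda_\Lambda',\lambda_\Lambda') = -(\rho_\Lambda')^2,
\]
which is $O(\varepsilon^2)$ by the hypothesis on $\rho_\Lambda'$. Using the definition \eqref{eq.aads_riemann_vertical} of $\hv$ together with the normalization $\lambda_\Lambda' t = 1$ (since $\Lambda$ is $t$-parametrized), one then computes
\[
\hv(\lambda_\Lambda',\lambda_\Lambda') = -(\rho_\Lambda')^2 - \frac{2}{\gv(\Dv^\sharp t,\Dv^\sharp t)},
\]
and by the uniform timelikeness bound \eqref{eq.aads_time} the right-hand side is both bounded above and bounded below away from zero for sufficiently small $\varepsilon$. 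This yields $|\lambda_\Lambda'(\tau)|_\hv \lesssim 1$.

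For (\ref{eq.geodesic_tech_lambda}), I would exploit the compactness of $\mi{D}_c$ from \eqref{eql.geodesic_dc}, covered by finitely many charts from \eqref{eql.geodesic_coord}, to conclude that on each chart $\hv$ is uniformly positive-definite in coordinate components (this is where $\gv \to^0 \gm$ and $t$ smooth enter, and where the dependence on $\mi{D},\gv,t$ is generated). Then coordinate components are controlled by $|\cdot|_\hv$, giving \eqref{eq.geodesic_tech_lambda}. For (\ref{eq.geodesic_tech_null}), I would apply Taylor's theorem in $\rho$ to $\gv$—using $\gv \to^0 \gm$ and $\mi{L}_\rho \gv \to^2 0$ from \eqref{eq.aads_strong_limits}—to obtain $|\gv - \gm| \lesssim \rho^2 \lesssim \varepsilon^2$ pointwise in any chart, and combine with (\ref{eq.geodesic_tech_lambda}) to bound
\[
|\gm(\lambda_\Lambda',\lambda_\Lambda')| \leq (\rho_\Lambda')^2 + |\gv - \gm|\cdot|\lambda_\Lambda'|^2 \lesssim \varepsilon^2.
\]

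For (\ref{eq.geodesic_tech_boot}), I would apply Theorem \ref{thm.psc_nc} with the test vector $\mf{X} = \lambda_\Lambda'(\tau)$ and with intermediate constants $b' \in (B, b_0)$ and $c' \in (c_{0,+}, C_+)$. Since $\lambda_\Lambda' t = 1$, the inequalities \eqref{eq.psc_nc_g} and \eqref{eq.psc_nc_t} become, after rearrangement,
\[
-\gs(\lambda_\Lambda',\lambda_\Lambda') \geq (c')^2 + \zeta_0 \, \gm(\lambda_\Lambda',\lambda_\Lambda'), \qquad |\Dm^2 t(\lambda_\Lambda',\lambda_\Lambda')| \leq 2b' + |\zeta_\pm|\cdot|\gm(\lambda_\Lambda',\lambda_\Lambda')|,
\]
and the error terms are $O(\varepsilon^2)$ by step (\ref{eq.geodesic_tech_null}). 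Choosing $\varepsilon$ small enough (depending on the finitely many sup-norms of $\zeta_0, \zeta_\pm$ on $\mi{D}_c$, hence on $\mi{D},\gv,t,d$) absorbs these errors into the strict gaps $b_0 - b', c' - c_{0,+}$. The upper bound $-\gs < c_{0,-}$ is not covered by Theorem \ref{thm.psc_nc} directly but follows by the same argument applied to the bilinear form $(C_-')^2 dt^2 + \gs$ via Corollary \ref{thm.psc_pointwise_ex} and the reverse hypothesis \eqref{eq.geodesic_nc_reverse}, using a partition of unity over $\mi{D}_c$ exactly as in the proof of Theorem \ref{thm.psc_nc}. The main obstacle—and the reason compactness of $\bar{\mi{D}}$'s cross-sections is essential—is ensuring that the constants $\zeta_0, \zeta_\pm$ produced by the pointwise Corollary \ref{thm.psc_pointwise_ex} admit globally bounded choices on $\mi{D}_c$, so that the $O(\varepsilon^2)$ errors are absorbed uniformly in $\tau$.
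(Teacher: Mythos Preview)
Your argument for \eqref{eq.geodesic_tech_lambda} and \eqref{eq.geodesic_tech_null} is essentially identical to the paper's: the null condition plus $t$-parametrization gives $\gv(\lambda_\Lambda',\lambda_\Lambda') = -(\rho_\Lambda')^2$, the Riemannian metric $\hv$ then controls coordinate components on the compact cover, and Taylor expansion of $\gv$ in $\rho$ gives the $\varepsilon^2$-almost-null bound.

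For \eqref{eq.geodesic_tech_boot}, you and the paper diverge. The paper does not invoke Theorem~\ref{thm.psc_nc} or Corollary~\ref{thm.psc_pointwise_ex} here at all; it argues directly that $\lambda_\Lambda'(\tau)$ lies within $O(\varepsilon)$ of a genuinely $\gm$-null vector $Z$ with $Z t = 1$, applies the null convexity criterion \eqref{eq.aads_nc} and the reverse hypothesis \eqref{eq.geodesic_nc_reverse} to $Z$, and transfers the conclusion back to $\lambda_\Lambda'(\tau)$ by uniform continuity of $\gs$ and $\Dm^2 t$ on the compact set $\mi{D}_c$. Your route through the pseudoconvexity theorem is also correct and in one respect cleaner: you avoid the (easy but implicit) step of producing a nearby null vector, and instead absorb the $O(\varepsilon^2)$ defect $\gm(\lambda_\Lambda',\lambda_\Lambda')$ into the $\zeta_0 \gm$ and $\zeta_\pm \gm$ correction terms, with compactness of $\mi{D}_c$ giving the needed uniform bound on the $\zeta$'s. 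The cost is that Theorem~\ref{thm.psc_nc} does not supply the upper bound on $-\gs$, so you must rerun the Corollary~\ref{thm.psc_pointwise_ex} and partition-of-unity machinery with $\mf{p} := (c')^2\, dt^2 + \gs$ for some $c' \in (C_-, c_{0,-})$ to get a new $\zeta$-function for that direction---this is correct but extra work, whereas the paper's continuity argument handles both bounds on $-\gs$ and the bound on $\Dm^2 t$ symmetrically and in one stroke.
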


\begin{proof}
First, note that by the assumption \eqref{eq.geodesic_t}, we have
\begin{equation}
\label{eql.geodesic_tech_pre} dt [ \lambda_\Lambda' ( \tau ) ] = 1 \text{.}
\end{equation}
Moreover, using \eqref{eq.aads_metric} and the fact that $\Lambda$ is $g$-null, we deduce
\begin{equation}
\label{eql.geodesic_tech_0} 0 = \rho^2 \cdot g ( \Lambda' ( \tau ), \Lambda' ( \tau ) ) = [ \rho_\Lambda' ( \tau ) ]^2 + \gv ( \lambda_\Lambda' ( \tau ), \lambda_\Lambda' ( \tau ) ) \text{.}
\end{equation}
which, when combined with \eqref{eq.geodesic_tech_rho}, yields
\[
0 < - \gv ( \lambda_\Lambda' ( \tau ), \lambda_\Lambda' ( \tau ) ) \leq d^2 \varepsilon^2 \text{.}
\]
Thus, by \eqref{eq.aads_time}, \eqref{eql.geodesic_tech_pre}, and the above, we have
\begin{equation}
\label{eql.geodesic_tech_1} \ms{h} ( \lambda_\Lambda', \lambda_\Lambda' ) \lesssim_{ \gv, t } 1 \text{,}
\end{equation}
where $\ms{h}$ is the vertical Riemannian metric from Definition \ref{def.aads_riemann}.
Since the $| \ms{h} |_{ 0, \varphi_i }$'s are uniformly bounded by \eqref{eq.aads_strong_limits} and \eqref{eq.aads_time}, the bounds in \eqref{eq.geodesic_tech_lambda} now follow from \eqref{eql.geodesic_tech_1}.

Next, recalling \eqref{eq.aads_strong_limits} and \eqref{eql.geodesic_tech_0}, we obtain
\begin{equation}
\label{eql.geodesic_tech_10} 0 = [ \rho_\Lambda' ( \tau ) ]^2 + \gm ( \lambda_\Lambda' ( \tau ), \lambda_\Lambda' ( \tau ) ) + \rho_\Lambda^2 \cdot \ms{r}_\ast ( \lambda_\Lambda' ( \tau ), \lambda_\Lambda' ( \tau ) ) \text{,}
\end{equation}
where $\ms{r}_\ast$ is a vertical tensor field that has the boundary limit $2 \ms{r}_\ast \rightarrow^1 \gs$.
Since $\lambda_\Lambda$ lies within the compact region $\mi{D}_c$, it follows from \eqref{eq.geodesic_tech_rho}, \eqref{eq.geodesic_tech_lambda}, and \eqref{eql.geodesic_tech_10} that
\[
| \gm ( \lambda_\Lambda' ( \tau ), \lambda_\Lambda' ( \tau ) ) | \leq d^2 \varepsilon^2 \left( 1 + \sup_{ 1 \leq j \leq N } \sup_{ ( 0, \rho_0 ] \times U_j } | \ms{r}_\ast |_{ 0, \varphi_j } \right) \text{,}
\]
from which \eqref{eq.geodesic_tech_null} immediately follows.

Finally, \eqref{eq.geodesic_tech_null} and \eqref{eql.geodesic_tech_pre} together imply that $\lambda_\Lambda' ( \tau )$ is $\varepsilon$-close to a $\gm$-null vector $Z$ that satisfies $Z t = 1$.
Thus, by taking $\varepsilon$ sufficiently small (depending on $\mi{D}$, $\gv$, $t$, and $d$), and by noting that $\gs$ and $\Dm^2 t$ are uniformly continuous on the compact region $\mi{D}_c$, we conclude that the final estimates \eqref{eq.geodesic_tech_boot} follow from the null convexity criterion \eqref{eq.aads_nc} and the assumption \eqref{eq.geodesic_nc_reverse}.
\end{proof}

The next lemma completes our continuity argument:

\begin{lemma} \label{thm.geodesic_boot}
$T_2 = \ell_+$, $\mc{A} = ( 0, \ell_+ )$, and $2 T_- \leq \ell_+ \leq 2 T_+$.
\end{lemma}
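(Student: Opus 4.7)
The argument is a bootstrap/continuity argument whose goal is to show $\mc{A} = (0, T_2)$; the remaining conclusions then fall out of Lemmas \ref{thm.geodesic_A1} and \ref{thm.geodesic_A2}.

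First, observe that $\mc{A}$, as defined in \eqref{eql.geodesic_A}, is relatively closed in $(0, T_2)$: if $\tau_n \in \mc{A}$ with $\tau_n \nearrow \tau^\ast \in (0, T_2)$, then the defining inequalities hold on $[0, \tau^\ast)$ and extend to $\tau^\ast$ by continuity of $\mc{U}$ and $\mc{V}$. Moreover, using $\rho_\Lambda(\tau) \to 0$ and $\rho_\Lambda'(\tau) \to \rho_\Lambda'(0) < \varepsilon$ as $\tau \searrow 0$ together with Lemma \ref{thm.geodesic_tech} (applied with any fixed $d \geq 2$, and $\varepsilon$ small enough) shows that strict versions of the defining inequalities hold on a right-neighborhood of $0$, so $\mc{A}$ is nonempty. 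Consequently $\mc{A} = (0, \tau^\ast]$ for some $\tau^\ast \in (0, T_2]$, and it remains to show $\tau^\ast = T_2$.

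Suppose, for contradiction, that $\tau^\ast < T_2$. From Lemmas \ref{thm.geodesic_A1} and \ref{thm.geodesic_A2} applied to $\mc{A}_1$ and $\mc{A}_2$, respectively, we have $\mc{A} = \mc{A}_1 \cup \mc{A}_2 \subseteq (0, 2 T_+]$ and the uniform bound $\rho_\Lambda(\tau) \lesssim_{B, C_+} \rho_\Lambda'(0)$ on all of $\mc{A}$. On $\mc{A}$, equation \eqref{eq.geodesic_asymp} reads $\rho_\Lambda'' = -\mc{U} \rho_\Lambda' + \mc{V} \rho_\Lambda$ with coefficients bounded by $2b$ and $c_-^2$ in view of \eqref{eql.geodesic_UV} and \eqref{eql.geodesic_A}, so a standard Gr\"onwall estimate applied to the energy $\rho_\Lambda^2 + (\rho_\Lambda')^2$ on this bounded-length interval yields the companion bound $|\rho_\Lambda'(\tau)| \lesssim_{B, C_+, C_-} \rho_\Lambda'(0)$ on $\mc{A}$. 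Writing $K = K(B, C_+, C_-)$ for the resulting implicit constant and choosing $\varepsilon$ small enough that Lemma \ref{thm.geodesic_tech} applies with $d := K$, we conclude that the strict inequalities \eqref{eq.geodesic_tech_boot} hold at $\tau = \tau^\ast$. By continuity of $\mc{U}$ and $\mc{V}$, these strict inequalities persist on some interval $[\tau^\ast, \tau^\ast + \delta]$, contradicting the maximality of $\tau^\ast$. Hence $\mc{A} = (0, T_2)$, and $T_2$ is a limit point of $\mc{A}$.

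To conclude, we first rule out the subcase $T_1 = T_2$. If $T_1 = T_2$, then $\rho_\Lambda' > 0$ on all of $(0, T_2)$ and $\mc{A}_1 = \mc{A} = (0, T_2)$, so Lemma \ref{thm.geodesic_A1} forces $T_2 \leq T_+$; since by \eqref{eql.geodesic_BC} we have $T_+ < \mc{T}_+(B, C_+) + \eta$, the definition \eqref{eql.geodesic_T} of $T_2$ then gives $T_2 = \ell_+$. But $\rho_\Lambda$ is strictly increasing from $0$ on $(0, \ell_+)$, so $\rho_\Lambda(\tau) \not\to 0$ as $\tau \nearrow \ell_+$, ruling out option (1) of Lemma \ref{thm.geodesic_terminal}; options (2) and (3) are ruled out by the a priori bound $\rho_\Lambda \lesssim \varepsilon$ and by $T_2 = \ell_+$, respectively --- a contradiction. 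Therefore $T_1 < T_2$, so $T_2$ is a limit point of $\mc{A}_2$; Lemma \ref{thm.geodesic_A2} then yields $T_2 = \ell_+$ and $T_1 + T_- \leq \ell_+ \leq T_1 + T_+$, which combined with $T_- \leq T_1 \leq T_+$ from Lemma \ref{thm.geodesic_A1} gives the desired $2 T_- \leq \ell_+ \leq 2 T_+$. The principal obstacle is the bootstrap step itself: the $\rho_\Lambda$-bound from Lemmas \ref{thm.geodesic_A1}--\ref{thm.geodesic_A2} is not strong enough on its own to trigger Lemma \ref{thm.geodesic_tech}, and one must exploit the linear ODE structure on $\mc{A}$ to produce a companion $\rho_\Lambda'$-bound whose constant does not depend on $\varepsilon$.
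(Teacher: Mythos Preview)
Your argument follows the same continuity/bootstrap strategy as the paper, and the overall structure is correct. Two differences are worth flagging. First, for the companion $\rho_\Lambda'$-bound you use a Gr\"onwall estimate on $\rho_\Lambda^2 + (\rho_\Lambda')^2$, whereas the paper rewrites \eqref{eq.geodesic_asymp} in integrating-factor form $\bigl(e^{\int_0^t \mc{U}} \rho_\Lambda'\bigr)' + \mc{V} e^{\int_0^t \mc{U}} \rho_\Lambda = 0$ and integrates directly, using the $\rho_\Lambda$-bound from Lemmas \ref{thm.geodesic_A1}--\ref{thm.geodesic_A2}; either route works (note your sign on the $\mc{V}$ term is flipped, though this is harmless for the energy estimate). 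Second, you are more careful than the paper in explicitly ruling out the edge case $T_1 = T_2$ via Lemma \ref{thm.geodesic_terminal}; the paper's final sentence glosses over this.

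There is one small gap in your openness step: invoking \eqref{eq.geodesic_tech_boot} gives strict bounds on $-\gs(\lambda_\Lambda',\lambda_\Lambda')$ and $|\Dm^2 t(\lambda_\Lambda',\lambda_\Lambda')|$, but the defining conditions of $\mc{A}$ concern $\mc{U}$ and $\mc{V}$, which by \eqref{eql.geodesic_UV} differ from these by the remainder terms $\ms{r}_0,\ms{r}_1,\ms{r}_2$. You need to also argue (as the paper does) that since $\rho_\Lambda(\tau^\ast) \lesssim \varepsilon$, the limits \eqref{eq.geodesic_remainder} together with the $\lambda_\Lambda'$-bound \eqref{eq.geodesic_tech_lambda} make $\ms{r}_i(\lambda_\Lambda',\cdot)$ uniformly small on the compact region $\mi{D}_c$, so that the midpoint constants $b_0, c_{0,\pm}$ in \eqref{eq.geodesic_tech_boot} upgrade to the strict inequalities $|\mc{U}(\tau^\ast)| < 2b$ and $c_+^2 < \mc{V}(\tau^\ast) < c_-^2$. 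Only then does continuity of $\mc{U},\mc{V}$ extend membership in $\mc{A}$ past $\tau^\ast$.
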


\begin{proof}
First, observe that by continuity, $\mc{A}$ is a closed subset of $( 0, T_2 )$.

In addition, notice that by \eqref{eq.geodesic_init} and \eqref{eq.geodesic_ncT}, we have that \eqref{eq.geodesic_tech_rho} holds---say with $d := 2$---in an interval $( 0, \delta )$, for some $\delta > 0$ (depending on $\varepsilon$).
Thus, by taking $\varepsilon$ sufficiently small (depending on $\gv$ and $t$), and by recalling \eqref{eq.geodesic_remainder}, \eqref{eql.geodesic_UV}, \eqref{eql.geodesic_A}, and \eqref{eq.geodesic_tech_boot}, we also obtain that $\mc{A}$ is non-empty.

Suppose next that $T' \in \mc{A}$, so $( 0, T' ] \subseteq \mc{A}$ as well by definition.
Writing \eqref{eq.geodesic_asymp} as
\[
\left( e^{ \int_0^t \mc{U} } \rho_\Lambda' \right)' + \left( \mc{V} e^{ \int_0^t \mc{U} } \right) \rho_\Lambda = 0 \text{.}
\]
and applying \eqref{eql.geodesic_A}, \eqref{eq.geodesic_A1}, and \eqref{eq.geodesic_A2} to the above, we obtain on $\mc{A}$ that
\[
\left| \left( e^{ \int_0^t \mc{U} } \rho_\Lambda' \right)' \right| \lesssim_{ B, C_\pm } \rho_\Lambda \lesssim_{ B, C_+ } \rho_\Lambda' (0) \text{.}
\]
Integrating the above from $0$ and recalling \eqref{eq.geodesic_ncT}, \eqref{eq.geodesic_A1}, and \eqref{eq.geodesic_A2}, we conclude that
\begin{equation}
\label{eql.geodesic_boot_0} \rho_\Lambda ( \tau ) \lesssim_{ B, C_\pm } \varepsilon \text{,} \qquad | \rho_\Lambda' ( \tau ) | \lesssim_{ B, C_\pm } \varepsilon \text{,} \qquad \tau \in ( 0, T' ] \text{,}
\end{equation}
that is, that \eqref{eq.geodesic_tech_rho} holds all $\tau \in ( 0, T' ]$ and for some $d$ that depends on $B$ and $C_\pm$ (but not on $\varepsilon$).

Now, note that the limits in \eqref{eq.geodesic_remainder} are uniform in the compact region $\mi{D}_c$, that is,
\begin{equation}
\label{eql.geodesic_boot_1} \lim_{ \sigma \searrow 0 } \sup_{ 1 \leq i \leq N } \sup_{ \{ \sigma \} \times U_i } ( | \ms{r}_0 |_{ 0, \varphi_i } + | \ms{r}_1 |_{ 0, \varphi_i } + | \ms{r}_2 |_{ 0, \varphi_i } ) = 0 \text{.}
\end{equation}
Taking $\varepsilon$ to be sufficiently small (depending on $\mi{D}$, $\gv$, and $t$), and recalling that $\lambda_\Lambda$ lies within $\mi{D}_c$, we see from \eqref{eq.geodesic_tech_lambda}, \eqref{eql.geodesic_boot_0}, and \eqref{eql.geodesic_boot_1} that the error terms
\[
| \ms{r}_0 ( \lambda_\Lambda' ( \tau ), \lambda_\Lambda' ( \tau ) ) | \text{,} \qquad | \ms{r}_1 ( \lambda_\Lambda' ( \tau ), \lambda_\Lambda' ( \tau ) ) | \text{,} \qquad | \ms{r}_2 ( \lambda_\Lambda' ( \tau ) ) |
\]
can be made arbitrarily and uniformly small for all $\tau \in ( 0, T' ]$.
Therefore, by combining the above with \eqref{eql.geodesic_UV} and \eqref{eq.geodesic_tech_boot}, and by further shrinking $\varepsilon$ if needed, we obtain
\begin{equation}
\label{eql.geodesic_boot_2} \mc{U} ( T' ) < 2 b_0 \text{,} \qquad c_{ 0, - }^2 > \mc{V} ( T' ) > c_{ 0, + }^2 \text{.}
\end{equation}

By \eqref{eql.geodesic_A}, \eqref{eql.geodesic_boot_2}, and the continuity of $\mc{U}$ and $\mc{V}$, we see that $\mc{A}$ must contain some open interval that includes $T'$.
As a result, we have shown that $\mc{A}$ is open in $( 0, T_2 )$, and hence $\mc{A} = ( 0, T_2 )$.
The conclusions of the lemma now follow immediately from \eqref{eq.geodesic_A1_end} and \eqref{eq.geodesic_A2_end}.
\end{proof}

Finally, the proof of Theorem \ref{thm.geodesic} is concluded with the following:
\begin{itemize}
\item Since Lemma \ref{thm.geodesic_boot} implies $T_2$ is a limit point of $A_2$, then \eqref{eq.geodesic_A2_end} immediately yields \eqref{eq.geodesic_return}.

\item Lemma \ref{thm.geodesic_boot} and the estimates \eqref{eq.geodesic_A1} and \eqref{eq.geodesic_A2} result in \eqref{eq.geodesic_bound}.

\item \eqref{eql.geodesic_BC} and Lemma \ref{thm.geodesic_boot} imply \eqref{eq.geodesic_timespan}.
\end{itemize}

\section{The Carleman Estimate} \label{sec.carleman}

In this section, we state and prove our main Carleman estimates, in terms of the setting and the language developed in Section \ref{sec.aads}.
As mentioned in the introduction, these estimates improve upon the corresponding Carleman inequalities in \cite[Theorems 3.7 and C.1]{hol_shao:uc_ads_ns}.

\subsection{The Carleman Weight}

In order to state our main estimates, we must first describe the weight function that we will use.
In fact, this weight will correspond closely to the functions $f_+$ in Lemmas \ref{thm.geodesic_A1} and \ref{thm.geodesic_A2} that dominated the trajectories of null geodesics.

\begin{definition} \label{def.carleman_eta}
Given constants $0 \leq b < c$, we define the function $\eta := \eta [ b, c ]$ by
\footnote{See \eqref{eq.geodesic_nc_W} for the definition of $\mc{W}_+ ( b, c )$.}
\begin{equation}
\label{eq.carleman_eta} \eta: \R \rightarrow \R \text{,} \qquad \eta ( \tau ) = e^{ - b | \tau | } \cdot \sin \left( \mc{W}_+ ( b, c ) - \sqrt{ c^2 - b^2 } \cdot | \tau | \right) \text{.}
\end{equation}
In general, we omit the dependence of $\eta$ on the constants $b$ and $c$ in our notations.
However, in settings where this association might be ambiguous, we write $\eta [ b, c ]$ in the place of $\eta$.
\end{definition}

The subsequent proposition establishes some basic properties of the function $\eta$ from Definition \ref{def.carleman_eta} and connects it to some constructions in the proof of Theorem \ref{thm.geodesic}:
\footnote{In particular, one should compare the ODEs \eqref{eq.carleman_eta_minus} and \eqref{eq.carleman_eta_plus} to \eqref{eql.geodesic_A1_2} and \eqref{eql.geodesic_A2_2}.}

\begin{proposition} \label{thm.carleman_eta}
Let $0 \leq b < c$ and $\eta$ be as in Definition \ref{def.carleman_eta}.
Then:
\begin{itemize}
\item $\eta$ satisfies the following properties on the negative half-line $( -\infty, 0 )$:
\begin{equation}
\label{eq.carleman_eta_minus} \eta'' - 2 b \eta' + c^2 \eta = 0 \text{,} \qquad ( \eta, \eta' ) \left( - \frac{1}{2} \mc{T}_+ ( b, c ) \right) = \left( 0, + \sqrt{ c^2 - b^2 } \right) \text{.}
\end{equation}

\item $\eta$ satisfies the following properties on the positive half-line $( 0, +\infty )$:
\begin{equation}
\label{eq.carleman_eta_plus} \eta'' + 2 b \eta' + c^2 \eta = 0 \text{,} \qquad ( \eta, \eta' ) \left( + \frac{1}{2} \mc{T}_+ ( b, c ) \right) = \left( 0, - \sqrt{ c^2 - b^2 } \right) \text{.}
\end{equation}

\item The following properties hold for $\eta$:
\begin{equation}
\label{eq.carleman_eta_range} 0 < \eta ( \tau ) \leq 1 \text{,} \quad | \tau | < \frac{1}{2} \mc{T}_+ ( b, c ) \text{.}
\end{equation}

\item $\eta \in C^2 ( \R )$, and $\eta$ is smooth on $\R \setminus \{ 0 \}$.
\end{itemize}
\end{proposition}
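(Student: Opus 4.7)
The plan is to verify all four assertions by direct computation from the explicit formula in Definition \ref{def.carleman_eta}, using the key identity $\sqrt{c^2-b^2}\cdot\mc{T}_+(b,c)/2 = \mc{W}_+(b,c)$ encoded in Definition \ref{def.geodesic_timebound}. Abbreviating $\omega := \sqrt{c^2-b^2}$, I would first remove the absolute value: for $\tau < 0$ one has $\eta(\tau) = e^{b\tau}\sin(\mc{W}_+ + \omega\tau)$, while for $\tau > 0$ one has $\eta(\tau) = e^{-b\tau}\sin(\mc{W}_+ - \omega\tau)$. These are standard damped oscillator solutions of the form $e^{\alpha\tau}\sin(\omega\tau + \phi)$ with characteristic roots $\alpha \pm i\omega$; reading off $\alpha = \pm b$ identifies the governing equations as $\eta'' \mp 2b\eta' + c^2\eta = 0$, giving \eqref{eq.carleman_eta_minus} and \eqref{eq.carleman_eta_plus}. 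The claimed boundary values at $\tau = \mp\mc{T}_+/2$ then follow by direct substitution, as the identity $\omega \mc{T}_+/2 = \mc{W}_+$ collapses the trigonometric argument to $0$.

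For positivity, the argument $\mc{W}_+ - \omega|\tau|$ traces out $(0, \mc{W}_+)$ as $|\tau|$ varies over $(0, \mc{T}_+/2)$, and this range is contained in $(0,\pi)$ by \eqref{eq.geodesic_nc_W}; combined with the positive exponential prefactor, this gives $\eta > 0$. The upper bound $\eta \leq 1$ will be the main obstacle, and I would approach it by exploiting the defining identity $\tan\mc{W}_+ = -\omega/b$, rearranged as $b\sin\mc{W}_+ + \omega\cos\mc{W}_+ = 0$. Substituting this into the derivatives of the explicit formula yields $\eta'(0^\pm) = 0$, and the ODEs then give $\eta''(0^\pm) = -c^2 \sin\mc{W}_+ < 0$, so $\tau = 0$ is a strict local maximum with value $\sin\mc{W}_+ \leq 1$. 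To promote this to a global bound on $(-\mc{T}_+/2, \mc{T}_+/2)$, I would note that any other interior critical point must satisfy $\tan(\mc{W}_+ - \omega|\tau|) = -\omega/b = \tan\mc{W}_+$, forcing $\omega|\tau| \equiv 0 \pmod \pi$; since $\mc{W}_+ < \pi$, the interval $(0, \mc{T}_+/2)$ admits no such $\tau$, so the maximum at the origin is global.

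Finally, $C^2$ regularity at the origin reduces to matching one-sided limits: $\eta(0^\pm) = \sin\mc{W}_+$ is trivial, $\eta'(0^\pm) = 0$ was shown above, and $\eta''(0^\pm) = -c^2\sin\mc{W}_+$ follows from the two ODEs together with the vanishing of $\eta'(0^\pm)$. Smoothness on $\R \setminus \{0\}$ is immediate from the explicit formula. No step beyond the upper bound argument of the previous paragraph requires more than routine trigonometric and calculus bookkeeping; the only care needed is with sign conventions when splitting into the two half-lines.
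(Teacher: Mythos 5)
Your proof is correct and takes essentially the same route as the paper, which disposes of \eqref{eq.carleman_eta_minus}--\eqref{eq.carleman_eta_range} ``by direct computations'' and then, exactly as you do, obtains $C^2$ regularity at the origin from $\eta'(0^\pm)=0$ (via $b\sin\mc{W}_+ + \sqrt{c^2-b^2}\cos\mc{W}_+=0$) together with the two ODEs. Two small remarks: the upper bound $\eta\le 1$, which you single out as the main obstacle, is immediate from the formula since $e^{-b|\tau|}\le 1$ (as $b\ge 0$) and $|\sin|\le 1$, so your critical-point analysis, though valid, is unnecessary; and direct substitution at $\tau=\mp\tfrac{1}{2}\mc{T}_+(b,c)$ actually yields $\eta'=\pm\sqrt{c^2-b^2}\,e^{-b\mc{T}_+(b,c)/2}$, so the stated endpoint values of $\eta'$ hold only up to this exponential factor---a discrepancy worth double-checking, which both you and the paper's own proof pass over.
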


\begin{proof}
By direct computations, one can show that $\eta$, as defined in \eqref{eq.carleman_eta}, satisfies \eqref{eq.carleman_eta_minus}--\eqref{eq.carleman_eta_range} and is smooth on $\R \setminus \{ 0 \}$.
Finally, differentiating \eqref{eq.carleman_eta} near the origin yields that $\eta' (0) = 0$, while the equations \eqref{eq.carleman_eta_minus} and \eqref{eq.carleman_eta_plus} then imply $\eta'' (0)$ exists; this proves that $\eta \in C^2 ( \R )$.
\end{proof}

\begin{remark}
On the other hand, $\eta$ fails to be $C^3$; in particular, one can show that
\[
\lim_{ \tau \nearrow 0 } \eta''' ( \tau ) < 0 < \lim_{ \tau \searrow 0 } \eta''' ( \tau ) \text{.}
\]
\end{remark}

\begin{remark}
Note that $\eta$ is a direct analogue of the function $\eta$ from \cite[Definition 2.12]{hol_shao:uc_ads_ns}.
\end{remark}

Next, we define our Carleman weight function and its associated domains:

\begin{definition} \label{def.carleman_f}
Let $( \mi{M}, g )$ be a strongly FG-aAdS segment, and let $t$ be a global time.
In addition, fix constants $0 \leq b < c$, let $\eta := \eta [ b, c ]$ be as in Definition \ref{def.carleman_eta}, and let $t_0 \in \R$ satisfy
\begin{equation}
\label{eq.carleman_timespan} t_- < t_0 - \frac{1}{2} \mc{T}_+ ( b, c ) < t_0 + \frac{1}{2} \mc{T}_+ ( b, c ) < t_+ \text{.}
\end{equation}
\begin{itemize}
\item We define the domain $\Omega_{ t_0 } := \Omega_{ t_0 } [ b, c ]$ by
\begin{equation}
\label{eq.carleman_omega} \Omega_{ t_0 } := \left\{ P \in \mi{M} \, \middle| \, | t ( P ) - t_0 | < \frac{1}{2} \mc{T}_+ ( b, c ) \right\} \text{.}
\end{equation}

\item We define the function $f_{ t_0 } := f_{ t_0 } [ b, c ]$ by
\footnote{Note that $f_{ t_0 }$ is well-defined and everywhere positive on $\Omega_{ t_0 }$, by \eqref{eq.carleman_eta_range}.}
\begin{equation}
\label{eq.carleman_f} f_{ t_0 }: \Omega_{ t_0 } \rightarrow \R \text{,} \qquad f_{ t_0 } := \frac{ \rho }{ \eta ( t - t_0 ) } \text{.}
\end{equation}

\item In addition, given $f_\ast > 0$, we define the region
\begin{equation}
\label{eq.carleman_Omega_f} \Omega_{ t_0 } ( f_\ast ) := \Omega_{ t_0 } \cap \{ f_{ t_0 } < f_\ast \} \text{.}
\end{equation}
\end{itemize}
For future notational convenience, we will also abbreviate $f := f_0$.
\end{definition}

\begin{remark}
In particular, Proposition \ref{thm.carleman_eta} and \eqref{eq.carleman_f} imply $f_{ t_0 } \in C^2 ( \Omega_{ t_0 } )$.
Moreover:
\begin{itemize}
\item $f_{ t_0 }$ is smooth on both $\smash{ \Omega_{ t_0, < } } := \Omega_{ t_0 } \cap \{ t < t_0 \}$ and $\smash{ \Omega_{ t_0, > } } := \Omega_{ t_0 } \cap \{ t > t_0 \}$.

\item $f_{ t_0 }$ can be smoothly extended to the closures $\bar{\Omega}_{ t_0, < }$ and $\bar{\Omega}_{ t_0, > }$ in $\mi{M}$.
\end{itemize}
\end{remark}

\begin{figure}[ht]
\includegraphics{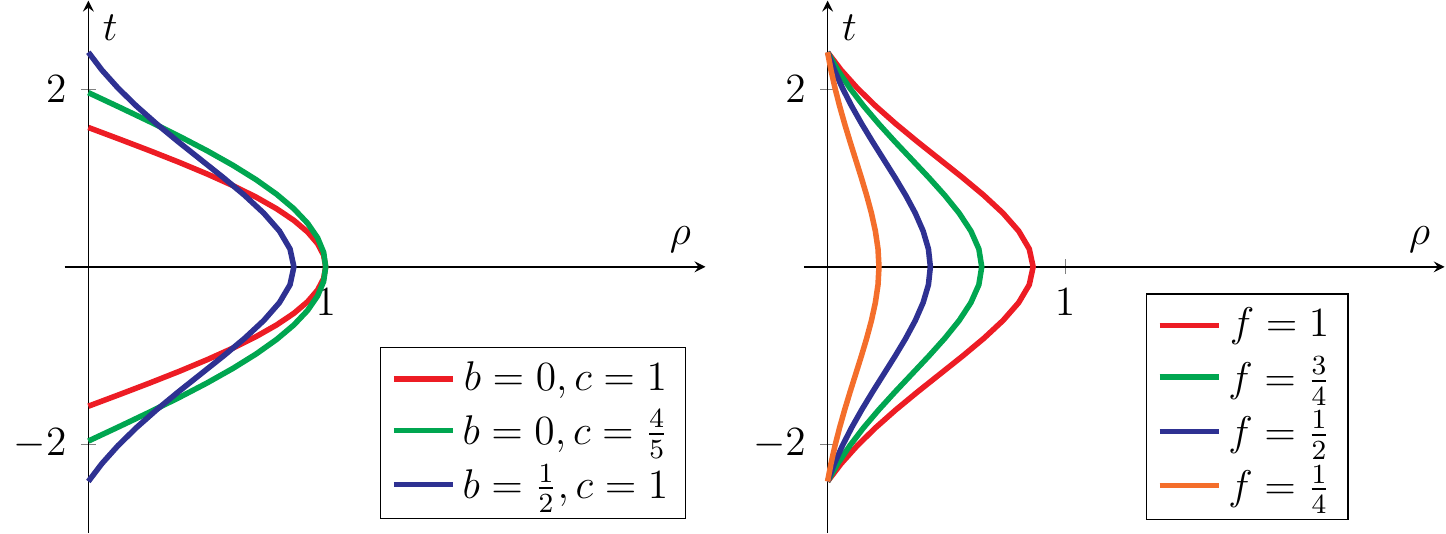}
\caption{The plots depict various level sets of the function $f = f_0$ on $\rho$-$t$ planes.
The left graphic shows the level set $f = 1$, with various parameters $( b, c )$, while the right graphic shows various level sets of $f$, with fixed parameters $( b, c ) = ( \frac{1}{2}, 1 )$.}
\end{figure}

We now collect some basic properties of the $f_{ t_0 }$'s from Definition \ref{def.carleman_f}:
\footnote{In particular, this is the analogue of \cite[Proposition 2.16]{hol_shao:uc_ads_ns} in the present setting and language.}

\begin{proposition} \label{thm.carleman_f}
Assume the setting of Definition \ref{def.carleman_f} (in particular, the constants $b, c, t_0$).
Then:
\begin{itemize}
\item $f_{ t_0 }$ and its derivatives satisfy the following identities:
\begin{align}
\label{eq.carleman_f_grad} \nabla^\sharp f_{ t_0 } &= \rho f_{ t_0 } \partial_\rho - \rho f_{ t_0 }^2 \cdot \eta' ( t - t_0 ) \cdot \Dv^\sharp t \text{,} \\
\notag g ( \nabla^\sharp f_{ t_0 }, \nabla^\sharp f_{ t_0 } ) &= f_{ t_0 }^2 + f_{ t_0 }^4 [ \eta' ( t - t_0 ) ]^2 \cdot \gv ( \Dv^\sharp t, \Dv^\sharp t ) \text{,} \\
\notag \Box f_{ t_0 } &= - ( n - 1 ) f_{ t_0 } + \{ 2 f_{ t_0 }^3 [ \eta' ( t - t_0 ) ]^2 - \rho f_{ t_0 }^2 \cdot \eta'' ( t - t_0 ) \} \cdot \gv ( \Dv^\sharp t, \Dv^\sharp t ) \\
\notag &\qquad + \frac{1}{2} \rho f_{ t_0 } \cdot \trace{\gv} \mi{L}_\rho \gv - \rho f_{ t_0 }^2 \cdot \eta' ( t - t_0 ) \cdot \trace{\gv} \Dv^2 t \text{.}
\end{align}

\item Let $( U, \varphi )$ be an arbitrary coordinate system on $\mi{I}$.
Then, the components of $\nabla^2 f_{ t_0 }$, with respect to $\varphi$- and $\varphi_\rho$-coordinates, satisfy the following identities:
\begin{align}
\label{eq.carleman_f_hessian} \nabla_{ \rho \rho } f_{ t_0 } &= \rho^{-2} f_{ t_0 } \text{,} \\
\notag \nabla_{ \rho a } f_{ t_0 } &= - 2 \rho^{-2} f_{ t_0 }^2 \cdot \eta' ( t - t_0 ) \cdot \Dv_a t + \frac{1}{2} \rho^{-1} f_{ t_0 }^2 \cdot \eta' ( t - t_0 ) \cdot \gv^{ b c } \Dv_b t \mi{L}_\rho \gv_{ a c } \text{,} \\
\notag \nabla_{ a b } f_{ t_0 } &= \{ 2 \rho^{-2} f_{ t_0 }^3 \cdot [ \eta' ( t - t_0 ) ]^2 - \rho^{-1} f_{ t_0 }^2 \cdot \eta'' ( t - t_0 ) \} \Dv_a t \Dv_b t \\
\notag &\qquad + \frac{1}{2} \rho^{-1} f_{ t_0 } \cdot \mi{L}_\rho \gv_{ a b } - \rho^{-2} f_{ t_0 } \cdot \gv_{ a b } - \rho^{-1} f_{ t_0 }^2 \cdot \eta' ( t - t_0 ) \cdot \Dv_{ a b } t \text{.}
\end{align}
\end{itemize}
\end{proposition}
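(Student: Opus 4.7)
The proof is a direct (though bookkeeping-heavy) coordinate computation, so the plan is to set up the relevant Christoffel symbols once and then to assemble the four identities in sequence.

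First, I would compute the Christoffel symbols of $g = \rho^{-2}(d\rho^2 + \gv)$ in $\varphi_\rho$-coordinates. The cleanest route is to view $g = e^{2\omega}\tilde g$ with $\omega = -\log\rho$ and $\tilde g = d\rho^2 + \gv$, apply the standard conformal transformation law $\Gamma^\gamma_{\alpha\beta}[g] = \Gamma^\gamma_{\alpha\beta}[\tilde g] + \delta^\gamma_\alpha \partial_\beta\omega + \delta^\gamma_\beta\partial_\alpha\omega - \tilde g_{\alpha\beta}\tilde g^{\gamma\delta}\partial_\delta\omega$, and combine this with the symbols for $\tilde g$ already recorded in \eqref{eql.geodesic_eq_1}. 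This yields $\Gamma^\rho_{\rho\rho}[g] = -\rho^{-1}$, $\Gamma^\rho_{\rho a}[g] = 0$, $\Gamma^\rho_{ab}[g] = -\tfrac{1}{2}\mi{L}_\rho\gv_{ab} + \rho^{-1}\gv_{ab}$, $\Gamma^b_{\rho a}[g] = \tfrac{1}{2}\gv^{bc}\mi{L}_\rho\gv_{ca} - \rho^{-1}\delta^b_a$, and $\Gamma^c_{ab}[g]$ equal to the $\gv$-Christoffel symbols.

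Next I would compute $\partial_\mu f_{t_0}$ and $\partial_\mu\partial_\nu f_{t_0}$ directly from $f_{t_0} = \rho/\eta(t-t_0)$, writing all coefficients in terms of $f_{t_0}$, $\eta'$, $\eta''$, $\rho$, and $\partial_a t = \Dv_a t$. The gradient formula \eqref{eq.carleman_f_grad} then follows from $\nabla^\sharp f_{t_0} = g^{-1}df_{t_0}$ together with $g^{-1} = \rho^2(\partial_\rho\otimes\partial_\rho + \gv^{-1})$, after which the squared-norm identity in \eqref{eq.carleman_f_grad} is just $g(\nabla^\sharp f_{t_0}, \nabla^\sharp f_{t_0}) = \rho^2[(\partial_\rho f_{t_0})^2 + \gv^{ab}\partial_a f_{t_0}\,\partial_b f_{t_0}]$, using the definition of $\Dv^\sharp t$.

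For the Hessian components in \eqref{eq.carleman_f_hessian}, I would apply $\nabla_{\mu\nu} f_{t_0} = \partial_\mu\partial_\nu f_{t_0} - \Gamma^\sigma_{\mu\nu}\partial_\sigma f_{t_0}$ case by case, using the Christoffel symbols from the first step. The key simplification in the $(ab)$-component is to notice that $\partial_a\partial_b t - \Gamma^c_{ab}[\gv]\,\partial_c t = \Dv_{ab} t$, which converts coordinate second derivatives of $t$ into the vertical Hessian $\Dv^2 t$. The wave-operator formula \eqref{eq.carleman_f_grad} then follows from $\Box f_{t_0} = g^{\mu\nu}\nabla_{\mu\nu} f_{t_0} = \rho^2 \nabla_{\rho\rho} f_{t_0} + \rho^2 \gv^{ab}\nabla_{ab} f_{t_0}$; the $\gv^{ab}\gv_{ab} = n$ contraction produces the $-(n-1)$ coefficient after combining with $\rho^2\nabla_{\rho\rho}f_{t_0} = f_{t_0}$.

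No genuine obstacle is expected, since all Christoffel contractions are finite-dimensional algebra; the only place requiring care is the $(\rho a)$-Hessian component, where the two contributions from $\Gamma^b_{\rho a}[g]$, namely $\tfrac{1}{2}\gv^{bc}\mi{L}_\rho\gv_{ca}$ and $-\rho^{-1}\delta^b_a$, must be combined with $\partial_\rho\partial_a f_{t_0}$ without double-counting, yielding precisely the coefficient $-2\rho^{-2}f_{t_0}^2\eta'$ in front of $\Dv_a t$.
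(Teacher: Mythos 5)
Your proposal is correct and follows essentially the same route as the paper's proof in Appendix \ref{sec.carleman_f}: compute the Christoffel symbols of $g$ in $\varphi_\rho$-coordinates (the paper records exactly the symbols you list, in Lemma \ref{thm.comp_prelim_christoffel}, obtained by direct computation rather than via the conformal rescaling $g = e^{2\omega}\tilde g$, but the results agree), then assemble the gradient, Hessian, and $\Box f_{t_0}$ by the standard formulas $\nabla_{\mu\nu} f = \partial_{\mu\nu} f - \Gamma^\sigma_{\mu\nu}\partial_\sigma f$ and $\Box f = \rho^2\nabla_{\rho\rho} f + \rho^2\gv^{ab}\nabla_{ab} f$. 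Your flagged subtlety in the $(\rho a)$-component and the $\gv^{ab}\gv_{ab}=n$ contraction producing $-(n-1)f_{t_0}$ are exactly the points the paper's computation turns on.
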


\if\comp1

\begin{proof}
See Appendix \ref{sec.carleman_f}.
\end{proof}

\fi

Finally, for our upcoming main estimates, we will make use of the following local ($g$-)orthonormal frames, which are especially adapted to the level sets of $f_{ t_0 }$:

\begin{proposition} \label{thm.carleman_E}
Let $( \mi{M}, g )$ be an FG-aAdS segment, and let $t$ be a global time.
Given any $p \in \mi{I}$, there exists a neighborhood $U_p$ of $p$ in $\mi{I}$ and vector fields $E_1, \dots, E_{ n - 1 }$ on $( 0, \rho_0 ] \times U_p$ such that:
\begin{itemize}
\item The following identities hold:
\begin{equation}
\label{eq.carleman_E} g ( E_X, E_Y ) = \delta_{ X Y } \text{,} \qquad g ( E_X, \partial_\rho ) = 0 \text{,} \qquad g ( E_X, \Dv^\sharp t ) = 0 \text{,} \qquad 1 \leq X, Y < n \text{.}
\end{equation}

\item $E_1, \dots, E_{ n - 1 }$ are vertical, and there are $\gm$-orthonormal vector fields $\mf{E}_1, \dots, \mf{E}_{ n - 1 }$ on $\mi{I}$ with
\begin{equation}
\label{eq.carleman_E_limit} \rho^{-1} E_X \rightarrow^0 \mf{E}_X \text{,} \qquad 1 \leq X < n \text{.}
\end{equation}
\end{itemize}
\end{proposition}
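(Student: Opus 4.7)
The proof is a Gram--Schmidt construction, reduced to a vertical problem on each level set of $\rho$. Since $g = \rho^{-2}(d\rho^2 + \gv)$, any vertical vector field $E$ satisfies $g(E, \partial_\rho) = 0$ automatically, and moreover $g(E, E') = \rho^{-2} \gv(E, E')$ and $g(E, \Dv^\sharp t) = \rho^{-2} E t$ for any vertical $E'$. Consequently, to obtain \eqref{eq.carleman_E}, it suffices to construct vertical vector fields $\tilde E_1, \dots, \tilde E_{n-1}$ on $(0, \rho_0] \times U_p$ satisfying $\gv(\tilde E_X, \tilde E_Y) = \delta_{XY}$ and $\tilde E_X t = 0$, and then set $E_X := \rho \tilde E_X$.

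To build the $\tilde E_X$, I start from the conformal boundary. Because $t$ is a global time and $\Dm^\sharp t$ is uniformly $\gm$-timelike, its $\gm$-orthogonal complement is a smooth rank-$(n-1)$ spacelike subbundle of $T\mi{I}$. Choosing a local frame for this subbundle near $p$ and performing $\gm$-Gram--Schmidt yields $\gm$-orthonormal vector fields $\mf{E}_1, \dots, \mf{E}_{n-1}$ on some neighborhood $U_p$ of $p$ in $\mi{I}$, satisfying $\mf{E}_X t = \gm(\mf{E}_X, \Dm^\sharp t) = 0$. I extend each $\mf{E}_X$ to a vertical vector field on $(0, \rho_0] \times U_p$ by the $\rho$-trivial convention (Definition \ref{def.aads_tensor_id}); since $t$ is also extended $\rho$-trivially, $\mf{E}_X t = 0$ persists on all level sets, which is equivalent to $\gv(\mf{E}_X, \Dv^\sharp t) = 0$. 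Because $\Dv^\sharp t$ is uniformly $\gv$-timelike by \eqref{eq.aads_time}, its $\gv$-orthogonal complement is a positive-definite $(n-1)$-dimensional subbundle, and the $\mf{E}_X$ lie in this complement for every $\rho$. Moreover, since the $\mf{E}_X$ have $\rho$-independent coordinate components and are linearly independent at $\rho = 0$, they remain a frame for this subbundle on all of $(0, \rho_0] \times U_p$.

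I then apply standard Gram--Schmidt orthonormalization to $(\mf{E}_1, \dots, \mf{E}_{n-1})$ with respect to $\gv$ on each slice $\{\rho = \sigma\}$. Positive-definiteness of $\gv$ on the spanned subbundle ensures the procedure is well-defined, producing smooth vertical vector fields $\tilde E_1, \dots, \tilde E_{n-1}$ with $\gv(\tilde E_X, \tilde E_Y) = \delta_{XY}$; since each $\tilde E_X$ is a linear combination of $\mf{E}_1, \dots, \mf{E}_X$, the property $\tilde E_X t = 0$ is preserved. The vector fields $E_X := \rho \tilde E_X$ then fulfill the identities in \eqref{eq.carleman_E}, while the limit $\rho^{-1} E_X = \tilde E_X \rightarrow^0 \mf{E}_X$ follows from continuity of the Gram--Schmidt coefficients in $\gv$ together with the convergence $\gv(\mf{E}_X, \mf{E}_Y) \rightarrow^0 \gm(\mf{E}_X, \mf{E}_Y) = \delta_{XY}$ guaranteed by Definition \ref{def.aads_metric}. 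There is no substantive obstacle; the only point deserving care is the observation that the $\rho$-trivial extension already lands inside the positive-definite $\gv$-complement of $\Dv^\sharp t$, so that Gram--Schmidt is valid globally on $(0, \rho_0] \times U_p$ rather than merely near the boundary.
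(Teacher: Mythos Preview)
Your proof is correct and follows essentially the same approach as the paper: construct a $\gm$-orthonormal frame $\mf{E}_1, \dots, \mf{E}_{n-1}$ on $U_p$ orthogonal to $\Dm^\sharp t$, extend $\rho$-trivially, and then orthonormalize with respect to $\gv$ before scaling by $\rho$. The paper simply packages the orthonormalization step under the phrase ``standard differential geometric procedures,'' whereas you spell out the Gram--Schmidt argument and the reason it is globally valid on $(0,\rho_0] \times U_p$.
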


\if\comp1

\begin{proof}
See Appendix \ref{sec.carleman_E}.
\end{proof}

\fi

\begin{proposition} \label{thm.carleman_frame}
Assume the setting of Definition \ref{def.carleman_f}, and suppose $f_\ast > 0$ is sufficiently small with respect to $t$, $b$, $c$.
Then, the following vector fields are well-defined on $\Omega_{ t_0 } ( f_\ast )$:
\begin{align}
\label{eq.carleman_NV} N_{ t_0 } &:= \rho \{ 1 + f_{ t_0 }^2 [ \eta' ( t - t_0 ) ]^2 \cdot \gv ( \Dv^\sharp t, \Dv^\sharp t ) \}^{ - \frac{1}{2} } [ \partial_\rho - f_{ t_0 } \cdot \eta' ( t - t_0 ) \cdot \Dv^\sharp t ] \text{,} \\
\notag V_{ t_0 } &:= \rho [ - \gv ( \Dv^\sharp t, \Dv^\sharp t ) ]^{ - \frac{1}{2} } \{ 1 + f_{ t_0 }^2 [ \eta' ( t - t_0 ) ]^2 \cdot \gv ( \Dv^\sharp t, \Dv^\sharp t ) \}^{ - \frac{1}{2} } \\
\notag &\qquad \cdot [ f_{ t_0 } \cdot \eta' ( t - t_0 ) \cdot \gv ( \Dv^\sharp t, \Dv^\sharp t ) \cdot \partial_\rho + \Dv^\sharp t ] \text{.}
\end{align}
In addition, if $E_1, \dots, E_{ n - 1 }$ are local vector fields constructed as in Proposition \ref{thm.carleman_E}, then the following properties hold at points where $N_{ t_0 }, V_{ t_0 }, E_1, \dots, E_{ n - 1 }$ are all defined:
\begin{itemize}
\item $( N_{ t_0 }, V_{ t_0 }, E_1, \dots, E_{ n - 1 } )$ defines a local $g$-orthonormal frame.
Moreover, $N_{ t_0 }$ is normal to the level sets of $f_{ t_0 }$, while $V_{ t_0 }, E_1, \dots, E_{ n - 1 }$ are tangent to the level sets of $f_{ t_0 }$.

\item $V_{ t_0 }$ is everywhere $g$-timelike, while $N_{ t_0 }, E_1, \dots, E_{ n - 1 }$ are everywhere $g$-spacelike.
\end{itemize}
\end{proposition}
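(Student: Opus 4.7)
The plan is to verify the frame properties by direct computation, using the gradient identities in \eqref{eq.carleman_f_grad}, the Fefferman--Graham form \eqref{eq.aads_metric} of $g$, and the convention that vertical objects annihilate $\partial_\rho$ (Definition \ref{def.aads_tensor_id}). First, I would check that $N_{t_0}$ and $V_{t_0}$ are smooth and well-defined on $\Omega_{t_0}(f_\ast)$ for $f_\ast$ sufficiently small.  The factor $-\gv(\Dv^\sharp t, \Dv^\sharp t)$ is uniformly positive by \eqref{eq.aads_time}, while the inner factor $1 + f_{t_0}^2 [\eta'(t - t_0)]^2 \gv(\Dv^\sharp t, \Dv^\sharp t)$ is bounded below by $1 - f_\ast^2 M_{b,c}^2 C_t$, where $M_{b,c} := \sup_{|\tau| \leq \frac{1}{2}\mc{T}_+(b,c)} |\eta'(\tau)|$ depends only on $b$ and $c$, and $C_t$ is the constant from \eqref{eq.aads_time}.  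Hence for $f_\ast$ small enough in terms of $t$, $b$, $c$, both quantities under the square roots in \eqref{eq.carleman_NV} are smooth and uniformly positive on $\Omega_{t_0}(f_\ast)$.

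With this in place, orthonormality reduces to a sequence of algebraic computations.  Comparing \eqref{eq.carleman_NV} with the first identity of \eqref{eq.carleman_f_grad}, one finds $\nabla^\sharp f_{t_0} = f_{t_0} \{ 1 + f_{t_0}^2 [\eta'(t-t_0)]^2 \gv(\Dv^\sharp t, \Dv^\sharp t) \}^{1/2} N_{t_0}$, so that $N_{t_0}$ is a positive scalar multiple of $\nabla^\sharp f_{t_0}$, and the second identity of \eqref{eq.carleman_f_grad} immediately yields $g(N_{t_0}, N_{t_0}) = 1$.  For $V_{t_0}$, I would expand $g(V_{t_0}, V_{t_0})$ and $g(N_{t_0}, V_{t_0})$ using \eqref{eq.aads_metric}; since $\Dv^\sharp t$ is vertical, one has $g(\partial_\rho, \Dv^\sharp t) = 0$ and $g(\Dv^\sharp t, \Dv^\sharp t) = \rho^{-2} \gv(\Dv^\sharp t, \Dv^\sharp t)$, and a short calculation shows that the two expressions simplify to $-1$ and $0$, respectively.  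The orthogonality of each $E_X$ to $N_{t_0}$ and $V_{t_0}$ is then immediate from \eqref{eq.carleman_E}, since $N_{t_0}$ and $V_{t_0}$ lie in the span of $\partial_\rho$ and $\Dv^\sharp t$; combined with $g(E_X, E_Y) = \delta_{XY}$, this establishes that $(N_{t_0}, V_{t_0}, E_1, \dots, E_{n-1})$ is a $g$-orthonormal frame on the overlap.

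The tangency of $V_{t_0}, E_1, \dots, E_{n-1}$ to the level sets of $f_{t_0}$ follows from the relation $\nabla^\sharp f_{t_0} \parallel N_{t_0}$, since for any vector field $X$ with $g(X, N_{t_0}) = 0$ one has $X f_{t_0} = g(X, \nabla^\sharp f_{t_0}) = 0$.  The causal characters are then read off the diagonal entries already computed.  There is no substantial obstacle in the argument; the one careful point is the quantitative choice of $f_\ast$, which must be small enough that the inner normalization factor stays strictly positive, and this is precisely what forces the dependence on $t$, $b$, $c$ in the hypothesis.
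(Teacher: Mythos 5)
Your proposal is correct and follows essentially the same route as the paper's proof: positivity of the normalization factor via \eqref{eq.aads_time} and the boundedness of $\eta'$ for well-definedness, direct computation with \eqref{eq.aads_metric}, \eqref{eq.carleman_E}, \eqref{eq.carleman_NV} for orthonormality, the proportionality $N_{t_0}\parallel\nabla^\sharp f_{t_0}$ from \eqref{eq.carleman_f_grad} for the normality and tangency claims, and the diagonal entries for the causal characters. No gaps.
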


\if\comp1

\begin{proof}
See Appendix \ref{sec.carleman_frame}.
\end{proof}

\fi

\begin{remark}
Propositions \ref{thm.carleman_E} and \ref{thm.carleman_frame} yield analogues of the frames from \cite[Section 2.4.2]{hol_shao:uc_ads_ns}.
\end{remark}

\subsection{The Main Estimate}

We now give a precise statement of our main Carleman estimate and the associated unique continuation property.
We begin with the Carleman estimate:

\begin{theorem} \label{thm.carleman}
Let $( \mi{M}, g )$ be a strongly FG-aAdS segment, and let $t$ be a global time.
Also:
\begin{itemize}
\item Let $\mi{D} \subseteq \mi{I}$ be open, and suppose $\bar{\mi{D}}$ has compact cross-sections.

\item Assume the null convexity criterion holds on $\mi{D}$, with constants $0 \leq B < C$.

\item Fix constants $B < b < c < C$ and $t_0 \in \R$ such that \eqref{eq.carleman_timespan} holds, and let $\Omega_{ t_0 } := \Omega_{ t_0 } [ b, c ]$ and $f_{ t_0 } := f_{ t_0 } [ b, c ]$ be defined with respect to the above $b, c$.

\item Fix integers $k, l \geq 0$ and fix $\sigma \in \R$, a scalar $\mf{X}^\rho \in C^\infty ( \mi{I} )$, and a vector field $\mf{X}$ on $\mi{I}$.
\end{itemize}
Then, there exist $\mc{C}_b > 0$ and $\mc{C}_o \geq 0$---both depending on $\mi{D}, \gv, t, b, c, B, C, \mf{X}^\rho, \mf{X}, k, l$---such that
\begin{itemize}
\item for any $\kappa \in \R$ satisfying
\begin{equation}
\label{eq.carleman_kappa} 2 \kappa \geq n - 1 + \mc{C}_o \text{,} \qquad \kappa^2 - ( n - 2 ) \kappa + \sigma - ( n - 1 ) - \mc{C}_o \geq 0 \text{,}
\end{equation}

\item for any constant $f_\star$ satisfying
\begin{equation}
\label{eq.carleman_fstar} 0 < f_\star \ll_{ \mi{D}, \gv, t, b, c, B, C, \mf{X}^\rho, \mf{X}, k, l } 1 \text{,}
\end{equation}

\item for any constants $\lambda, p > 0$ satisfying
\begin{equation}
\label{eq.carleman_lambda} \lambda \gg_{ \mi{D}, \gv, t, b, c, B, C, \mf{X}^\rho, \mf{X}, k, l } | \kappa | + | \sigma | \text{,} \qquad 0 < p < \frac{1}{2} \text{,}
\end{equation}

\item and for any rank $( k, l )$ vertical tensor field $\phi$ on $\mi{M}$ such that
\begin{itemize}
\item $\phi$ is supported on $( 0, \rho_0 ] \times \bar{\mi{D}}$, and

\item both $\phi$ and $\nablam \phi$ vanish on $\Omega_{ t_0 } \cap \{ f_{ t_0 } = f_\ast \}$,
\end{itemize}
\end{itemize}
the following Carleman estimate holds:
\begin{align}
\label{eq.carleman} &\int_{ \Omega_{ t_0 } ( f_\ast ) } e^{ - \lambda p^{-1} f_{ t_0 }^p } f_{ t_0 }^{ n - 2 - p - 2 \kappa } | ( \Boxm + \sigma + \rho^2 \nablam_{ \mf{X}^\rho \partial_\rho + \mf{X} } ) \phi |_\hv^2 \, dg \\
\notag &\qquad + \mc{C}_b \lambda^3 \limsup_{ \rho_\ast \searrow 0 } \int_{ \Omega_{ t_0 } ( f_\ast ) \cap \{ \rho = \rho_\ast \} } [ | \Dvm_{ \partial_\rho } ( \rho^{ - \kappa } \phi ) |_\hv^2 + | \Dv_{ \Dv^\sharp t } ( \rho^{ - \kappa } \phi ) |_\hv^2 + | \rho^{ - \kappa - 1 } \phi |_\hv^2 ] \, d \gv |_{ \rho_\ast } \\
\notag &\quad \geq \lambda \int_{ \Omega_{ t_0 } ( f_\ast ) } e^{ - \lambda p^{-1} f_{ t_0 }^p } f_{ t_0 }^{ n - 2 - 2 \kappa } ( f_{ t_0 } \rho^3 | \Dvm_{ \partial_\rho } \phi |_\hv^2 + f_{ t_0 } \rho^3 | \Dv \phi |_\hv^2 + f^{ 2 p } | \phi |_\hv^2 ) \, dg \text{.}
\end{align}
Furthermore, the above holds with $\mc{C}_o = 0$ in \eqref{eq.carleman_kappa} when $( k + l ) \Dm^2 t$, $\mf{X}^\rho$, and $\mf{X}$ all vanish on $\bar{\mi{D}}$.
\end{theorem}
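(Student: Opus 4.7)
The plan is to adapt the multiplier-based Carleman argument from \cite[Theorems 3.7 and C.1]{hol_shao:uc_ads_ns} to the present setting, accounting for the general time foliation, the null convexity criterion (in place of pseudoconvexity), the vertical tensor framework, and the borderline lower-order operator $\rho^2 \nablam_{\mf{X}^\rho \partial_\rho + \mf{X}}$. I would first substitute $\psi := f_{t_0}^{-\kappa}\phi$, which recasts \eqref{eq.carleman} as an estimate on $\psi$ driven by a conjugated operator $\tilde{\Box}$. I would then pair $w\,\tilde{\Box}\psi$, with $w := e^{-\lambda p^{-1} f_{t_0}^p}$, against a multiplier of the form $S_w\psi := \nablam_{\nabla^\sharp f_{t_0}}\psi + F_1 \psi$ for a suitable scalar $F_1$, and integrate by parts, legitimized on vertical tensors by the compatibility identities $\nablam g = \nablam \gv = 0$ from Proposition \ref{thm.aads_mixed_connection}.

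The positivity of the resulting bulk is driven by the Hessian $\nablam^2 f_{t_0}$, whose components are given explicitly in \eqref{eq.carleman_f_hessian}. Using the ODEs \eqref{eq.carleman_eta_minus}--\eqref{eq.carleman_eta_plus} to eliminate $\eta''$, the coefficients of $(Xt)^2$ appearing in $\nablam^2 f_{t_0}$ recombine with the boundary limit $\tfrac{1}{2}\mi{L}_\rho^2 \gv \to \gs$ to produce precisely the quantities on the left of \eqref{eq.psc_nc_g} and \eqref{eq.psc_nc_t}. Invoking Theorem \ref{thm.psc_nc} with the given $b, c$ then supplies scalars $\zeta_0, \zeta_\pm \in C^\infty(\mi{I})$ witnessing uniform positivity of these forms on $\bar{\mi{D}}$, with the residual $\zeta\gm$ pieces absorbed into the gauge choice of $F_1$. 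This is the mechanism by which the null convexity criterion alone delivers a Carleman-type pseudoconvex positive bulk, strictly positive on vectors tangent to the level sets of $f_{t_0}$ and weighted by $\lambda$.

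The main technical obstacle lies in the tensorial bookkeeping. The Riemannian bundle metric $\hv$ used to define $|\phi|_\hv$ is built from $\gv$ and $dt$ and, unlike $\gv$, is \emph{not} parallel under $\nablam$: every integration by parts that moves a $\nablam$ between paired copies of $\phi$ produces a remainder proportional to $\nablam \hv$, hence to $\Dv^2 t$, which converges to $\Dm^2 t$ at the boundary by Proposition \ref{thm.aads_geom_limit}. Each tensor index of $\phi$ contributes one such remainder, so the cumulative error scales with $(k+l)\Dm^2 t$. After a careful accounting of its $\rho$-decay, this error behaves exactly as a ``borderline" first-order coefficient in the sense of \cite[Theorem C.1]{hol_shao:uc_ads_ns}: too slowly decaying to be absorbed by largeness of $\lambda$ alone, it must instead be absorbed into the zeroth-order positive bulk by raising the threshold on $\kappa$. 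This is the origin of the constant $\mc{C}_o$ in \eqref{eq.carleman_kappa}; when $(k+l)\Dm^2 t$ vanishes on $\bar{\mi{D}}$ the mechanism is inactive, and likewise for the additional lower-order operator $\rho^2 \nablam_{\mf{X}^\rho\partial_\rho + \mf{X}}$ when $\mf{X}^\rho, \mf{X}$ vanish, yielding the $\mc{C}_o = 0$ case.

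Once the positive bulk has been established and the borderline remainders absorbed, the remainder of the argument is standard. I would choose $f_\ast$ small enough that the orthonormal frame of Propositions \ref{thm.carleman_E}--\ref{thm.carleman_frame} is defined on $\Omega_{t_0}(f_\ast)$ and all curvature remainders from Proposition \ref{thm.aads_vertical_curv} are perturbative, choose $\lambda$ large to absorb subleading contributions, and identify the boundary term at $\rho_\ast \searrow 0$ in \eqref{eq.carleman} as the limit of the $\rho = \rho_\ast$ boundary contribution from integrating by parts in $\partial_\rho$---its three summands arising as the components of $\nablam(\rho^{-\kappa}\phi)$ decomposed with respect to $\partial_\rho$, the timelike direction $\Dv^\sharp t$, and the spacelike $E_X$ of the frame.
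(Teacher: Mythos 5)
Your proposal follows essentially the same route as the paper: conjugate by the weight, pair against a multiplier of the form $\bar{\nabla}_{S}\psi + v\,\psi$ with $S$ proportional to $\nabla^\sharp f_{t_0}$, extract the positive bulk from the Hessian of $f_{t_0}$ by using the ODEs \eqref{eq.carleman_eta_minus}--\eqref{eq.carleman_eta_plus} to trade $\eta''$ for the combinations appearing in \eqref{eq.psc_nc_g}--\eqref{eq.psc_nc_t} (this is exactly how Theorem \ref{thm.psc_nc} enters), absorb the $\zeta\gm$ residue into the zeroth-order multiplier coefficient, and treat the $\nablam\hv \sim (k+l)\Dm^2 t$ remainders as borderline first-order terms that force the $\mc{C}_o$ shift in \eqref{eq.carleman_kappa}. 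All of this matches the paper's Lemmas on $\pi_\zeta$, on $\nablam\hv$, and on the pointwise estimate, including the observation that a Hardy-type absorption into the zeroth-order bulk is what produces the quadratic condition on $\kappa$.

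There is one concrete step your outline would get wrong as written. The weight $\eta$ is only $C^2$ across $t = t_0$ (its third derivative jumps), so the zeroth-order multiplier coefficient --- which contains $\Box f_{t_0}$, hence $\eta''$ --- has a derivative that is discontinuous on the hypersurface $\{t = t_0\}$. You therefore cannot apply the divergence theorem globally on $\Omega_{t_0}(f_\ast)$: the domain must be split into $\{t < t_0\}$ and $\{t > t_0\}$, and the two resulting interface integrals over $\{t = t_0\}$ do not cancel, because the term $\ms{T}v_\zeta$ in the flux has different one-sided limits. The paper controls these junction terms by showing they are $\mc{O}(f^n)\,|\psi|_\hv^2$ and then converting the codimension-one integral back into a bulk integral (via a second application of the divergence theorem with a cutoff) that can be absorbed for large $\lambda$. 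Without this, the integrated estimate does not close. A minor additional slip: the third summand $|\rho^{-\kappa-1}\phi|_\hv^2$ in the $\rho = \rho_\ast$ boundary term is the zeroth-order contribution from the flux, not the $E_X$-derivative components.
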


\begin{remark}
The integrals over $\Omega_{ t_0 } ( f_\ast )$ in \eqref{eq.carleman} are with respect to the spacetime metric $g$, while the boundary integral over $\Omega_{ t_0 } ( f_\ast ) \cap \{ \rho = \rho_\ast \}$ is with respect to the vertical metric $\gv$.
\end{remark}

\begin{remark}
The need for a sufficiently large timespan in \eqref{eq.carleman} is implicitly captured by the domain $\Omega_{ t_0 } ( f_\ast )$ of integration, which at $\rho \searrow 0$ covers a timespan of $\mc{T}_+ ( b, c )$; see Definition \ref{def.carleman_f}.
\end{remark}

The proof of Theorem \ref{thm.carleman} is given in Section \ref{sec.carleman_proof}.
In addition, the following corollary is the corresponding unique continuation result that follows from Theorem \ref{thm.carleman}:

\begin{corollary} \label{thm.carleman_uc}
Let $( \mi{M}, g )$ be a strongly FG-aAdS segment, and let $t$ be a global time.
Also:
\begin{itemize}
\item Let $\mi{D} \subseteq \mi{I}$ be open, and suppose $\bar{\mi{D}}$ has compact cross-sections.

\item Assume the null convexity criterion holds on $\mi{D}$, with constants $0 \leq B < C$.

\item Fix constants $B < b < c < C$ and $t_0 \in \R$ such that \eqref{eq.carleman_timespan} holds, and let $\Omega_{ t_0 } := \Omega_{ t_0 } [ b, c ]$ and $f_{ t_0 } := f_{ t_0 } [ b, c ]$ be defined with respect to the above $b, c$.

\item Fix integers $k, l \geq 0$ and fix $\sigma \in \R$, a scalar $\mf{X}^\rho \in C^\infty ( \mi{I} )$, and a vector field $\mf{X}$ on $\mi{I}$.
\end{itemize}
Then, there exists $\mc{C}_o \geq 0$---depending on $\mi{D}, \gv, t, b, c, B, C, \mf{X}^\rho, \mf{X}, k, l$---such that the following unique continuation property holds: if $\phi$ is a rank $( k, l )$ vertical tensor field on $\mi{M}$ such that
\begin{itemize}
\item there exists $p > 0$ and $\mc{C} > 0$ such that $\phi$ satisfies
\begin{equation}
\label{eq.carleman_uc_wave} | ( \Boxm + \sigma + \rho^2 \nablam_{ \mf{X}^\rho \partial_\rho + \mf{X} } ) \phi |_\hv^2 \leq \mc{C} ( \rho^{ 4 + p } | \Dvm_\rho \phi |_\hv^2 + \rho^{ 4 + p } | \Dv \phi |_\hv^2 + \rho^{ 3p } | \phi |_\hv^2 ) \text{,}
\end{equation}

\item $\phi$ is supported on $( 0, \rho_0 ] \times \bar{\mi{D}}$, and

\item there exists $\kappa \in \R$, satisfying \eqref{eq.carleman_kappa}, such that $\phi$ satisfies the vanishing condition
\begin{equation}
\label{eq.carleman_uc_vanish} \lim_{ \rho' \searrow 0 } \int_{ \left\{ \rho = \rho' \text{, } |t| < \frac{1}{2} \mc{T}_+ ( b, c ) \right\} } [ | \Dvm_{ \partial_\rho } ( \rho^{ - \kappa } \phi ) |_\hv^2 + | \Dv ( \rho^{ - \kappa } \phi ) |_\hv^2 + | \rho^{ - \kappa - 1 } \phi |_\hv^2 ] d \gv |_{ \rho' } = 0 \text{,}
\end{equation}
\end{itemize}
then there is some $f_\ast > 0$---depending on $\mi{D}, \gv, t, b, c, B, C, \mf{X}^\rho, \mf{X}, k, l$---such that $\phi \equiv 0$ on $\Omega_{ t_0 } ( f_\ast )$.
Furthermore, the above holds with $\mc{C}_o = 0$ in \eqref{eq.carleman_kappa} when $( k + l ) \Dm^2 t$, $\mf{X}^\rho$, and $\mf{X}$ all vanish on $\bar{\mi{D}}$.
\end{corollary}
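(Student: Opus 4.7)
The plan is to deduce the unique continuation property directly from Theorem \ref{thm.carleman}. Since \eqref{eq.carleman} requires its argument (together with its first derivative) to vanish on the outer level set $\{f_{t_0}=f_\ast\}$, I cannot apply it to $\phi$ itself; instead I would choose a smooth cutoff $\chi$ on $\mi{M}$, depending only on $f_{t_0}$, that equals $1$ on $\Omega_{t_0}(f_\ast/2)$ and is supported in $\Omega_{t_0}(f_\ast)$, and apply \eqref{eq.carleman} to $\chi\phi$. After translating $t$ to reduce to $t_0=0$, the vanishing hypothesis \eqref{eq.carleman_uc_vanish} matches precisely the integrand of the boundary term at $\rho_\ast\searrow 0$ in \eqref{eq.carleman}, and since $\chi$ and its derivatives are bounded in $\rho$, the same vanishing is inherited by $\chi\phi$; consequently the boundary term drops out of \eqref{eq.carleman} when applied to $\chi\phi$.

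Expanding $(\Boxm+\sigma+\rho^2\nablam_{\mf{X}^\rho\partial_\rho+\mf{X}})(\chi\phi)=\chi\cdot(\Boxm+\sigma+\rho^2\nablam_{\mf{X}^\rho\partial_\rho+\mf{X}})\phi+[\Boxm+\rho^2\nablam_{\mf{X}^\rho\partial_\rho+\mf{X}},\chi]\phi$, I split the LHS of \eqref{eq.carleman} into a bulk piece and a commutator piece supported in the annulus $A:=\{f_\ast/2\le f_{t_0}\le f_\ast\}$ and involving at most one derivative of $\phi$. The bulk piece is controlled by the hypothesis \eqref{eq.carleman_uc_wave}; using the identity $\rho=f_{t_0}\cdot\eta(t-t_0)$ together with $\eta\le 1$ from \eqref{eq.carleman_eta_range}, a term-by-term comparison with the RHS integrand of \eqref{eq.carleman} gives a ratio of at most $\eta^{1+p}/\lambda\le 1/\lambda$ for the two derivative contributions and $\eta^{3p}/\lambda\le 1/\lambda$ for the zeroth-order contribution. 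Choosing $\lambda$ large depending on the constant $\mc{C}$ from \eqref{eq.carleman_uc_wave}, one therefore absorbs the entire bulk piece into the RHS of \eqref{eq.carleman}.

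For the commutator piece, which is supported in $A$ and involves only $\phi$ and $\nablam\phi$ multiplied by bounded ($\lambda$-independent) derivatives of $\chi$, the Carleman weight satisfies $e^{-\lambda p^{-1}f_{t_0}^p}\le e^{-\lambda p^{-1}(f_\ast/2)^p}$, so this contribution is at most $C\,e^{-\lambda p^{-1}(f_\ast/2)^p}$ times a $\lambda$-independent norm of $\phi$ over $A$ (finite by smoothness of $\phi$). Restricting the RHS of \eqref{eq.carleman} to the strict inner subregion $\{f_{t_0}\le f_\ast/4\}$ produces a lower bound of $\lambda\,e^{-\lambda p^{-1}(f_\ast/4)^p}$ times the weighted $L^2$ norm of $\phi$ there, and since $(f_\ast/2)^p>(f_\ast/4)^p$ the net coefficient decays like $\lambda^{-1}e^{-\lambda p^{-1}[(f_\ast/2)^p-(f_\ast/4)^p]}$. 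Sending $\lambda\to\infty$ forces $\phi\equiv 0$ on $\{f_{t_0}\le f_\ast/4\}$, which is the stated conclusion with $f_\ast$ replaced by $f_\ast/4$. The main delicate step is the absorption in the middle paragraph: the precise exponents $\rho^{4+p}$ and $\rho^{3p}$ in \eqref{eq.carleman_uc_wave} are calibrated exactly to the weights $f_{t_0}\rho^3$ and $f^{2p}$ in \eqref{eq.carleman} via $\rho=f_{t_0}\,\eta$, with exactly one surplus power of $\rho/f_{t_0}=\eta\le 1$ available to be paid against $\lambda^{-1}$; verifying this matching (and checking that the compatibility conditions \eqref{eq.carleman_kappa}--\eqref{eq.carleman_lambda} are preserved under the above choices of $\lambda$, $p$, and $\kappa$) is the only part of the argument that requires genuine care.
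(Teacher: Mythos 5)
Your proposal is correct and is essentially the same argument the paper has in mind: the paper omits the proof precisely because it is this standard cutoff--absorption--$\lambda\to\infty$ deduction from the Carleman estimate (cf.\ the proof of Theorem 3.11 in \cite{hol_shao:uc_ads_ns}), and your calibration of the exponents $\rho^{4+p}$, $\rho^{3p}$ against the weights $f_{t_0}\rho^3$, $f^{2p}$ via $\rho = f_{t_0}\eta$ is exactly the right accounting. The only small imprecision is that $\partial_\rho\chi = \mc{O}(\rho^{-1})$ on the support of $d\chi$ rather than bounded, but the resulting contribution to the boundary term is $\lesssim |\rho^{-\kappa-1}\phi|_\hv^2$ and is therefore still killed by \eqref{eq.carleman_uc_vanish}.
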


We omit the proof of Corollary \ref{thm.carleman_uc}, which applies Theorem \ref{thm.carleman} via a standard process.
For further details, see the proof of \cite[Theorem 3.11]{hol_shao:uc_ads_ns}, which uses an analogous argument.

\begin{remark}
Again, if we assume in Theorem \ref{thm.carleman} and Corollary \ref{thm.carleman_uc} that $\mi{I}$ itself has compact cross-sections, then we can take $\mi{D} := \mi{I}$ and obtain a global version of these results.
\end{remark}

\begin{remark}
In particular, the quantity $\kappa$ in \eqref{eq.carleman_kappa} represents the order of vanishing required of the solution $\phi$ in order for the unique continuation result of Corollary \ref{thm.carleman_uc} to hold.
Notice that the range of admissible $\kappa$ depends on the boundary dimension $n$; the Klein--Gordon mass $\sigma$; the ``leading-order" asymptotics $\mf{X}^\rho$, $\mf{X}$ of the first-order coefficients; the tensor rank $( k, l )$ of the solution $\phi$; and the ``non-stationarity" $\Dm^2 t$ of the boundary metric.
\footnote{If $\phi$ is decomposed into scalar quantities, then the corresponding scalar wave equations would contain additional ``leading-order" first-order terms arising from $\mf{D}^2 t$.
Thus, we can view $\Dm^2 t$ as playing the same role as $\mf{X}^\rho$ and $\mf{X}$.}
While the dependence on $n$ and $\sigma$ are necessary, it is not known whether the dependence on the other quantities can be removed.
\end{remark}

\begin{remark}
Note that if $\mf{X}^\rho$ and $\mf{X}$ vanish, and if either $\phi$ is scalar or $\Dm^2 t$ vanishes (i.e.~$\gm$ is stationary), then Corollary \ref{thm.carleman_uc} holds with the same optimal vanishing rate as in \cite[Section 3.3]{hol_shao:uc_ads_ns}.
In particular, if we take $\mc{C}_o = 0$ in \eqref{eq.carleman_kappa}, then the smallest $\kappa$ satisfying \eqref{eq.carleman_kappa} is
\begin{equation}
\kappa = \begin{cases} \frac{n - 2}{2} + \sqrt{ \frac{ n^2 }{4} - \sigma } & \sigma \leq \frac{ n^2 - 1 }{4} \text{,} \\ \frac{n - 1}{2} & \sigma \geq \frac{ n^2 - 1 }{4} \text{.} \end{cases}
\end{equation}
\end{remark}

\begin{remark}
The vanishing condition \eqref{eq.carleman_uc_vanish} can be relaxed slightly to
\[
\lim_{ \rho' \searrow 0 } \int_{ \left\{ \rho = \rho' \text{, } |t| < \frac{1}{2} \mc{T}_+ ( b, c ) \right\} } [ | \Dvm_{ \partial_\rho } ( \rho^{ - \kappa } \phi ) |_\hv^2 + | \Dv_{ \Dv^\sharp t } ( \rho^{ - \kappa } \phi ) |_\hv^2 + | \rho^{ - \kappa - 1 } \phi |_\hv^2 ] d \gv |_{ \rho' } = 0 \text{,}
\]
and a spacetime integrability assumption for $\Dv \phi$; see \cite[Definition 3.10, Theorem 3.11]{hol_shao:uc_ads_ns} for details.
\end{remark}

\subsection{Proof of Theorem \ref{thm.carleman}} \label{sec.carleman_proof}

Throughout this subsection, we assume the hypotheses of Theorem \ref{thm.carleman}.
Moreover, by replacing $t$ by $t - t_0$, we can assume, without loss of generality, that $t_0 = 0$; in particular, we can replace $f_{ t_0 }$ everywhere by $f := f_0$, as well as $\Omega_{ t_0 }$ by $\Omega_0$.

Furthermore, throughout this proof:
\begin{itemize}
\item We assume that all tensor fields (spacetime, vertical, or boundary) are indexed are with respect to $\varphi$- and $\varphi_\rho$-coordinates, for arbitrary coordinate systems $( U, \varphi )$ on $\mi{I}$.

\item We use the symbols $N := N_0, V := V_0, E_1, \dots, E_{ n - 1 }$ to refer to any $g$-orthonormal frame that is constructed as in Propositions \ref{thm.carleman_E} and \ref{thm.carleman_frame}.

\item To make notations more concise, we define the regions
\footnote{In particular, all the quantities we consider will be smooth on both $\Omega^c_<$ and $\Omega^c_>$.}
\begin{equation}
\label{eql.carleman_Omega} \Omega^c := \Omega_0 ( f_\ast ) \cap ( ( 0, \rho_0 ] \times \bar{\mi{D}} ) \text{,} \qquad \Omega^c_< := \Omega^c \cap \{ t < 0 \} \text{,} \qquad \Omega^c_> := \Omega^c \cap \{ t > 0 \} \text{.}
\end{equation}
\end{itemize}
For convenience, we also define the shorthand
\begin{equation}
\label{eql.carleman_Upsilon} \Upsilon := \{ \mi{D}, \gv, t, b, c, B, C, \mf{X}^\rho, \mf{X}, k, l \} \text{,}
\end{equation}
which contains the objects on which our constants depend.
In addition, to streamline error terms in computations, we write $\mc{O} ( \xi )$ to denote any (spacetime) scalar function $\omega$ satisfying
\begin{equation}
\label{eql.carleman_OOO} | \omega | \lesssim_\Upsilon \xi \text{.}
\end{equation}

\begin{remark} \label{rmkl.carleman_bounded}
Observe that $\Omega^c$ is entirely contained within $( 0, \rho_0 ] \times \mi{D}_c$, where
\begin{equation}
\label{eql.carleman_bounded} \mi{D}_c := \left\{ P \in \bar{\mi{D}} \mid | t ( P ) | \leq \frac{1}{2} \mc{T}_+ ( b, c ) \right\} \text{.}
\end{equation}
Furthermore, since $\bar{\mi{D}}$ has compact cross-sections, then \eqref{eq.carleman_timespan} implies $\mi{D}_c$ is compact.
As a result, by Definition \ref{def.aads_strong}, Proposition \ref{thm.aads_geom_limit}, and the above, we conclude that the geometric quantities $\gv$, $\hv$, and $\Dvm t$ are uniformly bounded up to three derivatives on $\Omega^c$.
\end{remark}

\subsubsection{Pseudoconvexity}

The first key step of the proof is to show, using the null convexity criterion, that the level sets of the function $f$ are pseudoconvex.

Fix $0 < \delta \ll 1$, whose exact value will be determined later, and let $\chi \in C^\infty ( \R )$ satisfy
\[
0 \leq \chi \leq 1 \text{,} \qquad \chi ( \tau ) = \begin{cases} 1 & | \tau | \geq 2 \delta \text{,} \\ 0 & | \tau | \leq \delta \text{.} \end{cases} 
\]
Since the null convexity criterion holds on $\mi{D}$, Theorem \ref{thm.psc_nc} implies there exist $\zeta_0, \zeta_\pm \in C^\infty ( \mi{I} )$ such that \eqref{eq.psc_nc_g} and \eqref{eq.psc_nc_t} hold on $\mi{D}$, with $b$ and $c$ as in the theorem statement.
We now set
\begin{equation}
\label{eql.carleman_zeta} \zeta: \mi{I} \rightarrow \R \text{,} \qquad \zeta = \begin{cases} \eta (t) \zeta_0 + \chi (t) \eta' (t) \zeta_- & t < 0 \text{,} \\ \eta (t) \zeta_0 - \chi (t) \eta' (t) \zeta_+ & t > 0 \text{.} \end{cases}
\end{equation}
From Proposition \ref{thm.carleman_eta}, it follows that $\zeta \in C^2 ( \mi{I} )$.

Furthermore, we define $w_\zeta: \Omega_0 \rightarrow \R$ and the (spacetime, symmetric) tensor field $\pi_\zeta$ by
\begin{equation}
\label{eql.carleman_pi} \qquad w_\zeta := f + f^2 \rho \zeta \text{,} \qquad \pi_\zeta := - [ \nabla ( f^{ n - 3 } \nabla f ) + f^{ n - 3 } w_\zeta \cdot g ] \text{,}
\end{equation}
In particular, the pseudoconvexity of the level sets of $f$ will be captured by the positivity properties of $\pi_\zeta$.
The following lemma provides asymptotics for all the components of $\pi_\zeta$:

\begin{lemma} \label{thm.carleman_pseudoconvex_pi}
Let $\mf{T}$ denote the $\gm$-unit (boundary) vector field
\begin{equation}
\label{eql.carleman_pseudoconvex_T} \mf{T} := | \mf{g} ( \Dm^\sharp t, \Dm^\sharp t ) |^{ - \frac{1}{2} } \Dm^\sharp t \text{.}
\end{equation}
and let $\mf{E}_1, \dots, \mf{E}_{ n - 1 }$ denote the limits as $\rho \searrow 0$ of $\rho^{-1} E_1, \dots, \rho^{-1} E_{ n - 1 }$, respectively.
Then, the following asymptotic relations hold on $\Omega^c_< \cup \Omega^c_>$, for any $1 \leq X, Y < n$:
\begin{align}
\label{eq.carleman_pseudoconvex_pi} \pi_\zeta ( V, V ) &= \rho f^{ n - 1 } [ \eta'' (t) \cdot dt^2 + \eta' (t) \cdot \Dm^2 t - \eta \cdot \gs - \zeta \cdot \gm ] ( \mf{T}, \mf{T} ) + \mc{O} ( \rho f^n ) \text{,} \\
\notag \pi_\zeta ( V, E_X ) &= \rho f^{ n - 1 } [ \eta'' (t) \cdot dt^2 + \eta' (t) \cdot \Dm^2 t - \eta \cdot \gs - \zeta \cdot \gm ] ( \mf{T}, \mf{E}_X ) + \mc{O} ( \rho f^n ) \text{,} \\
\notag \pi_\zeta ( E_X, E_Y ) &= \rho f^{ n - 1 } [ \eta'' (t) \cdot dt^2 + \eta' (t) \cdot \Dm^2 t - \eta \cdot \gs - \zeta \cdot \gm ] ( \mf{E}_X, \mf{E}_Y ) + \mc{O} ( \rho f^n ) \text{,} \\
\notag \pi_\zeta ( N, N ) &= - ( n - 1 ) f^{ n - 2 } + \mc{O} ( f^n ) \text{,} \\
\notag \pi_\zeta ( N, V ) &= \mc{O} ( \rho f^n ) \text{,} \\
\notag \pi_\zeta ( N, E_X ) &= \mc{O} ( \rho f^n ) \text{,}
\end{align}
\end{lemma}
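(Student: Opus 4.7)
The plan is to compute $\pi_\zeta$ directly in the frame $(N, V, E_1, \ldots, E_{n-1})$ using the explicit formulas of Proposition~\ref{thm.carleman_f}, the frame expressions of Proposition~\ref{thm.carleman_frame}, and the boundary limits from Proposition~\ref{thm.aads_geom_limit}, Proposition~\ref{thm.carleman_E}, and Definition~\ref{def.aads_strong}. First, apply the Leibniz rule to rewrite
\[
\pi_\zeta = - f^{n-3} \nabla^2 f - (n-3) f^{n-4} \, df \otimes df - f^{n-3} w_\zeta \, g \text{.}
\]
Throughout the computations, I will use $\rho = f \cdot \eta(t)$ to freely trade powers of $\rho$ and $f$, and the fact (from Remark~\ref{rmkl.carleman_bounded}) that on the bounded region $\Omega^c$ all geometric quantities (and their first few derivatives) are uniformly controlled. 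The asymptotics $\mi{L}_\rho \gv = 2\rho \cdot \gs + \mc{O}(\rho^2)$ and $\Dv^\sharp t = \Dm^\sharp t + \mc{O}(\rho)$ (from \eqref{eq.aads_strong_limits}) will be used to pass from vertical to boundary quantities.

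For the normal component $\pi_\zeta(N, N)$, one uses that $\nabla^\sharp f \parallel N$ on $\Omega^c_< \cup \Omega^c_>$, which gives $N f = |\nabla f| = f + \mc{O}(f^3)$ by \eqref{eq.carleman_f_grad}. Using the frame formula $N = \rho \partial_\rho + \mc{O}(\rho f)$ together with $\nabla_{\rho\rho} f = \rho^{-2} f$, one obtains $\nabla^2 f(N, N) = f + \mc{O}(f^3)$. The $(n-3) f^{n-4} (N f)^2$ term contributes $(n-3) f^{n-2} + \mc{O}(f^n)$, while the $f^{n-3} w_\zeta g(N,N) = f^{n-2} + \mc{O}(f^n)$ term gives the final $+1$. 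Combining, $\pi_\zeta(N, N) = -(n-1) f^{n-2} + \mc{O}(f^n)$, as claimed.

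For the mixed components $\pi_\zeta(N, V)$ and $\pi_\zeta(N, E_X)$, the crucial observations are that $V f = E_X f = 0$ (since $V$ and the $E_X$'s are tangent to level sets of $f$) and $g(\nabla_W N, N) = 0$ for any vector field $W$ (since $g(N, N) = 1$). These reduce $\nabla^2 f(N, W)$ to $W(|\nabla f|)$, which I will bound by differentiating $|\nabla f|^2 = f^2 + f^4 [\eta'(t)]^2 \gv(\Dv^\sharp t, \Dv^\sharp t)$ along $V$ or $E_X$. Since $V f = E_X f = 0$, all remaining contributions arise from derivatives of the vertical-metric factor, producing terms of order $\mc{O}(\rho f^3)$ (using that $V t, E_X t$ are of size $\rho$ at leading order); this yields the claimed $\mc{O}(\rho f^n)$ bound.

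The main technical obstacle is the tangential block $\pi_\zeta(V, V)$, $\pi_\zeta(V, E_X)$, $\pi_\zeta(E_X, E_Y)$, which requires a careful cancellation argument. Here $df \otimes df$ drops out, and one must substitute the frame expressions \eqref{eq.carleman_NV} into the coordinate formula \eqref{eq.carleman_f_hessian} for $\nabla^2 f$. The contributions split into three kinds: (i) terms of the form $\rho^{-2} f \cdot \gv$, which, together with $f^{n-3} w_\zeta g = f^{n-2} + f^{n-1}\rho\zeta$ and the $\{\ldots\}^{-1/2}$ normalization factor in the frame, cancel to leading order and leave the $-\zeta \cdot \gm$ contribution; (ii) terms involving $\eta''(t)$, $\eta'(t) \Dv^2 t$, which produce the $\eta''(t) dt^2 + \eta'(t) \Dm^2 t$ pieces of the claimed bracket after passing to boundary limits; and (iii) terms involving $\mi{L}_\rho \gv$, which, after substituting $\mi{L}_\rho \gv = 2\rho \gs + \mc{O}(\rho^2)$, produce the $-\eta(t) \gs$ piece (the factor of $\eta$ arising through $\rho = f \eta(t)$). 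The remaining error terms are all of order $\mc{O}(\rho f^n)$; this follows because all correction factors $\{1 + f^2 [\eta']^2 \gv(\Dv^\sharp t, \Dv^\sharp t)\}^{-1/2} - 1$ are $\mc{O}(f^2)$, and the vertical-to-boundary replacement $\Dv^\sharp t \to \Dm^\sharp t$ costs a further factor of $\rho$. The final step is to translate the results, expressed in terms of $\Dv^\sharp t$, into statements about the $\gm$-unit field $\mf{T}$ and the limits $\mf{E}_X$ from Proposition~\ref{thm.carleman_E}, which is a routine rescaling.
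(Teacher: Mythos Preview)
Your proposal is correct and follows essentially the same approach as the paper: both compute $\nabla^2 f$ on the frame vectors via the coordinate formulas \eqref{eq.carleman_f_hessian} and the frame expressions \eqref{eq.carleman_NV}, then pass to boundary quantities using the limits \eqref{eq.aads_strong_limits}, \eqref{eq.aads_geom_limit_deriv}, and \eqref{eq.carleman_E_limit}, and finally combine with the $w_\zeta\,g$ term. Your use of the identity $\nabla^2 f(N,W)=W(|\nabla f|)$ for the mixed $N$-components is a slightly cleaner shortcut than the paper's direct coordinate expansion, but the substance is the same.
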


\if\comp1

\begin{proof}
See Appendix \ref{sec.carleman_pseudoconvex_pi}.
\end{proof}

\fi

The main pseudoconvexity property for $f$ is given in the subsequent lemma:

\begin{lemma} \label{thm.carleman_pseudoconvex}
There exists a constant $K \geq 0$, depending on $\Upsilon$, such that the following inequality holds on $\Omega^c_< \cup \Omega^c_>$ for any spacetime vector field $Z$:
\begin{align}
\label{eq.carleman_pseudoconvex} \pi_\zeta ( Z, Z ) &\geq K f^{ n - 1 } \rho \left[ | g ( Z, V ) |^2 + \sum_{ X = 1 }^{ n - 1 } | g ( Z, E_X ) |^2 \right] \\
\notag &\qquad - [ ( n - 1 ) f^{ n - 2 } + \mc{O} ( f^n ) ] \cdot | g ( Z, N ) |^2 \text{.}
\end{align}
\end{lemma}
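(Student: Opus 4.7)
The plan is to decompose an arbitrary spacetime vector field $Z$ in the $g$-orthonormal frame $(N, V, E_1, \dots, E_{n-1})$ from Proposition \ref{thm.carleman_frame} and expand $\pi_\zeta(Z, Z)$ block by block using the component asymptotics of Lemma \ref{thm.carleman_pseudoconvex_pi}. The $(N, N)$ block immediately produces the explicit $-(n-1) f^{n-2} + \mc{O}(f^n)$ coefficient of $|g(Z, N)|^2$ in \eqref{eq.carleman_pseudoconvex}. The cross blocks $\pi_\zeta(N, V)$ and $\pi_\zeta(N, E_X)$, both $\mc{O}(\rho f^n)$, are split by Cauchy--Schwarz: a small $\varepsilon$-fraction is routed into the positive tangential block (harmlessly shrinking $K$), while the complementary $\mc{O}(\rho^2 f^{n+1}) |g(Z, N)|^2$ piece is merged into the $\mc{O}(f^n) |g(Z, N)|^2$ error on the right, using $\rho^2 f \leq \rho_0^2 f_\ast$ on $\Omega^c$. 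It thus suffices to lower-bound the tangential block by $K \rho f^{n-1} [|g(Z, V)|^2 + \sum_X |g(Z, E_X)|^2]$; the $\mc{O}(\rho f^n)$ remainders in the tangential asymptotics of Lemma \ref{thm.carleman_pseudoconvex_pi} are absorbed because $f \leq f_\ast \ll 1$.

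By Lemma \ref{thm.carleman_pseudoconvex_pi}, the tangential estimate reduces to showing that the boundary bilinear form
\[
\mf{Q} \;:=\; \eta''(t)\, dt^2 + \eta'(t)\, \Dm^2 t - \eta(t)\, \gs - \zeta \cdot \gm
\]
satisfies $\mf{Q}(\mf{X}, \mf{X}) \gtrsim_\Upsilon \hm(\mf{X}, \mf{X})$ uniformly on the set $\bar{\mi{D}} \cap \{|t| \leq \tfrac{1}{2} \mc{T}_+(b, c)\}$, which is compact by Remark \ref{rmkl.carleman_bounded}. Using the ODEs \eqref{eq.carleman_eta_minus}--\eqref{eq.carleman_eta_plus} to eliminate $\eta''$ and the definition \eqref{eql.carleman_zeta} of $\zeta$, I would rewrite $\mf{Q}$ on $\{t < 0\}$ as
\[
\mf{Q} \;=\; \eta(t)\bigl[-\gs - c^2 dt^2 - \zeta_0\, \gm\bigr] + \chi(t)\,\eta'(t)\bigl[\Dm^2 t + 2b\, dt^2 - \zeta_-\, \gm\bigr] + \bigl(1 - \chi(t)\bigr)\,\eta'(t)\bigl[\Dm^2 t + 2b\, dt^2\bigr],
\]
and analogously on $\{t > 0\}$ with $\zeta_+$ in place of $\zeta_-$ and $-\Dm^2 t$ in place of $\Dm^2 t$. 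By Theorem \ref{thm.psc_nc} applied with the constants $b, c$ of Theorem \ref{thm.carleman}, both the $[-\gs - c^2 dt^2 - \zeta_0 \gm]$ and $[\mp \Dm^2 t + 2b\, dt^2 - \zeta_\pm \gm]$ forms are strictly $\hm$-positive-definite, and they enter with the nonnegative weights $\eta(t) > 0$ and $\chi(t) |\eta'(t)| \geq 0$ (using the sign of $\eta'$ from Proposition \ref{thm.carleman_eta}). On $\{|t| \geq 2\delta\}$ the cutoff satisfies $\chi \equiv 1$, the last term drops out, and $\mf{Q}$ is a manifestly positive combination.

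The main obstacle is the transition region $\{|t| < 2\delta\}$, where the last term above is not sign-definite. The resolution is quantitative: since $\eta \in C^2$ with $\eta'(0) = 0$ and $\eta(0) = 1$, one has $|\eta'(t)| \lesssim \delta$ there while $\eta(t) \geq \tfrac{1}{2}$, so the last term is an $\mc{O}(\delta)$ perturbation in the $\hm$-operator norm of the strictly positive term $\eta \cdot [-\gs - c^2 dt^2 - \zeta_0 \gm]$. Choosing $\delta$ small enough in terms of the ellipticity constants provided by Theorem \ref{thm.psc_nc} (and of $\hm$-bounds on $\Dm^2 t$ and $dt^2$, which are available by the compactness noted in Remark \ref{rmkl.carleman_bounded}) preserves positive-definiteness throughout. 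This choice of $\delta$ determines the constant $K$ in \eqref{eq.carleman_pseudoconvex}, and combining with the reductions of the first paragraph then yields \eqref{eq.carleman_pseudoconvex}.
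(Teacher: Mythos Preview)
Your proposal is correct and follows essentially the same route as the paper's proof: decompose in the $g$-orthonormal frame $(N,V,E_1,\dots,E_{n-1})$, use the component asymptotics of Lemma \ref{thm.carleman_pseudoconvex_pi}, reduce the tangential block to positivity of the boundary form $\mf{Q}=\eta''\,dt^2+\eta'\,\Dm^2 t-\eta\,\gs-\zeta\,\gm$, and then use the $\eta$-ODE together with Theorem \ref{thm.psc_nc} to handle the two regimes $|t|\geq 2\delta$ and $|t|<2\delta$. The only cosmetic differences are that the paper packages the cross-term errors into a single $\mc{E}$ and bounds them by Young's inequality rather than an explicit $\varepsilon$-split, and in the transition region it bounds the full $\eta'$-term crudely by $-C''|\eta'|\,\hm$ rather than first isolating the manifestly nonnegative $\chi|\eta'|$-piece as you do; your organization is arguably slightly cleaner there.
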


\begin{proof}
Let $\mf{T}, \mf{E}_1, \dots, \mf{E}_{ n - 1 }$ be as in the statement of Lemma \ref{thm.carleman_pseudoconvex_pi}.
Note from Propositions \ref{thm.carleman_E} and \ref{thm.carleman_frame} that $\mf{T}, \mf{E}_1, \dots, \mf{E}_{ n - 1 }$ is a $\gm$-orthonormal frame.
For conciseness, we also define
\begin{equation}
\label{eql.carleman_pseudoconvex_1} Z_N := g ( Z, N ) \text{,} \qquad Z_V := - g ( Z, V ) \text{,} \qquad Z_X := g ( Z, E_X ) \text{,} \qquad 1 \leq X < n \text{,}
\end{equation}
as well as the following (local) vertical vector field:
\begin{equation}
\label{eql.carleman_pseudoconvex_3} \ms{Z} := Z_V \cdot \mf{T} + \sum_{ X = 1 }^{ n - 1 } Z_X \cdot \mf{E}_X \text{.}
\end{equation}

Now, from \eqref{eq.carleman_pseudoconvex_pi}, \eqref{eql.carleman_pseudoconvex_1}, and \eqref{eql.carleman_pseudoconvex_3}, we obtain, on $\Omega^c_< \cup \Omega^c_>$, the relation
\begin{align}
\label{eql.carleman_pseudoconvex_20} \pi_\zeta ( Z, Z ) &= Z_V^2 \pi_\zeta ( V, V ) + 2 \sum_{ X = 1 }^{ n - 1 } Z_V Z_X \pi_\zeta ( V, E_X ) + \sum_{ X, Y = 1 }^{ n - 1 } Z_X Z_Y \pi_\zeta ( E_X, E_Y ) \\
\notag &\qquad + Z_N^2 \pi_\zeta ( N, N ) + 2 Z_N Z_V \pi_\zeta ( N, V ) + 2 \sum_{ X = 1 }^{ n - 1 } Z_N Z_X \pi_\zeta ( N, E_X ) \\
\notag &= \rho f^{ n - 1 } [ \eta'' (t) \cdot dt^2 + \eta' (t) \cdot \mf{D}^2 t - \eta (t) \cdot \gs - \zeta \cdot \gm ] ( \ms{Z}, \ms{Z} ) \\
\notag &\qquad - ( n - 1 ) f^{ n - 2 } \cdot Z_N^2 + \mc{E} \text{,}
\end{align}
where the error terms $\mc{E}$ satisfy
\begin{align}
\label{eql.carleman_pseudoconvex_21} \mc{E} &= Z_V^2 \cdot \mc{O} ( \rho f^n ) + \sum_{ X = 1 }^{ n - 1 } Z_V Z_X \cdot \mc{O} ( \rho f^n ) + \sum_{ X, Y = 1 }^{ n - 1 } Z_X Z_Y \cdot \mc{O} ( \rho f^n ) \\
\notag &\qquad + Z_N^2 \cdot \mc{O} ( f^n ) + Z_N Z_V \cdot \mc{O} ( \rho f^n ) + \sum_{ X = 1 }^{ n - 1 } Z_N Z_X \cdot \mc{O} ( \rho f^n ) \\
\notag &\geq Z_V^2 \cdot \mc{O} ( \rho f^n ) + \sum_{ X = 1 }^{ n - 1 } Z_X^2 \cdot \mc{O} ( \rho f^n ) + Z_N^2 \cdot \mc{O} ( f^n ) \text{.}
\end{align}

First, whenever $| t | \geq 2 \delta$, we apply Theorem \ref{thm.psc_nc}, \eqref{eq.carleman_eta_minus}, \eqref{eql.carleman_zeta}, and \eqref{eql.carleman_pseudoconvex_20} in order to obtain
\footnote{Note that $\eta' (t) > 0$ whenever $t < 0$, and that $\eta' (t) < 0$ whenever $t > 0$.}
\begin{align}
\label{eql.carleman_pseudoconvex_30} \pi_\zeta ( Z, Z ) &= - ( n - 1 ) f^{ n - 2 } \cdot Z_N^2 + \rho f^{ n - 1 } \eta (t) [ - \gs - c^2 \cdot dt^2 - \zeta_0 \cdot \gm ] ( \ms{Z}, \ms{Z} ) + \mc{E} \\
\notag &\qquad + \begin{cases} \rho f^{ n - 1 } | \eta' (t) | [ \mf{D}^2 t + 2 b \cdot dt^2 - \zeta_- \cdot \gm ] ( \ms{Z}, \ms{Z} ) & t < 0 \\ \rho f^{ n - 1 } | \eta' (t) | [ - \mf{D}^2 t + 2 b \cdot dt^2 - \zeta_+ \cdot \gm ] ( \ms{Z}, \ms{Z} ) & t > 0 \end{cases} \\
\notag &\geq - ( n - 1 ) f^{ n - 2 } \cdot Z_N^2 + K' \rho f^{ n - 1 } [ \eta (t) + | \eta' (t) | ] \hm ( \ms{Z}, \ms{Z} ) + \mc{E} \text{,}
\end{align}
where $K'$ depends on $\Upsilon$.
Next, a similar analysis when $|t| < 2 \delta$ yields
\begin{align}
\label{eql.carleman_pseudoconvex_31} \pi_\zeta ( Z, Z ) &\geq - ( n - 1 ) f^{ n - 2 } \cdot Z_N^2 + \rho f^{ n - 1 } \eta (t) [ - \gs - c^2 \cdot dt^2 - \zeta_0 \cdot \gm ] ( \ms{Z}, \ms{Z} ) + \mc{E} \\
\notag &\qquad + \begin{cases} \rho f^{ n - 1 } | \eta' (t) | [ \mf{D}^2 t + 2 b \cdot dt^2 - \chi (t) \zeta_- \cdot \gm ] ( \ms{Z}, \ms{Z} ) & t < 0 \\ \rho f^{ n - 1 } | \eta' (t) | [ - \mf{D}^2 t + 2 b \cdot dt^2 - \chi (t) \zeta_+ \cdot \gm ] ( \ms{Z}, \ms{Z} ) & t > 0 \end{cases} \\
\notag &\geq - ( n - 1 ) f^{ n - 2 } \cdot Z_N^2 + \rho f^{ n - 1 } [ K'' \eta (t) - C'' | \eta' (t) | ] \hm ( \ms{Z}, \ms{Z} ) + \mc{E} \text{,}
\end{align}
where $K''$ and $C''$ again depend on $\Upsilon$.

Observe \eqref{eq.carleman_eta} implies that everywhere on $\Omega_0$,
\begin{equation}
\label{eql.carleman_pseudoconvex_32} \eta (t) + | \eta' (t) | \simeq 1 \text{.}
\end{equation}
Moreover, if $\delta$ is sufficiently small (with respect to $\Upsilon$), then Definition \ref{def.carleman_eta} yields
\[
\eta (t) \simeq 1 \text{,} \qquad | \eta' (t) | \ll_\Upsilon 1 \text{,} \qquad |t| < 2 \delta \text{.}
\]
Combining \eqref{eql.carleman_pseudoconvex_21}--\eqref{eql.carleman_pseudoconvex_32} and the above, we conclude there exists $K > 0$, depending on $\Upsilon$, with
\begin{align}
\label{eql.carleman_pseudoconvex_40} \pi_\zeta ( Z, Z ) &\geq K \rho f^{ n - 1 } \hm ( \ms{Z}, \ms{Z} ) - ( n - 1 ) f^{ n - 2 } \cdot Z_N^2 + Z_N^2 \cdot \mc{O} ( f^n ) \\
\notag &\qquad + Z_V^2 \cdot \mc{O} ( \rho f^n ) + \sum_{ X = 1 }^{ n - 1 } Z_X^2 \cdot \mc{O} ( \rho f^n ) \text{.}
\end{align}

Finally, note that \eqref{eq.aads_riemann} and \eqref{eql.carleman_pseudoconvex_3} together imply
\[
\hm ( \ms{Z}, \ms{Z} ) = Z_V^2 + \sum_{ X = 1 }^{ n - 1 } Z_X^2 \text{.}
\]
The desired result \eqref{eq.carleman_pseudoconvex} now follows from \eqref{eq.carleman_fstar}, \eqref{eql.carleman_pseudoconvex_1}, \eqref{eql.carleman_pseudoconvex_40}, and the above.
\end{proof}

\subsubsection{Preliminary Estimates}

For convenience, we define the following quantities:
\begin{equation}
\label{eql.carleman_S} S := f^{ n - 3 } \nabla^\sharp f \text{,} \qquad v_\zeta := f^{ n - 3 } w_\zeta + \frac{1}{2} \nabla_\alpha S^\alpha \text{.}
\end{equation}
Moreover, we define the Carleman weight $e^{-F}$ and the auxiliary unknown $\psi$ by
\begin{equation}
\label{eql.carleman_weight} F := \kappa \, \log f + \lambda p^{-1} f^p \text{,} \qquad \psi := e^{-F} \phi \text{.}
\end{equation}
The next step is to collect some asymptotic bounds that will be useful later on.

\begin{lemma} \label{thm.carleman_asymp}
The following asymptotic relations hold on $\Omega^c_< \cup \Omega^c_>$:
\begin{align}
\label{eq.carleman_asymp} g^{ \alpha \beta } \nabla_\alpha f \nabla_\beta f = f^2 + \mc{O} ( f^4 ) \text{,} &\qquad \Box f = - ( n - 1 ) f + \mc{O} ( f^3 ) \text{,} \\
\notag \nabla^\sharp f = [ f + \mc{O} ( f^3 ) ] N \text{,} &\qquad S = [ f^{ n - 2 } + \mc{O} ( f^n ) ] N \text{.}
\end{align}
Moreover, the following expansions---given with respect to a fixed finite family of compact coordinate systems that cover $\mi{D}^c$ (see Remark \ref{rmkl.carleman_bounded})---hold on $\Omega^c_< \cup \Omega^c_>$ for each $1 \leq X < n$:
\begin{equation}
\label{eq.carleman_frame_asymp} N = \mc{O} ( \rho ) \, \partial_\rho + \sum_a \mc{O} ( \rho f ) \, \partial_a \text{,} \qquad V = \mc{O} ( \rho f ) \, \partial_\rho + \sum_a \mc{O} ( \rho ) \, \partial_a \text{,} \qquad E_X = \sum_a \mc{O} ( \rho ) \, \partial_a \text{.}
\end{equation}
\end{lemma}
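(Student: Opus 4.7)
The plan is to derive all four families of asymptotic relations by direct computation, feeding the explicit identities of Proposition \ref{thm.carleman_f} and the frame formulas of Propositions \ref{thm.carleman_E} and \ref{thm.carleman_frame} into the strongly FG-aAdS asymptotics. The unifying observation is that on $\Omega^c$ one has $\rho = f \cdot \eta(t)$, with $\eta$ bounded above and below by Proposition \ref{thm.carleman_eta}, so any factor of $\rho$ may be traded for $f$ up to a bounded multiplicative constant. Combined with the compactness of $\mi{D}_c$ from Remark \ref{rmkl.carleman_bounded} (which uniformizes the bounds on $\gv$, $\gv^{-1}$, $\Dv^\sharp t$, $\Dv^2 t$, and their limits), this turns the task into bookkeeping of $\rho$-powers.

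For the first two identities, I would plug into Proposition \ref{thm.carleman_f} directly. The formula for $g(\nabla^\sharp f, \nabla^\sharp f)$ has the explicit leading term $f^2$, and the remainder is $f^4 [\eta'(t)]^2 \gv(\Dv^\sharp t, \Dv^\sharp t) = \mc{O}(f^4)$ because the vertical factor is uniformly bounded by \eqref{eq.aads_time} on $\Omega^c$. For $\Box f$, the same formula identifies the leading term $-(n-1)f$; each remaining term---$f^3 [\eta'(t)]^2 \gv(\Dv^\sharp t, \Dv^\sharp t)$, $\rho f^2 \eta''(t) \gv(\Dv^\sharp t, \Dv^\sharp t)$, $\rho f \cdot \trace{\gv} \mi{L}_\rho \gv$, and $\rho f^2 \eta'(t) \cdot \trace{\gv} \Dv^2 t$---is $\mc{O}(f^3)$. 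The only nontrivial bookkeeping here is the third term: using $\mi{L}_\rho \gv \rightarrow^2 0$ together with the local $C^0$-boundedness of $\mi{L}_\rho^2 \gv$, Taylor's theorem gives $\mi{L}_\rho \gv = \mc{O}(\rho)$ uniformly on $\Omega^c$, so $\rho f \cdot \trace{\gv} \mi{L}_\rho \gv = \mc{O}(\rho^2 f) = \mc{O}(f^3)$.

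For the third and fourth identities, I would compare the formula for $\nabla^\sharp f$ in Proposition \ref{thm.carleman_f} with the defining formula for $N$ in \eqref{eq.carleman_NV}. One sees immediately that
\[
\nabla^\sharp f \;=\; f \cdot \bigl\{ 1 + f^2 [\eta'(t)]^2 \gv(\Dv^\sharp t, \Dv^\sharp t) \bigr\}^{1/2} \, N ,
\]
and a Taylor expansion of $(1 + x)^{1/2}$, applied to the $\mc{O}(f^2)$-bounded quantity in braces, gives $\nabla^\sharp f = [f + \mc{O}(f^3)] N$. The identity for $S$ then follows by multiplying by $f^{n-3}$.

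The frame expansions \eqref{eq.carleman_frame_asymp} are a direct readout from Proposition \ref{thm.carleman_frame} and Proposition \ref{thm.carleman_E}. Working with any of the finitely many compact coordinate charts covering $\mi{D}_c$, the coefficients of $\partial_\rho$ and $\partial_a$ in \eqref{eq.carleman_NV} are products of the normalizing factor $\{1 + f^2 [\eta'(t)]^2 \gv(\Dv^\sharp t, \Dv^\sharp t)\}^{-1/2}$ (which is bounded since $f \leq f_\ast \ll 1$) with $\rho$, $f$, and components of $\Dv^\sharp t$, all uniformly controlled on $\Omega^c$; reading off the $\rho$- and $f$-powers yields the asserted decay rates for $N$ and $V$. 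Finally, the expansion for $E_X$ follows from the limit $\rho^{-1} E_X \rightarrow^0 \mf{E}_X$ in Proposition \ref{thm.carleman_E}. The only real subtlety throughout is ensuring the $\mc{O}$-notation constants can be chosen uniformly on $\Omega^c$, which reduces to the compactness of $\mi{D}_c$ and the uniform $C^M$-bounds guaranteed by Definition \ref{def.aads_strong} and Proposition \ref{thm.aads_geom_limit}; no genuinely new estimate is required beyond those already collected.
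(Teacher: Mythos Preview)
Your proposal is correct and follows essentially the same approach as the paper, which simply cites the identities in Proposition~\ref{thm.carleman_f} and the frame formulas in Propositions~\ref{thm.carleman_E} and~\ref{thm.carleman_frame}, together with the asymptotics \eqref{eq.aads_strong_limits}, Proposition~\ref{thm.aads_geom_limit}, the smallness assumption \eqref{eq.carleman_fstar}, and Remark~\ref{rmkl.carleman_bounded}. You have written out in detail the bookkeeping that the paper leaves implicit, and your individual estimates are all correct.
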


\if\comp1

\begin{proof}
See Appendix \ref{sec.carleman_asymp}.
\end{proof}

\fi

\begin{lemma} \label{thm.carleman_error_v}
The following estimates hold on $\Omega^c_< \cup \Omega^c_>$:
\begin{align}
\label{eq.carleman_error_v} | v_\zeta | = \mc{O} ( f^n ) \text{,} &\qquad | \nabla_\rho v_\zeta | = \mc{O} ( \rho^{-1} f^n ) \text{,} \\
\notag | \Dv v_\zeta |_\hv = \mc{O} ( \rho^{-1} f^{ n + 1 } ) \text{,} &\qquad | \Box v_\zeta | = \mc{O} ( f^n ) \text{.}
\end{align}
\end{lemma}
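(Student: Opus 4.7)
The plan is to exploit the precise cancellation built into the definition of $v_\zeta$, and then to propagate this cancellation under differentiation by working from an explicit expansion.

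First, expand the divergence: since $S^\alpha = f^{n-3}\nabla^\alpha f$, one has
\begin{equation*}
\tfrac{1}{2}\nabla_\alpha S^\alpha = \tfrac{n-3}{2}\, f^{n-4}\, g(\nabla^\sharp f, \nabla^\sharp f) + \tfrac{1}{2}\, f^{n-3}\, \Box f.
\end{equation*}
Substituting the identities of Lemma \ref{thm.carleman_asymp} gives $\tfrac{1}{2}\nabla_\alpha S^\alpha = -f^{n-2} + \mc{O}(f^n)$. Meanwhile, $f^{n-3} w_\zeta = f^{n-2} + \rho\zeta f^{n-1}$ by definition. The two $f^{n-2}$ contributions cancel, leaving $v_\zeta = \rho\zeta f^{n-1} + \mc{O}(f^n)$. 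Using the Fefferman--Graham relation $\rho = f\eta(t)$ and the fact that $\eta$ and $\zeta$ are bounded on the compact region $\mi{D}_c$ of Remark \ref{rmkl.carleman_bounded}, this becomes $v_\zeta = \eta(t)\zeta f^{n} + \mc{E}$, where $\mc{E}$ collects the explicit error contributions from Proposition \ref{thm.carleman_f} (terms involving $[\eta'(t)]^2\gv(\Dv^\sharp t,\Dv^\sharp t)$, $\rho\,\eta''(t)$, $\trace{\gv}\mi{L}_\rho\gv$, and $\rho\,\eta'(t)\trace{\gv}\Dv^2 t$). Each summand of $\mc{E}$ is manifestly $\mc{O}(f^n)$, yielding the first bound.

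For the derivative bounds, I would differentiate the explicit expansion $v_\zeta = \eta\zeta f^n + \mc{E}$ term by term, rather than re-differentiating the abstract formula $f^{n-3}w_\zeta + \tfrac{1}{2}\nabla_\alpha S^\alpha$. Each application of $\partial_\rho$ converts $f^k$ to $k f^{k-1}/\eta(t) = k f^k/\rho$, while each coordinate derivative tangent to $\mi{I}$ either hits $t$ (introducing a bounded factor $\eta^{(j)}$) or one of the geometric coefficients, which by Remark \ref{rmkl.carleman_bounded} and Remark \ref{rmk.aads_smooth} is uniformly $C^3$-bounded on $\Omega^c$. This directly gives $|\nabla_\rho v_\zeta| = \mc{O}(\rho^{-1}f^n)$; the tangential estimate $|\Dv v_\zeta|_\hv = \mc{O}(\rho^{-1}f^{n+1})$ follows in the same way, noting that the leading piece $\eta(t)\zeta f^n$ is $\rho$-independent in its $f$-factor (since $f = \rho/\eta$) so that $\Dv f^n$ hits only through $\eta(t)$, producing a factor $f^{n+1}/\rho$ from $\partial_t f^n = -n f^n\eta'/\eta$, and that $\hv^{ab}$ is uniformly bounded (Proposition \ref{thm.aads_tensor_norm}).

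For the final bound on $\Box v_\zeta$, I would again work from the expansion and invoke $g^{-1} = \rho^2(\partial_\rho^{\otimes 2} + \gv^{-1})$ from \eqref{eq.aads_metric}: two $\partial_\rho$'s on $f^n$ generate a factor $\rho^{-2}f^n$, which is exactly compensated by the $\rho^2$ factor in $g^{-1}$, while tangential contributions to $\Box$ involve only bounded objects from $\gv^{-1}$ and the Christoffel symbols (see Definition \ref{def.aads_vertical_christoffel}), giving $|\Box v_\zeta| = \mc{O}(f^n)$. The main obstacle is bookkeeping: one must verify that none of the seemingly lower-power terms ($f^{n-1}$ or $f^{n-2}$) reappear when differentiating the error terms in $\mc{E}$ --- but each such term already carries an explicit factor of $\rho \simeq f$ or $f^2$ from Proposition \ref{thm.carleman_f} beyond the nominal $f^{n-2}$ scale, so the derivatives stay at or above $f^n$ after the two-derivative counting. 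Because the expansion of $\mc{E}$ is explicit and all coefficients are controlled by the $C^3$-norm of $\gv$ and $C^4$-norm of $t$ (hence within the regularity assumed in Remark \ref{rmk.aads_smooth}), no commutator or curvature contribution can upgrade the power of $f$.
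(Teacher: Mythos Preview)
Your approach is correct and essentially the same as the paper's: both expand $v_\zeta$ explicitly using the formulas from Proposition~\ref{thm.carleman_f}, observe the cancellation of the $f^{n-2}$ contributions from $f^{n-3}w_\zeta$ and $\tfrac{1}{2}\nabla_\alpha S^\alpha$, and then differentiate the resulting explicit expansion term by term (using $|\partial_\rho f|\lesssim \rho^{-1}f$, $|\partial_a f|\lesssim \rho^{-1}f^2$, and the $C^3$ regularity of $\gv$, $t$ from Remark~\ref{rmkl.carleman_bounded}). The paper writes out the intermediate expansion more concretely as a sum of three explicit terms before differentiating, but the strategy and the bookkeeping are identical to what you describe.
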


\if\comp1

\begin{proof}
See Appendix \ref{sec.carleman_error_v}.
\end{proof}

\fi

We will also need asymptotic estimates for the mixed curvature $\bar{R}$ and for the metric $\hv$:

\begin{lemma} \label{thm.carleman_error_R}
The following hold on $\Omega^c_< \cup \Omega^c_>$ for any rank $( k, l )$ vertical tensor field $\ms{A}$:
\begin{equation}
\label{eq.carleman_error_R} | \bar{R}_{ N V } [ \ms{A} ] |_\hv \leq ( k + l ) \mc{O} ( \rho^2 f ) \, | \ms{A} |_\hv \text{,} \qquad | \bar{R}_{ N E_X } [ \ms{A} ] | \leq ( k + l ) \mc{O} ( \rho^2 f ) \, | \ms{A} |_\hv \text{,} \qquad 1 \leq X < n \text{.}
\end{equation}
\end{lemma}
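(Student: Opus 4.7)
The plan is to reduce the desired estimates to a direct computation, using three inputs: the coordinate formulas for the mixed curvature from Proposition \ref{thm.aads_vertical_curv}, the frame asymptotics from Lemma \ref{thm.carleman_asymp}, and the strong FG-aAdS structure from Definition \ref{def.aads_strong}. First, I would expand $\bar{R}_{NV}[\ms{A}]$ in coordinates as $N^\alpha V^\beta \bar{R}_{\alpha\beta}[\ms{A}]$. The antisymmetry $\bar{R}_{\rho\rho}[\ms{A}] = 0$ kills one component, leaving three types of terms of the form $N^\rho V^a \bar{R}_{\rho a}[\ms{A}]$, $N^a V^\rho \bar{R}_{a\rho}[\ms{A}]$, and $N^a V^b \bar{R}_{ab}[\ms{A}]$.

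Next, I would bound the coefficient sizes of these curvature components, working on the compact region $\mi{D}_c$ from Remark \ref{rmkl.carleman_bounded}, where $\hv$-norms and coordinate components are uniformly comparable. The first formula in Proposition \ref{thm.aads_vertical_curv} shows that $\bar{R}_{ab}[\ms{A}]$ is a sum of $k + l$ contractions of the vertical Riemann tensor $\Rv$ with $\ms{A}$; since $\Rv$ is locally bounded in $C^1$ by Proposition \ref{thm.aads_geom_limit}, one obtains $|\bar{R}_{ab}[\ms{A}]|_\hv \lesssim (k + l) \, |\ms{A}|_\hv$. For the mixed component $\bar{R}_{\rho a}[\ms{A}]$, one needs an extra factor of $\rho$: Definition \ref{def.aads_strong} provides $\mi{L}_\rho \gv \rightarrow^2 0$ together with $\mi{L}_\rho^2 \gv \rightarrow^1 \gs$, so integrating in $\rho$ yields $\mi{L}_\rho \gv = \mc{O}(\rho)$ along with its first two derivatives, and combined with the uniform bound on $\gv^{-1}$ this gives $|\bar{R}_{\rho a}[\ms{A}]|_\hv \lesssim (k + l) \rho \, |\ms{A}|_\hv$.

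Finally, combining these coefficient bounds with the frame asymptotics $N = \mc{O}(\rho) \, \partial_\rho + \sum_a \mc{O}(\rho f) \, \partial_a$ and $V = \mc{O}(\rho f) \, \partial_\rho + \sum_a \mc{O}(\rho) \, \partial_a$ produces contributions of respective sizes $\mc{O}(\rho^3)$, $\mc{O}(\rho^3 f^2)$, and $\mc{O}(\rho^2 f)$, all multiplied by $(k + l) \, |\ms{A}|_\hv$. Since $\rho \leq f$ on $\Omega^c$ (as $f = \rho / \eta(t)$ with $\eta \leq 1$) and $\rho f \ll 1$, the last term dominates, yielding the claimed bound for $\bar{R}_{NV}[\ms{A}]$. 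The estimate for $\bar{R}_{N E_X}[\ms{A}]$ is essentially identical; the only change is that $E_X$ has no $\partial_\rho$ component, which simply removes two of the three terms and leaves a bound of the same form. The only subtle point is justifying the $\rho$-gain in $\bar{R}_{\rho a}[\ms{A}]$, which is the principal way the strong FG-aAdS hypotheses enter the argument; beyond this, the proof is a routine computation.
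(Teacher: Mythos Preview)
Your proposal is correct and follows essentially the same approach as the paper's proof: bound the coordinate components $|\bar{R}_{ab}[\ms{A}]|$ and $|\bar{R}_{\rho a}[\ms{A}]|$ via Proposition~\ref{thm.aads_vertical_curv} (using boundedness of $\Rv$ and the $\mc{O}(\rho)$ decay of $\Dv \mi{L}_\rho \gv$ from the strong FG-aAdS hypotheses), then contract against the frame asymptotics \eqref{eq.carleman_frame_asymp} and cover $\mi{D}_c$ by finitely many compact coordinate systems. One minor remark: only one vertical derivative of $\mi{L}_\rho \gv$ is needed for the $\bar{R}_{\rho a}$ bound (the formula in Proposition~\ref{thm.aads_vertical_curv} involves $\Dv \mi{L}_\rho \gv$, not $\Dv^2 \mi{L}_\rho \gv$), so your ``first two derivatives'' slightly overstates what is required, but this does not affect the argument.
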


\if\comp1

\begin{proof}
See Appendix \ref{sec.carleman_error_R}.
\end{proof}

\fi

\begin{lemma} \label{thm.carleman_error_h}
The following hold on $\Omega^c_< \cup \Omega^c_>$ for any $1 \leq X < n$:
\begin{align}
\label{eq.carleman_error_h} | \nablam_\rho \hv |_\hv = \mc{O} ( \rho ) \text{,} &\qquad | \nablam_N \hv |_\hv = \mc{O} ( \rho f ) \\
\notag | \nablam_V \hv |_\hv + | \nablam_{ E_X } \hv |_\hv \leq \mc{O} ( \rho ) \, | \Dm^2 t |_\hv + \mc{O} ( \rho f ) \text{,} &\qquad | \Boxm \hv |_\hv = \mc{O} ( \rho^2 ) \text{.}
\end{align}
\end{lemma}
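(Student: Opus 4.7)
The strategy will be to decompose $\hv$ so that the compatibility properties of $\nablam$ can be exploited. By \eqref{eq.aads_riemann_vertical}, we may write
\begin{equation*}
\hv = \gv + \alpha \, dt \otimes dt \text{,} \qquad \alpha := -\frac{2}{\gv ( \Dv^\sharp t, \Dv^\sharp t )} \text{,}
\end{equation*}
with $\alpha$ a smooth scalar on $\mi{M}$ that, together with its coordinate derivatives, is uniformly bounded on $\Omega^c$ by \eqref{eq.aads_time} and Remark \ref{rmkl.carleman_bounded}. Since $\nablam \gv = 0$ by Proposition \ref{thm.aads_mixed_connection}, the Leibniz rule \eqref{eq.aads_mixed_connection_leibniz} reduces $\nablam_X \hv$ to
\begin{equation*}
\nablam_X \hv = ( X \alpha ) \, dt \otimes dt + \alpha \, \bigl[ ( \nablam_X dt ) \otimes dt + dt \otimes ( \nablam_X dt ) \bigr] \text{,}
\end{equation*}
so it will be enough to control $X \alpha$ and $\nablam_X dt$ for $X \in \{ \partial_\rho, N, V, E_1, \dots, E_{ n - 1 } \}$, noting that $| dt |_\hv \simeq 1$ on $\Omega^c$ by \eqref{eq.aads_time} and \eqref{eq.aads_riemann_vertical}.

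The main quantitative input will come from Taylor expansion in $\rho$: the limits \eqref{eq.aads_strong_limits} yield $\mi{L}_\rho \gv = \mc{O} ( \rho )$ and $\gv - \gm = \mc{O} ( \rho^2 )$, and the latter propagates through \eqref{eq.aads_vertical_christoffel} to upgrade the qualitative limit of Proposition \ref{thm.aads_geom_limit} to the quantitative bound $\Dv^2 t - \Dm^2 t = \mc{O} ( \rho^2 )$. Inserting these into \eqref{eq.aads_vertical_connection} then gives
\begin{equation*}
\bar{\Dv}_\rho ( dt )_a = - \tfrac{1}{2} \gv^{ b c } \mi{L}_\rho \gv_{ c a } \, \partial_b t \text{,} \qquad \bar{\Dv}_b ( dt )_a = \Dv^2_{ a b } t \text{,}
\end{equation*}
so that $| \nablam_\rho dt |_\hv = \mc{O} ( \rho )$, while every vertical derivative of $dt$ is controlled by $| \Dm^2 t |_\hv + \mc{O} ( \rho^2 )$. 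A parallel analysis of $\alpha$ gives $\partial_\rho \alpha = \mc{O} ( \rho )$ and $\partial_a \alpha = \mc{O} ( 1 )$, with the leading part of $\partial_a \alpha$ again expressible through $\Dv^2 t$.

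These ingredients will combine with the frame expansions \eqref{eq.carleman_frame_asymp}. For $\nablam_\rho \hv$, every nonzero contribution carries the $\mi{L}_\rho \gv = \mc{O} ( \rho )$ gain, producing $| \nablam_\rho \hv |_\hv = \mc{O} ( \rho )$; for $\nablam_N \hv$, the $\mc{O} ( \rho ) \partial_\rho$ part of $N$ contributes $\mc{O} ( \rho^2 )$ while the $\mc{O} ( \rho f ) \partial_a$ part contributes $\mc{O} ( \rho f )$, and the latter dominates since $\rho \lesssim f$ on $\Omega^c$. For $\nablam_V \hv$ and $\nablam_{ E_X } \hv$, the leading piece of each frame vector is the vertical $\mc{O} ( \rho ) \partial_a$ part, whose contraction with $\bar{\Dv} dt$ and with $\partial \alpha$ brings out a factor of $\rho$ multiplying $\Dv^2 t = \Dm^2 t + \mc{O} ( \rho^2 )$; the quadratic remainder is then absorbed into $\mc{O} ( \rho f )$ via $\rho \lesssim f$, yielding the third bound in \eqref{eq.carleman_error_h}. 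Finally, $\Boxm \hv = g^{ \alpha \beta } \nablam^2_{ \alpha \beta } \hv$ inherits the $\mc{O} ( \rho^2 )$ scaling of $g^{-1}$ from \eqref{eq.aads_metric}, and iterating the first-order calculation at the level of two derivatives will give $| \Boxm \hv |_\hv = \mc{O} ( \rho^2 )$.

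The main technical delicacy will lie in the $\Boxm \hv$ estimate: the spacetime Christoffel symbols of $g$ contain the singular coefficients $\Gamma^\rho_{ \rho \rho } = - \rho^{-1}$ and $\Gamma^a_{ \rho b } = - \rho^{-1} \delta^a_b + \mc{O} ( \rho )$, so further $\nabla$-derivatives applied to $\nablam \hv$ risk losing $\rho$-powers through connection insertions. The point will be that each singular connection insertion lands either on a $\nablam_\rho \hv$ factor or on $\hv$ itself, whose own $\rho$-power is strong enough to absorb the $\rho^{-1}$ while leaving the overall $\rho^2$ gain from $g^{-1}$ intact.
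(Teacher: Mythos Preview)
Your proposal is correct and follows essentially the same route as the paper. Both arguments reduce $\nablam \hv$ to derivatives of $t$ via the formula \eqref{eq.aads_riemann_vertical}, use $\bar{\Dv}_\rho (dt)_a = -\tfrac{1}{2} \gv^{bc} \mi{L}_\rho \gv_{ca} \partial_b t = \mc{O}(\rho)$ and $\bar{\Dv}_b(dt)_a = \Dv^2_{ab} t$ for the first-order bounds, and then combine with the frame expansions \eqref{eq.carleman_frame_asymp}; your explicit use of $\nablam \gv = 0$ to kill the $\gv$-summand is a clean way to organize what the paper does by differentiating \eqref{eq.aads_riemann_vertical} directly. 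For $\Boxm \hv$ the paper makes exactly the cancellation you anticipate: the singular $\rho^{-1}$ in $\Gamma^\rho_{\rho\rho}$ and $\Gamma^\rho_{ab}$ lands on $\Dvm_\rho \hv = \mc{O}(\rho)$, so that $\nablam_{\rho\rho}\hv$ and $\nablam_{ab}\hv$ are each $\mc{O}(1)$ in coordinates before the $\rho^2$ from $g^{-1}$ is applied.
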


\if\comp1

\begin{proof}
See Appendix \ref{sec.carleman_error_h}.
\end{proof}

\fi

Let us now define the following operators:
\begin{equation}
\label{eql.carleman_L} \mc{L} := e^{-F} ( \Boxm + \sigma ) e^F \text{,} \qquad \bar{S}_\zeta := \bar{\nabla}_S + v_\zeta \text{.}
\end{equation}
A direct computation using \eqref{eql.carleman_S}, \eqref{eql.carleman_weight}, and \eqref{eql.carleman_L} then yields
\begin{equation}
\label{eql.carleman_LA} \mc{L} = \Boxm + 2 F' f^{ -n + 3 } \nablam_S + \mc{A} \text{,} \qquad \mc{A} := [ ( F' )^2 + F'' ] g^{ \alpha \beta } \nabla_\alpha f \nabla_\beta f + F' \Box f + \sigma \text{.}
\end{equation}

\begin{lemma} \label{thm.carleman_error_A}
The following relations hold on $\Omega^c_< \cup \Omega^c_>$:
\begin{align}
\label{eq.carleman_error_A} \mc{A} &= ( \kappa^2 - n \kappa + \sigma ) + ( 2 \kappa - n + p ) \lambda f^p + \lambda^2 f^{ 2 p } + \lambda^2 \, \mc{O} ( f^2 ) \text{,} \\
\notag - \frac{1}{2} \nabla_\beta ( \mc{A} S^\beta ) &= ( \kappa^2 - n \kappa + \sigma ) f^{ n - 2 } + \left( 1 - \frac{p}{2} \right) ( 2 \kappa - n + p ) \lambda f^{ n - 2 + p } \\
\notag &\qquad + ( 1 - p ) \lambda^2 f^{ n - 2 + 2 p } + \lambda^2 \, \mc{O} ( f^n ) \text{.}
\end{align}
\end{lemma}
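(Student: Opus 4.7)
The plan is to establish both identities by direct computation, substituting the definitions of $F$, $\mc{A}$, and $S$ and exploiting the asymptotic expansions already packaged in Lemma \ref{thm.carleman_asymp}. The first identity will be a straightforward substitution, and the second will follow by applying the Leibniz rule to $\nabla_\beta ( \mc{A} S^\beta )$ and reusing the expansion obtained in the first step.

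For the first identity, I will differentiate $F = \kappa \log f + \lambda p^{-1} f^p$ (as a function of $f$) to get $F' = \kappa f^{-1} + \lambda f^{p-1}$ and $F'' = -\kappa f^{-2} + (p-1) \lambda f^{p-2}$, so that
\[
( F' )^2 + F'' = ( \kappa^2 - \kappa ) f^{-2} + ( 2 \kappa + p - 1 ) \lambda f^{ p - 2 } + \lambda^2 f^{ 2 p - 2 }.
\]
Multiplying this by $g^{ \alpha \beta } \nabla_\alpha f \nabla_\beta f = f^2 + \mc{O} ( f^4 )$ and adding $F' \Box f + \sigma$ with $\Box f = - ( n - 1 ) f + \mc{O} ( f^3 )$, the negative powers of $f$ cancel, and the remaining constant, $f^p$, and $f^{ 2 p }$ terms combine to produce exactly the claimed leading expression. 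The residual errors take the shape $\mc{O} ( \kappa^2 f^2 ) + \mc{O} ( \kappa \lambda f^{ p + 2 } ) + \mc{O} ( \lambda^2 f^{ 2 p + 2 } )$, all of which fit inside $\lambda^2 \, \mc{O} ( f^2 )$ once I invoke $\lambda \gg | \kappa | + | \sigma |$ from \eqref{eq.carleman_lambda} together with $0 < p < \frac{1}{2}$.

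For the second identity, I first compute, again using Lemma \ref{thm.carleman_asymp},
\[
\nabla_\beta S^\beta = ( n - 3 ) f^{ n - 4 } g^{ \alpha \beta } \nabla_\alpha f \nabla_\beta f + f^{ n - 3 } \Box f = - 2 f^{ n - 2 } + \mc{O} ( f^n ),
\]
and expand via Leibniz as $- \frac{1}{2} \nabla_\beta ( \mc{A} S^\beta ) = - \frac{1}{2} \mc{A} \nabla_\beta S^\beta - \frac{1}{2} S^\beta \nabla_\beta \mc{A}$. Multiplying the first identity of the lemma by $f^{ n - 2 } + \mc{O} ( f^n )$ handles $- \frac{1}{2} \mc{A} \nabla_\beta S^\beta$ and yields the unit coefficient at each power of $\lambda$. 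For the remaining piece I will split $\mc{A}$ into its purely $f$-dependent leading part $\mc{A}_{ \mr{L} } ( f ) := ( \kappa^2 - n \kappa + \sigma ) + ( 2 \kappa - n + p ) \lambda f^p + \lambda^2 f^{ 2 p }$ and a remainder of size $\lambda^2 \, \mc{O} ( f^2 )$. Since $S^\beta \nabla_\beta f = f^{ n - 3 } g^{ \alpha \beta } \nabla_\alpha f \nabla_\beta f = f^{ n - 1 } + \mc{O} ( f^{ n + 1 } )$, the chain rule gives
\[
- \tfrac{1}{2} S^\beta \nabla_\beta \mc{A}_{ \mr{L} } = - \tfrac{p}{2} ( 2 \kappa - n + p ) \lambda f^{ n - 2 + p } - p \lambda^2 f^{ n - 2 + 2 p } + \lambda^2 \, \mc{O} ( f^n ),
\]
and adding this to the contribution from $- \frac{1}{2} \mc{A} \nabla_\beta S^\beta$ produces precisely the coefficients $1 - \frac{p}{2}$ and $1 - p$ appearing in the statement.

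The one place that requires genuine care is the contribution of the remainder $\mc{A} - \mc{A}_{ \mr{L} }$ to $S^\beta \nabla_\beta \mc{A}$. Each $\mc{O} ( f^k )$ error in Lemma \ref{thm.carleman_asymp} is a smooth function built from $\gv$, $\Dv^\sharp t$, $\eta ( t )$, and $\eta' ( t )$, all of which are uniformly controlled together with several derivatives on $\Omega^c$ thanks to Remark \ref{rmkl.carleman_bounded} and Proposition \ref{thm.carleman_eta}; hence each such error satisfies $| \nabla \mc{O} ( f^k ) |_g \lesssim_\Upsilon f^k$, and combining this with the magnitude bound $| S |_g \lesssim_\Upsilon f^{ n - 2 }$ yields $| S^\beta \nabla_\beta ( \mc{A} - \mc{A}_{ \mr{L} } ) | \lesssim_\Upsilon \lambda^2 f^n$, which is absorbed into the final $\lambda^2 \, \mc{O} ( f^n )$ error. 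The whole argument is therefore mechanical; the only real obstacle is the bookkeeping needed to keep all the mixed $\lambda$-$\kappa$-$f$ cross terms inside the prescribed $\lambda^2 \, \mc{O} ( f^2 )$ and $\lambda^2 \, \mc{O} ( f^n )$ envelopes, which again relies crucially on the size assumption on $\lambda$ in \eqref{eq.carleman_lambda}.
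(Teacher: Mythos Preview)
Your proof is correct and follows the same overall Leibniz-rule strategy as the paper: compute $\nabla_\beta S^\beta$ via \eqref{eq.carleman_asymp}, then split $-\tfrac12\nabla_\beta(\mc{A}S^\beta)$ into the divergence piece and the transport piece $-\tfrac12\nabla_S\mc{A}$. The one organizational difference is in how you handle $\nabla_S\mc{A}$. You differentiate the \emph{expanded} form of $\mc{A}$ (first identity), isolating the $f$-dependent leading part $\mc{A}_{\mr{L}}$ and arguing separately that the $\lambda^2\,\mc{O}(f^2)$ remainder has $g$-gradient of the same order. The paper instead returns to the \emph{exact} definition $\mc{A} = [(F')^2+F'']\,g^{\alpha\beta}\nabla_\alpha f\nabla_\beta f + F'\,\Box f + \sigma$ and applies $\nabla_S$ to that, producing four explicit terms (involving $F'''$, $\nabla^2 f$, and $S(\Box f)$) which are then estimated one by one using \eqref{eq.carleman_f_grad}, \eqref{eq.carleman_asymp}, and the $\nabla_{NN}f$ expansion from \eqref{eql.carleman_pseudoconvex_pi_20}. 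Your route is shorter, but it leans on the claim that the $\mc{O}(f^k)$ symbols from Lemma~\ref{thm.carleman_asymp} can be differentiated with the same decay; the paper's route avoids this by never differentiating an asymptotic remainder. Both arrive at the same coefficients $1-\tfrac{p}{2}$ and $1-p$ by the same cancellation, so the difference is purely in bookkeeping rigor rather than substance.
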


\if\comp1

\begin{proof}
See Appendix \ref{sec.carleman_error_A}.
\end{proof}

\fi

Finally, recalling $\mf{X}$ and $\mf{X}^\rho$ from the statement of Theorem \ref{thm.carleman}, we then define
\begin{equation}
\label{eql.carleman_XX} \mc{Y} := \rho^2 ( \mf{X}^\rho \partial_\rho + \mf{X} ) \text{,} \qquad \mc{L}^\dagger := e^{ -F } ( \bar{\Box} + \sigma + \nablam_{ \mc{Y} } ) e^F \text{,}
\end{equation}
where $\mc{Y}$ is viewed as a spacetime vector field.

\begin{lemma} \label{thm.carleman_pointwise_J}
Consider the quantity
\footnote{Recall that the definitions of $\hvd$ and $\langle \cdot, \cdot \rangle$ were given in Definition \ref{def.aads_tensor_norm}.}
\begin{equation}
\label{eq.carleman_J} J := \frac{1}{2} \langle \mc{L} ( \hvd \psi ), \bar{S}_\zeta \psi \rangle + \frac{1}{2} \langle \mc{L} \psi, \bar{S}_\zeta ( \hvd \psi ) \rangle \text{.}
\end{equation}
Then, the following pointwise inequality holds everywhere on $\Omega^c_< \cup \Omega^c_>$,
\begin{align}
\label{eq.carleman_pointwise_J} | J | &\leq \lambda^{-1} f^{ n - 2 - p } | \mc{L}^\dagger \psi |_\hv^2 + \mc{C}_o f^{ n - 2 } | \nablam_N \psi |_\hv^2 + \frac{1}{2} \lambda f^{ n - 2 + p } | \nablam_N \psi |_\hv^2 \\
\notag &\qquad + \frac{1}{4} K \rho f^{ n - 1 } \left( | \nablam_V \psi |_\hv^2 + \sum_{ X = 1 }^{ n - 1 } | \nablam_{ E_X } \psi |_\hv^2 \right) + \lambda \, \mc{O} ( f^{ n - 1 } ) \, ( | \nablam_N \psi |_\hv^2 + | \psi |_\hv^2 ) \text{,}
\end{align}
where $K$ is as in the statement of Lemma \ref{thm.carleman_pseudoconvex}, and where the constant $\mc{C}_o \geq 0$ depends on $\Upsilon$.

Furthermore, if $( k + l ) \mf{D}^2 t$, $\mf{X}^\rho$, and $\mf{X}$ all vanish, then \eqref{eq.carleman_pointwise_J} holds with $\mc{C}_o = 0$.
\end{lemma}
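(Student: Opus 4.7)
The estimate \eqref{eq.carleman_pointwise_J} is pointwise, so my plan is to Cauchy--Schwarz the two scalar products defining $J$ and then carefully track all error terms arising from (i) the conversion of $\mc{L}$ to $\mc{L}^\dagger$, (ii) the asymptotic expansion of the multiplier $\bar{S}_\zeta \psi$, and (iii) the non-commutativity of the full $\hv$-dual $\hvd$ with $\mc{L}$ and $\bar{S}_\zeta$.

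First I would convert $\mc{L}$ to $\mc{L}^\dagger$. From \eqref{eql.carleman_L} and \eqref{eql.carleman_XX}, conjugation by $e^F$ gives the operator identity $\mc{L}^\dagger \xi = \mc{L}\xi + \nablam_{\mc{Y}}\xi + (\mc{Y}F)\xi$ on any vertical tensor $\xi$; applying this to $\xi = \psi$ and $\xi = \hvd\psi$, together with $|\hvd\psi|_\hv = |\psi|_\hv$ from \eqref{eq.aads_tensor_norm_product}, yields
\[
\max\bigl(|\mc{L}\psi|_\hv,\ |\mc{L}(\hvd\psi)|_\hv\bigr) \ \leq\ |\mc{L}^\dagger\psi|_\hv + |\nablam_{\mc{Y}}\psi|_\hv + |\mc{Y}F|\,|\psi|_\hv + \mathrm{err}_\hv^{\mathrm{box}},
\]
where $\mathrm{err}_\hv^{\mathrm{box}}$ captures the commutator $[\hvd,\Boxm]\psi$ (present only in the second bound), controlled by $|\nablam \hv|_\hv$ and $|\Boxm \hv|_\hv$ via Lemma \ref{thm.carleman_error_h}. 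Similarly, using $S = [f^{n-2} + \mc{O}(f^n)]\,N$ from \eqref{eq.carleman_asymp} and $v_\zeta = \mc{O}(f^n)$ from \eqref{eq.carleman_error_v}, I would estimate $|\bar{S}_\zeta\psi|_\hv \leq f^{n-2}|\nablam_N\psi|_\hv + \mc{O}(f^n)(|\nablam_N\psi|_\hv + |\psi|_\hv)$, with a companion bound for $|\bar{S}_\zeta(\hvd\psi)|_\hv$ differing by $|[\hvd,\nablam_S]\psi|_\hv$, which by Lemma \ref{thm.carleman_error_h} is $\mc{O}(\rho f^{n-1})|\psi|_\hv$ plus tangential $|\nablam\hv|$-contributions proportional to $(k+l)|\Dm^2 t|_\hv$ (the factor $k+l$ recording the failure of $\Dv$ to be $\hv$-compatible on rank-$(k,l)$ tensors).

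Next, I would expand $\mc{Y} = \rho^2(\mf{X}^\rho\partial_\rho + \mf{X})$ in the orthonormal frame $(N,V,E_1,\dots,E_{n-1})$ using the coordinate asymptotics \eqref{eq.carleman_frame_asymp}, so that $|\nablam_{\mc{Y}}\psi|_\hv$ is bounded by a uniform constant (depending on the $\sup$-norms of $\mf{X}^\rho$ and $\mf{X}$ on the compact region $\mi{D}_c$ of Remark \ref{rmkl.carleman_bounded}) times $\rho$ times a linear combination of $|\nablam_N\psi|_\hv$, $|\nablam_V\psi|_\hv$, and $|\nablam_{E_X}\psi|_\hv$. Combining these estimates via pointwise Cauchy--Schwarz (Proposition \ref{thm.aads_tensor_norm}) in \eqref{eq.carleman_J} produces a dominant product $|\mc{L}^\dagger\psi|_\hv \cdot f^{n-2}|\nablam_N\psi|_\hv$, to which I apply the weighted Young inequality $ab \leq \epsilon\,a^2 + (4\epsilon)^{-1}b^2$ with $\epsilon = \lambda^{-1}f^{n-2-p}$, yielding $\lambda^{-1}f^{n-2-p}|\mc{L}^\dagger\psi|_\hv^2$ together with a portion of the $\lambda f^{n-2+p}|\nablam_N\psi|_\hv^2$ term. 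A second round of Young inequalities then partitions the remaining cross-products into three categories: the $(k+l)\Dm^2 t$, $\mf{X}^\rho$, $\mf{X}$ pieces absorb into $\mc{C}_o f^{n-2}|\nablam_N\psi|_\hv^2$ (and disappear when $(k+l)\Dm^2 t$, $\mf{X}^\rho$, $\mf{X}$ all vanish, giving $\mc{C}_o = 0$); the tangential $|\nablam_V\psi|_\hv$, $|\nablam_{E_X}\psi|_\hv$ contributions are weighted so that their coefficient is at most $\tfrac14 K\rho f^{n-1}$ (matching the pseudoconvex form of Lemma \ref{thm.carleman_pseudoconvex}); and the residual lower-order pieces fold into $\lambda\,\mc{O}(f^{n-1})(|\nablam_N\psi|_\hv^2 + |\psi|_\hv^2)$.

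The principal obstacle will be arranging the final round of Young inequalities so the tangential error coefficient does not exceed $\tfrac14 K$, since otherwise the later absorption of these terms by \eqref{eq.carleman_pseudoconvex} in the global Carleman argument would fail. This requires choosing $\lambda$ large relative to $|\kappa|+|\sigma|$ and shrinking $f_\ast$ in terms of $K$ and the sup-norms of $\Dm^2 t$, $\mf{X}^\rho$, $\mf{X}$ on $\mi{D}_c$, which is exactly the content of the hypotheses \eqref{eq.carleman_fstar}--\eqref{eq.carleman_lambda}.
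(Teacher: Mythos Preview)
Your approach is correct and leads to the same estimate, but the organization differs from the paper's. The paper does not pass through $|\mc{L}(\hvd\psi)|_\hv$ and $|\bar{S}_\zeta(\hvd\psi)|_\hv$ separately; instead it exploits the symmetrization in \eqref{eq.carleman_J} to write an \emph{exact} Leibniz decomposition $J = J_1 + \cdots + J_9$, in which the failure of $\hv$-compatibility appears as explicit ``defect'' terms of the form $\nablam_\alpha\hvm(\cdot,\cdot)$ and $\Boxm\hvm(\cdot,\cdot)$ rather than as commutators $[\hvd,\Boxm]$ and $[\hvd,\nablam_S]$. Each $J_i$ is then estimated individually using Lemmas \ref{thm.carleman_asymp}--\ref{thm.carleman_error_h}, and the final Young-inequality partitioning is exactly the one you describe (the borderline $\mc{C}_o f^{n-2}|\nablam_N\psi|_\hv^2$ term arising from $J_2$ and $J_4$, the tangential piece with coefficient $\tfrac14 K\rho f^{n-1}$, and the remainder going into $\lambda\,\mc{O}(f^{n-1})$). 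Your commutator bookkeeping and the paper's Leibniz bookkeeping track the identical collection of error terms, so the substance is the same; the paper's scheme is perhaps slightly cleaner in that the $\hv$-defect terms are isolated once rather than appearing in both factors. One small imprecision: the tangential $(k+l)|\Dm^2 t|_\hv$ contribution you mention actually enters through the $[\hvd,\Boxm]$ commutator in the $\mc{L}$-factor (the paper's $J_4$), not through $[\hvd,\nablam_S]$ in the $\bar{S}_\zeta$-factor, since $S$ is purely in the $N$-direction and $|\nablam_N\hv|_\hv = \mc{O}(\rho f)$ carries no $\Dm^2 t$ term.
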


\begin{proof}
First, notice that from \eqref{eql.carleman_L} and \eqref{eql.carleman_XX}, we have
\[
\label{eql.carleman_pointwise_J_0} \mc{L} \psi = \mc{L}^\dagger \psi - \nablam_{ \mc{Y} } \psi - F' \mc{Y} f \cdot \psi \text{.}
\]
Applying \eqref{eql.carleman_LA}, \eqref{eq.carleman_J}, and the above, we can then expand
\begin{align}
\label{eql.carleman_pointwise_J_1} J &= \hvm ( \mc{L}^\dagger \psi, \bar{S}_\zeta \psi ) - \hvm ( \nablam_{ \mc{Y} } \psi, \bar{S}_\zeta \psi ) - F' \mc{Y} f \cdot \hvm ( \psi, \bar{S}_\zeta \psi ) \\
\notag &\qquad + g^{ \alpha \beta } \nablam_\alpha \hvm ( \nablam_\beta \psi, \bar{S}_\zeta \psi ) + \frac{1}{2} \Boxm \hvm ( \psi, \bar{S}_\zeta \psi ) + F' f^{ -n + 3 } \nablam_S \hvm ( \psi, \bar{S}_\zeta \psi ) \\
\notag &\qquad + \frac{1}{2} \nablam_S \hvm ( \mc{L}^\dagger \psi, \psi ) - \frac{1}{2} \nablam_S \hvm ( \nablam_{ \mc{Y} } \psi, \psi ) - \frac{1}{2} F' \mc{Y} f \cdot \nabla_S \hvm ( \psi, \psi ) \\
\notag &= J_1 + \dots + J_9 \text{.}
\end{align}

Recalling \eqref{eq.carleman_lambda}, \eqref{eql.carleman_weight}, \eqref{eq.carleman_asymp}, \eqref{eq.carleman_error_v}, and \eqref{eq.carleman_error_h}, we estimate
\begin{align}
\label{eql.carleman_pointwise_J_2} | J_6 | &\leq \lambda \, \mc{O} (1) \, | \nablam_N \hvm |_\hv | \psi |_\hv ( | \nablam_S \psi |_\hv + | v_\zeta | | \psi |_\hv ) \\
\notag &\leq \lambda \, \mc{O} ( \rho f^{ n - 1 } ) | \nablam_N \psi |_\hv | \psi |_\hv + \lambda \, \mc{O} ( \rho f^{ n + 1 } ) | \psi |_\hv^2 \\
\notag &\leq \lambda \, \mc{O} ( \rho f^{ n - 1 } ) \, ( | \nablam_N \psi |_\hv^2 + | \psi |_\hv^2 ) \text{,}
\end{align}
In addition, \eqref{eq.carleman_asymp}, \eqref{eq.carleman_error_v}, and \eqref{eq.carleman_error_h} yield
\begin{align}
\label{eql.carleman_pointwise_J_3} | J_5 | &\leq | \Boxm \hvm |_\hv | \psi |_\hv [ \mc{O} ( f^{ n - 2 } ) \, | \nablam_N \psi |_\hv + \mc{O} ( f^n ) \, | \psi |_\hv ] \\
\notag &\leq \mc{O} ( \rho^2 f^{ n - 2 } ) \, ( | \nablam_N \psi |_\hv^2 + | \psi |_\hv^2 ) \text{.}
\end{align}
For $J_4$, we expand using orthonormal frames and apply \eqref{eq.carleman_lambda}, \eqref{eq.carleman_asymp}, \eqref{eq.carleman_error_v}, and \eqref{eq.carleman_error_h}:
\footnote{The extra factor $k + l$ in the right-hand side arises from the observation that $\hvm$ contains $k + l$ copies of $\hv$ and $\hv^{-1}$.
In particular, when estimating the derivative of $\hvm$, we apply \eqref{eq.carleman_error_h} a total of $k + l$ times.}
\begin{align}
\label{eql.carleman_pointwise_J_4} | J_4 | &\leq ( k + l ) | \Dm^2 t |_\hv \, \mc{O} ( \rho ) \, \left( | \nablam_V \psi |_\hv + \sum_{ X = 1 }^{ n - 1 } | \nablam_{ E_X } \psi |_\hv \right) | \bar{S}_\zeta \psi |_\hv \\
\notag &\qquad + \mc{O} ( \rho f ) \, \left( | \nablam_N \psi |_\hv + | \nablam_V \psi |_\hv + \sum_{ X = 1 }^{ n - 1 } | \nablam_{ E_X } \psi |_\hv \right) | \bar{S}_\zeta \psi |_\hv \\
\notag &\leq ( k + l ) | \Dm^2 t |_\hv \, \mc{O} ( \rho f^{ n - 2 } ) \, \left( | \nablam_V \psi |_\hv + \sum_{ X = 1 }^{ n - 1 } | \nablam_{ E_X } \psi |_\hv \right) | \nablam_N \psi |_\hv \\
\notag &\qquad + \mc{O} ( \rho f^{ n - 1 } ) \, \left( | \nablam_N \psi |_\hv + | \nablam_V \psi |_\hv + \sum_{ X = 1 }^{ n - 1 } | \nablam_{ E_X } \psi |_\hv \right) ( | \nablam_N \psi |_\hv + | \psi |_\hv ) \\
\notag &\leq ( k + l )^2 | \Dm^2 t |_\hv^2 \, \mc{O} ( \rho f^{ n - 3 } ) \cdot | \nablam_N \psi |_\hv^2 + \frac{1}{8} K \rho f^{ n - 1 } \left( | \nablam_V \psi |_\hv^2 + \sum_{ X = 1 }^{ n - 1 } | \nablam_{ E_X } \psi |_\hv^2 \right) \\
\notag &\qquad + \mc{O} ( \rho f^{ n - 1 } ) \, ( | \nablam_N \psi |_\hv^2 + | \psi |_\hv^2 ) \text{.}
\end{align}

Next, using Propositions \ref{thm.carleman_E} and \ref{thm.carleman_frame}, we can expand $\mc{Y}$ in terms of orthonormal frames:
\begin{align}
\label{eql.carleman_pointwise_J_10} \mc{Y} &= ( | \mf{X}^\rho | + | \mf{X} |_\hv ) \, [ \mc{O} ( \rho ) \cdot N + \mc{O} ( \rho ) \cdot V + \sum_{ X = 1 }^{ n - 1 } \mc{O} ( \rho ) \cdot E_X ] \\
\notag &= \mc{O} ( \rho ) \cdot N + \mc{O} ( \rho ) \cdot V + \sum_{ X = 1 }^{ n - 1 } \mc{O} ( \rho ) \cdot E_X \text{.}
\end{align}
Thus, applying \eqref{eq.carleman_lambda}, \eqref{eql.carleman_weight}, \eqref{eq.carleman_asymp}, \eqref{eq.carleman_error_v}, and \eqref{eql.carleman_pointwise_J_10}, we see that
\begin{align}
\label{eql.carleman_pointwise_J_11} | J_3 | &\leq \lambda \, \mc{O} ( f^{-1} ) \, | \mc{Y} f | | \psi |_\hv [ \mc{O} ( f^{ n - 2 } ) \, | \nablam_N \psi |_\hv + \mc{O} ( f^n ) \, | \psi |_\hv ] \\
\notag &\leq \lambda \, \mc{O} ( f^{ n - 2 } \rho ) \, ( | \nablam_N \psi |_\hv^2 + | \psi |_\hv^2 ) \text{.}
\end{align}
A similar process, along with \eqref{eq.carleman_error_h}, yields
\begin{equation}
\label{eql.carleman_pointwise_J_12} | J_9 | \leq \lambda \, \mc{O} ( \rho f^{ n - 2 } ) \, | \nablam_N \hvm |_\hv | \psi |_\hv^2 \leq \lambda \, \mc{O} ( \rho^2 f^{ n - 1 } ) \, | \psi |_\hv^2 \text{.}
\end{equation}
Furthermore, from \eqref{eq.carleman_asymp}, \eqref{eq.carleman_error_h}, and \eqref{eql.carleman_pointwise_J_10}, we obtain
\begin{align}
\label{eql.carleman_pointwise_J_13} | J_8 | &\leq \mc{O} ( \rho^2 f^{ n - 1 } ) \, | \psi |_\hv \sum_{ Z \in \{ N, V, E_1, \dots, E_{ n - 1 } \} } | \nablam_Z \psi |_\hv \\
\notag &\leq \mc{O} ( \rho^3 f^{ n - 1 } ) \sum_{ Z \in \{ N, V, E_1, \dots, E_{ n - 1 } \} } | \nablam_Z \psi |_\hv^2 + \mc{O} ( \rho f^{ n - 1 } ) \, | \psi |_\hv^2 \text{.}
\end{align}
Now, for $J_2$, we again recall \eqref{eq.carleman_asymp}, \eqref{eq.carleman_error_h}, and \eqref{eql.carleman_pointwise_J_10}:
\begin{align}
\label{eql.carleman_pointwise_J_14} | J_2 | &\leq ( | \mf{X}^\rho | + | \mf{X} |_\hv ) \sum_{ Z \in \{ N, V, E_1, \dots, E_{ n - 1 } \} } | \nablam_Z \psi |_\hv [ \mc{O} ( \rho f^{ n - 2 } ) \, | \nablam_N \psi |_\hv + \mc{O} ( \rho f^n ) \, | \psi |_\hv ] \\
\notag &\leq ( | \mf{X}^\rho |^2 + | \mf{X} |_\hv^2 ) \mc{O} ( \rho f^{ n - 3 } ) \cdot | \nablam_N \psi |_\hv^2 + \frac{1}{8} K \rho f^{ n - 1 } \left( | \nablam_V \psi |_\hv^2 + \sum_{ X = 1 }^{ n - 1 } | \nablam_{ E_X } \psi |_\hv^2 \right) \\
\notag &\qquad + \mc{O} ( \rho f^{ n - 2 } ) \, ( | \nablam_N \psi |_\hv^2 + | \psi |_\hv^2 ) \text{.}
\end{align}

Finally, for the remaining two terms, we apply \eqref{eq.carleman_asymp}, \eqref{eq.carleman_error_v}, and \eqref{eq.carleman_error_h} to estimate
\begin{align}
\label{eql.carleman_pointwise_J_20} | J_7 | &\leq \mc{O} ( f^{ n - 2 } ) \, | \nablam_N \hv |_\hv | \mc{L}^\dagger \psi |_\hv | \psi |_\hv \\
\notag &\leq \mc{O} ( \rho f^{ n - 1 } ) \, ( | \mc{L}^\dagger \psi |_\hv^2 + | \psi |_\hv^2 ) \text{,} \\
\notag | J_1 | &\leq | \mc{L}^\dagger \psi |_\hv [ f^{ n - 2 } | \nablam_N \psi |_\hv + \mc{O} ( f^n ) \cdot | \nablam_N \psi | + \mc{O} ( f^n ) \cdot | \psi |_\hv ] \\
\notag &\leq \lambda^{-1} \left[ \frac{1}{2} f^{ n - 2 - p } + \mc{O} ( f^n ) \right] | \mc{L}^\dagger \psi |_\hv^2 + \frac{1}{2} \lambda f^{ n - 2 + p } | \nablam_N \psi |_\hv^2 + \lambda \, \mc{O} ( f^n ) \, ( | \nablam_N \psi |_\hv^2 + | \psi |_\hv^2 ) \text{.}
\end{align}
Combining \eqref{eq.carleman_f}, \eqref{eq.carleman_fstar}, \eqref{eql.carleman_pointwise_J_1}--\eqref{eql.carleman_pointwise_J_4}, and \eqref{eql.carleman_pointwise_J_11}--\eqref{eql.carleman_pointwise_J_20} yields the desired identity \eqref{eq.carleman_pointwise_J}, with
\[
\mc{C}_o \lesssim_\Upsilon ( k + l )^2 | \Dm^2 t |_\hv^2 + | \mf{X}^\rho |^2 + | \mf{X} |_\hv^2 \text{.}
\]
In particular, $\mc{C}_o$ vanishes when $( k + l ) \mf{D}^2 t$, $\mf{X}^\rho$, and $\mf{X}$ all vanish.
\end{proof}

We now set $\mc{C}_o$ in Theorem \ref{thm.carleman} to be the $\mc{C}_o$ from Lemma \ref{thm.carleman_pointwise_J} (note this satisfies the required conditions in Theorem \ref{thm.carleman}).
In particular, we assume \eqref{eq.carleman_kappa} holds with this particular $\mc{C}_o$.

\subsubsection{The Pointwise Estimate}

The next key step of the proof is the following bound for $\psi$:

\begin{lemma} \label{thm.carleman_pointwise_psi}
There exists $\mc{C} > 0$, depending on $\Upsilon$, such that the following holds on $\Omega^c_< \cup \Omega^c_>$,
\begin{align}
\label{eq.carleman_pointwise_psi} \lambda^{-1} f^{ n - 2 - p } | \mc{L}^\dagger \psi |_\hv^2 &\geq \mc{C} f^{ n - 1 } \rho \left( | \nablam_N \psi |_\hv^2 + | \nablam_V \psi |_\hv^2 + \sum_{ X = 1 }^{ n - 1 } | \nablam_{ E_X } \psi |_\hv^2 \right) \\
\notag &\qquad + \frac{1}{4} \lambda^2 f^{ n - 2 + 2 p } \cdot | \psi |_\hv^2 + g^{ \alpha \beta } \nabla_\alpha P_\beta \text{,}
\end{align}
where $P$ is the (spacetime) $1$-form on $\Omega^c_< \cup \Omega^c_>$ given by
\begin{align}
\label{eq.carleman_pointwise_psi_PP} P_\beta &:= \frac{1}{2} [ \langle \nablam_S ( \hvd \psi ), \nablam_\beta \psi \rangle + \langle \nablam_S \psi, \nablam_\beta ( \hvd \psi ) \rangle - g_{ \alpha \beta } S^\alpha \, g^{ \mu \nu } \langle \nablam_\mu ( \hvd \psi ), \nablam_\nu \psi \rangle ] \\
\notag &\qquad + \frac{1}{2} v_\zeta [ \langle \nablam_\beta ( \hvd \psi ), \psi \rangle + \langle \nablam_\beta \psi, \hvd \psi \rangle ] - \frac{1}{2} \nabla_\beta v_\zeta \, \hvm ( \psi, \psi ) + \frac{1}{2} g_{ \alpha \beta } S^\alpha \mc{A} \cdot | \psi |_\hv^2 \\
\notag &\qquad + ( 2 \kappa - n + 1 - \mc{C}_o ) f^{ n - 3 } \nabla_\beta f \cdot | \psi |_\hv^2 + \left( 1 - \frac{p}{2} \right) \lambda f^{ n - 3 + p } \nabla_\beta f \cdot | \psi |_\hv^2 \text{.}
\end{align}
Furthermore, there exists $\mc{C}_b > 0$, also depending on $\Upsilon$, such that
\begin{align}
\label{eq.carleman_pointwise_psi_P} P ( \rho \partial_\rho ) \leq \mc{C}_b f^{ n - 2 } \rho^2 \cdot ( | \nablam_\rho \psi |_\hv^2 + | \nablam_{ \Dv^\sharp t } \psi |_\hv^2 ) + \mc{C}_b \lambda^2 f^{n-2} \cdot | \psi |_\hv^2 \text{.}
\end{align}
\end{lemma}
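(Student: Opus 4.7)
The plan is to derive \eqref{eq.carleman_pointwise_psi} from a pointwise Bochner-type identity for the quantity $J$ defined in \eqref{eq.carleman_J}, combined with the upper bound on $|J|$ supplied by Lemma \ref{thm.carleman_pointwise_J}. Concretely, I will expand $J$ using the decomposition $\mc{L} = \Boxm + 2F' f^{-n+3}\nablam_S + \mc{A}$ and $\bar{S}_\zeta = \nablam_S + v_\zeta$ from \eqref{eql.carleman_L}--\eqref{eql.carleman_LA}, integrate by parts pointwise in each summand, and symmetrise in the $\hv$-duality (that is, average the two terms in \eqref{eq.carleman_J}) so that every covariant derivative of $\psi$ is paired with a matching derivative of $\hvd \psi$.

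The core of the identity is the standard symmetrised Bochner computation applied to $\frac{1}{2}[\langle \Boxm (\hvd \psi), \nablam_S \psi\rangle + \langle \Boxm \psi, \nablam_S(\hvd \psi)\rangle]$. Reorganising this as a divergence plus a quadratic form in $\nablam \psi$ produces, up to curvature and metric-compatibility defects, exactly the deformation-tensor contribution $\pi_\zeta(\nablam \psi, \nablam \psi)$ (with $\pi_\zeta$ as in \eqref{eql.carleman_pi}), together with a divergence term of the schematic form $\nabla_\alpha [\langle \nablam_S(\hvd\psi),\nablam^\alpha \psi\rangle - \frac{1}{2}S^\alpha g^{\mu\nu}\langle\nablam_\mu(\hvd\psi),\nablam_\nu\psi\rangle + \dotsb]$, matching the first bracket in \eqref{eq.carleman_pointwise_psi_PP}. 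Next, the $\mc{A}\psi$ piece of $\mc{L}\psi$, paired with $\bar{S}_\zeta \psi$ and symmetrised, produces the term $-\frac{1}{2}\nabla_\beta(\mc{A}S^\beta)|\psi|_\hv^2$, modulo a divergence $\nabla_\alpha(\frac{1}{2}\mc{A}S^\alpha |\psi|_\hv^2)$ --- this is how the $\mc{A}S$-term in the last line of \eqref{eq.carleman_pointwise_psi_PP} arises. The $v_\zeta \psi$ portion of $\bar{S}_\zeta$, handled similarly, accounts for the $v_\zeta$- and $\nabla_\beta v_\zeta$-terms in \eqref{eq.carleman_pointwise_psi_PP}; crucially, the choice $v_\zeta = f^{n-3}w_\zeta + \frac{1}{2}\nabla_\alpha S^\alpha$ in \eqref{eql.carleman_S} is exactly what makes the resulting coefficient of the unweighted $|\psi|_\hv^2$-term proportional to $\pi_\zeta$ rather than to a bare divergence of $S$.

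Once this pointwise identity is assembled, the desired lower bound \eqref{eq.carleman_pointwise_psi} is extracted as follows. The pseudoconvexity Lemma \ref{thm.carleman_pseudoconvex} converts $\pi_\zeta(\nablam \psi,\nablam\psi)$ into $K \rho f^{n-1}(|\nablam_V\psi|_\hv^2 + \sum_X|\nablam_{E_X}\psi|_\hv^2)$ minus $[(n-1)f^{n-2}+\mc{O}(f^n)]|\nablam_N\psi|_\hv^2$. Lemma \ref{thm.carleman_error_A}, combined with \eqref{eq.carleman_kappa}, turns $-\frac{1}{2}\nabla_\beta(\mc{A}S^\beta)$ into a sum of non-negative terms plus $(1-p)\lambda^2 f^{n-2+2p}|\psi|_\hv^2$, where one absorbs the constants $(\kappa^2 - (n-2)\kappa + \sigma - (n-1) - \mc{C}_o)$ and $2\kappa-n+1-\mc{C}_o$ that are isolated inside $P$ in \eqref{eq.carleman_pointwise_psi_PP}. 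The curvature errors $\bar R[\psi]$ arising in the Bochner step are controlled by Lemma \ref{thm.carleman_error_R}, and the failure of $\nablam \hv = 0$ is controlled by Lemma \ref{thm.carleman_error_h}; both are of order $\rho^2 f$ or better, so they are negligible compared with the coercive terms once $\lambda$ is large and $f_\ast$ is small. The resulting pointwise identity reads $J \geq \text{(coercive terms)} + \nabla^\alpha P_\alpha - \text{(errors)}$, and combining it with the upper bound for $|J|$ in \eqref{eq.carleman_pointwise_J} lets us absorb the dangerous $N$-direction contributions --- the $(n-1)f^{n-2}|\nablam_N\psi|_\hv^2$ from pseudoconvexity together with the $\mc{C}_o f^{n-2}|\nablam_N\psi|_\hv^2$ and $\frac{1}{2}\lambda f^{n-2+p}|\nablam_N\psi|_\hv^2$ from Lemma \ref{thm.carleman_pointwise_J} --- into the $\lambda^{-1}f^{n-2-p}|\mc{L}^\dagger\psi|_\hv^2$ term via a weighted Cauchy--Schwarz inequality. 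The $K$-coefficient is halved along the way (the other half having already been spent inside Lemma \ref{thm.carleman_pointwise_J}), yielding \eqref{eq.carleman_pointwise_psi}.

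Finally, the bound \eqref{eq.carleman_pointwise_psi_P} on $P(\rho\partial_\rho)$ is a direct asymptotic computation using Lemma \ref{thm.carleman_asymp}: since $S = [f^{n-2} + \mc{O}(f^n)]N$ and $\nabla^\sharp f = [f + \mc{O}(f^3)]N$, and $\rho\partial_\rho$ decomposes via Proposition \ref{thm.carleman_frame} as an $\mc{O}(1)$-linear combination of $N$ and $f$-suppressed $V, E_X$ components, every term in \eqref{eq.carleman_pointwise_psi_PP} is estimated using Lemmas \ref{thm.carleman_asymp}--\ref{thm.carleman_error_h} and Proposition \ref{thm.aads_tensor_norm}, keeping only the dominant $\partial_\rho$ and $\Dv^\sharp t$ directions. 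I expect the main technical obstacle to be the bookkeeping of the $\nablam\hv$-errors: these generate ``borderline'' contributions of the form $(k+l)|\Dm^2 t|_\hv^2 \cdot f^{n-3}|\nablam_N\psi|_\hv^2$ which are what force the inhomogeneous condition \eqref{eq.carleman_kappa} with constant $\mc{C}_o$, and separating genuine coercivity from these near-threshold error terms requires careful choice of relative weights between the $\mc{L}^\dagger\psi$, $\nablam_N\psi$, and $|\psi|_\hv$ contributions in the Cauchy--Schwarz splittings.
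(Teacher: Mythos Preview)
Your overall plan matches the paper's proof: expand $J$ via \eqref{eql.carleman_LA}, perform the Bochner/stress-energy computation to produce $\pi_\zeta(\nablam\psi,\nablam\psi)$ plus divergences $P^Q+P^S$, invoke Lemma \ref{thm.carleman_pseudoconvex} and Lemma \ref{thm.carleman_error_A}, estimate the curvature and $\nablam\hv$ defects via Lemmas \ref{thm.carleman_error_R} and \ref{thm.carleman_error_h}, and then combine with the $|J|$ bound of Lemma \ref{thm.carleman_pointwise_J}. The boundary estimate \eqref{eq.carleman_pointwise_psi_P} is, as you say, a direct term-by-term computation.

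However, your account of how the leading $|\nablam_N\psi|_\hv^2$ terms are disposed of is not right, and this hides a required step. After combining the lower bound on $J$ with \eqref{eq.carleman_pointwise_J}, one reaches an inequality of the form \eqref{eql.carleman_pointwise_psi_40}: the right side contains the \emph{positive} terms $(2\kappa-n+1-\mc{C}_o)f^{n-2}|\nablam_N\psi|_\hv^2$ and $\tfrac{3}{2}\lambda f^{n-2+p}|\nablam_N\psi|_\hv^2$ (the $2\kappa f^{n-2}$ from the first-order piece $2F'f^{-n+3}|\nablam_S\psi|_\hv^2$ having already cancelled the $(n-1)f^{n-2}$ from pseudoconvexity and the $\mc{C}_o f^{n-2}$ from \eqref{eq.carleman_pointwise_J}), together with a zero-order term $(\kappa^2-n\kappa+\sigma)f^{n-2}|\psi|_\hv^2$ whose sign is uncontrolled. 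These $N$-terms are \emph{not} absorbed into $|\mc{L}^\dagger\psi|_\hv^2$. Instead, the paper applies a Hardy-type inequality---the completion of squares $0\leq f^{q-2}|g^{\alpha\beta}\nabla_\alpha f\,\nablam_\beta\psi + b f\,\psi|_\hv^2$ with $b=\tfrac{1}{2}(q-n)$, see \eqref{eql.carleman_pointwise_psi_41}---twice, with $q=n-2$ and $q=n-2+p$, to trade each $f^q|\nablam_N\psi|_\hv^2$ for $\tfrac{1}{4}(q-n)^2 f^q|\psi|_\hv^2$ plus a divergence $\nabla^\alpha P^{H,q}_\alpha$. This is precisely the origin of the last two lines in \eqref{eq.carleman_pointwise_psi_PP}, and it is what combines $(\kappa^2-n\kappa+\sigma)$ with $(2\kappa-n+1-\mc{C}_o)$ to produce the quantity $\kappa^2-(n-2)\kappa+\sigma-(n-1)-\mc{C}_o$ governed by the second inequality in \eqref{eq.carleman_kappa}. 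Without this step you can account neither for the exact form of $P$ nor for the role of that hypothesis on $\kappa$.
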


\begin{proof}
Let $J$ be as in \eqref{eq.carleman_J}.
Expanding $J$ using \eqref{eql.carleman_LA}, we obtain
\begin{align}
\label{eql.carleman_pointwise_psi_1} J &= \frac{1}{2} [ \langle \Boxm ( \hvd \psi ), \bar{S}_\zeta \psi \rangle + \langle \Boxm \psi, \bar{S}_\zeta ( \hvd \psi ) \rangle ] + F' f^{ -n + 3 } [ \langle \nablam_S ( \hvd \psi ), \bar{S}_\zeta \psi \rangle + \langle \nablam_S \psi, \bar{S}_\zeta ( \hvd \psi ) \rangle ] \\
\notag &\qquad + \frac{1}{2} \mc{A} [ \langle \hvd \psi, \bar{S}_\zeta \psi \rangle + \langle \psi, \bar{S}_\zeta ( \hvd \psi ) \rangle ] \\
\notag &= g^{ \alpha \beta } \nabla_\alpha P^S_\beta + \frac{1}{2} [ \langle \Boxm ( \hvd \psi ), \bar{S}_\zeta \psi \rangle + \langle \Boxm \psi, \bar{S}_\zeta ( \hvd \psi ) \rangle ] + 2 F' f^{ - n + 3 } \cdot | \nablam_S \psi |_\hv^2 \\
\notag &\qquad + \left[ \mc{A} v_\zeta - \frac{1}{2} \nabla_\beta ( \mc{A} S^\beta ) \right] \cdot | \psi |_\hv^2 + I_S + I_\zeta \text{,}
\end{align}
where the $1$-form $P^S$ and the scalars $I_S$, $I_\zeta$ are defined as
\begin{align}
\label{eql.carleman_pointwise_psi_2} P^S_\beta &:= \frac{1}{2} g_{ \alpha \beta } S^\alpha \mc{A} \cdot | \psi |_\hv^2 \text{,} \\
\notag I_S &:= 2 F' f^{ -n + 3 } \cdot \nablam_S \hvm ( \psi, \nablam_S \psi ) \text{,} \\
\notag I_\zeta &:= F' f^{ -n + 3 } v_\zeta \, [ \langle \nablam_S ( \hvd \psi ), \psi \rangle + \langle \nablam_S \psi, \hvd \psi \rangle ] \text{.}
\end{align}

Next, consider the following spacetime tensor fields associated to $\psi$:
\begin{align}
\label{eql.carleman_pointwise_psi_4} Q_{ \alpha \beta } &:= \frac{1}{2} [ \langle \nablam_\alpha ( \hvd \psi ), \nablam_\beta \psi \rangle + \langle \nablam_\alpha \psi, \nablam_\beta ( \hvd \psi ) \rangle - g_{ \alpha \beta } g^{ \mu \nu } \langle \nablam_\mu ( \hvd \psi ), \nablam_\nu \psi \rangle ] \text{,} \\
\notag P^Q_\beta &:= Q_{ \alpha \beta } S^\alpha + \frac{1}{2} v_\zeta [ \langle \nablam_\beta ( \hvd \psi ), \psi \rangle + \langle \nablam_\beta \psi, \hvd \psi \rangle ] - \frac{1}{2} \nabla_\beta v_\zeta \, \hvm ( \psi, \psi ) \text{.}
\end{align}
A direct computation using \eqref{eql.carleman_pi} and \eqref{eql.carleman_pointwise_psi_4} then yields
\begin{align}
\label{eql.carleman_pointwise_psi_6} g^{ \alpha \beta } \nabla_\alpha P^Q_\beta &= \frac{1}{2} [ \langle \Boxm ( \hvd \psi ), \bar{S}_\zeta \psi \rangle + \langle \Boxm \psi, \bar{S}_\zeta ( \hvd \psi ) \rangle ] - g^{ \alpha \mu } g^{ \beta \nu } ( \pi_\zeta )_{ \alpha \beta } \, \hvm ( \nablam_\mu \psi, \nablam_\nu \psi ) \\
\notag &\qquad - \frac{1}{2} \Box v_\zeta \, | \psi |_\hv^2 - I_R - I_\pi \text{,} \\
\notag I_R &:= \frac{1}{2} g^{ \mu \nu } S^\alpha [ \langle \bar{R}_{ \alpha \mu } [ \hvd \psi ], \nablam_\nu \psi \rangle + \langle \bar{R}_{ \alpha \mu } [ \psi ], \nablam_\nu ( \hvd \psi ) \rangle ] \text{,} \\
\notag I_\pi &:= g^{ \alpha \mu } g^{ \beta \nu } ( \pi_\zeta )_{ \alpha \beta } \, \nablam_\mu \hvm ( \psi, \nablam_\nu \psi ) \text{.}
\end{align}
Combining \eqref{eql.carleman_pointwise_psi_1} and \eqref{eql.carleman_pointwise_psi_6} results in the identity
\begin{align}
\label{eql.carleman_pointwise_psi_10} J &= g^{ \alpha \beta } \nabla_\alpha ( P^Q + P^S )_\beta + 2 F' f^{ - n + 3 } \cdot | \nablam_S \psi |_\hv^2 + g^{ \alpha \mu } g^{ \beta \nu } ( \pi_\zeta )_{ \alpha \beta } \, \hvm ( \nablam_\mu \psi, \nablam_\nu \psi ) \\
\notag &\qquad + \left[ \mc{A} v_\zeta - \frac{1}{2} \nabla_\beta ( \mc{A} S^\beta ) + \frac{1}{2} \Box v_\zeta \right] \cdot | \psi |_\hv^2 + I_S + I_\zeta + I_R + I_\pi \text{.}
\end{align}

Recalling \eqref{eql.carleman_weight} and \eqref{eq.carleman_asymp}, we see that
\begin{equation}
\label{eql.carleman_pointwise_psi_11} 2 F' f^{ - n + 3 } \cdot | \nablam_S \psi |_\hv^2 = [ 2 \kappa f^{ n - 2 } + 2 \lambda f^{ n - 2 + p } + \lambda \, \mc{O} ( f^n ) ] \cdot | \nablam_N \psi |_\hv^2 \text{.}
\end{equation}
Moreover, by Lemma \ref{thm.carleman_pseudoconvex}, we have
\begin{align}
\label{eql.carleman_pointwise_psi_12} g^{ \alpha \mu } g^{ \beta \nu } ( \pi_\zeta )_{ \alpha \beta } \, \hvm ( \nablam_\mu \psi, \nablam_\nu \psi ) &\geq K f^{ n - 1 } \rho \left( | \nablam_V \psi |_\hv^2 + \sum_{ X = 1 }^{ n - 1 } | \nablam_{ E_X } \psi |_\hv^2 \right) \\
\notag &\qquad - [ ( n - 1 ) f^{ n - 2 } + \mc{O} ( f^n ) ] \cdot | \nablam_N \psi |_\hv^2 \text{,}
\end{align}
with $K > 0$ as in the statement of Lemma \ref{thm.carleman_pseudoconvex}.
By \eqref{eq.carleman_error_v} and \eqref{eq.carleman_error_A}, we also obtain
\begin{align}
\label{eql.carleman_pointwise_psi_13} \mc{A} v_\zeta - \frac{1}{2} \nabla_\beta ( \mc{A} S^\beta ) + \frac{1}{2} \Box v_\zeta &= ( \kappa^2 - n \kappa + \sigma ) f^{ n - 2 } + \left( 1 - \frac{p}{2} \right) ( 2 \kappa - n + p ) \lambda f^{ n - 2 + p } \\
\notag &\qquad + ( 1 - p ) \lambda^2 f^{ n - 2 + 2 p } + \lambda^2 \, \mc{O} ( f^n ) \text{.}
\end{align}
Thus, \eqref{eql.carleman_pointwise_psi_10}--\eqref{eql.carleman_pointwise_psi_13} together yield
\begin{align}
\label{eql.carleman_pointwise_psi_20} J &\geq [ ( 2 \kappa - n + 1 ) f^{ n - 2 } + 2 \lambda f^{ n - 2 + p } + \lambda \, \mc{O} ( f^n ) ] \cdot | \nablam_N \psi |_\hv^2 \\
\notag &\qquad + K f^{ n - 1 } \rho \left( | \nablam_V \psi |_\hv^2 + \sum_{ X = 1 }^{ n - 1 } | \nablam_{ E_X } \psi |_\hv^2 \right) + ( \kappa^2 - n \kappa + \sigma ) f^{ n - 2 } \cdot | \psi |_\hv^2 \\
\notag &\qquad + \left[ \left( 1 - \frac{p}{2} \right) ( 2 \kappa - n + p ) \lambda f^{ n - 2 + p } + ( 1 - p ) \lambda^2 f^{ n - 2 + 2 p } + \lambda^2 \, \mc{O} ( f^n ) \right] | \psi |_\hv^2 \\
\notag &\qquad + I_S + I_\zeta + I_R + I_\pi + g^{ \alpha \beta } \nabla_\alpha ( P^Q + P^S )_\beta \text{.}
\end{align}

We next handle the error terms.
First, we bound $I_S$ using \eqref{eq.carleman_lambda}, \eqref{eql.carleman_weight}, \eqref{eq.carleman_asymp}, and \eqref{eq.carleman_error_h}:
\begin{align}
\label{eql.carleman_pointwise_psi_21} | I_S | &\leq \lambda \, \mc{O} ( f^{ n - 2 } ) \, | \nablam_N \hvm |_\hv | \nablam_N \psi |_\hv | \psi |_\hv \\
\notag &\leq \lambda \, \mc{O} ( \rho f^{ n - 1 } ) \, ( | \nablam_N \psi |_\hv^2 + | \psi |_\hv^2 ) \text{.}
\end{align}
Similarly, for $I_\zeta$, we apply \eqref{eq.carleman_fstar}, \eqref{eq.carleman_lambda}, \eqref{eql.carleman_weight}, \eqref{eq.carleman_asymp}, and \eqref{eq.carleman_error_v}:
\begin{align}
\label{eql.carleman_pointwise_psi_22} | I_\zeta | &\leq \lambda \, \mc{O} ( f^n ) \, ( | \nablam_N ( \hvd \psi ) |_\hv | \psi |_\hv + | \nablam_N \psi |_\hv | \hvd \psi |_\hv ) \\
\notag &\leq \lambda \, \mc{O} ( f^n ) \, ( | \nablam_N \psi |_\hv | \psi |_\hv + | \nablam_N \hvm | | \psi |_\hv^2 ) \\
\notag &\leq \lambda \, \mc{O} ( f^n ) \, ( | \nablam_N \psi |_\hv^2 + | \psi |_\hv^2 ) \text{.} 
\end{align}
To estimate $I_\pi$, we expand using orthonormal frames and apply \eqref{eq.carleman_pseudoconvex_pi}, and \eqref{eq.carleman_error_h}:
\begin{align}
\label{eql.carleman_pointwise_psi_23} | I_\pi | &\leq \sum_{ Z_1, Z_2 \in \{ N, V, E_1, \dots, E_{ n - 1 } \} } | ( \pi_\zeta )_{ Z_1 Z_2 } | | \nablam_{ Z_1 } \hvm |_\hv \cdot | \nablam_{ Z_2 } \psi |_\hv | \psi |_\hv \\
\notag &\leq \mc{O} ( \rho f^{ n - 1 } ) \, | \nablam_N \psi |_\hv | \psi |_\hv + \mc{O} ( \rho^2 f^{ n - 1 } ) \, | \nablam_V \psi |_\hv | \psi |_\hv \\
\notag &\qquad + \sum_{ X = 1 }^{ n - 1 } \mc{O} ( \rho^2 f^{ n - 1 } ) \, | \nablam_{ E_X } \psi |_\hv | \psi |_\hv \\
\notag &\leq \mc{O} ( \rho^3 f^{ n - 1 } ) \, \left( | \nablam_V \psi |_\hv^2 + \sum_{ X = 1 }^{ n - 1 } | \nablam_{ E_X } \psi |_\hv^2 \right) + \mc{O} ( \rho f^{ n - 1 } ) \, ( | \nablam_N \psi |_\hv^2 + | \psi |_\hv^2 ) \text{.}
\end{align}

Lastly, for $I_R$, we expand and apply \eqref{eq.carleman_asymp}, \eqref{eq.carleman_error_R}, and \eqref{eq.carleman_error_h}:
\begin{align}
\label{eql.carleman_pointwise_psi_24} | I_R | &\leq \mc{O} ( f^{ n - 2 } ) \sum_{ Z \in \{ V, E_1, \dots, E_{ n - 1 } \} } ( | \bar{R}_{ N Z } [ \hvd \psi ] |_\hv + | \bar{R}_{ N Z } [ \psi ] |_\hv ) | \nablam_Z \psi |_\hv \\
\notag &\qquad + \mc{O} ( f^{ n - 2 } ) \sum_{ Z \in \{ V, E_1, \dots, E_{ n - 1 } \} } | \nablam_Z \hvm |_\hv | \bar{R}_{ N Z } [ \psi ] |_\hv | \psi |_\hv ) \\
\notag &\leq \mc{O} ( \rho^3 f^{ n - 1 } ) \, \left( | \nablam_V \psi |_\hv^2 + \sum_{ X = 1 }^{ n - 1 } | \nablam_{ E_X } \psi |_\hv^2 \right) + \mc{O} ( \rho f^{ n - 1 } ) \, ( | \nablam_N \psi |_\hv^2 + | \psi |_\hv^2 ) \text{.}
\end{align}
Thus, combining \eqref{eql.carleman_pointwise_psi_20}--\eqref{eql.carleman_pointwise_psi_24} and recalling \eqref{eq.carleman_f} and \eqref{eq.carleman_lambda}, we conclude that
\begin{align*}
J &\geq g^{ \alpha \beta } \nabla_\alpha ( P^Q + P^S )_\beta + [ ( 2 \kappa - n + 1 ) f^{ n - 2 } + 2 \lambda f^{ n - 2 + p } + \lambda \, \mc{O} ( f^n ) ] \cdot | \nablam_N \psi |_\hv^2 \\
&\qquad + \frac{1}{2} K f^{ n - 1 } \rho \left( | \nablam_V \psi |_\hv^2 + \sum_{ X = 1 }^{ n - 1 } | \nablam_{ E_X } \psi |_\hv^2 \right) + ( \kappa^2 - n \kappa + \sigma ) f^{ n - 2 } \cdot | \psi |_\hv^2 \\
&\qquad + \left[ \left( 1 - \frac{p}{2} \right) ( 2 \kappa - n + p ) \lambda f^{ n - 2 + p } + ( 1 - p ) \lambda^2 f^{ n - 2 + 2 p } + \lambda^2 \, \mc{O} ( f^n ) \right] | \psi |_\hv^2 \text{.}
\end{align*}
Furthermore, combining \eqref{eq.carleman_fstar}, \eqref{eq.carleman_pointwise_J}, and the above, we conclude that
\begin{align}
\label{eql.carleman_pointwise_psi_40} \frac{1}{ \lambda } f^{ n - 2 - p } | \mc{L}^\dagger \psi |_\hv^2 &\geq \left[ ( 2 \kappa - n + 1 - \mc{C}_o ) f^{ n - 2 } + \frac{3}{2} \lambda f^{ n - 2 + p } \right] \cdot | \nablam_N \psi |_\hv^2 \\
\notag &\qquad + \frac{1}{4} K f^{ n - 1 } \rho \left( | \nablam_V \psi |_\hv^2 + \sum_{ X = 1 }^{ n - 1 } | \nablam_{ E_X } \psi |_\hv^2 \right) + ( \kappa^2 - n \kappa + \sigma ) f^{ n - 2 } \cdot | \psi |_\hv^2 \\
\notag &\qquad + \left[ \left( 1 - \frac{p}{2} \right) ( 2 \kappa - n + p ) \lambda f^{ n - 2 + p } + ( 1 - p ) \lambda^2 f^{ n - 2 + 2 p } \right] | \psi |_\hv^2 \\
\notag &\qquad + \lambda \, \mc{O} ( f^{ n - 1 } ) \cdot | \nablam_N \psi |_\hv^2 + \lambda^2 \, \mc{O} ( f^{ n - 1 } ) \cdot | \psi |_\hv^2 + g^{ \alpha \beta } \nabla_\alpha ( P^Q + P^S )_\beta \text{.}
\end{align}

In addition, given any $b, q \in \R$, we have the inequality
\begin{align*}
0 &\leq f^{ q - 2 } | g^{ \alpha \beta } \nabla_\alpha f \nablam_\beta \psi + b f \cdot \psi |_\hv^2 \\
&= f^{ q - 2 } | g^{ \alpha \beta } \nabla_\alpha f \nablam_\beta \psi |_\hv^2 + [ b^2 f^q - b ( q - 1 ) f^{ q - 2 } g^{ \alpha \beta } \nabla_\alpha f \nabla_\beta f - b f^{ q - 1 } \Box f ] \cdot | \psi |_\hv^2 \\
\notag &\qquad - b f^{ q - 1 } g^{ \alpha \beta } \nabla_\alpha f \nablam_\beta \hvm ( \psi, \psi ) + g^{ \alpha \beta } \nabla_\alpha ( b f^{ q - 1 } \nabla_\beta f \cdot | \psi |_\hv^2 ) \text{.}
\end{align*}
Setting $b := \frac{1}{2} ( q - n )$ and applying \eqref{eq.carleman_asymp} to the above, we then obtain
\begin{align}
\label{eql.carleman_pointwise_psi_41} f^q | \nablam_N \psi |_\hv^2 &\geq \frac{1}{4} ( q - n )^2 f^q \cdot | \psi |_\hv^2 + ( 1 + q^2 ) \, \mc{O} ( f^{ q + 2 } ) \cdot ( | \nablam_N \psi |_\hv^2 + | \psi |_\hv^2 ) \\
\notag &\qquad + I_{ H, q } + g^{ \alpha \beta } \nabla_\alpha P^{ H, q }_\beta \text{,} \\
\notag P^{ H, q }_\beta &:= - \frac{1}{2} ( q - n ) f^{ q - 1 } \nabla_\beta f \cdot | \psi |_\hv^2 \text{,} \\
\notag I_{ H, q } &:= \frac{1}{2} ( q - n ) f^{ q - 1 } g^{ \alpha \beta } \nabla_\alpha f \nablam_\beta \hvm ( \psi, \psi ) \text{.}
\end{align}
Moreover, using \eqref{eq.carleman_asymp} and \eqref{eq.carleman_error_h}, the terms $I_{ H, q }$ can be estimated as follows:
\begin{equation}
\label{eql.carleman_pointwise_psi_42} | I_{ H, q } | \leq ( q - n ) \, \mc{O} ( \rho f^{ q + 1 } ) \, | \psi |^2 \text{.}
\end{equation}

Noting that $2 \kappa - ( n - 1 ) - \mc{C}_o \geq 0$ by \eqref{eq.carleman_kappa}, we can now apply \eqref{eql.carleman_pointwise_psi_41} and \eqref{eql.carleman_pointwise_psi_42} twice, with $q := n - 2$ and $q := n - 2 + p$, to \eqref{eql.carleman_pointwise_psi_40}.
This the results in the inequality
\begin{align}
\label{eql.carleman_pointwise_psi_50} \frac{1}{ \lambda } f^{ n - 2 - p } | \mc{L}^\dagger \psi |_\hv^2 &\geq \frac{1}{2} \lambda f^{ n - 2 + p } \cdot | \nablam_N \psi |_\hv^2 + \frac{1}{4} K f^{ n - 1 } \rho \left( | \nablam_V \psi |_\hv^2 + \sum_{ X = 1 }^{ n - 1 } | \nablam_{ E_X } \psi |_\hv^2 \right) \\
\notag &\qquad + [ \kappa^2 - ( n - 2 ) \kappa + \sigma - ( n - 1 ) - \mc{C}_o ] f^{ n - 2 } \cdot | \psi |_\hv^2 \\
\notag &\qquad + \left[ \left( 1 - \frac{p}{2} \right) \left( 2 \kappa - n + 1 + \frac{p}{2} \right) \lambda f^{ n - 2 + p } + ( 1 - p ) \lambda^2 f^{ n - 2 + 2 p } \right] | \psi |_\hv^2 \\
\notag &\qquad + \lambda \, \mc{O} ( f^{ n - 1 } ) \cdot | \nablam_N \psi |_\hv^2 + \lambda^2 \, \mc{O} ( f^{ n - 1 } ) \cdot | \psi |_\hv^2 + g^{ \alpha \beta } \nabla_\alpha ( P^Q + P^S )_\beta \\
\notag &\qquad + g^{ \alpha \beta } \nabla_\alpha [ \lambda P^{ H, n - 2 + p } + ( 2 \kappa - n + 1 - \mc{C}_o ) P^{ H, n - 2 } ]_\beta \text{.}
\end{align}
In addition, we now set
\begin{equation}
\label{eql.carleman_pointwise_psi_51} P := P^Q + P^S + \lambda P^{ H, n - 2 + p } + ( 2 \kappa - n + 1 - \mc{C}_o ) P^{ H, n - 2 } \text{,}
\end{equation}
and we note that this corresponds with the quantity given in the right-hand side \eqref{eq.carleman_pointwise_psi_P}.
Applying \eqref{eql.carleman_pointwise_psi_50} and \eqref{eql.carleman_pointwise_psi_51}, and keeping in mind \eqref{eq.carleman_kappa}--\eqref{eq.carleman_lambda} as well, we obtain the bound
\begin{align*}
\frac{1}{ \lambda } f^{ n - 2 - p } | \mc{L}^\dagger \psi |_\hv^2 &\geq \frac{1}{4} \lambda f^{ n - 2 + p } \cdot | \nablam_N \psi |_\hv^2 + \frac{1}{4} K f^{ n - 1 } \rho \left( | \nablam_V \psi |_\hv^2 + \sum_{ X = 1 }^{ n - 1 } | \nablam_{ E_X } \psi |_\hv^2 \right) \\
&\qquad + \frac{1}{4} \lambda^2 f^{ n - 2 + 2 p } | \psi |_\hv^2 + g^{ \alpha \beta } \nabla_\alpha P_\beta \text{,}
\end{align*}
In particular, \eqref{eq.carleman_fstar}, \eqref{eq.carleman_lambda}, and the above immediately imply the desired estimate \eqref{eq.carleman_pointwise_psi}.

It remains to show \eqref{eq.carleman_pointwise_psi_P}.
First, from \eqref{eq.carleman_f}, \eqref{eql.carleman_S}, \eqref{eq.carleman_error_A}, and \eqref{eql.carleman_pointwise_psi_2}, we have
\begin{equation}
\label{eql.carleman_pointwise_psi_61} | P^S ( \rho \partial_\rho ) | \leq \mc{O} ( f^{ n - 3 } ) \, | \rho \partial_\rho f | \cdot \lambda^2 \, \mc{O} ( 1 ) \cdot | \psi |_\hv^2 = \lambda^2 \, \mc{O} ( f^{ n - 2 } ) \cdot | \psi |_\hv^2 \text{.}
\end{equation}
Similarly, recalling \eqref{eq.carleman_f}, we obtain
\begin{align}
\label{eql.carleman_pointwise_psi_62} | \lambda P^{ H, n - 2 + p } ( \rho \partial_\rho ) + ( 2 \kappa - n + 1 - \mc{C}_o ) P^{ H, n - 2 } ( \rho \partial_\rho ) | &\leq \lambda \, \mc{O} ( f^{ n - 3 } ) \, | \rho \partial_\rho f | \cdot | \psi |_\hv^2 \\
\notag &= \lambda \, \mc{O} ( f^{ n - 2 } ) \, | \psi |_\hv^2 \text{.}
\end{align}
The remaining term $P^Q$ is more involved; we begin by expanding \eqref{eql.carleman_pointwise_psi_4}:
\begin{align}
\label{eql.carleman_pointwise_psi_63} P^Q_\beta ( \rho \partial_\rho ) &:= \frac{1}{2} \rho [ \langle \nablam_\rho ( \hvd \psi ), \nablam_S \psi \rangle + \langle \nablam_\rho \psi, \nablam_S ( \hvd \psi ) \rangle ] - \frac{1}{2} \, g ( \rho \partial_\rho, S ) \, g^{ \mu \nu } \langle \nablam_\mu ( \hvd \psi ), \nablam_\nu \psi \rangle \\
\notag &\qquad + \frac{1}{2} \rho v_\zeta [ \langle \nablam_\rho ( \hvd \psi ), \psi \rangle + \langle \nablam_\rho \psi, \hvd \psi \rangle ] - \frac{1}{2} \rho \nabla_\rho v_\zeta \cdot \hvm ( \psi, \psi ) \\
\notag &:= B_1 + B_2 + B_3 + B_4 \text{.}
\end{align}

First, $B_4$ can be controlled using \eqref{eq.carleman_error_v}:
\begin{equation}
\label{eql.carleman_pointwise_psi_64} | B_4 | \leq \mc{O} ( f^n ) \cdot | \psi |_\hv^2 \text{.}
\end{equation}
Similarly, by \eqref{eq.carleman_error_v} and \eqref{eq.carleman_error_h},
\begin{align}
\label{eql.carleman_pointwise_psi_65} | B_3 | &\leq \mc{O} ( \rho f^n ) \, ( | \nablam_\rho \psi |_\hv + | \nablam_\rho \hvm |_\hv | \psi |_\hv ) | \psi |_\hv \\
\notag &= \mc{O} ( \rho^2 f^n ) \, | \nablam_\rho \psi |_\hv^2 + \mc{O} ( f^n ) \, | \psi |_\hv^2 \text{.}
\end{align}
Next, by \eqref{eq.carleman_NV}, \eqref{eq.carleman_asymp}, and \eqref{eq.carleman_error_h}, we can bound
\begin{align}
\label{eql.carleman_pointwise_psi_66} | B_1 | &\leq \mc{O} ( \rho f^{ n - 2 } ) \, ( | \nablam_\rho \psi |_\hv | \nablam_N \psi |_\hv + | \nablam_N \hvm |_\hv | \nablam_\rho \psi |_\hv | \psi |_\hv + | \nablam_\rho \hvm |_\hv | \nablam_N \psi |_\hv | \psi |_\hv ) \\
\notag &\leq \mc{O} ( \rho f^{ n - 2 } ) \, [ \mc{O} ( \rho ) \, | \nablam_\rho \psi |_\hv | \nablam_\rho \psi |_\hv + \mc{O} ( \rho f ) \, | \nablam_\rho \psi |_\hv | \nablam_{ \Dv^\sharp t } \psi |_\hv ] \\
\notag &\qquad + \mc{O} ( \rho f^{ n - 2 } ) \, [ \mc{O} ( \rho f ) \, | \nablam_\rho \psi |_\hv | \psi |_\hv + \mc{O} ( \rho^2 ) \, | \nablam_\rho \psi |_\hv | \psi |_\hv + \mc{O} ( \rho^2 f ) \, | \nablam_{ \Dv^\sharp t } \psi |_\hv | \psi |_\hv ) \\
\notag &\leq \mc{O} ( \rho^2 f^{ n - 2 } ) \, | \nablam_\rho \psi |_\hv^2 + \mc{O} ( \rho^2 f^n ) \, | \nablam_{ \Dv^\sharp t } \psi |_\hv^2 + \mc{O} ( \rho^2 f^n ) \, | \psi |_\hv^2 \text{.}
\end{align}

For $B_2$, we must be a bit more careful about the signs.
Since \eqref{eq.carleman_f} and \eqref{eql.carleman_S} imply
\[
g ( \rho \partial, S ) = f^{ n - 3 } \rho \partial_\rho f = f^{ n - 2 } \text{,}
\]
which is strictly positive, then \eqref{eq.carleman_fstar}, \eqref{eq.carleman_error_h}, and the above yield
\begin{align}
\label{eql.carleman_pointwise_psi_68} B_2 &\leq - \frac{1}{2} f^{ n - 2 } \left[ | \nablam_N \psi |_\hv^2 - | \nablam_V \psi |_\hv^2 + \sum_{ X = 1 }^{ n - 1 } | \nablam_{ E_X } \psi |_\hv^2 \right] \\
\notag &\qquad + \mc{O} ( f^{ n - 2 } ) \, \left[ \mc{O} ( \rho f ) \, | \nablam_N \psi |_\hv + \mc{O} ( \rho ) \, | \nablam_V \psi |_\hv + \sum_{ X = 1 }^{ n - 1 } \mc{O} ( \rho ) \, | \nablam_{ E_X } \psi |_\hv \right] | \psi |_\hv \\
\notag &\leq f^{ n - 2 } | \nablam_V \psi |^2 + \mc{O} ( \rho^2 f^{ n - 2 } ) \cdot | \psi |_\hv^2 \\
\notag &\leq \mc{O} ( \rho^2 f^n ) \cdot | \nablam_\rho \psi |^2 + \mc{O} ( \rho^2 f^{ n - 2 } ) \cdot | \nablam_{ \Dv^\sharp t } \psi |^2 + \mc{O} ( \rho^2 f^{ n - 2 } ) \cdot | \psi |_\hv^2 \text{.}
\end{align}
Finally, combining \eqref{eql.carleman_pointwise_psi_51}, \eqref{eql.carleman_pointwise_psi_61}--\eqref{eql.carleman_pointwise_psi_68} results in \eqref{eq.carleman_pointwise_psi_P} and completes the proof.
\end{proof}

\subsubsection{Completion of the Proof}

We now have all the key ingredients to complete the proof of Theorem \ref{thm.carleman}.
Next, we use Lemma \ref{thm.carleman_pointwise_psi} to derive a corresponding bound for $\phi$:

\begin{lemma} \label{thm.carleman_pointwise}
There exist $\mc{C}, \mc{C}_b > 0$, depending on $\Upsilon$, such that the following holds on $\Omega^c_< \cup \Omega^c_>$,
\begin{equation}
\label{eq.carleman_pointwise} \lambda^{-1} \mc{E} f^{ -p } | ( \Boxm + \sigma + \nablam_{ \mc{Y} } ) \phi |_\hv^2 \geq \mc{C} \mc{E} f \rho^3 ( | \nablam_\rho \phi |_\hv^2 + | \Dv \phi |_\hv^2 ) + \frac{1}{8} \lambda^2 \mc{E} f^{ 2 p } | \phi |_\hv^2 + g^{ \alpha \beta } \nabla_\alpha P_\beta \text{,}
\end{equation}
where $\mc{E}$ denotes the function
\begin{equation}
\label{eq.carleman_pointwise_weight} \mc{E} := e^{ -2 F } f^{ n - 2 } = e^{ -2 \lambda p^{-1} f^p } f^{ n - 2 - 2 \kappa } \text{,}
\end{equation}
and where the $1$-form $P$ is as in Lemma \ref{thm.carleman_pointwise_psi} and satisfies
\begin{align}
\label{eq.carleman_pointwise_P} \rho^{-n} \, P ( \rho \partial_\rho ) \leq \mc{C}_b [ | \nablam_\rho ( \rho^{ -\kappa } \phi ) |_\hv^2 + | \nablam_{ \Dv^\sharp t } ( \rho^{ -\kappa } \phi ) |_\hv^2 ] + \mc{C}_b \lambda^2 | \rho^{ -\kappa - 1 } \phi |_\hv^2 \text{.}
\end{align}
\end{lemma}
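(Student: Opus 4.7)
The plan is to translate the pointwise estimate of Lemma \ref{thm.carleman_pointwise_psi} for the auxiliary unknown $\psi = e^{-F}\phi$ into one for $\phi$ itself, by substituting $\psi = e^{-F}\phi$ throughout and repackaging. Using $\mc{L}^\dagger\psi = e^{-F}(\Boxm+\sigma+\nablam_{\mc{Y}})\phi$ and $|\psi|_\hv^2 = e^{-2F}|\phi|_\hv^2$, together with the definition $\mc{E} = e^{-2F} f^{n-2}$, the left-hand side of \eqref{eq.carleman_pointwise_psi} becomes verbatim $\lambda^{-1}\mc{E} f^{-p}|(\Boxm+\sigma+\nablam_{\mc{Y}})\phi|_\hv^2$, and the $|\psi|^2$-term on the right-hand side becomes $\tfrac14\lambda^2\mc{E} f^{2p}|\phi|_\hv^2$; the gap between this $\tfrac14$ and the $\tfrac18$ in \eqref{eq.carleman_pointwise} will be produced by absorbing the errors arising from derivative conversion.

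For any vector $Z$, the relation $\nablam_Z\psi = e^{-F}(\nablam_Z\phi - F'(Zf)\phi)$ gives
\[
|\nablam_Z\psi|_\hv^2 \;\geq\; \tfrac12\, e^{-2F}|\nablam_Z\phi|_\hv^2 \;-\; e^{-2F}(F')^2(Zf)^2|\phi|_\hv^2.
\]
Applied to each $Z \in \{N, V, E_1, \ldots, E_{n-1}\}$, and using $F' = \kappa f^{-1} + \lambda f^{p-1}$ together with $|Zf| \lesssim f$ (from the frame bounds in Lemma \ref{thm.carleman_asymp}), the $|\phi|^2$-error is of order $(\kappa^2 + \lambda^2 f^{2p}) e^{-2F}|\phi|_\hv^2$; after multiplication by the coefficient $f^{n-1}\rho$ appearing in \eqref{eq.carleman_pointwise_psi} and applying \eqref{eq.carleman_fstar}--\eqref{eq.carleman_lambda}, this error is bounded by $\tfrac18\lambda^2\mc{E} f^{2p}|\phi|_\hv^2$ and absorbed. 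To pass from frame to coordinate derivatives, I invert the expansions of Lemma \ref{thm.carleman_asymp} using $g$-orthonormality of the frame: $\partial_\rho = g(\partial_\rho,N)N - g(\partial_\rho,V)V + \sum_X g(\partial_\rho,E_X)E_X$, and analogously for $\partial_a$. This gives $\rho^2(|\nablam_\rho\phi|_\hv^2 + |\Dv\phi|_\hv^2) \lesssim |\nablam_N\phi|_\hv^2 + |\nablam_V\phi|_\hv^2 + \sum_X|\nablam_{E_X}\phi|_\hv^2$, which combined with $\mc{E} f\rho = e^{-2F}f^{n-1}\rho$ produces the required $\mc{C}\mc{E} f\rho^3(|\nablam_\rho\phi|_\hv^2 + |\Dv\phi|_\hv^2)$ on the right-hand side of \eqref{eq.carleman_pointwise}.

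For the boundary estimate \eqref{eq.carleman_pointwise_P}, the starting point is the bound on $P(\rho\partial_\rho)$ from Lemma \ref{thm.carleman_pointwise_psi}. The central identity is
\[
\nablam_\rho \psi \;=\; \eta(t)^\kappa\, e^{-\lambda p^{-1}f^p}\, \nablam_\rho(\rho^{-\kappa}\phi) \;-\; \frac{\lambda f^p}{\rho}\, e^{-F}\phi,
\]
which follows from $e^{-F} = \rho^{-\kappa}\eta^\kappa e^{-\lambda p^{-1}f^p}$ by a direct computation, with an analogous formula holding for $\nablam_{\Dv^\sharp t}\psi$. Multiplying through by $\rho^{-n}$ and observing the elementary identity $\rho^{-n} f^{n-2}\rho^2 = \eta(t)^{-(n-2)}$, which is bounded on $\Omega^c$ because $\eta$ is bounded above and below on the compact time interval $[-\tfrac12\mc{T}_+(b,c),\,\tfrac12\mc{T}_+(b,c)]$, the leading pieces produce $\mc{C}_b\bigl[|\nablam_\rho(\rho^{-\kappa}\phi)|_\hv^2 + |\nablam_{\Dv^\sharp t}(\rho^{-\kappa}\phi)|_\hv^2\bigr]$, while the residual contributes at most $\lambda^2\rho^{2p}|\rho^{-\kappa-1}\phi|_\hv^2 \leq \lambda^2|\rho^{-\kappa-1}\phi|_\hv^2$ (using $\rho \leq 1$), matching \eqref{eq.carleman_pointwise_P}. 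The main technical obstacle throughout is the bookkeeping of the various weights $\rho$, $f$, $\eta$, $e^{-\lambda p^{-1}f^p}$ and their powers through the conversions, and verifying with care that every $|\phi|^2$-error generated from converting derivatives of $\psi$ into derivatives of $\phi$ is strictly dominated by $\tfrac18\lambda^2\mc{E} f^{2p}|\phi|_\hv^2$; the conditions \eqref{eq.carleman_fstar} on $f_\star$ and \eqref{eq.carleman_lambda} on $\lambda$ are precisely what make these absorptions possible.
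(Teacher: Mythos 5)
Your overall strategy is the same as the paper's: both parts of the lemma are obtained by substituting $\psi = e^{-F}\phi$ into the conclusions of Lemma \ref{thm.carleman_pointwise_psi}, converting frame derivatives of $\psi$ into coordinate derivatives of $\phi$ (resp.\ of $\rho^{-\kappa}\phi$), and absorbing the resulting $|\phi|^2$-errors using \eqref{eq.carleman_fstar}--\eqref{eq.carleman_lambda}. The frame-inversion step, the identity for $\nablam_\rho\psi$, and the absorption of the $(F')^2(Zf)^2$ errors into $\tfrac18\lambda^2\mc{E}f^{2p}|\phi|_\hv^2$ are all correct and match the paper's computation.

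There is one genuine misstep in your treatment of \eqref{eq.carleman_pointwise_P}: the claim that $\eta$ is ``bounded above and below'' on the relevant region is false. By \eqref{eq.carleman_eta_minus}--\eqref{eq.carleman_eta_plus}, $\eta$ vanishes at $|t| = \tfrac12\mc{T}_+(b,c)$, so $\rho^{-n}f^{n-2}\rho^2 = \eta^{-(n-2)}$ is unbounded on $\Omega^c$ whenever $n\geq 3$ (on $\{f<f_\ast,\ \rho=\rho_\ast\}$ one only gets $\eta>\rho_\ast/f_\ast$, which degenerates as $\rho_\ast\searrow 0$, so the bound cannot be uniform in the way your argument needs). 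The estimate is rescued not by boundedness of $\eta^{-1}$ but by the extra factor $\eta^{2\kappa}$ coming from $|\eta^\kappa e^{-\lambda p^{-1}f^p}\nablam_\rho(\rho^{-\kappa}\phi)|_\hv^2$: the total power is $\eta^{2\kappa-(n-2)}\leq 1$ precisely because $2\kappa\geq n-1$ by \eqref{eq.carleman_kappa}. Equivalently (as in the paper), one writes $f^{n-2-2\kappa}\leq\rho^{n-2-2\kappa}$ using $f\geq\rho$ and the negativity of the exponent. As written, your argument hides the only place in this lemma where the hypothesis on $\kappa$ is used, and would not survive the $\limsup_{\rho_\ast\searrow 0}$ in \eqref{eq.carleman}; with that correction the proof is complete.
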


\begin{proof}
First, note that the second equality in \eqref{eq.carleman_pointwise_weight} is an immediate consequence of \eqref{eql.carleman_weight}.
We now apply \eqref{eq.carleman_pointwise_psi} and recall \eqref{eql.carleman_weight}, \eqref{eql.carleman_XX}, and \eqref{eq.carleman_pointwise_weight} to obtain, for some $\mc{C} > 0$ depending on $\Upsilon$,
\begin{align}
\label{eql.carleman_pointwise_1} \lambda^{-1} \mc{E} f^{ - p } | ( \Boxm + \sigma + \nablam_{ \mc{Y} } ) \phi |_\hv^2 &\geq \mc{C} f^{ n - 1 } \rho \left[ | \nablam_N \psi |_\hv^2 + | \nablam_V \psi |_\hv^2 + \sum_{ X = 1 }^{ n - 1 } | \nablam_{ E_X } \psi |_\hv^2 \right] \\
\notag &\qquad + \frac{1}{4} \lambda^2 \mc{E} f^{ 2 p } | \phi |_\hv^2 + g^{ \alpha \beta } \nabla_\alpha P_\beta \text{.}
\end{align}

Next, note that by \eqref{eq.aads_strong_limits}, Proposition \ref{thm.aads_geom_limit}, Proposition \ref{thm.carleman_E}, and \eqref{eq.carleman_NV},
\begin{align}
\label{eql.carleman_pointwise_3} \rho^4 | \nablam_\rho \phi |_\hv^2 &\leq \mc{O} ( \rho^2 ) \, | \nablam_N \phi |_\hv^2 + \mc{O} ( f^2 \rho^2 ) \, | \nablam_V \phi |_\hv^2 \text{,} \\
\notag \rho^4 | \Dv \phi |_\hv^2 &\leq \mc{O} ( f^2 \rho^2 ) \, | \nablam_N \phi |_\hv^2 + \mc{O} ( \rho^2 ) \, | \nablam_V \phi |_\hv^2 + \sum_{ X = 1 }^{ n - 1 } \mc{O} ( \rho^2 ) \, | \nablam_{ E_X } \phi |_\hv^2 \text{.}
\end{align}
Moreover, applying \eqref{eq.carleman_NV}, \eqref{eq.carleman_lambda}, and \eqref{eql.carleman_weight} yields
\[
e^{ -2 F } | \nablam_N \phi |_\hv^2 + e^{ -2 F } | \nablam_V \phi |_\hv^2 \leq | \nablam_N \psi |_\hv^2 + | \nablam_V \psi |_\hv^2 + \lambda^2 \, \mc{O} (1) \, | \psi |_\hv^2 \text{,} \qquad e^{ -2 F } | \nablam_{ E_X } \phi |_\hv^2 = | \nablam_{ E_X } \psi |_\hv^2 \text{,}
\]
for each $1 \leq X < n$.
Thus, combining \eqref{eq.carleman_fstar}, \eqref{eql.carleman_pointwise_1}, \eqref{eql.carleman_pointwise_3} and the above results in \eqref{eq.carleman_pointwise}.

For \eqref{eq.carleman_pointwise_P}, we begin by noticing that
\[
e^{ - \frac{ \lambda f^p }{p} } \leq 1 \text{,} \qquad \left| \nabla_\rho \left( e^{ - \frac{ \lambda f^p }{p} } \right) \right| \leq \lambda \, \mc{O} ( f^p \rho^{-1} ) \text{,} \qquad \left| \Dv \left( e^{ - \frac{ \lambda f^p }{p} } \right) \right|_\hv \leq \lambda \, \mc{O} ( f^{ p + 1 } \rho^{-1} ) \text{,}
\]
where we also applied \eqref{eq.carleman_f} in the second and third inequalities.
Next, since $f^{-1} \leq \rho^{-1}$ by \eqref{eq.carleman_f}, and since $2 \kappa \geq n - 1$ by \eqref{eq.carleman_kappa}, then \eqref{eql.carleman_weight} and the above imply
\begin{align*}
f^{ n - 2 } \rho^{-n} | \psi |_\hv^2 &\leq | \rho^{ - \kappa - 1 } \phi |_\hv^2 \text{,} \\
f^{ n - 2 } \rho^{ -n + 2 } | \nablam_\rho \psi |_\hv^2 &\leq | \nablam_\rho ( \rho^{ - \kappa } \phi ) |_\hv^2 + \lambda^2 \, \mc{O} (1) \, | \rho^{ - \kappa - 1} \phi |_\hv^2 \text{,} \\
f^{ n - 2 } \rho^{ -n + 2 } | \nablam_{ \Dv^\sharp t } \psi |_\hv^2 &\leq | \nablam_{ \Dv^\sharp t } ( \rho^{ - \kappa } \phi ) |_\hv^2 + \lambda^2 \, \mc{O} (1) \, | \rho^{ - \kappa - 1} \phi |_\hv^2 \text{.}
\end{align*}
The bound \eqref{eq.carleman_pointwise_P} now follows from multiplying \eqref{eq.carleman_pointwise_psi_P} by $\rho^{-n}$ and applying the above.
\end{proof}

The final step is to integrate \eqref{eq.carleman_pointwise} over $\Omega^c_< \cup \Omega^c_>$.
For convenience, we define
\footnote{The extra cutoff in $\rho$ is needed since $\Omega^c$ and $\Omega_0 ( f_\ast )$ have infinite volume.}
\begin{equation}
\label{eql.carleman_0} 0 < \rho_\ast < f_\ast \text{,} \qquad \Omega_0 ( f_\ast, \rho_\ast ) := \Omega_0 ( f_\ast ) \cap \{ \rho > \rho_\ast \} \text{.}
\end{equation}
Note the left-hand side and the first two terms on the right-hand side of \eqref{eq.carleman_pointwise} are continuous on $\Omega_0 ( f_\ast ) \cap \{ t = 0 \}$, and note also $\phi$ is supported on $( 0, \rho_0 ] \times \bar{\mi{D}}$.
As a result, when we integrate \eqref{eq.carleman_pointwise} (with respect to $g$), we can enlarge the domains of the corresponding integrals to $\Omega_0 ( f_\ast, \rho_\ast )$:
\begin{align}
\label{eql.carleman_1} &\int_{ \Omega_0 ( f_\ast, \rho_\ast ) } \mc{E} f^{ -p } | ( \Boxm + \sigma + \nablam_{ \mc{Y} } ) \phi |_\hv^2 - \lambda \int_{ ( \Omega^c_< \cup \Omega^c_> ) \cap \{ \rho > \rho_\ast \} } g^{ \alpha \beta } \nabla_\alpha P_\beta \\
\notag &\quad \geq \lambda \mc{C} \int_{ \Omega_0 ( f_\ast, \rho_\ast ) } \mc{E} f \rho^3 ( | \nablam_\rho \phi |_\hv^2 + | \Dv \phi |_\hv^2 ) + \frac{1}{8} \lambda^3 \int_{ \Omega_0 ( f_\ast, \rho_\ast ) } \mc{E} f^{ 2 p } | \phi |_\hv^2 \text{.}
\end{align}

Using the divergence theorem (with respect to $g$), we further expand
\begin{align}
\label{eql.carleman_2} - \lambda \int_{ ( \Omega^c_< \cup \Omega^c_> ) \cap \{ \rho > \rho_\ast \} } g^{ \alpha \beta } \nabla_\alpha P_\beta &= \lambda \int_{ \Omega_0 ( f_\ast ) \cap \{ \rho = \rho_\ast \} } P ( \rho \partial_\rho ) - \lambda \lim_{ \tau \searrow 0+ } \int_{ \Omega_0 ( f_\ast, \rho_\ast ) \cap \{ t = \tau \} } P ( \ms{T} ) \\
\notag &\qquad + \lambda \lim_{ \tau \nearrow 0- } \int_{ \Omega_0 ( f_\ast, \rho_\ast ) \cap \{ t = \tau \} } P ( \ms{T} ) \\
\notag &= D_1 + D_2 + D_3 \text{,}
\end{align}
where each of the integrals are with respect to the metric induced by $g$, and where
\begin{equation}
\label{eql.carleman_3} \ms{T} := - \frac{ \rho \cdot \Dv^\sharp t }{ \sqrt{ \gv ( \Dv^\sharp t, \Dv^\sharp t ) } } \text{.}
\end{equation}
In particular, $\rho \partial_\rho$ and $\ms{T}$ are the inward and future-pointing $g$-unit normals to $\Omega_0 ( f_\ast ) \cap \{ \rho = \rho_\ast \}$ and $\Omega_0 ( f_\ast, \rho_\ast ) \cap \{ t = 0 \}$, respectively.
Observe also that the terms $D_3$ and $D_2$ arise as integrals over the ``top" and ``bottom" boundaries of $\Omega^c_<$ and $\Omega^c_>$, respectively.
Moreover, note we do not obtain an integral over $\Omega_0 \cap \{ f = f_\ast \}$, since $\phi$ and $\nablam \phi$ are assumed to vanish there.

For $D_1$, we first observe from \eqref{eq.aads_metric} that the metric induced by $g$ on $\Omega_0 ( f_\ast ) \cap \{ \rho = \rho_\ast \}$ is precisely given by $\rho^{-n} \, d \gv |_{ \rho_\ast }$.
As a result, applying \eqref{eq.carleman_pointwise_P}, we obtain the inequality
\begin{align}
\label{eql.carleman_10} D_1 &= \lambda \int_{ \Omega_0 ( f_\ast ) \cap \{ \rho = \rho_\ast \} } \rho^{-n} P ( \rho \partial_\rho ) \, d \gv |_{ \rho_\ast } \\
\notag &\leq \mc{C}_b \lambda^3 \int_{ \Omega_0 ( f_\ast ) \cap \{ \rho = \rho_\ast \} } [ | \nablam_\rho ( \rho^{ -\kappa } \phi ) |_\hv^2 + | \nablam_{ \Dv^\sharp t } ( \rho^{ -\kappa } \phi ) |_\hv^2 + | \rho^{ -\kappa - 1 } \phi |_\hv^2 ] \, d \gv |_{ \rho_\ast } \text{.}
\end{align}
For the remaining boundary terms, we consider the definition \eqref{eq.carleman_pointwise_psi_PP} of $P$, and we notice that all the terms on the right-hand side of \eqref{eq.carleman_pointwise_psi_PP} are continuous at $\Omega ( f_\ast, \rho_\ast ) \cap \{ t = 0 \}$, except for the third term (due to the factor $\ms{T} v_\zeta$).
\footnote{This is a consequence of the formula \eqref{eql.carleman_S} for $v_\zeta$ and the fact that $f$ is only $C^2$ on $\Omega^c$; see Proposition \ref{thm.carleman_eta}.}
As a result of the above and of \eqref{eq.carleman_error_v}, we conclude that
\footnote{Again, the integrals on the right-hand side are with respect to the metric induced by $g$.}
\begin{align*}
D_2 + D_3 &\leq \lambda \lim_{ \tau \searrow 0+ } \int_{ \Omega_0 ( f_\ast, \rho_\ast ) \cap \{ t = \tau \} } | \ms{T} v_\zeta | | \psi |_\hv^2 + \lambda \lim_{ \tau \nearrow 0- } \int_{ \Omega_0 ( f_\ast, \rho_\ast ) \cap \{ t = \tau \} } | \ms{T} v_\zeta | | \psi |_\hv^2 \\
&\leq \mc{C}_v \lambda \int_{ \Omega_0 ( f_\ast, \rho_\ast ) \cap \{ t = 0 \} } \mc{E} \rho^3 | \phi |_\hv^2 \text{,}
\end{align*}
for some $\mc{C}_v > 0$ that depends on $\Upsilon$, where in the last step, we also used that $f \simeq \rho$ whenever $t = 0$.

Using a cutoff function $\chi (t)$ that is localized near $t = 0$ and applying the divergence theorem with respect to $g$, along with the assumption \eqref{eq.aads_time}, we have that
\begin{align*}
D_2 + D_3 &\leq \mc{C}_v \lambda \int_{ \Omega_0 ( f_\ast, \rho_\ast ) \cap \{ t = 0 \} } \mc{E} \rho^4 | \phi |_\hv^2 \, \cdot g ( \Dv^\sharp t, \ms{T} ) \\
&= \mc{C}_v \lambda \int_{ \Omega_0 ( f_\ast, \rho_\ast ) } \nabla_\alpha ( \mc{E} \rho^2 | \phi |_\hv^2 \, \cdot g^{ \alpha \beta } \nabla_\beta t ) \text{,}
\end{align*}
where the value of $\mc{C}_v$ may have changed, but still depends only on $\Upsilon$.
Expanding the right-hand side of the above and recalling \eqref{eq.aads_metric}, \eqref{eq.aads_strong_limits}, Proposition \ref{thm.aads_geom_limit}, and \eqref{eq.carleman_pointwise_weight}, we obtain
\begin{align}
\label{eql.carleman_13} D_2 + D_3 &\leq \lambda \int_{ \Omega_0 ( f_\ast, \rho_\ast ) } [ \mc{E} \, \mc{O} ( \rho^4 ) \, | \phi |_\hv | \nablam_{ \Dv^\sharp t } \phi |_\hv + \mc{E} | \Box t | \, \mc{O} ( \rho^2 ) \, | \phi |_\hv^2 + | \nabla_{ \Dv^\sharp t } \mc{E} | \, \mc{O} ( \rho^4 ) \, | \phi |_\hv^2 ] \\
\notag &\leq \lambda \int_{ \Omega_0 ( f_\ast, \rho_\ast ) } \mc{E} [ \, \mc{O} ( \rho^4 ) \, | \phi |_\hv | \nablam_{ \Dv^\sharp t } \phi |_\hv + \lambda | \, \mc{O} ( f \rho^3 ) \, | \phi |_\hv^2 ] \\
\notag &\leq \lambda \int_{ \Omega_0 ( f_\ast, \rho_\ast ) } \mc{E} \, \mc{O} ( \rho^6 ) \, | \Dvm \phi |_\hv^2 + \lambda^2 \int_{ \Omega_0 ( f_\ast, \rho_\ast ) } \mc{E} \, \mc{O} ( \rho^2 ) \, | \phi |_\hv^2 \text{.}
\end{align}

Combining the inequalities \eqref{eql.carleman_1}, \eqref{eql.carleman_2}, \eqref{eql.carleman_10}, \eqref{eql.carleman_13}; recalling the assumptions \eqref{eq.carleman_fstar} and \eqref{eq.carleman_lambda}; and recalling the definition \eqref{eq.carleman_pointwise_weight}; we arrive at the inequality
\begin{align*}
&\int_{ \Omega_{ t_0 } ( f_\ast, \rho_\ast ) } e^{ - \lambda p^{-1} f_{ t_0 }^p } f_{ t_0 }^{ n - 2 - p - 2 \kappa } | ( \Boxm + \sigma + \rho^2 \nablam_{ \mf{X}^\rho \partial_\rho + \mf{X} } ) \phi |_\hv^2 \\
&\qquad + \mc{C}_b \lambda^3 \int_{ \Omega_{ t_0 } ( f_\ast ) \cap \{ \rho = \rho_\ast \} } [ | \Dvm_{ \partial_\rho } ( \rho^{ - \kappa } \phi ) |_\hv^2 + | \Dv_{ \Dv^\sharp t } ( \rho^{ - \kappa } \phi ) |_\hv^2 + | \rho^{ - \kappa - 1 } \phi |_\hv^2 ] \, d \gv |_{ \rho_\ast } \\
&\quad \geq \lambda \int_{ \Omega_{ t_0 } ( f_\ast, \rho_\ast ) } e^{ - \lambda p^{-1} f_{ t_0 }^p } f_{ t_0 }^{ n - 2 - 2 \kappa } ( f_{ t_0 } \rho^3 | \Dvm_{ \partial_\rho } \phi |_\hv^2 + f_{ t_0 } \rho^3 | \Dv \phi |_\hv^2 + f^{ 2 p } | \phi |_\hv^2 ) \text{.}
\end{align*}
Finally, the desired Carleman estimate \eqref{eq.carleman} now follows from the above by taking the limit $\rho_\ast \searrow 0$ and then applying the monotone convergence theorem.

\appendix

\if\comp1

\section{Additional Details and Computations} \label{sec.comp}

This appendix contains additional proofs and computational details for readers' convenience.
Let us begin by collecting some preliminary computations that will be useful later:

\begin{lemma} \label{thm.comp_prelim_christoffel}
Let $( \mi{M}, g )$ be an FG-aAdS segment, and let $( U, \varphi )$ be a coordinate system on $\mi{I}$.
Then, the Christoffel symbols for $g$, with respect to $\varphi_\rho$-coordinates, are given by
\begin{align}
\label{eq.comp_prelim_christoffel} \Gamma^\rho_{ \rho \rho } = - \rho^{-1} \text{,} &\qquad \Gamma^\rho_{ \rho a } = 0 \text{,} \\
\notag \Gamma^a_{ \rho \rho } = 0 \text{,} &\qquad \Gamma^\rho_{ a b } = - \frac{1}{2} \mi{L}_\rho \gv_{ab} + \rho^{-1} \gv_{ab} \text{,} \\
\notag \Gamma^a_{ \rho b } = \frac{1}{2} \gv^{ac} \mi{L}_\rho \gv_{bc} - \rho^{-1} \delta^a_b \text{,} &\qquad \Gamma^c_{ a b } = \bar{\Gamma}^c_{ a b } \text{,}
\end{align}
where the symbols $\bar{\Gamma}^c_{ a b }$ are defined as in \eqref{eq.aads_vertical_christoffel}.
\end{lemma}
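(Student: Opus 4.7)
The plan is to compute the Christoffel symbols directly from the coordinate expression for $g$ given by the Fefferman--Graham gauge \eqref{eq.aads_metric}. In $\varphi_\rho$-coordinates, the nonzero components of $g$ and $g^{-1}$ are
\[
g_{\rho\rho} = \rho^{-2}, \qquad g_{ab} = \rho^{-2} \gv_{ab}, \qquad g^{\rho\rho} = \rho^2, \qquad g^{ab} = \rho^2 \gv^{ab},
\]
with all mixed $\rho a$-components vanishing. First I would insert these into the standard formula
\[
\Gamma^\gamma_{\alpha\beta} = \tfrac{1}{2} g^{\gamma\delta} \bigl( \partial_\alpha g_{\beta\delta} + \partial_\beta g_{\alpha\delta} - \partial_\delta g_{\alpha\beta} \bigr)
\]
and exploit the block-diagonal structure (no cross $\rho a$ components in either $g$ or $g^{-1}$) to reduce each case to a single nonzero summation.

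Next I would compute each of the six cases by cases on the index patterns. The only nontrivial $\rho$-derivatives of the metric are $\partial_\rho g_{\rho\rho} = -2\rho^{-3}$ and $\partial_\rho g_{ab} = -2\rho^{-3} \gv_{ab} + \rho^{-2} \mi{L}_\rho \gv_{ab}$, where we use that $\partial_\rho$ acts on the vertical field $\gv_{ab}$ via the $\rho$-Lie derivative from Definition \ref{def.aads_vertical_lie}. This immediately produces the prefactors $-\rho^{-1}$ and $\rho^{-1}$ in \eqref{eq.comp_prelim_christoffel}, together with the $\tfrac{1}{2}\mi{L}_\rho \gv$ and $\tfrac{1}{2}\gv^{ac}\mi{L}_\rho \gv_{bc}$ contributions. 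The purely spatial symbol $\Gamma^c_{ab}$ only involves $\partial_d g_{ab} = \rho^{-2}\partial_d \gv_{ab}$ with $g^{cd} = \rho^2 \gv^{cd}$, so the $\rho^{\pm 2}$ factors cancel and one recovers exactly the vertical Christoffel coefficient $\bar{\Gamma}^c_{ab}$ defined in \eqref{eq.aads_vertical_christoffel}.

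The calculation is wholly routine; there is no conceptual obstacle. The only mildly delicate point is bookkeeping the two competing contributions in $\Gamma^\rho_{ab}$ and $\Gamma^a_{\rho b}$ that come from differentiating the conformal factor $\rho^{-2}$ versus differentiating $\gv_{ab}$, and in particular tracking the sign in $\Gamma^a_{\rho b}$ where the $-\rho^{-1}\delta^a_b$ term arises from $g^{ac}\partial_\rho g_{cb}$ contracted against $\rho^2 \gv^{ac}$. Once these are written out, the stated identities \eqref{eq.comp_prelim_christoffel} follow line by line.
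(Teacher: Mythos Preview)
Your proposal is correct and takes exactly the same approach as the paper, which simply states that the identities follow from direct computations using \eqref{eq.aads_metric}. You have merely made those computations explicit, and your bookkeeping of the conformal factor versus the $\gv$-derivative contributions is accurate.
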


\begin{proof}
These follow from direct computations using \eqref{eq.aads_metric}.
\end{proof}

\subsection{Proof of Proposition \ref{thm.aads_vertical_connection}} \label{sec.aads_vertical_connection}

First, notice that the right-hand sides of \eqref{eq.aads_vertical_connection} define tensorial quantities, with respect to the indices $a_1 \dots a_k$ and $b_1 \dots b_l$.
In particular, by \eqref{eq.aads_vertical_christoffel}, the first formula of \eqref{eq.aads_vertical_connection} corresponds precisely to the $\Dv$-covariant derivative of $\ms{A}$.

Thus, we can define tensorial operators $\bar{\Dv}$ by setting, with respect to $\varphi$ and $\varphi_\rho$-coordinates,
\begin{equation}
\label{eql.aads_vertical_connection_1} \bar{\Dv}_X \ms{A}^{ a_1 \dots a_k }_{ b_1 \dots b_l } = X^\rho \, \bar{\Dv}_\rho \ms{A}^{ a_1 \dots a_k }_{ b_1 \dots b_l } + X^c \, \bar{\Dv}_c \ms{A}^{ a_1 \dots a_k }_{ b_1 \dots b_l } \text{,}
\end{equation}
for any vector field $X$ on $\mi{M}$.
In particular, \eqref{eql.aads_vertical_connection_1} defines bundle connections, since they are clearly $C^\infty ( \mi{M} )$-linear with respect to the derivative component, and since the formulas in \eqref{eq.aads_vertical_connection} imply
\[
\bar{\Dv}_X ( a \ms{A} ) = a \cdot \bar{\Dv}_X \ms{A} + X a \cdot \ms{A} \text{,} \qquad a \in C^\infty ( \mi{M} ) \text{.}
\]
Moreover, since the first part of \eqref{eq.aads_vertical_connection} is just the $\Dv$-derivative of $\ms{A}$, then \eqref{eq.aads_vertical_connection_extend} also follows.

Next, when the vector field $X$ on $\mi{M}$ is vertical, then $\bar{\Dv}_X$ coincides with $\Dv_X$ by \eqref{eq.aads_vertical_connection_extend}, and hence \eqref{eq.aads_vertical_connection_scalar}--\eqref{eq.aads_vertical_connection_compat} follow from the standard properties of Levi-Civita connections (see \cite{on:srg}).
Thus, to complete the proof, it remains to establish \eqref{eq.aads_vertical_connection_scalar}--\eqref{eq.aads_vertical_connection_compat} in the special case $X := \partial_\rho$.
\begin{itemize}
\item The identities \eqref{eq.aads_vertical_connection_scalar}--\eqref{eq.aads_vertical_connection_contract} follow from \eqref{eq.aads_vertical_connection} and the standard facts that the Lie derivative $\mi{L}_\rho$ coincides with the $\partial_\rho$-derivative for scalars, satisfies the Leibniz rule with respect to tensor products, and commutes with all (non-metric) contraction operations.

\item For the remaining property \eqref{eq.aads_vertical_connection_compat}, we apply \eqref{eq.aads_vertical_christoffel} and \eqref{eq.aads_vertical_connection} to obtain
\begin{align*}
\bar{\Dv}_\rho \gv_{ a b } &= \mi{L}_\rho \gv_{ a b } - \frac{1}{2} \gv^{ c d } \mi{L}_\rho \gv_{ a c } \cdot \gv_{ d b } - \frac{1}{2} \gv^{ c d } \mi{L}_\rho \gv_{ c b } \cdot \gv_{ a d } \\
&= \mi{L}_\rho \gv_{ a b } - \mi{L}_\rho \gv_{ a b } \text{,} \\
\bar{\Dv}_\rho \gv^{ a b } &= \mi{L}_\rho \gv^{ a b } + \frac{1}{2} \gv^{ a d } \mi{L}_\rho \gv_{ c d } \cdot \gv^{ c b } + \frac{1}{2} \gv^{ b d } \mi{L}_\rho \gv_{ c d } \cdot \gv^{ a c } \\
&= \mi{L}_\rho \gv^{ a b } + \gv^{ a c } \gv^{ b d } \mi{L}_\rho \gv_{ c d } \text{,}
\end{align*}
both of which clearly vanish identically.
\end{itemize}

\subsection{Proof of Proposition \ref{thm.aads_mixed_connection}} \label{sec.aads_mixed_connection}

The identities \eqref{eq.aads_mixed_connection_compat} follow immediately from \eqref{eq.aads_mixed_connection}---for instance, writing $g$ as $g \otimes 1$, with the ``$1$" viewed as a vertical scalar, we obtain
\[
\bar{\nabla}_X g = \bar{\nabla}_X ( g \otimes 1 ) = \nabla_X g \otimes 1 = 0 \text{.}
\]
The remaining parts of \eqref{eq.aads_mixed_connection_compat} are proved similarly.

For the product rule \eqref{eq.aads_mixed_connection_leibniz}, we first recall that $\mathbf{A}$ and $\mathbf{B}$ can be locally expressed as
\begin{equation}
\label{eql.aads_mixed_connection_1} \mathbf{A} = \sum_{ i = 1 }^m ( A_i \otimes \ms{A}_i ) \text{,} \qquad \mathbf{B} = \sum_{ j = 1 }^q ( B_j \otimes \ms{B}_j ) \text{,}
\end{equation}
where $A_1, \dots A_m, B_1, \dots, B_q$ are tensor fields on $\mi{M}$, and where $\ms{A}_1, \dots, \ms{A}_m, \ms{B}_1, \dots, \ms{B}_q$ are vertical tensor fields.
From \eqref{eql.aads_mixed_connection_1}, we can then (locally) expand
\begin{align}
\label{eql.aads_mixed_connection_2} \bar{\nabla}_X ( \mathbf{A} \otimes \mathbf{B} ) &= \sum_{ i = 1 }^m \sum_{ j = 1 }^q \bar{\nabla}_X [ ( A_i \otimes B_j ) \otimes ( \ms{A}_i \otimes \ms{B}_j ) ] \\
\notag &= \sum_{ i = 1 }^m \sum_{ j = 1 }^q [ \nabla_X ( A_i \otimes B_j ) \otimes ( \ms{A}_i \otimes \ms{B}_j ) + ( A_i \otimes B_j ) \otimes \bar{\Dv}_X ( \ms{A}_i \otimes \ms{B}_j ) ] \text{.}
\end{align}
The identity \eqref{eq.aads_mixed_connection_leibniz} now follows directly from \eqref{eq.aads_vertical_connection_leibniz}, \eqref{eql.aads_mixed_connection_1}, and \eqref{eql.aads_mixed_connection_2}.

\subsection{Proof of Proposition \ref{thm.aads_vertical_curv}} \label{sec.aads_vertical_curv}

For conciseness, we give the proof only for vertical vector fields and $1$-forms, as the derivations are similar for higher-order fields.
(Moreover, it is clear that both expressions in \eqref{eq.aads_vertical_curv} vanish identically when $\ms{A}$ is a scalar.)
Throughout, we let $\ms{X}$ denote a vertical vector field, and we let $\ms{H}$ be a vertical $1$-form.
In addition, we let $\smash{ \Gamma^\alpha_{ \gamma \beta } }$ denote the Christoffel symbols for $g$ with respect to $\varphi_\rho$-coordinates, and we let $\bar{\Gamma}^a_{ c b }$ and $\bar{\Gamma}^a_{ \rho b }$ be as defined in \eqref{eq.aads_vertical_christoffel}.

First, from Proposition \ref{thm.aads_mixed_connection}, the definition \eqref{eq.aads_mixed_curv}, and \eqref{eq.comp_prelim_christoffel}, we obtain
\begin{align}
\label{eql.aads_vertical_curv_1} \bar{R}_{ a b } [ \ms{H} ]_c &= \bar{\nabla}_{ a b } \ms{H}_c - \bar{\nabla}_{ b a } \ms{H}_c \\
\notag &= [ \partial_a ( \bar{\nabla}_b \ms{H}_c ) - \Gamma^\delta_{ a b } \bar{\nabla}_\delta \ms{H}_c - \bar{\Gamma}^d_{ a c } \bar{\nabla}_b \ms{H}_d ] - [ \partial_b ( \bar{\nabla}_a \ms{H}_c ) - \Gamma^\delta_{ b a } \bar{\nabla}_\delta \ms{H}_c - \bar{\Gamma}^d_{ b c } \bar{\nabla}_a \ms{H}_d ] \\
\notag &= - \partial_a ( \bar{\Gamma}^d_{ b c } \ms{H}_d ) - \bar{\Gamma}^d_{ a c } ( \partial_b \ms{H}_d - \bar{\Gamma}^e_{ b d } \ms{H}_e ) + \partial_b ( \bar{\Gamma}^d_{ a c } \ms{H}_d ) + \bar{\Gamma}^d_{ b c } ( \partial_a \ms{H}_d - \bar{\Gamma}^e_{ a d } \ms{H}_e ) \\
\notag &= ( \partial_b \bar{\Gamma}^e_{ a c } - \partial_a \bar{\Gamma}^e_{ b c } + \bar{\Gamma}^d_{ a c } \bar{\Gamma}^e_{ b d } - \bar{\Gamma}^d_{ b c } \bar{\Gamma}^e_{ a d } ) \ms{H}_e \text{.}
\end{align}
Moreover, we know from \eqref{eq.aads_vertical_christoffel} and standard differential geometric formulas that
\begin{equation}
\label{eql.aads_vertical_curv_2} \partial_b \bar{\Gamma}^e_{ a c } - \partial_a \bar{\Gamma}^e_{ b c } + \bar{\Gamma}^d_{ a c } \bar{\Gamma}^e_{ b d } - \bar{\Gamma}^d_{ b c } \bar{\Gamma}^e_{ a d } = - \ms{R}^e{}_{ c a b } \text{.}
\end{equation}
Thus, the first part of \eqref{eq.aads_vertical_curv}---when $\ms{A}$ is a $1$-form---follows immediately from \eqref{eql.aads_vertical_curv_1} and \eqref{eql.aads_vertical_curv_2}.

Moreover, an analogous calculation for $\ms{X}$ yields
\begin{align}
\label{eql.aads_vertical_curv_3} \bar{R}_{ a b } [ \ms{X} ]^c &= [ \partial_a ( \bar{\nabla}_b \ms{X}^c ) - \Gamma^\delta_{ a b } \bar{\nabla}_\delta \ms{X}^c + \bar{\Gamma}^c_{ a d } \bar{\nabla}_b \ms{X}^d ] - [ \partial_b ( \bar{\nabla}_a \ms{X}^c ) - \Gamma^\delta_{ b a } \bar{\nabla}_\delta \ms{X}^c + \bar{\Gamma}^c_{ b d } \bar{\nabla}_a \ms{X}^d ] \\
\notag &= ( \partial_a \bar{\Gamma}^c_{ b e } - \partial_b \bar{\Gamma}^c_{ a e } + \bar{\Gamma}^c_{ a d } \bar{\Gamma}^d_{ b e } - \bar{\Gamma}^c_{ b d } \bar{\Gamma}^d_{ a e } ) \ms{X}^e \\
\notag &= \ms{R}^c{}_{ e a b } \ms{X}^e \text{,}
\end{align}
which is again the first formula in \eqref{eq.aads_vertical_curv}.

The second part of \eqref{eq.aads_vertical_curv} is proved similarly.
By Proposition \ref{thm.aads_mixed_connection}, \eqref{eq.aads_mixed_curv}, and \eqref{eq.comp_prelim_christoffel}, we have
\begin{align}
\label{eql.aads_vertical_curv_10} \bar{R}_{ \rho a } [ \ms{H} ]_c &= [ \partial_\rho ( \bar{\nabla}_a \ms{H}_c ) - \Gamma^\delta_{ \rho a } \bar{\nabla}_\delta \ms{H}_c - \bar{\Gamma}^d_{ \rho c } \bar{\nabla}_a \ms{H}_d ] - [ \partial_a ( \bar{\nabla}_\rho \ms{H}_c ) - \Gamma^\delta_{ a \rho } \bar{\nabla}_\delta \ms{H}_c - \bar{\Gamma}^d_{ a c } \bar{\nabla}_\rho \ms{H}_d ] \\
\notag &= - \partial_\rho ( \bar{\Gamma}^d_{ a c } \ms{H}_d ) - \bar{\Gamma}^d_{ \rho c } ( \partial_a \ms{H}_d - \bar{\Gamma}^e_{ a d } \ms{H}_e ) + \partial_a ( \bar{\Gamma}^d_{ \rho c } \ms{H}_d ) + \bar{\Gamma}^d_{ a c } ( \partial_\rho \ms{H}_d - \bar{\Gamma}^e_{ \rho d } \ms{H}_e ) \\
\notag &= ( \partial_a \bar{\Gamma}^e_{ \rho c } - \partial_\rho \bar{\Gamma}^e_{ a c } + \bar{\Gamma}^d_{ \rho c } \bar{\Gamma}^e_{ a d } - \bar{\Gamma}^d_{ a c } \bar{\Gamma}^e_{ \rho d } ) \ms{H}_e \text{.}
\end{align}
In addition, a direct, though tedious, calculation yields the identity
\begin{equation}
\label{eql.aads_vertical_curv_11} ( \partial_a \bar{\Gamma}^e_{ \rho c } - \partial_\rho \bar{\Gamma}^e_{ a c } + \bar{\Gamma}^d_{ \rho c } \bar{\Gamma}^e_{ a d } - \bar{\Gamma}^d_{ a c } \bar{\Gamma}^e_{ \rho d } ) = - \frac{1}{2} \gv^{ b e } ( \Dv_c \mi{L}_\rho \gv_{ a b } - \Dv_b \mi{L}_\rho \gv_{ a c } ) \text{.}
\end{equation}
The equations \eqref{eql.aads_vertical_curv_10} and \eqref{eql.aads_vertical_curv_11} together yield the second part of \eqref{eq.aads_vertical_curv}, at least when $\ms{A}$ is a $1$-form.
The vector field case can also be established using an analogous computation:
\begin{align}
\label{eql.aads_vertical_curv_12} \bar{R}_{ \rho a } [ \ms{X} ]^c &= [ \partial_\rho ( \bar{\nabla}_a \ms{X}^c ) + \bar{\Gamma}^c_{ \rho d } \bar{\nabla}_a \ms{X}^d ] - [ \partial_a ( \bar{\nabla}_\rho \ms{X}^c ) + \bar{\Gamma}^c_{ a d } \bar{\nabla}_\rho \ms{X}^d ] \\
\notag &= ( \partial_\rho \bar{\Gamma}^c_{ a e } - \partial_a \bar{\Gamma}^c_{ \rho e } + \bar{\Gamma}^c_{ \rho d } \bar{\Gamma}^d_{ a e } - \bar{\Gamma}^c_{ a d } \bar{\Gamma}^d_{ \rho e } ) \ms{X}^e \\
\notag &= \frac{1}{2} \gv^{ b c } ( \Dv_e \mi{L}_\rho \gv_{ a b } - \Dv_b \mi{L}_\rho \gv_{ a e } ) \ms{X}^e \text{.}
\end{align}

\subsection{Proof of Proposition \ref{thm.carleman_f}} \label{sec.carleman_f}

It suffices to prove the proposition with $t_0 := 0$ (that is, with $f_{ t_0 }$ replaced by $f$), since we can simply replace our global time function $t$ by $t - t_0$.
In addition, throughout the proof, we let $( U, \varphi )$ denote an arbitrary coordinate system on $\mi{I}$.

First, we apply \eqref{eq.aads_metric} and \eqref{eq.carleman_f} to expand $\nabla^\sharp f$ with respect to $\varphi_\rho$-coordinates:
\begin{align}
\label{eql.carleman_f_1} \nabla^\sharp f &= g^{ \rho \rho } \partial_\rho f \partial_\rho + \rho^2 \gv^{ab} \partial_b f \partial_a \\
\notag &= \rho^2 \cdot [ \eta (t) ]^{-1} \partial_\rho - \rho^3 \cdot [ \eta (t) ]^{-2} \eta' (t) \cdot \gv^{ab} \partial_b t \partial_a \\
\notag &= \rho f \partial_\rho - \rho f^2 \cdot \eta' (t) \cdot \gv^{ab} \partial_b t \partial_a \text{.}
\end{align}
The first part of \eqref{eq.carleman_f_grad} now follows from \eqref{eql.carleman_f_1}.
Also, continuing from \eqref{eql.carleman_f_1}, we see that
\begin{align*}
g ( \nabla^\sharp f, \nabla^\sharp f ) &= \rho f \partial_\rho f - \rho f^2 \cdot \eta' (t) \cdot \gv^{ab} \partial_b t \partial_a f \\
&= f^2 + f^4 \cdot [ \eta' (t) ]^2 \cdot \gv^{ab} \partial_a t \partial_b t \text{,}
\end{align*}
which immediately implies the second identity in \eqref{eq.carleman_f_grad}.

Next, we move to the identities in \eqref{eq.carleman_f_hessian}.
Letting $\smash{ \Gamma^\alpha_{ \gamma \beta } }$ denote the Christoffel symbols for $g$ with respect to $\varphi_\rho$-coordinates, and applying \eqref{eq.carleman_f} and \eqref{eq.comp_prelim_christoffel}, we compute
\begin{align*}
\nabla_{ \rho \rho } f &= \partial_{ \rho \rho } f - \Gamma^\rho_{ \rho \rho } \partial_\rho f \\
&= \rho^{-2} f \text{,} \\
\nabla_{ \rho a } f &= \partial_{ \rho a } f - \Gamma^b_{ \rho a } \partial_b f \\
&= - \rho^{-2} f^2 \cdot \eta' (t) \cdot \Dv_a t + \rho^{-1} f^2 \cdot \eta' (t) \cdot \Dv_b t \left( \frac{1}{2} \gv^{bc} \mi{L}_\rho \gv_{ac} - \rho^{-1} \delta_a^b \right) \\
&= - 2 \rho^{-2} f^2 \cdot \eta' (t) \cdot \Dv_a t + \frac{1}{2} \rho^{-1} f^2 \cdot \eta' (t) \cdot \gv^{ a b } \Dv_b t \mi{L}_\rho \gv_{ a c } \\
\nabla_{ a b } f &= \partial_{ a b } f - ( \Gamma^\rho_{ a b } \partial_\rho f + \Gamma^c_{ab} \partial_c f) \\
&= \Dv_{ a b } f + \rho^{-1} f \left( \frac{1}{2} \mi{L}_\rho \gv_{ab} - \rho^{-1} \gv_{ab} \right) \\
&= 2 \rho^{-2} f^3 \cdot [ \eta' (t) ]^2 \cdot \Dv_a t \Dv_b t - \rho^{-1} f^2 \cdot \eta'' (t) \cdot \Dv_a t \Dv_b t - \rho^{-1} f^2 \cdot \eta' (t) \cdot \Dv_{ a b } t \\
&\qquad + \frac{1}{2} \rho^{-1} f \cdot \mi{L}_\rho \gv_{ a b } - \rho^{-2} f \cdot \gv_{ a b } \text{.}
\end{align*}
The above are precisely the desired identities \eqref{eq.carleman_f_hessian}.
Finally, for $\Box f$, we see from \eqref{eq.aads_metric} that
\[
\Box f = \rho^2 \nabla_{ \rho \rho } f + \rho^2 \gv^{ a b } \nabla_{ a b } f \text{.}
\]
Combining the above with \eqref{eq.carleman_f_hessian} yields the last equation in \eqref{eq.carleman_f_grad}.

\subsection{Proof of Proposition \ref{thm.carleman_E}} \label{sec.carleman_E}

First, standard results imply that on a small enough neighborhood $U_p$ of $p$, there exist vector fields $\mf{E}_1, \dots, \mf{E}_{ n - 1 }$ on $U_p$ satisfying
\begin{equation}
\label{eql.carleman_E_0} \gm ( \mf{E}_X, \mf{E}_Y ) = \delta_{ X Y } \text{,} \qquad \mf{E}_X t = 0 \text{,} \qquad 1 \leq X, Y < n \text{.}
\end{equation}
By \eqref{eq.aads_metric_limit} and standard differential geometric procedures, the vector fields $\mf{E}_1, \dots, \mf{E}_{ n - 1 }$ can be extended to vertical vector fields $\ms{E}_1, \dots, \ms{E}_{ n - 1 }$ on $( 0, \rho_0 ] \times U_p$ satisfying
\begin{equation}
\label{eql.carleman_E_1} \gv ( \ms{E}_X, \ms{E}_Y ) = \delta_{ X Y } \text{,} \qquad \ms{E}_X t = 0 \text{,} \qquad \ms{E}_X \rightarrow^0 \mf{E}_X \text{,} \qquad 1 \leq X, Y < n \text{.}
\end{equation}
The vector fields $E_X := \rho \ms{E}_X$, for all $1 \leq X < n$, satisfy the desired properties.

\subsection{Proof of Proposition \ref{thm.carleman_frame}} \label{sec.carleman_frame}

From \eqref{eq.aads_time}, \eqref{eq.carleman_eta}, and \eqref{eq.carleman_omega}, we obtain that
\begin{equation}
\label{eql.carleman_frame_0} 1 + f_{ t_0 }^2 [ \eta' ( t - t_0 ) ]^2 \cdot \gv ( \Dv^\sharp t, \Dv^\sharp t ) > 0 \text{,}
\end{equation}
as long as $f_{ t_0 }$ is sufficiently small with respect to $t$, $b$, and $c$.
In particular, this immediately implies that the vector fields \eqref{eq.carleman_NV} are well-defined for these values of $f_{ t_0 }$.

That the frames $( N_{ t_0 }, V_{ t_0 }, E_1, \dots, E_{ n-1 } )$ mutually orthonormal follows from direct computations using \eqref{eq.aads_metric} and the definitions \eqref{eq.carleman_E}, \eqref{eq.carleman_NV}.
Furthermore, \eqref{eq.carleman_f_grad} and \eqref{eq.carleman_NV} imply that $N_{ t_0 }$ and $\nabla^\sharp f_{ t_0 }$ point in the same direction.
As a result, $N_{ t_0 }$ is normal to the level sets of $f_{ t_0 }$, and it follows that $V_{ t_0 }, E_1, \dots, E_{n-1}$ must be tangent to the level sets of $f_{ t_0 }$.

The second identity of \eqref{eq.carleman_f_grad}, along with \eqref{eql.carleman_frame_0}, implies that $\nabla^\sharp f_{ t_0 }$---and hence $N_{ t_0 }$---is spacelike.
Moreover, from \eqref{eq.carleman_E}, we conclude that $E_1, \dots, E_{n-1}$ must also be spacelike.
Finally, by orthogonality, we obtain that the remaining element $V_{ t_0 }$ must be timelike.

\subsection{Proof of Lemma \ref{thm.carleman_pseudoconvex_pi}} \label{sec.carleman_pseudoconvex_pi}

We begin by defining the quantities
\begin{equation}
\label{eql.carleman_pseudoconvex_pi_0} \ms{T} := | \gv ( \Dv^\sharp t, \Dv^\sharp t ) |^{ - \frac{1}{2} } \Dv^\sharp t \text{,} \qquad \ms{E}_X := \rho^{-1} E_X \text{,} \qquad 1 \leq X < n \text{.}
\end{equation}
For brevity, we also adopt the abbreviation
\begin{equation}
\label{eql.carleman_pseudoconvex_pi_1} \gv^{tt} := \gv ( \Dv^\sharp t, \Dv^\sharp t ) = - dt^2 ( \ms{T}, \ms{T} ) \text{,}
\end{equation}
Recalling \eqref{eq.aads_strong_limits} and Proposition \ref{thm.aads_geom_limit}, we obtain
\begin{equation}
\label{eql.carleman_pseudoconvex_pi_2} \gv \rightarrow^3 \gm \text{,} \qquad \Dv^\sharp t \rightarrow^0 \Dm^\sharp t \text{,} \qquad \rho^{-1} \mi{L}_\rho \gv \rightarrow^1 2 \gs \text{,} \qquad \Dv^2 t \rightarrow^0 \Dm^2 t \text{.}
\end{equation}
Similarly, from \eqref{eql.carleman_pseudoconvex_pi_0} and the definitions of $\mf{E}_1, \dots, \mf{E}_{ n - 1 }$, we have
\begin{equation}
\label{eql.carleman_pseudoconvex_pi_3} \ms{T} \rightarrow^0 \mf{T} \text{,} \qquad \ms{E}_X \rightarrow^0 \mf{E}_X \text{,} \qquad 1 \leq X < n \text{.}
\end{equation}

Now, from direct (but long) computations using \eqref{eq.carleman_f}, \eqref{eq.carleman_f_hessian}, \eqref{eq.carleman_NV}, and \eqref{eql.carleman_pseudoconvex_pi_0}, we obtain
\begin{align}
\label{eql.carleman_pseudoconvex_pi_10} \nabla_{ V V } f &= \rho^2 ( - \gv^{tt} )^{-1} \{ 1 + f^2 [ \eta' (t) ]^2 \gv^{tt} \}^{-1} \\
\notag &\qquad \cdot \{ f^2 [ \eta' (t) ]^2 ( \gv^{tt} )^2 \cdot \nabla^2 f ( \partial_\rho, \partial_\rho ) + 2 f \eta' (t) \gv^{tt} \cdot \nabla^2 f ( \partial_\rho, \Dv^\sharp t ) + \nabla^2 f ( \Dv^\sharp t, \Dv^\sharp t ) \} \\
\notag &= f + \left[ \frac{1}{2} f^2 \eta (t) \cdot \mi{L}_\rho \gv - \rho f^2 \eta' (t) \cdot \Dv^2 t - \rho f^2 \eta'' (t) \cdot dt^2 \right] ( \ms{T}, \ms{T} ) + \mc{O} ( \rho f^3 ) \text{,} \\
\notag \nabla_{ V E_X } f &= \rho^2 ( - \gv^{tt} )^{ - \frac{1}{2} } \{ 1 + f^2 [ \eta' (t) ]^2 \gv^{tt} \}^{ - \frac{1}{2} } \cdot \{ f \eta' (t) \gv^{tt} \cdot \nabla^2 f ( \partial_\rho, \ms{E}_X ) + \nabla^2 f ( \Dv^\sharp t, \ms{E}_X ) \} \\
\notag &= \left[ \frac{1}{2} f^2 \eta (t) \cdot \mi{L}_\rho \gv - \rho f^2 \eta' (t) \cdot \Dv^2 t \right] ( \ms{T}, \ms{E}_X ) + \mc{O} ( \rho f^3 ) \text{,} \\
\notag \nabla_{ E_X E_Y } f &= \rho^2 \cdot \nabla^2 f ( \ms{E}_X, \ms{E}_Y ) \\
\notag &= - f \cdot \delta_{ X Y } + \left[ \frac{1}{2} f^2 \eta (t) \cdot \mi{L}_\rho \gv - \rho f^2 \eta' \cdot \Dv^2 t \right] ( \ms{E}_X, \ms{E}_Y ) \text{.}
\end{align}
for any $1 \leq X, Y < n$.
(For the error terms $\mc{O} ( \rho f^3 )$, we also recall Remark \ref{rmkl.carleman_bounded}.)
From \eqref{eql.carleman_pseudoconvex_pi_2}--\eqref{eql.carleman_pseudoconvex_pi_10}, along with observation that the $\mi{L}_\rho$-derivative of the quantities in \eqref{eql.carleman_pseudoconvex_pi_2} and \eqref{eql.carleman_pseudoconvex_pi_3} exist and are uniformly bounded on $\Omega^c_< \cup \Omega^c_>$, we then obtain
\begin{align}
\label{eql.carleman_pseudoconvex_pi_11} \nabla_{ V V } f &= f + \rho f^2 [ - \eta'' (t) \cdot dt^2 - \eta' (t) \cdot \Dm^2 t + \eta (t) \cdot \gs ] ( \mf{T}, \mf{T} ) + \mc{O} ( \rho f^3 ) \text{,} \\
\notag \nabla_{ V E_X } f &= \rho f^2 [ - \eta' (t) \cdot \Dm^2 t + \eta (t) \cdot \gs ] ( \mf{T}, \mf{E}_X ) + \mc{O} ( \rho f^3 ) \text{,} \\
\notag \nabla_{ E_X E_Y } f &= - f \cdot \delta_{ X Y } + \rho f^2 [ - \eta' \cdot \Dm^2 t + \eta (t) \cdot \gs ] ( \mf{E}_X, \mf{E}_Y ) + \mc{O} ( \rho f^3 ) \text{.} 
\end{align}

The remaining components of $\nabla^2 f$ are treated similarly, but one can be less precise.
More specifically, using \eqref{eq.carleman_f}, \eqref{eq.carleman_f_hessian}, \eqref{eq.carleman_NV}, and \eqref{eql.carleman_pseudoconvex_pi_0}, we find that
\begin{align}
\label{eql.carleman_pseudoconvex_pi_20} \nabla_{ N N } f &= \rho^2 \{ 1 + f^2 [ \eta' (t) ]^2 \gv^{tt} \}^{ - \frac{1}{2} } \\
\notag &\qquad \cdot \{ \nabla^2 f ( \partial_\rho, \partial_\rho ) - 2 f \eta' (t) \cdot \nabla^2 f ( \partial_\rho, \Dv^\sharp t ) + f^2 [ \eta' (t) ]^2 \nabla^2 f ( \Dv^\sharp t, \Dv^\sharp t ) \} \\
\notag &= f + \mc{O} ( f^3 ) \text{,} \\
\notag \nabla_{ N V } f &= \rho^2 ( \gv^{tt} )^{ - \frac{1}{2} } \{ 1 + f^2 [ \eta' (t) ]^2 \gv^{tt} \}^{-1} [ f \eta' (t) \gv^{tt} \cdot \nabla^2 f ( \partial_\rho, \partial_\rho ) - f \eta' (t) \cdot \nabla^2 f ( \Dv^\sharp t, \Dv^\sharp t ) ] \\
\notag &\qquad + \rho^2 ( \gv^{tt} )^{ - \frac{1}{2} } \{ 1 + f^2 [ \eta' (t) ]^2 \gv^{tt} \}^{-1} \{ 1 - f^2 [ \eta' (t) ]^2 \gv^{tt} \} \cdot \nabla^2 f ( \partial_\rho, \Dv^\sharp t ) \\
\notag &= \mc{O} ( \rho f^3 ) \text{,} \\
\notag \nabla_{ N E_X } f &= \rho^2 \{ 1 + f^2 [ \eta' (t) ]^2 \gv^{tt} \}^{ - \frac{1}{2} } \cdot [ \nabla^2 f ( \partial_\rho, \ms{E}_X ) - f \eta' (t) \cdot \nabla^2 f ( \Dv^\sharp t, \ms{E}_X ) ] \\
\notag &= \mc{O} ( \rho f^3 ) \text{,}
\end{align}
for any $1 \leq X < n$.
Finally, the identities \eqref{eq.carleman_pseudoconvex_pi} follow from computations using \eqref{eql.carleman_pi}, \eqref{eql.carleman_pseudoconvex_pi_11}, and \eqref{eql.carleman_pseudoconvex_pi_20}.
(Here, we also used that $N, V, E_1, \dots, E_{ n - 1 }$ are $g$-orthonormal, that $\mf{T}, \mf{E}_1, \dots, \mf{E}_{ n - 1 }$ are $\gm$-orthonormal, and that $dt^2 ( \mf{T}, \mf{E}_X )$, $dt^2 ( \mf{E}_X, \mf{E}_Y )$ vanish for $1 \leq X, Y < n$.)

\subsection{Proof of Lemma \ref{thm.carleman_asymp}} \label{sec.carleman_asymp}

The formulas \eqref{eq.carleman_asymp} follow from applying \eqref{eq.aads_strong_limits} and Proposition \ref{thm.aads_geom_limit} to \eqref{eq.carleman_f}, \eqref{eq.carleman_f_grad} and \eqref{eq.carleman_NV}.
(Also, recall the assumption \eqref{eq.carleman_fstar} and Remark \ref{rmkl.carleman_bounded}.)
Similarly, \eqref{eq.carleman_frame_asymp} follows from applying \eqref{eq.aads_strong_limits} and Proposition \ref{thm.aads_geom_limit} to Propositions \ref{thm.carleman_E} and \ref{thm.carleman_frame}.

\subsection{Proof of Lemma \ref{thm.carleman_error_v}} \label{sec.carleman_error_v}

First, we use \eqref{eq.carleman_f_grad}, \eqref{eql.carleman_pi}, and \eqref{eql.carleman_S} to expand
\begin{align}
\label{eql.carleman_error_v_1} v_\zeta &= f^{ n - 3 } ( f + f^2 \rho \zeta ) + \frac{ n - 3 }{2} f^{ n - 4 } g^{ \alpha \beta } \nabla_\alpha f \nabla_\beta f + \frac{1}{2} f^{ n - 3 } \Box f \\
\notag &= \frac{ n - 1 }{2} [ \eta' (t) ]^2 \, \gv ( \Dv^\sharp t, \Dv^\sharp t ) \cdot f^n + \frac{1}{4} \trace{\gv} \mi{L}_\rho \gv \cdot f^{ n - 2 } \rho \\
\notag &\qquad + \left( \zeta - \frac{1}{2} \eta'' (t) \, \gv ( \Dv^\sharp t, \Dv^\sharp t ) - \frac{1}{2} \eta' (t) \, \trace{\gv} \Dv^2 t \right) f^{ n - 1 } \rho \text{.}
\end{align}
Applying \eqref{eq.aads_strong_limits}, Proposition \ref{thm.aads_geom_limit}, and \eqref{eq.carleman_f} to \eqref{eql.carleman_error_v_1} yields the first bound in \eqref{eq.carleman_error_v}.
(In particular, note $| \trace{\gv} \mi{L}_\rho \gv | = \mc{O} ( \rho )$ in the second term on the right-hand side of \eqref{eql.carleman_error_v_1}.)

For the remaining inequalities in \eqref{eq.carleman_error_v}, let us first fix a compact coordinate system $( U, \varphi )$ of $\mi{I}$.
We now index with respect to $\varphi$- and $\varphi_\rho$-coordinates, and we let $\smash{ \Gamma^\alpha_{ \gamma \beta } }$ denote the Christoffel symbols for $g$ with respect to $\varphi_\rho$-coordinates.
From \eqref{eq.carleman_f}, we see that
\begin{equation}
\label{eql.carleman_error_v_2} | \partial_\rho f | \lesssim \rho^{-1} f \text{,} \qquad | \partial_a f | \lesssim_\varphi \rho^{-1} f^2 \text{.}
\end{equation}
Furthermore, by differentiating \eqref{eql.carleman_error_v_1} and then recalling Definition \ref{def.aads_strong}, Proposition \ref{thm.aads_geom_limit}, \eqref{eq.carleman_f}, and \eqref{eql.carleman_error_v_2}, we obtain (after a tedious process) the estimates
\begin{align}
\label{eql.carleman_error_v_3} | \partial_a v_\zeta | \lesssim_\varphi \mc{O} ( \rho^{-1} f^{ n + 1 } ) \text{,} &\qquad | \partial_\rho v_\zeta | \lesssim \mc{O} ( \rho^{-1} f^n ) \text{,} \\
\notag | \partial_{ a b } v_\zeta | \lesssim_\varphi \mc{O} ( \rho^{-2} f^{ n + 2 } ) \text{,} &\qquad | \partial_{ \rho \rho } v_\zeta | \lesssim \mc{O} ( \rho^{-2} f^n ) \text{.}
\end{align}
In particular, this proves the second and third estimates of \eqref{eq.carleman_error_v}.

Finally, we apply \eqref{eq.aads_metric}, \eqref{eq.aads_vertical_christoffel}, and \eqref{eq.comp_prelim_christoffel} in order to expand
\begin{align}
\label{eql.carleman_error_v_4} | \Box v_\zeta | &\leq \rho^2 | \partial_{ \rho \rho } v_\zeta - \Gamma^\rho_{ \rho \rho } \partial_\rho v_\zeta | + \rho^2 | \gv^{ a b } ( \partial_{ a b } v_\zeta - \Gamma^c_{ a b } \partial_c v_\zeta - \Gamma^\rho_{ a b } \partial_\rho v_\zeta ) | \\
\notag &\lesssim_\varphi \mc{O} ( \rho^2 ) \, | \partial_{ \rho \rho } v_\zeta | + \mc{O} ( \rho ) \, | \partial_\rho v_\zeta | + \mc{O} ( \rho^2 ) \, \sup_{ a, b } | \partial_{ a b } v_\zeta | + \mc{O} ( \rho^2 ) \, \sup_a | \partial_a v_\zeta | \text{.}
\end{align}
The fourth part of \eqref{eq.carleman_error_v} now follows from \eqref{eql.carleman_error_v_3}, \eqref{eql.carleman_error_v_4}, and the fact that $\mi{D}^c$ (see Remark \ref{rmkl.carleman_bounded}) can be covered by a finite number of compact coordinate systems.

\subsection{Proof of Lemma \ref{thm.carleman_error_R}} \label{sec.carleman_error_R}

Fix a compact coordinate system $( U, \varphi )$ on $\mi{I}$, and assume all indices are with respect to $\varphi$- and $\varphi_\rho$-coordinates.
By \eqref{eq.aads_strong_limits} and Proposition \ref{thm.aads_geom_limit}, we have
\begin{equation}
\label{eql.carleman_error_R_1} | \gv^{-1} |_{ \varphi, 0 } \lesssim_\varphi \mc{O} ( 1 ) \text{,} \qquad | \ms{R} |_{ \varphi, 0 } \lesssim_\varphi \mc{O} ( 1 )  \text{,} \qquad | \Dv \mi{L}_\rho \gv |_{ \varphi, 0 } \lesssim_\varphi \mc{O} ( \rho ) \text{.}
\end{equation}
From Proposition \ref{thm.aads_vertical_curv} and \eqref{eql.carleman_error_R_1}, we then conclude that
\begin{equation}
\label{eql.carleman_error_R_2} | \bar{R}_{ab} [ \ms{A} ] |_{ \varphi, 0 } \lesssim_\varphi ( k + l ) \mc{O} ( 1 ) \, | \ms{A} |_{ \varphi, 0 } \text{,} \qquad | \bar{R}_{ \rho a } [ \ms{A} ] |_{ \varphi, 0 } \lesssim_\varphi ( k + l ) \mc{O} ( \rho ) \, | \ms{A} |_{ \varphi, 0 } \text{.}
\end{equation}
The desired estimates \eqref{eq.carleman_error_R} now follow immediately from \eqref{eq.carleman_frame_asymp}, \eqref{eql.carleman_error_R_2}, and the observation that $\mi{D}^c$ can be covered by a finite number of compact coordinate systems.

\subsection{Proof of Lemma \ref{thm.carleman_error_h}} \label{sec.carleman_error_h}

Let $( U, \varphi )$ be a compact coordinate system on $\mi{I}$, and assume all indices are with respect to $\varphi$- and $\varphi_\rho$-coordinates.
Furthermore, let $\smash{ \Gamma^\alpha_{ \gamma \beta } }$ denote the Christoffel symbols for $g$ with respect to $\varphi_\rho$-coordinates, and let $\bar{\Gamma}^a_{ c b }$ and $\bar{\Gamma}^a_{ \rho b }$ be as defined in \eqref{eq.aads_vertical_christoffel}.

\begin{lemma} \label{thm.carleman_error_t}
The following estimates hold for the first and second derivatives of $t$:
\begin{equation}
\label{eq.carleman_error_t2} | \Dv t |_{ \varphi, 0 } \lesssim_\varphi \mc{O} ( 1 ) \text{,} \qquad | \Dv^2 t |_{ \varphi, 0 } \lesssim_\varphi \mc{O} ( 1 ) \text{,} \qquad | \Dvm_\rho \Dv t |_{ \varphi, 0 } \lesssim_\varphi \mc{O} ( \rho ) \text{.}
\end{equation}
In addition, the third derivatives of $t$ satisfy
\begin{equation}
\label{eq.carleman_error_t3} | \Dv^3 t |_{ \varphi, 0 } \lesssim_\varphi \mc{O} ( 1 ) \text{,} \qquad | ( \Dvm_\rho )^2 \Dv t |_{ \varphi, 0 } \lesssim_\varphi \mc{O} ( 1 ) \text{.} 
\end{equation}
\end{lemma}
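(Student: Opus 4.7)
The plan is to reduce everything to coordinate computations, exploiting two key structural facts: (i) by our conventions in Definition \ref{def.aads_tensor_id}, the global time $t$ is $\rho$-independent on $\mi{M}$, so $\partial_\rho t \equiv 0$ and $\mi{L}_\rho ( \partial_a t ) \equiv 0$; and (ii) the strongly FG-aAdS assumption, via Definition \ref{def.aads_strong} and Proposition \ref{thm.aads_geom_limit}, gives uniform local $C^0$ bounds for $\gv^{\pm 1}$ up to three $\varphi$-derivatives, for $\rho^{-1} \mi{L}_\rho \gv$ up to two derivatives, and for $\mi{L}_\rho^2 \gv$ up to one derivative. The results will then follow by writing $\Dv^m t$ and $( \Dvm_\rho )^j \Dv t$ in terms of partial derivatives of $t$ and of the Christoffel symbols $\bar{\Gamma}^a_{cb}$, $\bar{\Gamma}^a_{\rho b}$ from \eqref{eq.aads_vertical_christoffel}, then counting powers of $\rho$ carefully.

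For the bounds in \eqref{eq.carleman_error_t2}, the first inequality is immediate since $\Dv_a t = \partial_a t$ is a $\rho$-independent smooth function on the compact coordinate domain. For $\Dv^2 t$, the formula $\Dv_{ab} t = \partial_{ab} t - \bar{\Gamma}^c_{ab} \partial_c t$, together with the uniform bound on $\bar{\Gamma}^c_{ab}$ (which involves at most one derivative of $\gv$), yields $\mc{O} ( 1 )$. The key estimate is the third: using Definition \ref{def.aads_vertical_connection} and $\mi{L}_\rho ( \partial_a t ) = 0$,
\begin{equation*}
\Dvm_\rho ( \Dv_a t ) \;=\; - \bar{\Gamma}^b_{\rho a} \partial_b t \;=\; - \tfrac{1}{2} \gv^{bc} \mi{L}_\rho \gv_{ca} \cdot \partial_b t \text{.}
\end{equation*}
Since $\mi{L}_\rho \gv \rightarrow^2 0$ gives $| \mi{L}_\rho \gv |_{\varphi, 0} \lesssim_\varphi \rho$ by Taylor's theorem, the $\mc{O} ( \rho )$ decay follows.

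For \eqref{eq.carleman_error_t3}, the bound on $\Dv^3 t$ requires only that $\bar{\Gamma}^c_{ab}$ and its first $\varphi$-derivatives are locally bounded, which holds since $\gv$ has three locally bounded $\varphi$-derivatives. For the final estimate, iterating the formula above,
\begin{equation*}
( \Dvm_\rho )^2 ( \Dv_a t ) \;=\; - ( \mi{L}_\rho \bar{\Gamma}^b_{\rho a} ) \partial_b t + \bar{\Gamma}^c_{\rho a} \bar{\Gamma}^b_{\rho c} \partial_b t \text{.}
\end{equation*}
The second term is $\mc{O} ( \rho^2 )$ since each $\bar{\Gamma}^\bullet_{\rho \bullet}$ is $\mc{O} ( \rho )$. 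For the first term, one expands $\mi{L}_\rho \bar{\Gamma}^b_{\rho a}$ in terms of $\mi{L}_\rho \gv^{-1}$ and $\mi{L}_\rho^2 \gv$; the former is $\mc{O} ( \rho )$ and the latter is locally $C^0$-bounded (and converges to $\gs$), giving $\mc{O} ( 1 )$.

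The main bookkeeping obstacle, and the step that warrants the most care, is the third inequality in \eqref{eq.carleman_error_t2}: it is the only bound requiring a genuine power of $\rho$ (rather than just uniform boundedness), and this decay is entirely due to the cancellation $\mi{L}_\rho ( \partial_a t ) = 0$ combined with the explicit $\rho$-vanishing of $\mi{L}_\rho \gv$. The remaining estimates are routine once one systematically uses $\partial_\rho t = 0$ to discard all terms that would otherwise involve $\rho$-derivatives of $t$.
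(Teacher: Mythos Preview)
Your proposal is correct and follows essentially the same route as the paper: the paper also reduces to the coordinate formula $\Dvm_\rho \Dv_a t = -\tfrac{1}{2}\gv^{bc}\mi{L}_\rho\gv_{ac}\,\Dv_b t$ (using that $\Dv t$ is $\rho$-independent) for the key $\mc{O}(\rho)$ bound, and then iterates to get $(\Dvm_\rho)^2\Dv t$ in terms of $\mi{L}_\rho(\gv^{-1}\mi{L}_\rho\gv)$. The only cosmetic difference is that for the $\mc{O}(1)$ bounds on $\Dv^m t$ the paper quotes the limit statement \eqref{eq.aads_geom_limit_deriv} rather than expanding the Christoffel symbols directly as you do.
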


\begin{proof}
The first two parts of \eqref{eq.carleman_error_t2} and the first part of \eqref{eq.carleman_error_t3} follow from \eqref{eq.aads_geom_limit_deriv}, which implies
\[
\Dv t \rightarrow^0 \Dm t \text{,} \qquad \Dv^2 t \rightarrow^0 \Dm^2 t \text{,} \qquad \Dv^3 t \rightarrow^0 \Dm^3 t \text{.}
\]
Also, recalling Definition \ref{def.aads_vertical_connection} and the fact that $\Dv t$ is $\rho$-independent, we see that
\footnote{To clarify, by $\Dvm_\rho \Dv_a t$, we mean the $a$-component of the vertical $1$-form $\Dvm_\rho \Dv t$.}
\begin{equation}
\label{eql.carleman_error_t_1} \Dvm_\rho \Dv_a t = \mi{L}_\rho \Dv_a t - \frac{1}{2} \gv^{ b c } \mi{L}_\rho \gv_{ c a } \Dv_b t = - \frac{1}{2} \gv^{ b c } \mi{L}_\rho \gv_{ a c } \Dv_b t \text{.}
\end{equation}
The third part of \eqref{eq.carleman_error_t2} now follows from \eqref{eq.aads_strong_limits}, \eqref{eq.aads_geom_limit}, the first part of \eqref{eq.carleman_error_t2}, and \eqref{eql.carleman_error_t_1}.

Next, using Definition \ref{def.aads_vertical_connection}, \eqref{eql.carleman_error_t_1}, and the observation that $\Dv t$ is $\rho$-independent, we expand
\footnote{By $( \Dvm_\rho )^2 \Dv_a t$, we mean the $a$-component of the vertical $1$-form $( \Dvm_\rho )^2 \Dv t$.}
\begin{align*}
( \Dvm_\rho )^2 \Dv_a t &= \mi{L}_\rho ( \Dvm_\rho \Dv_a t ) - \bar{\Gamma}_{ \rho a }^b \Dvm_\rho \Dv_b t \\
&= - \frac{1}{2} \mi{L}_\rho ( \gv^{ b c } \mi{L}_\rho \gv_{ a c } ) \Dv_b t - \frac{1}{2} \gv^{ b c } \mi{L}_\rho \gv_{ a c } \Dvm_\rho \Dv_b t \text{.}
\end{align*}
Thus, applying \eqref{eq.aads_strong_limits}, \eqref{eq.aads_geom_limit}, \eqref{eql.carleman_error_v_1}, and \eqref{eq.carleman_error_t2} to the above yields
\begin{align*}
| ( \Dvm_\rho )^2 \Dv t |_{ \varphi, 0 } &\lesssim_\varphi \mc{O} (1) \, ( | \mi{L}_\rho^2 \gv |_{ \varphi, 0 } + | \mi{L}_\rho \gv |_{ \varphi, 0 }^2 ) | \Dv t |_{ \varphi, 0 } + \mc{O} (1) \, | \mi{L}_\rho \gv |_{ \varphi, 0 } | \Dvm_\rho \Dv t |_{ \varphi, 0 } \\
&\lesssim_\varphi \mc{O} ( 1 ) \text{,}
\end{align*}
which completes the proof of \eqref{eq.carleman_error_t3}.
\end{proof}

\begin{lemma} \label{thm.carleman_error_hv}
The following estimates hold for the first derivatives of $\hv$:
\begin{equation}
\label{eq.carleman_error_hv1} | \Dv \hv |_{ \varphi, 0 } \lesssim_\varphi \mc{O} (1) \, | \Dm^2 t |_{ \varphi, 0 } + \mc{O} ( \rho ) \text{,} \qquad | \Dvm_\rho \hv |_{ \varphi, 0 } \lesssim_\varphi \mc{O} ( \rho ) \text{.}
\end{equation}
In addition, the second derivatives of $\hv$ satisfy
\begin{equation}
\label{eq.carleman_error_hv2} | \Dv^2 \hv |_{ \varphi, 0 } \lesssim_\varphi \mc{O} (1) \text{,} \qquad | ( \Dvm_\rho )^2 \hv |_{ \varphi, 0 } \lesssim_\varphi \mc{O} (1) \text{.} 
\end{equation}
\end{lemma}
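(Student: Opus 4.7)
The plan is to reduce every derivative of $\hv$ to a derivative of $t$ (plus benign factors of $\gv$, $\gv^{-1}$, and $\Dv t$), and then invoke the estimates of Lemma \ref{thm.carleman_error_t}. The key algebraic observation is that, by Definition \ref{def.aads_riemann_vertical},
\[
\hv = \gv - \frac{2}{\gv(\Dv^\sharp t, \Dv^\sharp t)} \, dt \otimes dt \text{,}
\]
and that by \eqref{eq.aads_vertical_connection_compat} and \eqref{eq.aads_mixed_connection_compat}, both $\Dv$ and $\Dvm_\rho$ annihilate $\gv$ and $\gv^{-1}$. Consequently, every derivative of $\hv$ falls on the factor $(\Dv t)^{\otimes 2}$ or on the scalar $\gv(\Dv^\sharp t, \Dv^\sharp t) = \gv^{ab}\Dv_a t \, \Dv_b t$, and in both cases the Leibniz rule (Propositions \ref{thm.aads_vertical_connection} and \ref{thm.aads_mixed_connection}) produces only terms involving $\Dv t$ and its $\bar{\Dv}$-derivatives.

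First I would establish \eqref{eq.carleman_error_hv1}. Applying $\Dv$ to $\hv$ and using the Leibniz rule gives schematically
\[
\Dv \hv = \mc{O}(1) \cdot \Dv t \otimes \Dv^2 t + \mc{O}(1) \cdot \Dv t \otimes \Dv t \otimes \Dv t \otimes \Dv^2 t \text{,}
\]
where the $\mc{O}(1)$ coefficients are built from $\gv^{-1}$ and the (uniformly positive by \eqref{eq.aads_time}) scalar $-\gv(\Dv^\sharp t, \Dv^\sharp t)$. Using the first bound in \eqref{eq.carleman_error_t2} to control the $\Dv t$ factors, and using the pointwise estimate $|\Dv^2 t|_{\varphi,0} \leq |\Dm^2 t|_{\varphi,0} + \mc{O}(\rho)$ (which follows from $\gv = \gm + \mc{O}(\rho^2)$ and the corresponding expansion of the Christoffel symbols $\bar{\Gamma}^c_{ab} - \mf{\Gamma}^c_{ab} = \mc{O}(\rho^2)$, so that $\Dv^2 t - \Dm^2 t = \mc{O}(\rho^2)$), yields the first inequality in \eqref{eq.carleman_error_hv1}. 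For the $\Dvm_\rho$-derivative, the same Leibniz expansion replaces $\Dv^2 t$ by $\Dvm_\rho \Dv t$, which by the third bound in \eqref{eq.carleman_error_t2} is $\mc{O}(\rho)$; this gives the second inequality in \eqref{eq.carleman_error_hv1}.

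Then for \eqref{eq.carleman_error_hv2}, I would differentiate once more. The Leibniz rule applied to $\Dv(\Dv \hv)$ (respectively to $\Dvm_\rho(\Dvm_\rho \hv)$) produces two kinds of terms: a linear term in $\Dv^3 t$ (respectively $(\Dvm_\rho)^2 \Dv t$), and quadratic terms of the form $\Dv^2 t \otimes \Dv^2 t$ (respectively $\Dvm_\rho \Dv t \otimes \Dvm_\rho \Dv t$). The linear terms are $\mc{O}(1)$ by \eqref{eq.carleman_error_t3}, while the quadratic terms are $\mc{O}(1)$ (respectively $\mc{O}(\rho^2)$) by \eqref{eq.carleman_error_t2}. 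Both estimates in \eqref{eq.carleman_error_hv2} follow.

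The main obstacle is purely bookkeeping: keeping track of how many factors of $\Dv t$, $\gv$, and $\gv^{-1}$ arise in each Leibniz expansion, and verifying that no cancellation is required---every term is controlled at the expected rate by the inputs of Lemma \ref{thm.carleman_error_t} together with the uniform bounds on $\gv^{\pm 1}$ from Proposition \ref{thm.aads_geom_limit}. There is no new geometric input needed; the lemma is essentially a direct algebraic consequence of the compatibility $\bar{\Dv}\gv = \bar{\nabla}\gv = 0$ combined with Lemma \ref{thm.carleman_error_t}.
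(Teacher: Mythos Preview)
Your proposal is correct and follows essentially the same approach as the paper: differentiate the formula \eqref{eq.aads_riemann_vertical} for $\hv$, use the compatibility $\Dvm \gv = 0$ so that all derivatives land on $\Dv t$, and then invoke Lemma \ref{thm.carleman_error_t} together with $\Dv^2 t \rightarrow^0 \Dm^2 t$. The paper's proof is slightly terser but structurally identical, arriving at exactly the schematic bounds $|\Dv \hv| \lesssim |\Dv^2 t|$, $|\Dvm_\rho \hv| \lesssim |\Dvm_\rho \Dv t|$, and the analogous second-order expressions you describe.
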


\begin{proof}
First, we apply $\Dv$ and $\Dvm_\rho$ to the right-hand side of \eqref{eq.aads_riemann_vertical} and recall that $\Dv \gv$ and $\Dvm_\rho \gv$ vanish (see Proposition \ref{thm.aads_vertical_connection}).
From the above, \eqref{eq.aads_time}, \eqref{eq.carleman_error_t2}, and the fact that $\Dv^2 t \rightarrow^0 \Dm^2 t$, we obtain
\begin{align}
\label{eql.carleman_error_hv_1} | \Dv \hv |_{ \varphi, 0 } &\lesssim_\varphi \mc{O} (1) \, | \Dv^2 t |_{ \varphi, 0 } \lesssim_\varphi \mc{O} (1) \, | \Dm^2 t |_{ \varphi, 0 } + \mc{O} ( \rho ) \text{,} \\
\notag | \Dvm_\rho \hv |_{ \varphi, 0 } &\lesssim_\varphi \mc{O} (1) \, | \Dvm_\rho \Dv t |_{ \varphi, 0 } \lesssim_\varphi \mc{O} ( \rho ) \text{,}
\end{align}
as well as the second derivative bounds
\begin{align}
\label{eql.carleman_error_hv_2} | \Dv^2 \hv |_{ \varphi, 0 } &\lesssim_\varphi \mc{O} (1) \, | \Dv^3 t |_{ \varphi, 0 } + \mc{O} (1) \, | \Dv^2 t |_{ \varphi, 0 }^2 \text{,} \\
\notag | ( \Dvm_\rho )^2 \hv |_{ \varphi, 0 } &\lesssim_\varphi \mc{O} (1) \, | ( \Dvm_\rho )^2 \Dv t |_{ \varphi, 0 } + \mc{O} (1) \, | \Dvm_\rho \Dv t |_{ \varphi, 0 }^2 \text{.}
\end{align}
Both \eqref{eq.carleman_error_hv1} and \eqref{eq.carleman_error_hv2} now follow from Lemma \ref{thm.carleman_error_t}, \eqref{eql.carleman_error_hv_1}, and \eqref{eql.carleman_error_hv_2}.
\end{proof}

Since $\nablam_a \hv = \Dv_a \hv$ and $\nablam_\rho \hv = \Dvm_\rho \hv$ by Definition \ref{def.aads_mixed_connection}, then Lemma \ref{thm.carleman_error_hv} immediately implies
\begin{equation}
\label{eql.carleman_error_h_1} | \nablam_a \hv |_{ \varphi, 0 } \lesssim_\varphi \mc{O} (1) \, | \Dm^2 t |_{ \varphi, 0 } + \mc{O} ( \rho ) \text{,} \qquad | \nablam_\rho \hv |_{ \varphi, 0 } \lesssim_\varphi \mc{O} ( \rho ) \text{.}
\end{equation}
The first three parts of \eqref{eq.carleman_error_h} now follow from combining \eqref{eq.carleman_frame_asymp} with \eqref{eql.carleman_error_h_1}.

Next, from Definitions \ref{def.aads_vertical_connection} and \ref{def.aads_mixed_connection}, we deduce that
\[
\nablam_{ a b } \hv - \Dvm_{ a b } \hv = - \Gamma^\gamma_{ a b } \Dvm_\gamma \hv + \bar{\Gamma}^c_{ a b } \Dvm_c \hv \text{.}
\]
Combining the above with \eqref{eq.aads_strong_limits}, \eqref{eq.comp_prelim_christoffel}, and Lemma \ref{thm.carleman_error_hv}, we conclude that
\begin{equation}
\label{eql.carleman_error_h_2} | \nablam_{ a b } \hv |_{ \varphi, 0 } \leq | \Dv_{ a b } \hv |_{ \varphi, 0 } + | \Gamma^\rho_{ a b } | | \Dv_\rho \hv |_{ \varphi, 0 } \lesssim_\varphi \mc{O} (1) \text{.}
\end{equation}
Moreover, using again Definitions \ref{def.aads_vertical_connection} and \ref{def.aads_mixed_connection}, as well as \eqref{eq.comp_prelim_christoffel}, we expand
\[
\nablam_{ \rho \rho } \hv - ( \Dvm_\rho )^2 \hv = - \Gamma^\rho_{ \rho \rho } \Dvm_\rho \hv = \rho^{-1} \Dvm_\rho \hv \text{.}
\]
As a result of Lemma \ref{thm.carleman_error_hv} and the above, we obtain
\begin{equation}
\label{eql.carleman_error_h_3} | \nablam_{ \rho \rho } \hv |_{ \varphi, 0 } \leq | ( \Dvm_\rho )^2 \hv |_{ \varphi, 0 } + \rho^{-1} | \Dvm_\rho \hv |_{ \varphi, 0 } \lesssim_\varphi \mc{O} (1) \text{.}
\end{equation}
Finally, the last inequality in \eqref{eq.carleman_error_h} follows from \eqref{eq.aads_metric}, \eqref{eq.aads_strong_limits}, \eqref{eql.carleman_error_h_2}, and \eqref{eql.carleman_error_h_3}:
\[
| \Boxm \hv | \lesssim_\varphi \mc{O} ( \rho^2 ) \, | \nablam_{ \rho \rho } \hv | + \mc{O} ( \rho^2 ) \, \sup_{ a, b } | \nablam_{ a b } \hv | \lesssim_\varphi \mc{O} ( \rho^2 ) \text{.}
\]

\subsection{Proof of Lemma \ref{thm.carleman_error_A}} \label{sec.carleman_error_A}

The first part of \eqref{eq.carleman_error_A} follows from \eqref{eq.carleman_lambda}, \eqref{eql.carleman_weight}, \eqref{eq.carleman_asymp}, and \eqref{eql.carleman_LA}.
For the second part of \eqref{eq.carleman_error_A}, we apply \eqref{eql.carleman_S}, \eqref{eql.carleman_weight}, \eqref{eq.carleman_asymp}, and the first part of \eqref{eq.carleman_error_A}:
\begin{align}
\label{eql.carleman_error_A_2} - \frac{1}{2} \nabla_\beta ( \mc{A} S^\beta ) &= - \frac{1}{2} \nabla_\beta S^\beta \cdot \mc{A} - \frac{1}{2} \nabla_S \mc{A} \\
\notag - \frac{1}{2} \nabla_\beta S^\beta \cdot \mc{A} &= ( \kappa^2 - n \kappa + \sigma ) f^{ n - 2 } + ( 2 \kappa - n + p ) \lambda f^{ n - 2 + p } + \lambda^2 f^{ n - 2 + 2 p } + \lambda^2 \, \mc{O} ( f^n ) \text{,}
\end{align}
Furthermore, differentiating the equation for $\mc{A}$ in \eqref{eql.carleman_LA}, we obtain
\begin{align}
\label{eql.carleman_error_A_3} - \frac{1}{2} \nabla_S \mc{A} &= - \frac{1}{2} ( 2 F' F'' + F''' ) S f ( g^{ \alpha \beta } \nabla_\alpha f \nabla_\beta f ) - [ ( F' )^2 + F'' ] g^{ \beta \mu } S^\alpha \nabla_{ \alpha \beta } f \nabla_\mu f \\
\notag &\qquad - \frac{1}{2} F'' S f \Box f - \frac{1}{2} F' \, S ( \Box f ) \\
\notag &= \mc{A}_1 + \mc{A}_2 + \mc{A}_3 + \mc{A}_4 \text{.}
\end{align}

From \eqref{eq.carleman_f_grad}, \eqref{eq.carleman_fstar}, \eqref{eq.carleman_lambda}, \eqref{eql.carleman_S}, \eqref{eql.carleman_weight}, and \eqref{eq.carleman_asymp}, we obtain
\begin{align}
\label{eql.carleman_error_A_4} \mc{A}_1 &= ( \kappa^2 - \kappa ) f^{ n - 2 } + \left[ ( 2 - p ) \kappa - \frac{1}{2} ( 1 - p ) ( 2 - p ) \right] \lambda f^{ n - 2 + p } \\
\notag &\qquad + ( 1 - p ) \lambda^2 f^{ n - 2 + 2 p } + \lambda^2 \, \mc{O} ( f^n ) \text{,} \\
\notag \mc{A}_3 &= - \frac{ n - 1 }{2} \kappa f^{ n - 2 } - \frac{ ( n - 1 ) ( 1 - p ) }{2} \lambda f^{ n - 2 + p } + \lambda \, \mc{O} ( f^n ) \text{.}
\end{align}
For $\mc{A}_2$, we also take into account the expansion for $\nabla_{ N N } f$ in \eqref{eql.carleman_pseudoconvex_pi_20}:
\begin{equation}
\label{eql.carleman_error_A_6} \mc{A}_2 = - ( \kappa^2 - \kappa ) f^{ n - 2 } - ( 2 \kappa - 1 + p ) \lambda f^{ n - 2 + p } - \lambda^2 f^{ n - 2 + 2 p } + \lambda^2 \, \mc{O} ( f^n ) \text{.}
\end{equation}
The remaining term $\mc{A}_4$ requires more work; for this, we differentiate the formula \eqref{eq.carleman_f_grad} for $\Box f$, and we recall Definition \ref{def.aads_strong}, Proposition \ref{thm.aads_geom_limit}, and Remark \ref{rmkl.carleman_bounded} in order to obtain
\begin{equation}
\label{eql.carleman_error_A_7} \mc{A}_4 = \frac{ n - 1 }{2} \kappa f^{ n - 2 } + \frac{ n - 1 }{2} \lambda f^{ n - 2 + p } + \lambda \, \mc{O} ( f^n ) \text{.}
\end{equation}
Finally, combining \eqref{eql.carleman_error_A_2}--\eqref{eql.carleman_error_A_7} results in the second equation in \eqref{eq.carleman_error_A}.

\fi

\raggedright
\bibliographystyle{amsplain}
\bibliography{articles,books,misc}

\providecommand{\bysame}{\leavevmode\hbox to3em{\hrulefill}\thinspace}
\providecommand{\MR}{\relax\ifhmode\unskip\space\fi MR }
% \MRhref is called by the amsart/book/proc definition of \MR.
\providecommand{\MRhref}[2]{%
  \href{http://www.ams.org/mathscinet-getitem?mr=#1}{#2}
}
\providecommand{\href}[2]{#2}
\begin{thebibliography}{10}

\bibitem{alin_baou:non_unique}
S.~Alinhac and M.~S. Baouendi, \emph{A non uniqueness result for operators of
  principal type}, Math. Z. \textbf{220} (1995), no.~1, 561--568.

\bibitem{and_herz:uc_ricci}
M.~T. Anderson and M.~Herzlich, \emph{Unique continuation results for {Ricci}
  curvature and applications}, J. Geom. Phys. \textbf{58} (2008), no.~2,
  179--207.

\bibitem{and_herz:uc_ricci_err}
\bysame, \emph{Erratum to ``unique continuation results for ricci curvature and
  applications"}, J. Geom. Phys. \textbf{60} (2010), 1062--1067.

\bibitem{biq:uc_einstein}
O.~Biquard, \emph{Continuation unique \`a partir de l'infini conforme pour les
  m\'etriques {d'Einstein}}, Math. Res. Lett. \textbf{15} (2008), no.~6,
  1091--1099.

\bibitem{breit_freedm:stability_sgrav}
P.~Breitenlohner and D.~Z. Freedman, \emph{Stability in gauged extended
  supergravity}, Annals Phys. \textbf{144} (1982), 249--281.

\bibitem{cald:unique_cauchy}
A.~P. Calder\'on, \emph{Uniqueness in the {Cauchy} problem for partial
  differential equations}, Amer. J. Math. \textbf{80} (1958), 16--36.

\bibitem{carl:uc_strong}
T.~Carleman, \emph{Sur un probl\`{e}me d'unicit\'e pour les syst\`{e}mes
  d\'equations aux d\'eriv\'ees partielles \`a deux variables ind\'ependentes},
  Ark. Mat. Astr. Fys. \textbf{26} (1939), no.~17, 1--9.

\bibitem{chru_delay:uc_killing}
P.~Chru\'sciel and E.~Delay, \emph{Unique continuation and extensions of
  {Killing} vectors for stationary vacuum space-times}, J. Geom. Phys.
  \textbf{61} (2011), no.~8, 1249--1257.

\bibitem{deharo_sken_solod:holog_adscft}
S.~de~Haro, K.~Skenderis, and S.~N. Solodukhin, \emph{Holographic
  reconstruction of spacetime and renormalization in the {AdS/CFT}
  correspondence}, Comm. Math. Phys. \textbf{217} (2001), 595--622.

\bibitem{enc_shao_verga:obs_swave}
A.~Enciso, A.~Shao, and B.~Vergara, \emph{{Carleman} estimates with sharp
  weights and boundary observability for wave operators with critically
  singular potentials}, arXiv:1902.00068, to appear in J. Eur. Math. Soc.,
  2019.

\bibitem{fef_gra:conf_inv}
C.~Fefferman and C.~R. Graham, \emph{Conformal invariants}, {\'Elie Cartan} et
  les math\'ematiques d'aujourd'hui - Lyon, 25-29 juin 1984, Ast\'erisque,
  Soci\'et\'e math\'ematique de France, 1985, pp.~95--116.

\bibitem{fef_gra:amb_met}
\bysame, \emph{The ambient metric}, Princeton University Press, 2011.

\bibitem{hol_shao:uc_ads_cft}
G.~Holzegel and A.~Shao, \emph{Unique continuation from infinity in
  asymptotically {Anti-de Sitter} spacetimes: {Einstein-vacuum} equations}, in
  preparation.

\bibitem{hol_shao:uc_ads_sym}
\bysame, \emph{Unique continuation from infinity in asymptotically {Anti-de
  Sitter} spacetimes: {Extension} of symmetries}, in preparation.

\bibitem{hol_shao:uc_ads}
\bysame, \emph{Unique continuation from infinity in asymptotically {Anti-de
  Sitter} spacetimes}, Comm. Math. Phys. \textbf{347} (2016), no.~3, 1--53.

\bibitem{hol_shao:uc_ads_ns}
\bysame, \emph{Unique continuation from infinity in asymptotically {Anti-de
  Sitter} spacetimes {II: Non-static} boundaries}, Comm. Partial Differential
  Equations \textbf{42} (2017), no.~12, 1871--1922.

\bibitem{hor:lpdo2}
L.~H{\"o}rmander, \emph{The analysis of linear partial differential operators
  {II:} {Differential} operators with constant coefficients}, Springer-Verlag,
  1985.

\bibitem{hor:lpdo4}
\bysame, \emph{The analysis of linear partial differential operators {IV:}
  {Fourier} integral operators}, Springer-Verlag, 1985.

\bibitem{hor:uc_interp}
L.~{H\"ormander}, \emph{On the uniqueness of the {Cauchy} problem under partial
  analyticity assumptions}, Geometric optics and related topics ({Cortona},
  1996), Progr. Nonlinear Differential Equations Appl., vol.~32, {Birkh\"auser
  Boston, Boston, MA}, 1997, pp.~179--219.

\bibitem{imbim_schwim_theis_yanki:diffeo_holog}
C.~Imbimbo, A.~Schwimmer, S.~Theisen, and S.~Yankielowicz,
  \emph{Diffeomorphisms and holographic anomalies}, Class. Quantum Grav.
  \textbf{17} (2000), 1129--1138.

\bibitem{keir:weak_null}
J.~Keir, \emph{The weak null condition and global existence using the
  {$p$}-weighted energy method}, arXiv:1808.09982, 2018.

\bibitem{malda:ads_cft}
J.~M. Maldacena, \emph{The large {$N$} limit of superconformal field theories
  and supergravity}, Int. J. Theor. Phys. \textbf{38} (1999), 1113--1133.

\bibitem{on:srg}
B.~O'Neill, \emph{Semi-{Riemannian} geometry with applications to relativity},
  Academic Press, 1983.

\bibitem{robb_zuil:uc_interp}
L.~Robbiani and C.~Zuily, \emph{Uniqueness in the {Cauchy} problem for
  operators with partially holomorphic coefficients}, Invent. Math.
  \textbf{131} (1998), no.~3, 493--539.

\bibitem{shao:ksp}
A.~Shao, \emph{A generalized representation formula for systems of tensor wave
  equations}, Commun. Math. Phys. \textbf{306} (2011), no.~1, 51--82.

\bibitem{shao:aads_fg}
\bysame, \emph{The near-boundary geometry of {Einstein}-vacuum asymptotically
  {Anti}-de {Sitter} spacetimes}, in preparation, 2019.

\bibitem{tat:notes_uc}
D.~Tataru, \emph{Carleman estimates, unique continuation and applications},
  http://math.berkeley.edu/{$\sim$}tataru/papers/ucpnotes.ps.

\bibitem{tat:uc_hh}
\bysame, \emph{Unique continuation for solutions to {PDEs}; {Between}
  {H\"ormander's} theorems and {Holmgren's theorem}}, Commun. Part. Diff. Eq.
  \textbf{20} (1995), no.~5-6, 855--884.

\bibitem{vasy:wave_aads}
A.~Vasy, \emph{The wave equation on asymptotically anti-de {Sitter} spaces},
  Anal. PDE \textbf{5} (2012), no.~1, 81--144.

\bibitem{wald:gr}
R.~Wald, \emph{General relativity}, The University of Chicago Press, 1984.

\bibitem{warn:wave_aads}
C.~M. Warnick, \emph{The massive wave equation in asymptotically {AdS}
  spacetimes}, Commun. Math. Phys. \textbf{321} (2013), 85--111.

\end{thebibliography}

\end{document}